\newcommand{\cpat}{\ensuremath{\mathcal P}\xspace}
\newcommand{\ctext}{\ensuremath{\mathcal T}\xspace}
\mathchardef\mhyphen="2D
\newcommand{\concat}{\mathop{\circ}}
\newcommand{\bigconcat}{\mathop{\bigcirc}}
\newcommand{\x}{{\textsc{x}}}
\newcommand{\y}{{\textsc{y}}}
\newcommand{\strc}{{\cal S}}
\newcommand{\eps}{\ensuremath{\varepsilon}}
\newcommand{\rev}{\mathrm{rev}}
\newcommand{\sA}{{\cal A}}
\newcommand{\sB}{{\cal B}}
\newcommand{\pad}{\mathrm{pad}}
\newcommand{\hamming}{\text{Hamming}}
\newcommand{\ham}{\text{Ham}}
\newcommand{\subseq}{\text{Subseq}}
\newcommand{\remark}[3]{\textcolor{blue}{\textsc{#1 #2:}}
  \textcolor{red}{\textsf{#3}}}
\newcommand{\karl}[2][says]{\remark{Karl}{#1}{#2}}
\newcommand{\claref}[1]{Claim~\ref{cla:#1}}
\newcommand{\figref}[1]{Figure~\ref{fig:#1}}
\newcommand{\defref}[1]{Definition~\ref{def:#1}}
\newcommand{\thmref}[1]{Theorem~\ref{thm:#1}}
\newcommand{\lemref}[1]{Lemma~\ref{lem:#1}}
\newcommand{\lemrefs}[2]{Lemmas~\ref{lem:#1} and~\ref{lem:#2}}
\newcommand{\obsref}[1]{Observation~\ref{obs:#1}}
\newcommand{\secref}[1]{Section~\ref{sec:#1}}
\newtheorem{thm}{Theorem}[section]
\newtheorem{lem}[thm]{Lemma}
\newtheorem{obs}[thm]{Observation}
\newtheorem{defn}[thm]{Definition}
\newtheorem{claim}[thm]{Claim}
\newtheorem{problem}[thm]{Problem}
\newtheorem{conjecture}[thm]{Conjecture}
\global\long\def\Oh{{O}}
\global\long\def\tOh{\tilde{\Oh}}
\newcommand{\N}{\mathbb{N}}
\newcommand{\poly}{\textup{poly}}
\newcommand{\LCS}{\textup{LCS}\xspace}
\newcommand{\eval}{\textup{eval}}
\newcommand{\depth}{\textup{depth}}
\newcommand{\cT}{\mathcal{T}}
\newcommand{\cP}{\mathcal{P}}
\newcommand{\OV}{\textsc{OV}\xspace}
\newcommand{\SETH}{SETH\xspace}
\newcommand{\kOV}{$k$-\OV}
\newcommand{\DFAAccept}{DFA Acceptance\xspace}
\newcommand{\NFAAccept}{NFA Acceptance}
\newcommand{\CFGrecognition}{CFG Recognition}
\newcommand{\RNAfolding}{RNA Folding}
\newcommand{\weightedRNAfolding}{Weighted RNA Folding}
\newcommand{\transition}[3]{#1 \stackrel{#2}{\to} #3}
\newcommand{\productionsign}{\to}
\DeclarePairedDelimiter{\ceil}{\lceil}{\rceil}
\begin{document}

\title{Fine-Grained Complexity of Analyzing Compressed Data: \\ Quantifying Improvements over Decompress-And-Solve}

\author{
Amir Abboud\thanks{IBM Almaden Research Center, \texttt{abboud@cs.stanford.edu}.  Work done while at Stanford University.} \and
Arturs Backurs\thanks{MIT, \texttt{backurs@mit.edu}} \and
Karl Bringmann\thanks{Max Planck Institute for Informatics, Saarland Informatics Campus, Germany, \texttt{kbringma@mpi-inf.mpg.de}} \and
Marvin K\"unnemann\thanks{Max Planck Institute for Informatics, Saarland Informatics Campus, Germany, \texttt{marvin@mpi-inf.mpg.de}}
}

\maketitle

\medskip

\begin{abstract}

Can we analyze data without decompressing it?  
As our data keeps growing, understanding the time complexity of problems on \emph{compressed} inputs, rather than in convenient uncompressed forms, becomes more and more relevant.
Suppose we are given a compression of size~$n$ of data that originally has size~$N$, and we want to solve a problem with time complexity~$T(\cdot)$.
The na\"{i}ve strategy of ``decompress-and-solve" gives time~$T(N)$, whereas ``the gold standard" is time~$T(n)$: to analyze the compression as efficiently as if the original data was small.

We restrict our attention to data in the form of a string (text, files, genomes, etc.) and study the most ubiquitous tasks.
While the challenge might seem to depend heavily on the specific compression scheme, most methods of practical relevance (Lempel-Ziv-family, dictionary methods, and others) can be unified under the elegant notion of \emph{Grammar-Compressions}.
A vast literature, across many disciplines, established this as an influential notion for Algorithm design.

We introduce a framework for proving (conditional) lower bounds in this field,
allowing us to assess whether decompress-and-solve can be improved, and by how much.
Our main results are:
\begin{itemize}
\item The $O(nN\sqrt{\log{N/n}})$ bound for LCS and the $O(\min\{N \log N, nM\})$ bound for Pattern Matching with Wildcards are optimal up to $N^{o(1)}$ factors, under the Strong Exponential Time Hypothesis. (Here, $M$ denotes the uncompressed length of the compressed pattern.) 
\item Decompress-and-solve is essentially optimal for Context-Free Grammar Parsing and RNA Folding, under the $k$-Clique conjecture.
\item We give an algorithm showing that decompress-and-solve is \emph{not} optimal for Disjointness. 
\end{itemize}

\end{abstract}

\thispagestyle{empty}
\clearpage
\setcounter{page}{1}

\tableofcontents
\thispagestyle{empty}
\clearpage
\setcounter{page}{1}


\section{Introduction} \label{sec:intro}

Computer Science is often called the science of processing digital data. 
A central goal of theoretical CS is to understand the time complexity of the tasks we want to perform on data.
\emph{Data compression} has been one of the most important notions in CS and Information Theory for decades, and it is increasingly relevant in our current age of ``Big Data" where it is hard to think of reasons why \emph{not} to compress our data: smaller data can be stored more efficiently, transmitting it takes less resources such as energy and bandwidth, and perhaps it can even be processed faster.
Since nowadays and for years to come nearly all of our data comes in compressed form, a central question becomes: 

\begin{center}
\emph{What is the time complexity of analyzing compressed data?}
\end{center}

Say we have a piece of data of size $N$ given in a compressed form of size $n$.
For a problem with time complexity $T(\cdot)$, the na\"{i}ve strategy of ``decompress and solve" takes $\Theta(T(N))$ time, while the ``gold standard" is $O(T(n))$ time: we want to  solve the problem on the compression as efficiently as if the original data was small.
To provide meaningful statements we need to decide on three things: What type of \emph{data} is it? What \emph{problem} do we want to solve? Which \emph{compression scheme} is being used?

For the first two questions, the focus of this paper will be on the most basic setting. 
We consider data that comes as strings, i.e. sequences of symbols such as text, computer code, genomes, and so on.
And we study natural and basic questions one could ask about strings such as Pattern Matching, Language Membership, Longest Common Subsequence, Parsing, and Disjointness.

For the third question, we restrict our attention to \emph{lossless} compression and, even then, there are multiple natural settings that we do not find to be the most relevant. 
We could consider Kolmogorov complexity, giving us the best possible compression of our data: assume that a string $T$ is given by a short bitstring $K(T)$ which is a pair of Turing machine $M$ and input $x$ such that running $M$ on $x$ outputs $T$, i.e. $K(T)=\langle M,x \rangle$ such that $M(x)=T$.
The issue with Kolmogorov-compressions is that none of our data comes in this form, for two good reasons: First, it is computationally intractable to compute $K(T)$ given $T$, not even approximately. And second, analyzing arbitrary Turing machines without just running them is an infamously hopeless task.
Thus, while studying the time complexity of analyzing Kolmogorov-compressed strings is natural, it might not be the most relevant for computer science applications.
Another option is to consider the mathematically simplest forms of compression such as \emph{Run-Length Encoding} (RLE): we compress $x$ consecutive letters $\sigma$ into $\sigma^{x}$, so the compression has the form $0^{x_1}1^{x_2}0^{x_3}\cdots1^{x_{\ell}}$, and we only need $n=O(\ell \cdot \log{N})$ bits to describe the potentially exponentially longer string of length $N$.
This compression is at the other extreme of the spectrum: it is trivial to compute and easy to analyze, but it is far less ``compressing" than popular schemes like Lempel-Ziv-compressions. 

Instead, we consider what has proven to be one of the most influential kinds of compression for Algorithm design, namely \emph{Grammar-Compressions}, a notion that has all the right properties.
First, it is mathematically elegant and quite fun to reason about for theoreticians (as evidenced by the many pages of our paper).
Second, it is equivalent \cite{Rytter03} up to low order terms (moderate constants and log factors) to popular schemes like the Lempel-Ziv-family (LZ77, LZ78, LZW, etc.) \cite{LZ76,LZ77,W84}, Byte-Pair Encoding \cite{BytePair}, dictionary methods, and others \cite{NW97,Liu+08}. 
These compressions are used in ubiquitous applications such as the built-in Unix utility {\tt compress}, zip, GIF, PNG, and even in PDF.
Third, it is generic and likely to capture compression schemes that will be engineered in the future (after all, there is a whole industry on the topic and the quest might never be over). 
Fourth, we can compute the optimal such compression (up to log factors) in linear time \cite{Rytter03,Char+05,Jez16}.
And last but not least, ingenious algorithmic techniques have shown that it is possible to computationally \emph{analyze} grammar-compressed data, beating the ``decompress and solve" bound for many important problems.

A grammar compression of a string $X$ is simply a context-free grammar, whose language is exactly $\{X\}$, that is, the only string the grammar can produce is $X$.
For the purposes of this paper, it is enough to focus on a restricted form of grammars, known as \emph{Straight Line Programs} (SLP). 
An SLP is defined over some alphabet $\Sigma$, say $\{0,1\}$, and it is a set of replacement rules (or productions) of a very simple form: a rule is either a symbol in $\Sigma$ or it is the concatenation of two previous rules  (under some fixed ordering of the rules). The last replacement rule is the sequence defined by the SLP.
For example, we can compress the sequence $01011$ with the rules $S_1\to 0; \ \ S_2\to 1; \ \ S_3\to S_1\,S_2; \ \ S_4\to S_3\,S_3; \ \ S_5\to S_4\,S_2 \ $ and $S_5$ corresponds to the sequence $01011$. For some strings this can give an exponential compression.
A more formal definition and a figure are given in Section~\ref{sec:prelim}.

To learn more about the remarkable success of grammar-compressions, we refer the reader to the surveys \cite{WMB99book,Lar99book,GKPR96survey,SB06survey,GSU09survey,RB10survey,Rytt04survey,Lohrey12,Sak14survey}.
As a side remark, one of the exciting developments in this context was the surprising observation that a ``compress and solve" strategy could actually lead to theoretically new algorithms for some problems, e.g. \cite{WW98,Jez16jacm}.

\medskip
Thus, we focus on what we find the most important interpretation of the central question above:
  \begin{center}
\emph{What is the time complexity of basic problems on grammar-compressed strings?}
\end{center}
 
 \subsection{Previous Work}

As a motivating example, consider the Longest Common Subsequence (LCS) problem.
Given two uncompressed strings of length $N$ we can find the length of the longest common (not necessarily contiguous) subsequence in $O(N^2)$ time using dynamic programming, and there are almost-matching $N^{2-o(1)}$ conditional lower bounds \cite{ABV15,BK15,AHVW16}.
Throughout the paper we mostly ignore log factors, and so we think of LCS as a problem with $\tilde{\Theta}(N^2)$ time complexity (on uncompressed data).
Now, assume our sequences are given in compressed form of size $n$.
A natural setting to keep in mind is where $n \approx N^{1/2}$.
How much time do we need to solve LCS on these compressed strings?
The na\"{i}ve upper bound gives $O(N^2)$ and the gold standard is $O(n^{2}) \approx O(N)$, so which is it?

Besides being a very basic question, LCS and the closely related Edit Distance are a popular theoretical modeling of sequence alignment problems that are of great importance in Bioinformatics\footnote{The {\em heuristic} algorithm BLAST for a generalized version of the  problem has received sixty-thousand citations.}. 
Thus, this is a relatively faithful modeling of the question whether ``compress-and-solve" can speed up genome analysis tasks, a question which has received extensive attention throughout the years \cite{GT94,NW97,Liu+08,GT93,GSU09survey}.

A long line of work \cite{bunke1995improved,manber1997text,apostolico1997matching,arbell2002edit,crochemore2003subquadratic,tiskin2009faster,tiskin2010fast,HLLW13} has shown that we can do \emph{much} better than $O(N^2)$.
The current best algorithm has the curious runtime $O(nN\sqrt{\log{N/n}})$ \cite{gawrychowski2012faster} which is tantalizingly close to a conjectured bound of $O(nN)$ from the seminal paper of Lifshits \cite{lifshits2007processing}.
In our candidate setting of $n \approx N^{1/2}$, this is $\tilde{O}(N^{1.5})$.
This is major speedup over the $\Omega(N^2)$ decompress-and-solve bound, but is still far away from the gold standard of $O(n^2)$ which in this case would be $O(N)$.
Can we do better? For example, an $O(n^2\cdot N^{0.1})$ bound could lead to major real-world improvements.

While there is a huge literature on the topic, both from the Algorithms community and from applied areas, in addition to the potential for real-world impact, studying these questions has not become a mainstream topic in the top algorithms conferences.
In one of the only STOC/FOCS papers on the topic, Charikar et al.~\cite{Char+05} write 
``\emph{In short, the smallest grammar problem has been considered
by many authors in many disciplines for many reasons over a
span of decades. Given this level of interest, it is remarkable
that the problem has not attracted greater attention in the
general algorithms community.}"

We believe that one key reason for this is  the lack of a relevant \emph{complexity theory} and tools for proving \emph{lower bounds}, leaving a confusing state of the art in which it is hard to distinguish algorithms providing fundamental new insights from \emph{ad hoc} solutions.
Most importantly, previous work has not given us the tools to know, when we encounter a data analysis problem in the real-world, what kind of upper bound we should expect.
Instead, researchers have been proving P vs. NP-hard results, classifying problems into ones solvable in $\poly(n,\log{N})$ time and ones that probably require time $N^{\Omega(1)}$.
In fact, even LCS is NP-hard \cite{lifshits2007processing}.
This means that even if we have a compression of very small size $n=O(\log{N})$ then we cannot solve LCS in $\poly(n)$ time, unless $\textup{P}=\textup{NP}$.
Dozens of such negative results have been proven (see \cite{Lohrey12}), and it has long been clear that almost any task of interest is ``NP-hard", including the basic $\textup{poly}(N)$ time solvable problems we discuss in this paper.
However, this is hardly relevant to the questions we ask in this paper since it does not address the possibility of highly desirable bounds such as  $n^2\cdot N^{0.1}$. What we would really like to know is whether the bound should be $\poly(n)\cdot N^{\eps}$, or $\poly(n) \cdot N$, or even higher: could it be that decompress-and-solve is impossible to beat for some problems?

\subsection{Our Work}
 
 In this work, we introduce a framework for showing lower bounds on the time complexity of problems on grammar-compressed strings.
 Our lower bounds are based on popular conjectures from \emph{Hardness in P} and Fine-Grained Complexity.
This is perhaps surprising since the problems we consider are technically NP-hard.
 Our new complexity theoretic study of this field leads to three exciting developments:
 First, we resolve the exact time complexity up to $N^{o(1)}$ factors of some of the most classical problems such as LCS \emph{on compressed data}.
 Second, we discover problems that \emph{cannot be solved faster than the decompress-and-solve bound} by any $N^{\eps}$ factor.
 Third, we \emph{fail} at proving tight lower bounds for some classical problems, which hints to us that known algorithms might be suboptimal.
Indeed, in this paper we also find \emph{new algorithms} for fundamental problems.
We hope that our work will inspire increased interest in this important topic.

\paragraph{Longest Common Subsequence}
Our first result is a resolution of the time complexity of LCS on compressed data, up to $N^{o(1)}$ factors, under the Strong Exponential Time Hypothesis\footnote{SETH is the pessimistic version of $\textup{P} \neq \textup{NP}$, stating  that we cannot solve $k$-SAT in $O((2-\eps)^n)$ time, for some $\eps>0$ independent of and for all constant $k$ \cite{IP01,CIP06}.} (SETH).
We complement the $O(nN \sqrt{\log{N/n}})$ upper bound of Gawrychowski \cite{gawrychowski2012faster} with an $(n N)^{1-o(1)}$ lower bound.
Thus, in the natural setting $n \approx N^{1/2}$ from above, we should indeed be content with the $\tilde{O}(N^{1.5})$ upper bound since we will not be able to get much closer to the gold standard, unless SETH fails.
Assuming SETH, our result confirms the conjecture of  Lifshits, up to $N^{o(1)}$ factors. 
See Theorem~\ref{thm:lcslb} in Section~\ref{sec:lcs} for the formal statement.

One way to view this result is as an \emph{Instance Optimality} result for LCS. The exact complexity of LCS on two strings is precisely proportional to the product of the decompressed size $N$ and the instance-inherent measure $n$ of how compressible they are.

\paragraph{RNA Folding and CFG Parsing}
Next, we turn our attention to two other fundamental problems: Context-Free Grammar Recognition (aka Parsing) and RNA Folding.
Parsing is the core computer science problem in which we want to decide whether a given string (e.g.\ computer code) can be derived from a given grammar (e.g.\ the grammar of a programming language).
Having the ability to efficiently parse a \emph{compressed} file is certainly desirable.
In RNA Folding we are given a string over some alphabet (e.g.\ $\{A,C,G,T\}$) with a fixed pairing between its symbols (e.g.\ $A-T$ match and $C-G$ match), and the goal is to compute the maximum number of non-crossing arcs between matching letters that one can draw above the string (which corresponds to the minimum energy folding in two dimensions).
RNA Folding is one of the most central problems in bioinformatics, and as we have discussed above, the ability to analyze compressed data is important in this field.
How fast can we solve these problems?

Given an uncompressed string of size $N$, classical dynamic programming algorithms, such as the CYK parser \cite{cocke1970programming,younger1967recognition,kasami1965efficient}, solve RNA Folding in $O(N^3)$ time and Parsing in $O(N^3 \cdot g)$ time if the grammar has size $g$.
Wikipedia lists twenty-four parsing algorithms designed throughout the years, all of which take cubic time in the worst case.
A theoretical breakthrough of Leslie Valiant \cite{valiant1975general} in 1975 showed that there are truly sub-cubic $O(gN^{\omega})$ parsing algorithms, where $\omega<2.38$ is the fast matrix multiplication (FMM) exponent.
However, Valiant's algorithm has not been used in practice due the inefficiency of FMM algorithms, and obtaining a \emph{combinatorial}\footnote{For the purposes of this paper, ``combinatorial" should be interpreted as any \emph{practically efficient} algorithm that does not suffer from the issues of FMM such as large constants and inefficient memory usage.} sub-cubic time algorithm would be of major interest.
Alas, it was recently proved \cite{ABV15b} that any improvement over these bounds implies breakthrough $k$-Clique algorithms:
either finding such a combinatorial subcubic algorithm \emph{or} getting any $O(N^{\omega-\eps})$ time algorithm, for any $\eps>0$, would refute the $k$-Clique Conjecture\footnote{Given a graph on $n$ nodes, the $k$-Clique conjecture \cite{ABV15b} is in fact two independent conjectures: The first one states that we cannot solve $k$-clique in $O(n^{(1-\eps) \cdot \omega k/3})$, for any $\eps>0$. The second one states that we cannot solve $k$-Clique combinatorially in $O(n^{(1-\eps)k})$ time, for any $\eps>0$.}.
The situation for RNA is even more interesting since Valiant's sub-cubic algorithm does not generalize to this case. 
Under the $k$-Clique conjecture, the same lower bounds still apply \cite{ABV15b,Chang15}, implying that any improvement will have to use FMM.
Indeed, an $O(N^{2.82})$ algorithm using FMM was recently achieved \cite{bringmann2016truly}.

Cubic time is a real bottleneck when analyzing large genomic data. One would hope that if we are able to compress the data down to size $n$ we could solve problems like RNA Folding and Parsing in time that is much faster than the $N^3$ lower bounds (to simplify the discussion we focus on combinatorial algorithms), such as $n^3 \cdot N^{o(1)}$ or at least $n^{1.5} N^{1.5}$, in certain analogy the LCS case.
No such algorithms were found to date, and we provide an explanation: 
Decompress-and-solve \emph{cannot be beaten} for Parsing and (essentially) for RNA Folding, under the $k$-Clique Conjecture.
For both problems we prove a conditional lower bound of $N^{\omega-o(1)}$ for any kind of algorithm, and $N^{3-o(1)}$ for combinatorial algorithms, even restricted to $n = O(N^\eps)$ for any $\eps > 0$. 
See Theorem~\ref{thm:cfglowerbound} in Section~\ref{sec:cfg} for CFG Parsing and Theorem~\ref{thm:rnalowerbound} in Section~\ref{sec:rna} for RNA Folding.

\paragraph{Approximate Pattern Matching}
We continue our quest for quantifying the possible improvements over decompress-and-solve for basic problems. 
Consider the following compressed versions of important primitives in text analysis known as \emph{Approximate Pattern Matching} problems.
In all these problems we assume that we are given a compressed text $T$ of size $n$ (and decompressed size $N$), and a compressed pattern $P$ of size $m$ (and decompressed size $M$), both over some constant size alphabet.
\begin{itemize}
\item {\bf Pattern Matching with Wildcards:} In this problem, the strings contain wildcard symbols that can be replaced by any letter, and our goal is to decide if $P$ appears in $T$.
\item {\bf Substring Hamming Distance:} Compute the smallest Hamming distance of any substring of $T$ to $P$.
\end{itemize}
And a problem that generalizes both is:
\begin{itemize}
\item {\bf Generalized Pattern Matching:} Given some cost function on pairs of alphabet symbols, find the length-$M$ substring $T'$ of $T$ minimizing the total cost of all pairs $(T'[i],P[i])$.
\end{itemize}

The above problems have been extensively studied both in the uncompressed (see \cite{clifford2016k}) and in the compressed~\cite{lifshits2007processing,bille2015random,gagie2011faster} settings.
   All three problems can be solved in time $O(\min\{N \log N, nM\})$ 
   (see Section~\ref{sec:genpatternmatch}).
   Note that this bound beats the decompress-and-solve bound when the pattern is small, but can we avoid decompressing the pattern?
    We show a completely tight SETH-based lower bound of $\min\{N, nM\}^{1-o(1)}$ for all three problems, even for constant size alphabets and
    in all settings where the parameters are polynomially related.
See Theorems~\ref{thm:patmatchlb} and \ref{thm:subhamminglb} in Section~\ref{sec:genpatternmatch}.

\paragraph{Language Membership}
Consider the compressed version of the most basic language membership problems.
Assume we are given a compressed string $T$ (again, from size $N$ to $n$).
\begin{itemize}
\item {\bf DFA Acceptance:} Given  $T$ and a DFA $F$ with $q$ states, decide whether $F$ accepts $T$.
\item {\bf NFA Acceptance:} Given  $T$ and a NFA $F$ with $q$ states, decide whether $F$ accepts $T$.
\end{itemize}

Classic algorithms solve the DFA Acceptance problem in time $O(\min\{nq, N+q\})$~\cite{plandowski1999complexity, hopcroft2006automata}, 
and we prove a matching SETH-based lower bound of $\min\{nq, N+q\}^{1-o(1)}$. 
See Theorem~\ref{thm:dfaaccept} in Section~\ref{sec:dfaaccept}.

For the NFA problem, the classic algorithms give $O(\min\{n q^\omega, N q^2\})$~\cite{markey2004ptime,plandowski1999complexity, hopcroft2006automata}. For combinatorial algorithms, we prove a matching lower bound of $\min\{n q^3, N q^2\}^{1-o(1)}$, under the (combinatorial) $k$-Clique conjecture. 
See Theorem~\ref{NFA_hardness} in Section~\ref{sec:nfaaccept}.
Our lower bounds hold for constant size alphabets, and in all settings of $n,N,q$, even restricted to instances with $N=\Theta(n^{\alpha_N})$ and $q=\Theta(n^{\alpha_q})$ for any $\alpha_N > 1$ and $\alpha_q > 0$.

 \paragraph{Disjointness, Hamming Distance, and Subsequence}
 Could it be that for other, even more basic problems the decompress-and-solve bound cannot be beaten?
 One candidate might be Disjointness, the canonical hard problem in Communication Complexity.
 \begin{itemize}
 \item {\bf Disjointness:} Given two equal-length bit-strings, is there a coordinate in which both are $1$?
 \end{itemize}

 The following two natural problems are at least as hard as Disjointness 
 (see Section~\ref{sec:partial}).
 
  \begin{itemize}
 \item {\bf Hamming Distance:} Compute the Hamming Distance of two strings.
 \item {\bf Subsequence:} Decide if a pattern of length $M$ is a subsequence of a text of length $N$.
  \end{itemize}

 Note that all these problems can be solved trivially in $O(N)$ time if our strings are uncompressed. 
 Could it be that we cannot solve them without decompressing our data? 
 We are not aware of any known algorithms solving any of these problems in $O(N^{1-\eps})$ time, for any $\eps>0$, even when our strings are compressed into size $n=O(N^{\alpha})$ for some small constant $\alpha>0$.
 The only exceptions are the known $\tilde{O}(M)$ time algorithms \cite{das1997episode,CGLM06,tiskin2009faster,yamamoto2011faster,tiskin2011towards,bille2014compressed} for the Subsequence problem, which beat the decompress-and-solve bound when the pattern is significantly smaller than the text. However, in the case $M=\Theta(N)$ no improvements seem to be known.

In Section~\ref{sec:partial} we present our attempts at proving a matching lower bound.
We prove the following: $N^{1-o(1)}$ for Subsequence in the setting $N = \Theta(M) = \Theta(n^2) = \Theta(m^2)$ and $|\Sigma| = O(N^\eps)$ (Theorem~\ref{thm:subseqlower}). $N^{1/4 -o(1)}$ for Disjointness (and thus also for the other two problems) in the setting $N = M$ and $n,m = O(N^\eps)$ for any $\eps > 0$, and constant alphabet size, assuming the $k$-SUM conjecture (Theorem~\ref{lb1}). Similarly: $N^{1/3 -o(1)}$ for Disjointness under Strong $k$-SUM conjecture (Theorem~\ref{lb2}).

Motivated by our inability to prove tight lower bounds for these basic problems, despite seemingly having the right framework, we have turned our attention to upper bounds.
In Section~\ref{sec:partial} we obtain the \emph{first} improvement over the decompress-and-solve bound for Disjointness, Hamming Distance, and Subsequence.
In particular, we obtain the first improvement over the decompress-and-solve bound for Disjointness, Hamming Distance, and Subsequence.
Our algorithms solve all these problems in $O(n^{1.410}\cdot N^{0.593})$ time.
As a side result, we also design a very simple algorithm for the Subsequence problem with $O((n|\Sigma| + M) \log N)$ runtime (Theorem~\ref{alg_subsequence}), which is comparable to the known but more involved algorithms \cite{bille2014compressed}.

One of the biggest benefits of having complexity theoretic results is that algorithm designers know what to focus on.
We believe that these upper bounds can be improved further and suggest it as an interesting open question: 
\emph{What is the time complexity of computing Disjointness on two grammar-compressed strings?}

$  $

 \subsection{Technical Overview} \label{sec:techoverview}

From a technical perspective, our paper is most related to the conditional lower bounds for sequence similarity measures on strings and curves that have been shown in recent years, specifically, the SETH-based lower bounds for edit distance~\cite{BI15}, longest common subsequence~\cite{ABV15,BK15}, Fr\'echet distance~\cite{Bring14}, and others~\cite{AVW14,BI16,BGL16,Polak17}. 

These results all proceed as follows. Let $\phi$ be a given $k$-SAT instance on $\tilde{n}$ variables and clauses $C_1,\ldots,C_{\tilde{m}}$. We can assume that $\tilde{m} = O(\tilde{n})$ by the Sparsification Lemma~\cite{IPZ01}. Split the $\tilde{n}$ variables into two halves $X_1$ and $X_2$ of size $\tilde{n} / 2$. Enumerate all assignments $\alpha_1,\ldots,\alpha_{2^{\tilde{n}/2}}$ of the variables in~$X_1$. For any assignment $\alpha_i$ and any clause $C_\ell$, denote by $\textup{sat}(\alpha_i,C_\ell)$ whether $\alpha_i$ satisfies $C_\ell$, i.e., whether some variable in $X_1$ appears in $C_\ell$ (negated or unnegated) and is set by $\alpha_i$ so that $C_\ell$ is satisfied. Similarly, consider the assignments $\beta_1,\ldots,\beta_{2^{\tilde{n}/2}}$ of $X_2$.
By construction, we can solve the $k$-SAT instance $\phi$ by testing whether there are $\alpha_i, \beta_j$ such that $\textup{sat}(\alpha_i,C_\ell) \vee \textup{sat}(\beta_j,C_\ell)$ holds for all $\ell \in [\tilde{m}]$. 
Making use of this fact, all previous conditional lower bounds for sequence similarity measures essentially construct the following natural sequence:
\begin{align*}
W & = \textup{sat}(\alpha_1,C_1) \ldots \textup{sat}(\alpha_1,C_{\tilde{m}}) 
\ldots \textup{sat}(\alpha_{2^{\tilde{n}/2}},C_1) \ldots \textup{sat}(\alpha_{2^{\tilde{n}/2}},C_{\tilde{m}}) \\
& = \bigconcat_{i \in [2^{\tilde{n}/2}]} \bigconcat_{\ell \in [\tilde{m}]} \textup{sat}(\alpha_i,C_\ell). 
\end{align*}
One typical variation of this string is to replace the bits $\{0,1\}$, indicating whether $\textup{sat}(\alpha_i,C_\ell)$ holds, by two short strings $\{B(0),B(1)\}$. 
Other typical variations are to add appropriate padding strings around the substrings $\bigconcat_{\ell \in [\tilde{m}]} \textup{sat}(\alpha_i,C_\ell)$ or around the whole sequence $W$. These paddings typically only depend on $\tilde{n}$ and $\tilde{m}$. Constructing a second sequence $W'$ with $\alpha_i$ replaced by $\beta_i$, one can then try to emulate the search for the half-assignments $\alpha_i, \beta_j$ by a similarity measure on $W,W'$.  All previous reductions follow this recipe, and thus construct a sequence like $W$.

\paragraph{Is \boldmath$W$ compressible?} For our purposes we need to construct compressible strings. Considering the entropy, the string~$W$ is very well compressible, since it only depends on the $\tOh(\tilde{n})$ input bits of the sparse $k$-SAT instance $\phi$. This entropy $\tOh(\tilde{n})$ is extremely small compared to the length $O(\tilde{n} 2^{\tilde{n}/2})$ of $W$. However, considering grammar-compression, the sequence $W$ is a bad representation, since \emph{$W$ is not generated by any SLP of size $o(2^{\tilde{n}/2}/\tilde{n})$ in general!} To see this, first observe that all substrings $\bigconcat_{\ell \in [\tilde{m}]} \textup{sat}(\alpha_i,C_\ell)$ of $W$ can potentially be different, meaning that $W$ can have $2^{\tilde{n}/2}$ different substrings of length $\tilde{m}$. This happens e.g.\ if for each variable $x_i \in X$ there is a clause $C_i$ consisting only of $x_i$ (which makes the $k$-SAT instance trivial, but shows that $W$ may have many different substrings in general). 
Second, observe that for any SLP $\cT$ consisting of $n$ non-terminals $S_1 \ldots S_n$ and for any length $L \ge 1$ the generated string $\eval(\cT)$ has at most $n \cdot L$ different substrings of length $L$. Indeed, a rule $S_i \to S_\ell S_r$ can only create a new substring, that is not already contained in $\eval(S_\ell)$ or $\eval(S_r)$, if this substring overlaps the boundary between $\eval(S_\ell)$ and $\eval(S_r)$ in $\eval(S_i)$. Hence, the rule $S_i \to S_\ell S_r$ can contribute at most $L$ new substrings of length $L$, amounting to at most $n L$ different substrings overall. Combining these two facts, with $L = \tilde{m} = O(\tilde{n})$, we see that $W$ in general has no SLP of size $o(2^{\tilde{n}/2}/\tilde{n})$.

Hence, the standard approach to conditional lower bounds for sequence similarity measures fails in the compressed setting, and it might seem like (SETH-based) conditional lower bounds are not applicable here.

\paragraph{A compressible sequence \boldmath$T$} 
On the contrary, we show that by simply inverting the ordering we obtain a very well compressible string:
\begin{align*} T & = \textup{sat}(\alpha_1,C_1) \ldots \textup{sat}(\alpha_{2^{\tilde{n}/2}},C_1) 
\ldots \textup{sat}(\alpha_1,C_{\tilde{m}}) \ldots \textup{sat}(\alpha_{2^{\tilde{n}/2}},C_{\tilde{m}}) \\
& = \bigconcat_{\ell \in [\tilde{m}]} \bigconcat_{i \in [2^{\tilde{n}/2}]} \textup{sat}(\alpha_i,C_\ell). 
\end{align*}
The difference between $W$ and $T$ might seem negligible, but it greatly changes the game of emulating $k$-SAT by a sequence similarity measure: In $W$ we are looking for a local structure (a small substring) that ``fits together'' with a local structure in a different string $W'$. In $T$ we have to ensure the choice of a consistent offset $\Delta \in [n]$ and ``read'' the symbols $T[\Delta], T[\Delta + 2^{\tilde{n}/2}], \ldots, T[\Delta + (\tilde{m}-1) 2^{\tilde{n}/2}]$, which seems much more complicated. 

$T$ is compressible to an SLP $\cT$ of size $O(\tilde{n}^2)$, which is much smaller than the $\Omega(2^{\tilde{n}/2}/\tilde{n})$ bound for $W$. Indeed, consider a substring $\bigconcat_{i \in [2^{\tilde{n}/2}]} \textup{sat}(\alpha_i,C_\ell)$. We may assume that no variable appears more than once in $C_\ell$. Consider the following SLP rules, for $1 \le i \le \tilde{n}/2$, 

\noindent\begin{minipage}{.35\linewidth}
\begin{align*}
    A_0 &\to 1,  \\
    A_i &\to A_{i-1} A_{i-1},  \\
    S_0 &\to 0, 
\end{align*}
\end{minipage}%
\begin{minipage}{.65\linewidth}
\begin{align*}
   S_i &\to \begin{cases} S_{i-1} A_{i-1} & \text{if $x_i$ appears in $C_\ell$} \\  A_{i-1} S_{i-1} & \text{if $\neg x_i$ appears in $C_\ell$} \\ S_{i-1} S_{i-1} & \text{otherwise} \end{cases}
\end{align*}
\end{minipage}
\medskip

\noindent
We clearly have $\eval(A_i) = 1^{2^i}$. Moreover, if $\neg x_i$ appears in $C_\ell$, then for $x_i = 0$, no matter what we choose for $x_1,\ldots,x_{i-1}$, we have $\textup{sat}(\alpha_j,C_\ell) = 1$, and thus we may write $A_{i-1}$. For $x_i=1$ we note that the value $\textup{sat}(\alpha_j,C_\ell)$ only depends on the remaining variables $x_1,\ldots,x_{i-1}$, and thus we may write $S_{i-1}$. Along these lines, one can check that $\eval(S_{\tilde{n}/2}) = \bigconcat_{i \in [2^{\tilde{n}/2}]} \textup{sat}(\alpha_i,C_\ell)$. Creating such an SLP for each $\ell \in [\tilde{m}]$ and constructing their concatenation, we obtain an SLP of size $O(\tilde{m} \tilde{n}) = O(\tilde{n}^2)$ generating $T$.

\paragraph{Example Lower Bound: Pattern Matching with Wildcards}
In the remainder of this section, we present an easy example for a conditional lower bound on compressed strings, namely for the problem Pattern Matching with Wildcards. Here we consider an alphabet $\Sigma$ and we say that symbols $\sigma, \sigma' \in \Sigma \cup \{*\}$ \emph{match} if $\sigma = *$ or $\sigma' = *$ or $\sigma = \sigma'$. We say that two equal-length strings $X,Y$ (over alphabet $\Sigma \cup \{*\}$) match if $X[i]$ and $Y[i]$ match for all $i$. Given a text $T$ of length $N$ and a pattern $P$ of length $M \le N$, the task is to decide whether $P$ matches some length-$M$ substring of $T$.

Let $\phi$ be a $k$-SAT instance as above, but this time let $\alpha_1,\ldots,\alpha_{2^{\tilde{n}}}$ be \emph{all} the assignments of the $\tilde{n}$ variables in $\phi$. We define the text $T$ and pattern $P$ by
\[ T = \bigconcat_{\ell \in [\tilde{m}]} \bigconcat_{i \in [2^{\tilde{n}}]} \textup{sat}(\alpha_i,C_\ell) \qquad\qquad P = 1 (*^{2^{\tilde{n}}-1} 1)^{\tilde{m}-1}. \]
Note that $P$ matches some substring of $T$ if and only if there is an offset $\Delta \in [2^{\tilde{n}}]$ such that $T[\Delta] = T[\Delta + 2^{\tilde{n}}] = \ldots = T[\Delta + (\tilde{m}-1) 2^{\tilde{n}}] = 1$, which happens if and only if $\alpha_\Delta$ is a satisfying assignment of $\phi$. Hence, we constructed an equivalent instance of Pattern Matching with Wildcards. 

Analogously to above, one can show that $T$ is generated by an SLP $\cT$ of size $n = O(\tilde{n}^2)$ that can be computed in time $O(\tilde{n}^2)$. Similarly, it is easy to see that $P$ is generated by an SLP $\cP$ of size $O(\tilde{n})$ that can be computed in time $O(\tilde{n})$. Hence, the reduction runs in time $O(\tilde{n}^2)$. We stress that we define strings $T,P$ of exponential length in $\tilde{n}$, but in the reduction we never explicitly write down any such string, but we simply construct compressed representations. Since the resulting strings have length $O(2^{\tilde{n}} \tilde{n})$, any $O(N^{1-\eps})$ time algorithm for Pattern Matching with Wildcards would imply an algorithm for $k$-SAT in time $O(2^{(1-\eps)\tilde{n}} \textup{poly}(\tilde{n}))$, contradicting the Strong Exponential Time Hypothesis (SETH). Note that this conditional lower bound of $N^{1-o(1)}$ holds even for strings compressible to size $\textup{polylog}(N)$. 

In Section~\ref{sec:genpatternmatch} we analyze Pattern Matching with Wildcards in more detail and show that the optimal running time, conditional on SETH, is $\min\{N, n M\}^{1 \pm o(1)}$, and this holds for all settings of the text length $N$, the compressed text size $n$, the pattern length $M$, and the compressed pattern size $m$.

In Pattern Matching with Wildcards, we got a consistent choice of an offset $\Delta$ for free. It is much more complicated to achieve this for other problems such as Longest Common Subsequence, CFG Parsing, or RNA Folding. 
This overview summarized the main technical contributions of this paper, but left out many problem-specific tricks that can be found in the subsequent proofs, and that we think will find more applications for analyzing problems on compressed strings.


\section{Preliminaries} \label{sec:prelim}

Here we give general preliminaries on strings, straight-line programs, and hardness assumptions. Problem definitions and additional problem specific preliminaries will be given in the corresponding sections. For a positive integer $n$ we let $[n] = \{1,\ldots,n\}$, while for a proposition $A$ we let $[A]$ be 1 if $A$ is true and 0 otherwise. 

\paragraph{Strings}
Let $\Sigma$ be a finite alphabet. In most parts of this paper we assume that $|\Sigma| = O(1)$, but in exceptional cases we allow the alphabet to grow with the input size. For a string $T$ over alphabet $\Sigma$, we write $|T|$ for its length, $T[i]$ for its $i$-th symbol, and $T[i..j]$ for the substring from position $i$ to position~$j$. For two strings $T, T'$ we write $T \concat T'$, or simply $T\, T'$, for their concatenation. For $k \ge 1$ we let $T^k := \bigconcat_{i=1}^k T$.

\paragraph{Straight-Line Programs (SLPs)} 
An SLP $\cT$ is a set of non-terminals $S_1,\ldots,S_n$, each equipped with a rule of the form (1) $S_i \to \sigma$ for some $\sigma \in \Sigma$ or (2) $S_i \to S_{\ell(i)}, S_{r(i)}$ with $\ell(i),r(i) < i$. The string $T$ generated by SLP $\cT$ is recursively defined as follows. For a rule $S_i \to \sigma$ we let $\eval(S_i) := \sigma$, and for a rule $S_i \to S_{\ell(i)}, S_{r(i)}$ we let $\eval(S_i) := \eval(S_{\ell(i)}) \concat \eval(S_{r(i)})$. Then $T = \eval(\cT) := \eval(S_n)$ is the string generated by SLP $\cT$. Note that an SLP is a context-free grammar describing a unique string; so $\cT$ is a \emph{grammar-compressed} representation of $T$.
We call $|\cT| = n$ the \emph{size} of $\cT$.
See Figure~\ref{fig_slp} for the depiction of an SLP; in particular note the difference between the \emph{directed acyclic graph} that is the compressed representation $\cT$ and the \emph{parse tree} that we obtain by decompressing $\cT$ to a tree whose leaves spell the decompressed text $T$.

For an SLP $\cT$ with non-terminals $S_1,\ldots,S_n$, we recursively define the depth $\depth(S_i)$ as follows. For a rule $S_i \to \sigma$ we set $\depth(S_i):=0$. For a rule $S_i \to S_{\ell(i)}, S_{r(i)}$ we set $\depth(S_i) = \max\{\depth(S_{\ell(i)}), \depth(S_{r(i)})\} + 1$. The depth of $\cT$ is defined as $\depth(S_n)$. 
The SLP $\cT$ is called an \emph{AVL-grammar}~\cite{Rytter03} if it is balanced: for any rule $S_i \to S_{\ell(i)}, S_{r(i)}$ in the SLP we have $|\depth(S_{\ell(i)}) - \depth(S_{r(i)})| \le 1$. This implies that the depth of $\cT$ is $O(\log N)$, where $N = |\eval(\cT)|$. 

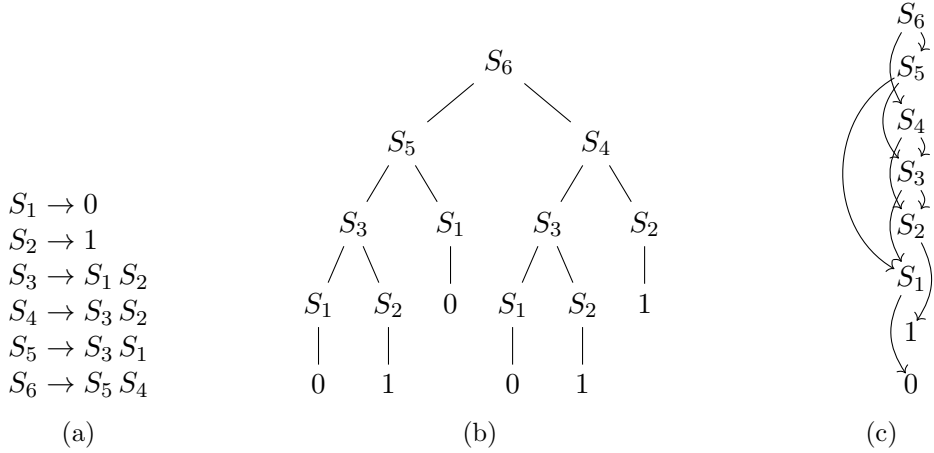
\begin{figure*}
	\centering
	
	\begin{subfigure}[b]{0.3\textwidth}
        \centering
			\begin{tabular}{ l }
			 $S_1\to 0$ \\
			 $S_2\to 1$ \\ 
			 $S_3\to S_1\, S_2$ \\ 
			 $S_4\to S_3\, S_2$ \\ 
			 $S_5\to S_3\, S_1$ \\ 
			 $S_6\to S_5\, S_4$
			\end{tabular}
        \caption{}
        \label{fig_a}
    \end{subfigure}
	~
	\begin{subfigure}[b]{0.3\textwidth}
        \centering
        \begin{forest}
			[$S_6$
				[$S_5$
					[$S_3$						
						[$S_1$
							[$0$]
						]
						[$S_2$
							[$1$]
						]
					]
					[$S_1$
						[$0$]
					]
				]
				[$S_4$
					[$S_3$						
						[$S_1$
							[$0$]
						]
						[$S_2$
							[$1$]
						]
					]
					[$S_2$
						[$1$]
					]
				]
			]
		\end{forest}
        \caption{}
        \label{fig_b}
    \end{subfigure}
	~
	\begin{subfigure}[b]{0.3\textwidth}
        \centering
        \begin{tikzpicture}[
      shorten >=-3pt,
      shorten <=-3pt,
      auto,
      node distance=.7 cm,
      scale = 1,
      transform shape,
      state/.style={circle,inner sep=2pt}
      ]
			\node[state] (z0) {$0$};
			\node[state] (z1) [above of=z0] {$1$};
			\node[state] (s1) [above of=z1] {$S_1$};
			\node[state] (s2) [above of=s1] {$S_2$};
			\node[state] (s3) [above of=s2] {$S_3$};
			\node[state] (s4) [above of=s3] {$S_4$};
			\node[state] (s5) [above of=s4] {$S_5$};
			\node[state] (s6) [above of=s5] {$S_6$};

			\path[->] (s1) edge [bend right] (z0);
			\path[->] (s2) edge [bend left] (z1);
			\path[->] (s3) edge [bend right=30] (s1);
			\path[->] (s3) edge [bend left] (s2);
			\path[->] (s4) edge [bend left] (s3);
			\path[->] (s4) edge [bend right=30] (s2);
			\path[->] (s5) edge [bend right=40] (s3);
			\path[->] (s5) edge [bend right=60] (s1);
			\path[->] (s6) edge [bend left] (s5);
			\path[->] (s6) edge [bend right=30] (s4);
		\end{tikzpicture}
        \caption{}
        \label{fig_c}
    \end{subfigure}
	
	\caption{(a) An SLP generating the sequence $010011$. (b) The corresponding parse tree. (c) The acyclic graph corresponding to the SLP.}
	\label{fig_slp}
\end{figure*}

\begin{thm}[\cite{Rytter03}] \label{AVL}
	Given a text $T$ of length $N$ by an SLP $\cT$ of size~$n$, in $O(n\log N)$ time we can construct an AVL-grammar $\cT'$ for $T$ with size $O(n\log N)$ and depth $O(\log N)$.
\end{thm}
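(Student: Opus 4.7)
The plan is to process the non-terminals $S_1, \ldots, S_n$ of $\cT$ in order and construct along the way corresponding non-terminals $S_1', \ldots, S_n'$ of $\cT'$ so that each $S_i'$ is the root of an AVL-subgrammar of $\cT'$ with $\eval(S_i') = \eval(S_i)$. For a terminal rule $S_i \to \sigma$, we simply add $S_i' \to \sigma$. For a composite rule $S_i \to S_{\ell(i)} S_{r(i)}$, we must combine the two AVL-grammars already built for $\eval(S_{\ell(i)})$ and $\eval(S_{r(i)})$ into one AVL-grammar for their concatenation, and that is where all the work happens. The workhorse is an AVL-concatenation subroutine, analogous to the classical concat operation on balanced binary trees (or ropes), which I would isolate as the following lemma.

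\textbf{Concatenation Lemma.} Given AVL-grammars $\cA, \cB$ with root non-terminals $A, B$ of depths $h_A, h_B$, we can extend $\cA \cup \cB$ by $O(|h_A - h_B|+1)$ new non-terminals, in $O(|h_A - h_B|+1)$ time, so as to obtain an AVL-grammar with a fresh root $C$ satisfying $\eval(C) = \eval(A)\, \eval(B)$ and $\depth(C) \le \max(h_A, h_B) + 1$.

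To prove the lemma, assume without loss of generality that $h_A \ge h_B$. If $h_A - h_B \le 1$, simply add the rule $C \to A\, B$. Otherwise, expand $A$ via its defining rule $A \to A_L A_R$ (so $\depth(A_L), \depth(A_R) \in \{h_A-1, h_A-2\}$ by the AVL property), recursively concatenate $A_R$ with $B$ to obtain a non-terminal $A_R'$ with $\depth(A_R') \le \depth(A_R) + 1$, and then combine $A_L$ with $A_R'$. If $|\depth(A_L) - \depth(A_R')| \le 1$, we introduce the rule $C \to A_L A_R'$; otherwise necessarily $\depth(A_R') = \depth(A_L) + 2$, and a single or double AVL rotation at the top, introducing $O(1)$ auxiliary non-terminals, produces a balanced $C$. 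The recursion traverses the right spine of $A$ for $O(h_A - h_B)$ levels, introducing $O(1)$ new non-terminals per level.

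Armed with this lemma, the main algorithm is essentially immediate. Processing $S_i$ takes constant work plus, for a composite rule, one invocation of the Concatenation Lemma on $S_{\ell(i)}'$ and $S_{r(i)}'$. Since each $S_j'$ evaluates to a string of length at most $N$ and is the root of an AVL-grammar, its depth is $O(\log N)$. Hence each concatenation uses $O(\log N)$ time and creates $O(\log N)$ new non-terminals, giving totals of $O(n \log N)$ time and $O(n \log N)$ non-terminals over the $n$ rules. The final AVL-grammar $\cT'$ satisfies $\eval(S_n') = T$, and being AVL-balanced with $N$ leaves in its parse tree, it has $\depth(S_n') = O(\log N)$.

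The main obstacle is establishing the Concatenation Lemma carefully: one has to verify that the spine descent, the recursive concatenation, and the AVL rotations all preserve both the evaluated string and the balance invariant, while producing the claimed depth bound and a tight count of new non-terminals. The rotation bookkeeping mirrors the standard proof for concatenating AVL trees, but with the added requirement that previously created non-terminals remain intact and that new ones are introduced only along the right spine of $A$, so that the work charged to a single concatenation is genuinely $O(|h_A - h_B|+1)$ and not, for instance, proportional to the total size of $\cA$.
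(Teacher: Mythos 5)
The paper cites this result to Rytter~\cite{Rytter03} without giving a proof, and your reconstruction follows exactly the standard argument used there: maintain AVL-balanced grammars for each non-terminal, reduce each SLP production to an AVL-concatenation, and prove the concatenation lemma by descending the taller right spine and rebalancing with $O(1)$ fresh non-terminals per level, giving $O(\log N)$ new non-terminals per production and $O(n\log N)$ overall. Your proof is correct and matches the cited source's approach.
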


\begin{obs} \label{obs:repetition}
  For any string $T$ and $k \ge 1$, there is an SLP of size $O(|T| + \log k)$ generating the string $T^k$.
\end{obs}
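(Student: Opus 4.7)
The plan is to build the SLP in two stages: first a sub-SLP for the single copy $T$, then a binary-powering construction on top to obtain $T^k$.

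For the first stage, I would introduce one non-terminal $C_\sigma \to \sigma$ for each distinct symbol $\sigma$ appearing in $T$, and then, writing $T = T[1] T[2] \cdots T[|T|]$, introduce non-terminals $P_1 \to C_{T[1]}$ and $P_i \to P_{i-1}\, C_{T[i]}$ for $i = 2, \ldots, |T|$, so that $\eval(P_{|T|}) = T$. This uses $O(|T|)$ rules. (Any other linear-size SLP for $T$ works equally well.)

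For the second stage, let $k = \sum_{i \in I} 2^i$ be the binary expansion of $k$, where $I \subseteq \{0, 1, \ldots, \lfloor \log_2 k \rfloor\}$. I would add non-terminals $R_0 \to P_{|T|}$ and $R_i \to R_{i-1}\, R_{i-1}$ for $i = 1, \ldots, \lfloor \log_2 k \rfloor$, which by induction satisfy $\eval(R_i) = T^{2^i}$. This contributes $O(\log k)$ rules. Finally I would enumerate $I$ in increasing order as $i_1 < i_2 < \cdots < i_t$ (with $t \le \log_2 k + 1$) and add non-terminals $Q_1 \to R_{i_1}$ and $Q_j \to Q_{j-1}\, R_{i_j}$ for $j = 2, \ldots, t$. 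A straightforward induction on $j$, using concatenation of the exponents, shows $\eval(Q_t) = T^{2^{i_1} + 2^{i_2} + \cdots + 2^{i_t}} = T^k$. This step adds another $O(\log k)$ rules.

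The total size is $O(|T|) + O(\log k) = O(|T| + \log k)$, as required, and the construction is entirely mechanical, so there is no real obstacle; the only thing to double-check is the edge case $k = 1$ (where one just outputs the SLP for $T$) and $|T| = 0$ (trivial). I would mention in passing that the depth of the resulting SLP is $O(|T| + \log k)$, but since the observation only concerns size, no further argument is needed.
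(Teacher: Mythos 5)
Your proof is correct, and it is essentially the construction the authors have in mind: the paper states this as an unproved observation (folklore), but elsewhere—e.g., the construction of the non-terminal $X$ with $L(X)=\Sigma^{V^k-1}$ in the CFG-parsing reduction—the authors use exactly the same repeated-squaring plus binary-expansion idea. One cosmetic remark: the paper's SLP rules are restricted to the forms $S_i\to\sigma$ and $S_i\to S_{\ell(i)}\,S_{r(i)}$, so "copy" rules like $R_0\to P_{|T|}$ and $Q_1\to R_{i_1}$ are not literally permitted; you should simply identify $R_0$ with $P_{|T|}$ and $Q_1$ with $R_{i_1}$ rather than introduce new non-terminals, which changes nothing about size or correctness.
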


In all problems considered in this paper, the input contains a text $T$ given by a grammar-compressed representation $\cT$, such that $T = \eval(\cT)$. We always denote by $N = |T|$ the length of the text and by $n = |\cT|$ the size of its representation. Sometimes we are additionally given a pattern $P$ by a grammar-compressed representation $\cP$, and we denote the pattern length by $M = |P|$ and its representation size by $m = |\cP|$.

\subsection{Hardness Assumptions} \label{sec:hardnessassumptions}

\paragraph{SETH and OV}
The Strong Exponential Time Hypothesis (SETH) was introduced by Impagliazzo, Paturi, and Zane~\cite{IPZ01} and asserts that the central NP-hard satisfiability problem has no algorithms that are much faster than exhaustive search.
\begin{conjecture}[SETH]
\label{conj:seth}
  There is no $\varepsilon>0$ such that for all $k \ge 3$, $k$-SAT on $n$ variables can be solved in time $O(2^{(1-\varepsilon)n})$.
\end{conjecture}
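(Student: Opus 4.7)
The ``final statement'' here is the Strong Exponential Time Hypothesis, which is a widely believed but famously unresolved conjecture rather than a theorem with a known proof. So my plan is less an outline of a likely-to-succeed argument than a sober account of what a proof would have to accomplish and why I expect the effort to fail. Any proof of SETH must, for every constant $k \ge 3$ and every $\varepsilon > 0$, exhibit an infinite family of $k$-SAT instances on which no algorithm (in the word-RAM or multi-tape Turing machine model) runs in $O(2^{(1-\varepsilon)n})$ time. Since SETH implies $\mathrm{P} \ne \mathrm{NP}$ (a polynomial-time SAT algorithm would give a $2^{o(n)}$ algorithm, contradicting even the weaker ETH implied by SETH), any such proof is at least as hard as separating $\mathrm{P}$ from $\mathrm{NP}$, and in fact strictly harder: it must additionally rule out any constant-factor savings in the exponent.

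The natural plan would be to pick a hard distribution over $k$-SAT instances (say, sparse random $k$-SAT near the satisfiability threshold, or instances produced by the sparsification lemma of Impagliazzo–Paturi–Zane) and prove a matching $2^{(1-o(1))n}$ lower bound against all algorithms solving them. First I would try to reduce the problem to proving exponential circuit lower bounds against nonuniform models (since a $2^{(1-\varepsilon)n}$ RAM algorithm gives circuits of size $2^{(1-\varepsilon)n} \cdot \mathrm{poly}(n)$ deciding SAT). Then I would attempt to import or adapt a lower-bound technique strong enough to separate $\mathrm{NP}$ from $\mathrm{SIZE}(2^{(1-\varepsilon)n})$ for all $\varepsilon > 0$ and all $k$, where $k$ enters through the locality of the clauses and the corresponding restrictions one can apply.

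The hard part—which I fully expect to be insurmountable with current techniques—is that every known proof method runs into a well-documented barrier. Relativization (Baker–Gill–Solovay) rules out diagonalization-style arguments, since SETH both holds and fails relative to suitable oracles. Natural proofs (Razborov–Rudich) rule out combinatorial lower bounds that are constructive and large, under standard cryptographic assumptions. Algebrization (Aaronson–Wigderson) blocks arithmetization-plus-relativization approaches. Beyond these barriers, we currently do not even have super-linear circuit lower bounds for problems in $\mathrm{NP}$, so proving a $2^{(1-o(1))n}$ lower bound is many orders of magnitude beyond the state of the art. A secondary obstacle is that SETH is not universally believed: improvements of the form $O(2^{(1-c/k)n})$ are known, and pushing $c$ to depend favorably on $k$ would actually refute SETH, so a purported proof must also explain why no such scaling can ever be pushed past a fixed constant. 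In short, my proposed ``proof'' reduces to the grand open problems of complexity theory, and the honest forward-looking statement is that SETH is assumed, not proven, throughout the rest of this paper.
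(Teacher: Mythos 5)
You have correctly recognized that the statement is a \emph{conjecture}, not a theorem: the paper states SETH as \hyporef*{} (Conjecture~\ref{conj:seth}) and assumes it as a hardness hypothesis, offering no proof, exactly as you say. Your discussion of why a proof is far out of reach is accurate and consistent with the paper's treatment, so there is nothing to compare — both you and the paper leave SETH unproven and build conditional results on top of it.
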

 Effectively all known SETH-based lower bounds for polynomial-time problems use reductions via the \emph{Orthogonal Vectors problem} (OV):
Given sets $\sA$, $\sB\subseteq \{0,1\}^d$ of size $|\sA| = A$, $|\sB| = B$, determine whether there exist vectors $a \in \sA$, $b \in \sB$ with $\sum_{i=1}^d a[i] \cdot b[i] = 0$.
Simple algorithms solve OV in time $\Oh(2^d (A+B))$ and $\Oh(d A B)$. For $A=B$ and $d=c(A)\log A$ the fastest known algorithm runs in time $A^{2-1/\Oh(\log c(A))}$~\cite{AWY15}, which is only slightly subquadratic for $d \gg \log A$. This has led to the following conjecture, which follows from SETH~\cite{W04}.
\begin{conjecture}[OV]
\label{conj:ov}
  For any $\varepsilon>0$ and $\beta > 0$, on instances with $B = \Theta(A^\beta)$ OV has no $O(A^{1+\beta-\eps} \poly(d))$ time algorithm.
\end{conjecture}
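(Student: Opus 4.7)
The plan is to derive the OV conjecture from SETH via the classical reduction of Williams~\cite{W04}, adapted so that the two vector sets have the prescribed unbalanced sizes $B = \Theta(A^\beta)$. No novel ingredient is needed beyond balancing the halves of the variable split correctly and applying the Sparsification Lemma.

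First, by the Sparsification Lemma of Impagliazzo, Paturi, and Zane~\cite{IPZ01}, for any $\delta > 0$ we may restrict attention to $k$-SAT instances with $\tilde m = O_\delta(\tilde n)$ clauses $C_1,\ldots,C_{\tilde m}$, at the cost of a $2^{\delta \tilde n}$ multiplicative factor in the running time. Fix such an instance $\phi$ on $\tilde n$ variables. Split the variables into a left set $X_1$ of size $\tilde n_1 = \lceil \tilde n/(1+\beta)\rceil$ and a right set $X_2$ of size $\tilde n_2 = \tilde n - \tilde n_1$, so that $\tilde n_2 \approx \beta \tilde n_1$. Enumerate all $2^{\tilde n_1}$ assignments $\alpha_1,\ldots,\alpha_{2^{\tilde n_1}}$ of $X_1$ and all $2^{\tilde n_2}$ assignments $\beta_1,\ldots,\beta_{2^{\tilde n_2}}$ of $X_2$. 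For each $\alpha_i$ define a vector $a_i \in \{0,1\}^d$ with $d := \tilde m$ by setting $a_i[\ell] = 1$ iff $\alpha_i$ does \emph{not} satisfy $C_\ell$ (when restricted to the $X_1$-literals), and analogously define $b_j$ from $\beta_j$. Then $\phi$ is satisfiable iff some pair $(a_i,b_j)$ is orthogonal: orthogonality forces every clause to be satisfied by at least one of the two half-assignments, and the converse is immediate.

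This produces an OV instance with $A = 2^{\tilde n_1}$, $B = 2^{\tilde n_2} = \Theta(A^\beta)$, and $d = O(\tilde n) = O(\log A)$. An $O(A^{1+\beta-\eps}\poly(d))$ OV algorithm would therefore solve $\phi$ in time
\[
O\!\left(2^{\tilde n_1(1+\beta-\eps)}\poly(\tilde n)\right) = O\!\left(2^{\tilde n - \eps \tilde n_1}\poly(\tilde n)\right) = O\!\left(2^{\tilde n(1-\eps/(1+\beta))}\poly(\tilde n)\right).
\]
Choosing $\delta < \eps/(2(1+\beta))$ absorbs the sparsification overhead and yields a $k$-SAT algorithm running in time $O(2^{(1-\eps')\tilde n})$ for some $\eps' > 0$ independent of $k$, contradicting SETH.

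The only mild subtlety is that the conjecture demands a $\poly(d)$ dependence on $d$, whereas $d = \tilde m$ could a priori be superlinear in $\tilde n$; this is precisely why the Sparsification Lemma is applied upfront to guarantee $d = O(\tilde n) = O(\log A)$, ensuring that the polynomial factor in $d$ is absorbed into the $\poly(\tilde n)$ overhead without changing the base $2^{(1-\eps')\tilde n}$. No further obstacles arise, as correctness of the reduction is verified clause-by-clause and the parameter counting is direct.
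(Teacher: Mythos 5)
The statement is a conjecture, and the paper does not supply a proof; it cites~\cite{W04} for the SETH implication (in the balanced case $\beta = 1$) and~\cite{BK15} for the fact that the conjecture for some $\beta > 0$ implies it for all $\beta > 0$. Your proof correctly and directly establishes the SETH $\Rightarrow$ OV implication for arbitrary $\beta$ by building the imbalance straight into Williams' split-and-list reduction: instead of splitting the $\tilde n$ variables evenly, you take $\tilde n_1 = \lceil \tilde n/(1+\beta) \rceil$ and $\tilde n_2 = \tilde n - \tilde n_1$, which gives $B = 2^{\tilde n_2} = \Theta(A^\beta)$ with $A = 2^{\tilde n_1}$, and the exponent arithmetic $\tilde n_1(1+\beta-\eps) = \tilde n(1 - \eps/(1+\beta)) + O(1)$ is sound. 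The Sparsification Lemma is applied in the right place and for the right reason (to ensure $d = \tilde m = O(\log A)$, so that the $\poly(d)$ factor does not blow up the base), and the net conclusion --- a $2^{(1-\eps')\tilde n}$ time $k$-SAT algorithm for some $\eps' > 0$ independent of $k$ --- is exactly a SETH violation. Compared to the paper's two-step citation route (balanced reduction plus a separate balancing lemma), your single unbalanced split is a slightly more self-contained derivation that accomplishes the same thing in one pass; both are standard and correct.
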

It is known that if this conjecture holds for some $\beta > 0$ then it holds for all $\beta > 0$, see e.g.~\cite{BK15}.

More generally, for $k \ge 2$ we say that a tuple $(a_1,\ldots,a_k)$ with $a_i \in \{0,1\}^d$ is \emph{orthogonal} if for all $\ell \in [d]$ there exists an $i \in [k]$ such that $a_i[\ell] = 0$.
In the $k$-OV problem we are given a set $\sA \subseteq \{0,1\}^d$ of size $A$ and want to determine whether there is an orthogonal tuple $(a_1,\ldots,a_k)$ with $a_i \in \sA$. The fastest known algorithm for $k$-OV is to run an easy reduction to OV and then solve OV. The following conjecture follows from SETH.
\begin{conjecture}[$k$-OV]
\label{conj:kov}
  For any $\varepsilon>0$ and $k \ge 2$, $k$-OV is not in time $O(A^{k - \eps} \poly(d))$.
\end{conjecture}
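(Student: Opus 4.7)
The plan is to derive Conjecture~\ref{conj:kov} from \SETH\ by a split-and-list reduction from $k'$-SAT, generalizing the standard $k=2$ argument. Fix the target $k \ge 2$ and $\eps > 0$ and suppose for contradiction that $k$-OV is solvable in time $O(A^{k-\eps}\poly(d))$. Starting from an arbitrary $k'$-SAT instance $\phi$ on $\tilde n$ variables with clauses $C_1,\ldots,C_{\tilde m}$, I first apply the Sparsification Lemma~\cite{IPZ01} to assume $\tilde m = O(\tilde n)$ at the cost of a $2^{\delta \tilde n}\poly(\tilde n)$ multiplicative overhead for any $\delta > 0$ that I fix below. Then I partition the variables into $k$ disjoint groups $X_1,\ldots,X_k$ of size $\tilde n/k$ each.

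For each $i \in [k]$ and each partial assignment $\alpha$ of $X_i$, I construct a vector $v^{(i)}_\alpha \in \{0,1\}^{\tilde m}$ whose $\ell$-th entry is $0$ iff some literal of $C_\ell$ over $X_i$ is satisfied by $\alpha$. By construction, $\phi$ is satisfiable iff there exist partial assignments $\alpha_1,\ldots,\alpha_k$, one per group, such that every clause is satisfied by at least one $\alpha_i$---equivalently, the tuple $(v^{(1)}_{\alpha_1},\ldots,v^{(k)}_{\alpha_k})$ is orthogonal. This yields a $k$-partite orthogonal-vectors instance with $k$ sets of size $2^{\tilde n/k}$ in dimension $\tilde m$.

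To reduce to the single-set form demanded by the conjecture, I append to every vector a length-$k$ group-identifier gadget $g_i \in \{0,1\}^k$ defined by $g_i[j]=0$ iff $j=i$. Then any orthogonal $k$-tuple drawn from $\sA := \bigcup_{i \in [k]} \{v^{(i)}_\alpha \concat g_i : \alpha \in \{0,1\}^{X_i}\}$ must contain exactly one vector from each group, since otherwise some gadget coordinate is all-ones. Thus $\sA \subseteq \{0,1\}^d$ has $A = k \cdot 2^{\tilde n/k}$ vectors and dimension $d = \tilde m + k = O(\tilde n)$. Running the hypothetical $k$-OV algorithm then solves $\phi$ in time $O(A^{k-\eps}\poly(d)) = 2^{(1-\eps/k)\tilde n}\poly(\tilde n)$, and together with the sparsification overhead this is $2^{(1-\eps/k+\delta)\tilde n}\poly(\tilde n)$. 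Choosing $\delta = \eps/(2k)$ gives a universal $\eps/(2k)$ savings over $2^{\tilde n}$ that is valid for every $k'$-SAT, contradicting \SETH.

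The main obstacle is conceptual rather than technical: I must force any orthogonal tuple to pick one vector per group even though all vectors live in a single set, which the short identifier gadget handles cleanly. Everything else is routine bookkeeping---verifying that the sparsification overhead is absorbed, that $\poly(d)$ remains $\poly(\tilde n)$, and that since $k$ is a fixed constant the $k^{k-\eps}$ prefactor from $A^{k-\eps}$ is harmless.
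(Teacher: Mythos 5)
The paper does not prove this statement --- it is stated as a conjecture, with the text merely asserting that it ``follows from SETH'' (citing \cite{W04} for the $k=2$ case). Your proof is the standard split-and-list reduction generalizing Williams's argument to $k$ groups, and it is correct: the group-identifier gadget $g_i$ correctly forces any orthogonal $k$-tuple from the single set $\sA$ to take exactly one vector per group (by pigeonhole over the $k$ gadget coordinates), the arithmetic $A^{k-\eps}\poly(d) = O(2^{(1-\eps/k)\tilde n}\poly(\tilde n))$ is right since $k^{k-\eps}$ is constant, and absorbing the sparsification overhead with $\delta = \eps/(2k)$ yields a savings $\eps' = \eps/(2k)$ uniform over all clause widths $k'$, which is exactly what contradicting SETH requires. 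The only omitted details (divisibility of $\tilde n$ by $k$, and that $\poly(\tilde n) = 2^{o(\tilde n)}$ so it can be absorbed into the exponent for $\tilde n$ beyond a $k'$-dependent threshold) are routine.
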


\paragraph{\boldmath$k$-Clique}
The fundamental $k$-Clique problem asks whether a given (undirected, unweighted) graph $G=(V,E)$ contains $k$ nodes that are pairwise adjacent. 
$k$-Clique is among the most well-studied problems in theoretical computer science, and it is the canonical intractable (W[1]-complete) problem in parameterized complexity.
With slight abuse of notation, we will denote the number of vertices and edges of $G$ by $V$ and $E$, respectively. 
The naive algorithm for $k$-Clique takes time $O(V^k)$. If $k$ is divisible by 3, the fastest known algorithm runs in time $O(V^{\omega k/3})$, where $\omega < 2.373$ is the exponent of matrix multiplication~\cite{NP85}. See~\cite{EG04} for the case that $k$ is not divisible by~3. To improve this bound is a longstanding open problem~\cite{woeginger,babai}. Since fast matrix multiplication is considered impractical, researchers also studied \emph{combinatorial} algorithms, that avoid fast matrix multiplication\footnote{Combinatorial algorithms are a notion without agreed upon definition; finding a formal definition is considered an open problem.}. The fastest combinatorial algorithm runs in time $O(V^k/\log^k{V})$~\cite{VClique}. The following conjectures assert that these bounds are close to optimal, and have been used e.g. in~\cite{ABV15b,BGL16}. 

\begin{conjecture}[$k$-Clique] \label{conj:clique}
For any $\varepsilon>0$ and $k \ge 3$, $k$-Clique has no $O(V^{(1-\varepsilon) \omega k / 3})$ algorithm.
\end{conjecture}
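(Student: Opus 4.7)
The final statement is the $k$-Clique conjecture, a hardness hypothesis rather than a theorem, so there is no proof to sketch in the literal sense: unconditionally proving such a lower bound would require circuit lower bound techniques beyond anything currently known, and refuting it (even for $k=3$) would count as a major algorithmic breakthrough. What a paper introducing this conjecture typically does, and what I will outline, is (a) verify that the bound $V^{\omega k/3}$ is actually achieved by a known algorithm and (b) explain why no faster algorithm is known, so that the cited exponent really is the right threshold.

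For part (a), the plan is to recall the Ne\v{s}et\v{r}il--Poljak reduction. When $k$ is divisible by $3$, I would partition the vertex set into three equal groups, enumerate the at-most-$V^{k/3}$ many $(k/3)$-cliques inside each group, and build a tripartite auxiliary graph on $3V^{k/3}$ nodes whose triangles correspond exactly to $k$-cliques in $G$. Triangle detection on this graph reduces to Boolean matrix multiplication on $V^{k/3}\times V^{k/3}$ matrices and costs $O(V^{\omega k/3})$. For $k$ not divisible by $3$, I would invoke the refined construction of~\cite{EG04}. So $V^{\omega k/3}$ really is the bound to beat.

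For part (b), the plan is to reduce the general conjecture to the $k=3$ case via the same construction in reverse: an $O(V^{\omega-\delta})$ triangle algorithm, plugged into the Ne\v{s}et\v{r}il--Poljak reduction, would yield an $O(V^{(\omega-\delta)k/3}) = O(V^{\omega k/3 - \delta k/3})$ algorithm for $k$-Clique, contradicting the conjecture. Hence the conjecture essentially reduces to the widely believed assertion that triangle detection is as hard as Boolean matrix multiplication, a folklore equivalence supported by decades of failed algorithmic attempts and by the fact that every known subcubic combinatorial speedup (e.g.\ the $V^3/\log^k V$ style improvements) preserves the exponent.

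The hard part, and the reason this must remain a conjecture, is the triangle base case: no current lower bound technique rules out $o(V^\omega)$ triangle algorithms, so the statement cannot be turned into a theorem by present methods. This is precisely why it is offered as a hypothesis and used, together with Conjecture~\ref{conj:kov}, only as a basis for the conditional hardness results on CFG Parsing and RNA Folding in Sections~\ref{sec:cfg} and~\ref{sec:rna}.
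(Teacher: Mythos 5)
You correctly recognize that this is a hardness hypothesis, not a theorem, and the paper indeed just states it (with the matching Ne\v{s}et\v{r}il--Poljak / \cite{NP85,EG04} upper bound as motivation), so there is nothing to ``prove'' and nothing to compare against. Your part~(a) is accurate and is essentially the justification the paper gives.

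One caveat on part~(b): you phrase it as ``reduce the general conjecture to the $k=3$ case,'' but the Ne\v{s}et\v{r}il--Poljak reduction runs the other way. It shows that a fast triangle algorithm would yield fast $k$-Clique for every $k$ divisible by $3$, i.e.\ falsifying the $k=3$ case falsifies the whole conjecture; it does \emph{not} show that hardness of triangle implies hardness of $k$-Clique for larger $k$. There is no known tight reduction from triangle detection to $k$-Clique that would make the two statements equivalent, which is precisely why the conjecture is quantified over all $k \ge 3$ rather than being stated only for $k=3$. So the $k \ge 6$ cases are logically additional assumptions, supported by the same kind of evidence (matching algorithm, decades of no improvement) rather than derived from the $k=3$ case. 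This is a minor imprecision in framing; the substance of what you write is otherwise in line with how the paper treats the conjecture.
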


\begin{conjecture}[Combinatorial $k$-Clique] \label{conj:combclique}
For any $\varepsilon>0$ and $k \ge 3$, $k$-Clique has no combinatorial $O(V^{(1-\varepsilon) k})$ algorithm.
\end{conjecture}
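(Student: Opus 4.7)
The statement is a conjecture rather than a theorem, so strictly speaking there is no proof to sketch; the paper is stating it as a working hypothesis to be used for downstream conditional lower bounds. My proposal is therefore threefold: summarize the evidence that makes the bound plausible, describe the form a hypothetical proof would have to take, and flag why such a proof is currently out of reach.

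For the evidence, I would first cite the best known combinatorial algorithm for $k$-Clique, namely Vassilevska's $O(V^k/\log^k V)$ bound from~\cite{VClique}, which matches exhaustive search up to polylogarithmic savings only. Decades of work have failed to produce any genuinely subpolynomial improvement, i.e., an algorithm running in time $O(V^{(1-\varepsilon)k})$ for some fixed $\varepsilon>0$. Moreover, even the base case $k=3$ is tightly connected to combinatorial Boolean matrix multiplication via the standard triangle-detection reduction, and combinatorial BMM is itself stuck near the $n^3$ threshold; any breakthrough for combinatorial $k$-Clique would therefore simultaneously collapse a second long-standing barrier. The conjecture also composes cleanly with the triangle-listing and BMM-hardness hypotheses that drive much of the Hardness-in-P programme, giving it additional plausibility through consistency with a larger web of conjectures.

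For the form of a hypothetical proof, one would need an unconditional combinatorial lower bound. The obvious candidate approaches would be either (a)~a circuit lower bound in a model that formalises the informal notion of a ``combinatorial algorithm'', or (b)~a reduction from a problem with a known unconditional lower bound in, say, the cell-probe or communication model. Neither direction has produced anything close to the claimed bound for $k$-Clique, and the first is further complicated by the fact that ``combinatorial'' has no agreed formal definition, as the paper itself flags in the footnote preceding the statement.

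The main obstacle, then, is not computational but foundational: we lack even a precise formalisation of the quantifier ``no combinatorial algorithm'', let alone the tools to prove one. In the fine-grained complexity programme this situation is handled by treating the bound as a working hypothesis, as done e.g.\ in~\cite{ABV15b,BGL16}; the paper adopts the same convention, and uses the conjecture purely to derive conditional lower bounds in later sections for CFG Parsing and RNA Folding, of the form ``unless combinatorial $k$-Clique admits a breakthrough, decompress-and-solve cannot be beaten''.
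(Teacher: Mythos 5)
You correctly recognize that the statement is a conjecture with no proof, and the supporting evidence you cite — the $O(V^k/\log^k V)$ combinatorial algorithm of~\cite{VClique} as the state of the art, the longstanding failure to improve it, and prior use of the hypothesis in~\cite{ABV15b,BGL16} — is exactly the justification the paper itself offers in the paragraph preceding the conjecture. Your additional remarks on the lack of a formal definition of ``combinatorial'' mirror the paper's footnote, so there is no substantive divergence from the paper's treatment.
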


\paragraph{\boldmath$k$-SUM}
In the $k$-SUM problem, we are given integers $R,t\geq 0$ and a set $Z\subseteq \{0,1, \ldots, R\}$ of $|Z|=r$ integers, and the task is to decide whether there are $k$ (not necessarily distinct) integers $z_1, \ldots, z_k \in Z$ that sum to $t$, i.e., $z_1+\ldots+z_k=t$.
This problem has well-known algorithms in time $O(r^{\lceil k/2 \rceil})$ and $O(r + R \log R)$, and it is conjectured that no much faster algorithms exist. The following conjectures, which generalize the more popular 3-SUM conjecture~\cite{overmars,patrascu2010towards} and Strong 3-SUM conjecture~\cite{amir2014hardness}, remain believable despite recent algorithmic progress~\cite{Austrin13,add_comb,GP14,Wang14}.

\begin{conjecture}[$k$-SUM] \label{conj1}
	For any $k \ge 3$ and $R=r^k$, the $k$-SUM problem is not in time $O(r^{\ceil{k/2}-\eps})$ for any $\eps > 0$.
\end{conjecture}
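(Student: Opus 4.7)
The plan is to acknowledge up front that this statement is a \emph{conjecture} rather than a theorem, so no unconditional proof is available; what I can sketch is the supporting evidence and the algorithmic landscape that make it plausible. The starting point is the best known upper bound: for $k$-SUM on $r$ integers, a meet-in-the-middle strategy enumerates all $r^{\lceil k/2 \rceil}$ sums of the first half of the $k$ summands, sorts them, and then for each of the $r^{\lfloor k/2 \rfloor}$ sums on the other half performs a binary search for the required complement against the target $t$, yielding time $O(r^{\lceil k/2 \rceil})$. Combined with the FFT/convolution-style bound $O(r + R \log R)$, these two algorithms together exactly meet the conjectured threshold at the stated regime $R = r^k$, where the universe is large enough to preclude convolution speedups.

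To justify that this bound is unlikely to be beaten, I would first note that the $k=3$ case specializes to the classical $3$-SUM conjecture of~\cite{overmars,patrascu2010towards}, where after decades of work only polylogarithmic improvements over $O(r^2)$ are known~\cite{GP14}. For larger $k$, the recent algorithmic advances~\cite{Austrin13,add_comb,GP14,Wang14} have similarly failed to break the $r^{\lceil k/2 \rceil}$ barrier by any polynomial factor; this persistent failure across many independent lines of attack is the principal empirical evidence. A second layer of support comes from the web of reductions situating $k$-SUM within fine-grained complexity, so that a refutation would force breakthroughs on many other problems — the paper itself exploits such a reduction in Theorems~\ref{lb1} and~\ref{lb2}, which is part of why the conjecture's truth is valuable to isolate.

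The main obstacle — and the reason the conjecture remains a conjecture — is the absence of techniques for proving unconditional polynomial lower bounds in the word-RAM model against arbitrary algorithms. The closest one can come to a genuine proof is to establish matching lower bounds in weaker models: for instance, $\Omega(r^{\lceil k/2 \rceil})$-type bounds are known in the $k$-linear decision tree model and in related algebraic or communication-style settings. A full proof plan in the usual sense is therefore not available; instead, the enterprise is to accumulate consistency evidence — equivalences with other believed-hard problems, lower bounds in restricted models, and the documented failure of algorithmic attacks — until the community adopts the statement as a working hypothesis, exactly as is done here before it is invoked for the Disjointness lower bounds of Section~\ref{sec:partial}.
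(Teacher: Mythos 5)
You correctly identify that this statement is a \emph{conjecture}, not a theorem, and that the paper offers no proof — it is stated purely as a hardness assumption (alongside SETH, OV, $k$-Clique, and Strong $k$-SUM) in Section~\ref{sec:hardnessassumptions}. Your discussion of the supporting evidence — the meet-in-the-middle $O(r^{\lceil k/2 \rceil})$ upper bound, the specialization to the classical 3-SUM conjecture at $k=3$, the lack of polynomial improvements despite recent algorithmic progress, and lower bounds in restricted models such as linear decision trees — accurately reflects the justification the paper gives in its one-line remark preceding the conjecture, and is consistent with how such assumptions are treated in fine-grained complexity. There is nothing to prove here, and your answer handles that correctly.
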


\begin{conjecture}[Strong $k$-SUM] \label{conj2}
	For any $k \ge 3$ and $R=r^{\ceil{k/2}}$, the $k$-SUM problem is not in time $O(r^{\ceil{k/2}-\eps})$ for any $\eps > 0$.
\end{conjecture}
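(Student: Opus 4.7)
The final statement is not a theorem but a hardness hypothesis --- a conjectured lower bound on the $k$-SUM problem restricted to the range $R = r^{\ceil{k/2}}$. Consequently a genuine proof is not to be expected; an unconditional $r^{\ceil{k/2}-o(1)}$ lower bound on a natural RAM-model problem would be a major breakthrough in fine-grained complexity. My plan is therefore to outline the two natural routes of attack one would attempt, and the evidence one would marshal in support of adopting the statement as an assumption.

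First, I would look for a fine-grained reduction from one of the other hardness assumptions already in use, most naturally the plain $k$-SUM conjecture, which concerns the same problem but on the larger range $R = r^k$. The canonical vehicle is modular hashing: pick a random prime $p$ of order $r^{\ceil{k/2}}$, project the input modulo $p$, and use color-coding to isolate the surviving $k$-tuples. The main obstacle is that Conjecture~\ref{conj2} forbids even a saving of $r^{\eps}$, so the reduction must preserve the exponent $\ceil{k/2}$ up to subpolynomial factors. Iterating the hashing to avoid false positives while not destroying all true solutions typically costs a $\poly\log r$ factor per level, which is acceptable, but the analysis becomes delicate when the number of levels depends on $k$ and false collisions must be swept out by a secondary sieve.

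Second, in parallel I would pursue evidence in restricted models of computation. In the $k$-linear decision tree model one would try to generalize Erickson-style topological arguments for 3-SUM to obtain $\Omega(r^{\ceil{k/2}})$-query lower bounds on the restricted range. The hard part here is conceptual rather than technical: Gr\o{}nlund--Pettie and follow-ups show that 3-SUM itself admits subquadratic linear decision trees, so bounds in this model cannot unconditionally certify the full RAM-model conjecture; they can at best provide indirect evidence. A complementary option is to examine communication or cell-probe formulations, where unconditional lower bounds are sometimes within reach.

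The principal obstacle is that no current technique rules out an $r^{\ceil{k/2}-\eps}$ RAM algorithm for small-range $k$-SUM. The conjecture is believed for the same reasons its special case (Strong 3-SUM) is believed: substantial algorithmic effort on 3-SUM and its generalizations has yielded only sub-polynomial improvements, and the best known algorithms match the conjectured bound on the restricted range. Accordingly, the role of the statement in this paper is not to be proved but to be adopted as a hypothesis under which the subsequent Disjointness reductions will deliver conditional lower bounds; its inclusion is justified by the stability of the 3-SUM barrier over decades and by the fact that recent progress (Baran--Demaine--P\u{a}tra\c{s}cu, Gr\o{}nlund--Pettie, Chan) has shaved only $\poly\log$ factors from the canonical $\tOh(r^{\ceil{k/2}})$ bound.
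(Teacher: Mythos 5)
You correctly identify that this is a hardness hypothesis, not a theorem; the paper offers no proof and simply adopts it as a conjecture (generalizing the Strong $3$-SUM conjecture) under which the Disjointness lower bound of Theorem~\ref{lb2} is established. Your discussion of decision-tree evidence and the stability of the $3$-SUM barrier is consonant with the citations the paper gives when introducing the conjecture, so there is no gap to flag.
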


\section{Tight Bounds Assuming SETH} \label{sec:sethlowerbounds}


\newcommand{\proofpart}[1]{\emph{#1.} }

\newcommand{\fail}{\mathrm{fail}}

In this section we prove matching conditional lower bounds based on the Strong Exponential Time Hypothesis (SETH, see Conjecture~\ref{conj:seth}) for the following problems:
\begin{itemize}
	\item DFA Acceptance, i.e., deciding whether a given deterministic finite automaton accepts a given string,
	\item Substring Hamming Distance, i.e., determining the minimum Hamming distance that can be achieved by aligning a given pattern sequence with a substring of a given text sequence,
	\item Pattern Matching with Wildcards, i.e., deciding whether the given pattern sequence (containing wildcards that match any symbol) matches a substring of the given text,
	\item Longest Common Subsequence, i.e., computing the length of the longest common subsequence of two given strings.
\end{itemize}
See the respective subsections for precise problem definitions. In all our proofs, instead of using SETH directly, we use the more convenient OV conjecture (Conjecture~\ref{conj:ov}) or $k$-OV conjecture (Conjecture~\ref{conj:kov}), which are implied by SETH. 

For DFA Acceptance, the compression used in our reduction from the given \OV instance is extremely simple, in that we only rely on the fact that any repetition $T^\ell$ can be generated by an SLP of size $O(|T| + \log \ell)$ (Observation~\ref{obs:repetition}). 

For Substring Hamming Distance, Pattern Matching with Wildcards and Longest Common Subsequence, however, our construction are more subtle.
We crucially use the following idea: consider a \kOV instance $\sA$ on $A$ vectors in $d$ dimensions. There is a length-$\Oh(dA^k)$ text $T$ representing this instance so that (1) $T$ is succinctly described by an SLP \ctext of size $\Oh(d A)$ and (2) testing whether the \kOV instance has a solution corresponds to determining whether there is some $i=1,\ldots,A^k$ such that all bits $T[i], T[i+A^k], \dots, T[i+(d-1)A^k]$ are equal to zero. 
Intuitively, $i \in \{1,\ldots,A^k\}$ denotes the $i$-th $k$-tuple of vectors in $\sA^k$, and $T[i]=0$ holds if and only if the $k$ vectors in the $i$-th $k$-tuple are orthogonal in the $1$st coordinate. In general, for $1 \le \ell \le d$, $T[i+(\ell-1) A^k]=0$ holds if and only if the $k$ vectors are orthogonal in the $\ell$-th coordinate. More formally, we set $T$ to be
$$
	T=\bigconcat_{\ell=1}^d \bigconcat_{a_1 \in \sA^{(1)}} \ldots \bigconcat_{a_k \in \sA^{(k)}} \big[a_1[\ell]=\ldots=a_k[\ell]=1\big],
$$
where $[.]$ is the Kronecker symbol, i.e., $[\textup{true}] = 1$ and $[\textup{false}] = 0$. For any $\ell$, the sequence $\bigconcat_{a_1 \in \sA^{(1)}} \ldots \bigconcat_{a_k \in \sA^{(k)}} [a_{1}[\ell]=\ldots=a_k[\ell]=1]$ is generated by an SLP of size $O(d A)$: if $a_{1}[\ell]=0$, then for all $a_2, \ldots, a_k$, the vectors will be orthogonal in this coordinate and we can write $0^{A^{k-1}}$,  which is well compressible by Observation~\ref{obs:repetition}. Otherwise, if $a_{1}[\ell]=1$, we recurse on $a_2,\ldots,a_k$ and the following $A^{k-1}$ symbols do not depend on $a_{1}[\ell]$ anymore.

A modification of the above construction of $T$ gives SETH hardness for Substring Hamming Distance and Pattern Matching with Wildcards. Showing hardness for Longest Common Subsequence requires more ideas. In particular, to be able to show tight hardness we extend the framework of~\cite{BK15}.

We stress that if the sequence $T$ would enumerate all $k$-tuples one after another (instead of iterating over the coordinates in the outer loop over $\ell$), then it would not be compressible using SLPs, see Section~\ref{sec:techoverview}. This makes our reductions quite different from all previously known hardness results where the sequences are concatenations of vector gadgets one after another.

\paragraph{Known Lower Bounds from Classic Complexity Theory}
We observe that the Substring Hamming Distance problem is a generalization of the Hamming Distance problem which asks to output the Hamming distance between a compressed text and a compressed pattern of equal length. The latter problem is known to be $\#{\sf P}$-complete and thus the Substring Hamming Distance problem is $\#{\sf P}$-hard (see the discussion at the beginning of Section~\ref{sec:subseqlower}). Similarly, Longest Common Subsequence is a generalization of the problem of deciding whether a given pattern is a subsequence of a given text. The latter problem is known to be ${\sf PP}$-hard (see the aforementioned discussion) and this yields ${\sf PP}$-hardness for Longest Common Subsequence.

The DFA Acceptance problem can be solved in polynomial time (see Section~\ref{sec:dfaaccept}) and no conditional lower bounds were known for this problem.
Finally, our reduction in Theorem~\ref{thm:patmatchlb} below shows that Pattern Matching with Wildcards is ${\sf NP}$-hard.

\subsection{DFA Acceptance} \label{sec:dfaaccept}

Recall that a finite-state automaton $F$ over an alphabet $\Sigma$ consists of a set of states $Z$ of size $q$, a starting state $z_0 \in Z$, a set of accepting states $Z' \subseteq Z$, and a set of transitions $\transition{z}{\sigma}{z'}$ with $z,z' \in Z$ and $\sigma \in \Sigma$. We lift this notation to strings $T=T[1..N]$ by writing $\transition{z}{T}{z'}$ whenever there are states $z_1,\dots,z_{\ell-1}$ and transitions $\transition{z}{T[1]}{z_1}, \transition{z_1}{T[2]}{z_2}, \dots, \transition{z_{\ell-1}}{T[N]}{z'}$. Furthermore, for a set $S\subseteq \Sigma$ we write $\transition{z}{S}{z'}$ whenever $\transition{z}{\sigma}{z'}$ for all $\sigma\in S$. 
The automaton $F$ is deterministic if for any $z \in Z$ and $\sigma \in \Sigma$ there is at most one $z' \in Z$ with transition $\transition{z}{\sigma}{z'}$, and $F$ is non-determinisitic otherwise.
The automaton $F$ \emph{accepts} a given string $T$ if $\transition{z_0}{T}{z'}$ holds for some accepting state $z'$. 

Throughout this section, we assume the alphabet size to be constant. If $F$ is a deterministic finite-state automaton (DFA), we may assume without loss of generality that for every state $z$ and symbol $\sigma \in \Sigma$, there always exists a (uniquely defined) state $z'$ with $\transition{z}{\sigma}{z'}$.\footnote{Note that we can always define an absorbing non-accepting state $z^\fail$ with $\transition{z^{\fail}}{\Sigma}{z^\fail}$ and set, for any undefined transition from $z$ under $\sigma$, $\transition{z}{\sigma}{z^\fail}$, which increases the number of states only by one.}
We fix the input description of $F$ to a list of transitions of $F$ as well as a list of accepting states.
Observe that any DFA $F$ on constant alphabet $\Sigma$ has an input size of $\Oh(q)$.

Consider the compressed variant of the acceptance problem of DFAs.
\begin{problem}[\DFAAccept]
Given a text $T$ of length $N$ by a grammar-compressed representation \ctext of size $n$ as well as a DFA $F$ with $q$ states, decide whether $T$ is accepted by $F$.
\end{problem}

The naive solution decompresses \ctext to obtain $T$ and runs the obvious acceptance algorithm for DFAs, which takes time $\Oh(|T|+q)=\Oh(N+q)$. Exploiting the compressed setting, one can obtain an $\Oh(nq)$-time algorithm~\cite{plandowski1999complexity}: 
Recall that $\cT$ is a set of rules of the form $S_i \to S_{\ell(i)} S_{r(i)}$ or $S_i \to \sigma_i$, with $\ell(i), r(i) < i$ and $\sigma_i \in \Sigma$, for $1 \le i \le n$.
We compute, for increasing $i$, the state transition function $f_i \colon [q] \to [q]$ (we denote states using integers $1, \ldots, q$) that satisfies $\transition{z}{\eval(S_i)}{f_i(z)}$.
For $S_i \to S_{\ell(i)} S_{r(i)}$ we can compute $f_i$ as $f_{r(i)} \circ f_{\ell(i)}$, where $\circ$ is function composition. For $S_i \to \sigma_i$ we simply have $f_i(z) = z'$ for the unique transition $\transition{z}{\sigma_i}{z'}$. 
Hence, $f_i$ can be computed in time $\Oh(q)$ for every $i$. 
The text $T$ is then accepted by $F$ if and only if $f_n(z_0)$ is an accepting state, where $z_0$ is the starting state of $F$. Hence, the best-known algorithm takes time $O(\min\{nq, N+q\})$. 

We prove that DFA Acceptance takes time $\min\{nq, N+q\}^{1-o(1)}$ assuming \SETH, thus providing a conditional lower bound matching the known algorithmic results. It is straightforward to see that any algorithm must read the complete input description of $F$ to always correctly decide the problem, yielding a lower bound of $\Omega(q)$. In the remainder, we provide the remaining conditional lower bound of $\min\{nq, N\}^{1-o(1)}$.

\begin{thm} \label{thm:dfaaccept}
Assuming the OV conjecture, for no $\varepsilon >0$ there is an $\Oh( \min\{nq, N\}^{1-\varepsilon})$-time algorithm for \DFAAccept. This holds even restricted to instances with $N=\Theta(n^{\alpha_N})$ and $q=\Theta(n^{\alpha_q})$ for any $\alpha_N > 1$ and $\alpha_q > 0$.
\end{thm}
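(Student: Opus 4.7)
The plan is to reduce from \OV (Conjecture~\ref{conj:ov}) using a highly compressible text and a small DFA. Given an OV instance with sets $\sA = \{a_1,\ldots,a_A\}$ and $\sB = \{b_1,\ldots,b_B\} \subseteq \{0,1\}^d$, let $\textup{enc}(a) := a[1]a[2]\cdots a[d]$ and define, over alphabet $\{0,1,\$\}$,
\[ T \,:=\, \bigconcat_{i=1}^A \Big( (\textup{enc}(a_i))^B \, \$ \Big). \]
By \obsref{repetition} each factor $(\textup{enc}(a_i))^B$ admits an SLP of size $\Oh(d+\log B)$, so $T$ has an SLP \ctext of size $n = \Oh(A(d+\log B))$ with decompressed length $N = \Theta(ABd)$. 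The choice to repeat each vector $B$ times \emph{inside} the outer concatenation over $\sA$ (rather than enumerating all $AB$ pairs side by side) is what enables the compression while still letting a DFA sweep through all pairs.

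Next, the DFA $F$ will have state set $[B] \times \{0,1,\ldots,d-1\} \times \{0,1\} \cup \{\textup{wait},\textup{acc},\textup{sink}\}$, giving $q = \Oh(Bd)$. A state $(j,p,o)$ encodes that $F$ is currently on the $j$-th repeated copy of some $\textup{enc}(a_i)$, has consumed its first $p$ bits, and the orthogonality flag $o$ is $1$ iff no position $p'\le p$ witnessed $a_i[p']=b_j[p']=1$. On reading a bit $\sigma$, set $o' := o \wedge \neg(\sigma \wedge b_j[p+1])$ and advance to $(j,p+1,o')$ if $p+1<d$; at $p+1=d$ move to $\textup{acc}$ if $o'=1$, to $(j+1,0,1)$ if $j<B$, or to $\textup{wait}$ if $j=B$; reading $\$$ in $\textup{wait}$ returns to $(1,0,1)$, all other $\$$-transitions go to $\textup{sink}$. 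Thus $F$ deterministically tests every pair $(a_i,b_j)$ and enters $\textup{acc}$ exactly when an orthogonal pair is found, so $F$ accepts $T$ iff the OV instance is a yes-instance; the reduction runs in $\poly(A,B,d)$ time.

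To cover the full parameter range, given $\alpha_N > 1$ and $\alpha_q > 0$ I would set $\beta := \min\{\alpha_q,\alpha_N-1\}$ and invoke Conjecture~\ref{conj:ov} with $B = \Theta(A^\beta)$ and $d = \Theta(\log A)$, so that $n = \tilde\Theta(A)$, $q = \tilde\Theta(A^\beta)$, $N = \tilde\Theta(A^{1+\beta})$, and the conjectured \OV hardness is $A^{1+\beta-o(1)}$. If $\alpha_q > \beta$ I would pad $F$ with unreachable dummy states to reach $q = \Theta(n^{\alpha_q})$, which only inflates the DFA's input description. If $\alpha_N-1 > \beta$ I would append $\#^K$ to $T$ for a fresh symbol $\#$ and $K = \Theta(n^{\alpha_N})$; by \obsref{repetition} this costs only $\Oh(\log K)$ extra SLP rules and preserves $n = \tilde\Theta(A)$, while I extend $F$ to self-loop on $\#$ in every state, which is trivially sound. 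In both regimes one checks $\min\{nq,N\} = \tilde\Theta(A^{1+\beta})$, so any $\Oh(\min\{nq,N\}^{1-\varepsilon})$-time algorithm would solve \OV in time $\Oh(A^{(1+\beta)(1-\varepsilon)}\poly(d))$, contradicting Conjecture~\ref{conj:ov}.

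The main obstacle will be verifying that the deterministic sequential pass over the repeated block $(\textup{enc}(a_i))^B \, \$$ genuinely implements the $\sB$-wide search ``does $a_i$ have an orthogonal partner?'' using only $\Oh(Bd)$ states, rather than the $q^B$ blowup one would incur from naively simulating $B$ parallel comparisons; this is why repeating each $\textup{enc}(a_i)$ and letting the DFA itself cycle through $j=1,\ldots,B$ is essential. A secondary technical point is to argue that the $\#^K$-padding is ``inert'' -- neither helping nor hurting any algorithm -- which follows because the DFA's state is preserved along the entire $\#^K$-suffix, so that suffix carries no information beyond its length.
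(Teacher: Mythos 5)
Your construction is correct in its essentials and is a transposed version of the paper's: you iterate $i \in [A]$ in the outer loop and encode the inner loop over $\sB$ by repeating each $\textup{enc}(a_i)$ exactly $B$ times, whereas the paper iterates $j \in [B]$ in the outer loop and repeats the entire $\sA$-block $B$ times. Both yield $N = \Theta(ABd)$, an SLP of size $O(dA)$ (in your case $O(A(d+\log B))$, which matches up to the $\poly(d)$ slack allowed by the OV conjecture, since one may assume $\log B = O(d)$), and a DFA with $q = O(Bd)$ states that hardwires $\sB$. The key idea — compress via a length-$B$ repetition and let the DFA's state carry the index $j$ — is the same.

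Two issues, both fixable. First, as written your state $\textup{acc}$ is \emph{not} absorbing: bit-transitions out of $\textup{acc}$ are unspecified (hence by the paper's convention they go to $z^{\fail}$/$\textup{sink}$), and your rule ``all other $\$$-transitions go to $\textup{sink}$'' explicitly sends $\textup{acc} \xrightarrow{\$} \textup{sink}$. Since DFA acceptance is judged by the \emph{final} state after reading all of $T$, the automaton you wrote accepts only if the very last symbol of $T$ lands in $\textup{acc}$, which is almost never the case. You need to add a self-loop on every symbol at $\textup{acc}$ (exactly as the paper makes each $z_j^{(d)}$ absorbing). Your own summary ``enters $\textup{acc}$ exactly when an orthogonal pair is found'' shows you intended this, but the formal transition table must say so. Second, fixing $d = \Theta(\log A)$ when invoking the OV conjecture is not quite right: for $d = c\log A$ with constant $c$, OV is already solvable in truly subquadratic time (by the $A^{2-1/O(\log c)}$ algorithm of Abboud–Williams–Yu), so the conjecture gives you nothing at that dimension. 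Keep $d$ general, bound the resulting OV time as $O((AB)^{1-\eps}\poly(d))$, and let the conjecture's built-in $\poly(d)$ slack absorb all $d$-dependence — this is exactly what the paper does, and it also cleans up your ``$\tilde\Theta$'' bookkeeping in the strengthening step.
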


\begin{proof}
Let $\sA = \{a_1,\dots, a_A\},\sB= \{b_1,\dots, b_B\}$ be a given \OV instance in $d$ dimensions. We construct a string $T$ of length $N = O(d A B)$ with a representation $\cT$ of size $n = O(d A)$ and a DFA $F$ with $q = O(d B)$. An $O(\min\{nq, N\}^{1-\eps})$-time algorithm for \DFAAccept would then imply an algorithm for \OV in time $O((d^2 A B)^{1-\eps}) = O((A B)^{1-\eps} \poly(d))$, contradicting the \OV conjecture. At the end of this proof we show that this also holds for all restrictions $N = \Theta(n^{\alpha_N})$ and $q = \Theta(n^{\alpha_q})$ with $\alpha_N > 1$ and $\alpha_q > 0$.

\paragraph{Constructing the Text \boldmath$T$} We cast any vector $a\in \{0,1\}^d$ to a string $T(a) := \bigconcat_{k=1}^d a[k]$ by simply concatenating its coordinates. We define the text $T$ over the alphabet $\Sigma = \{0,1,\#, !\}$ as
\begin{equation}
T := \left(! \concat \bigconcat_{i=1}^{A} \#\, T(a_i)\right)^{B}.
\end{equation}
Here, we think of $!$ and $\#$ as ``new group'' and ``new vector within group'' indicators, respectively. Intuitively, the $j$-th repetition of $T(a_i)$ is supposed to lead to an accepting state of $F$ if $a_i$ and $b_j$ are an orthogonal pair. 

\begin{figure}
\centering
\includegraphics[width=0.7\textwidth]{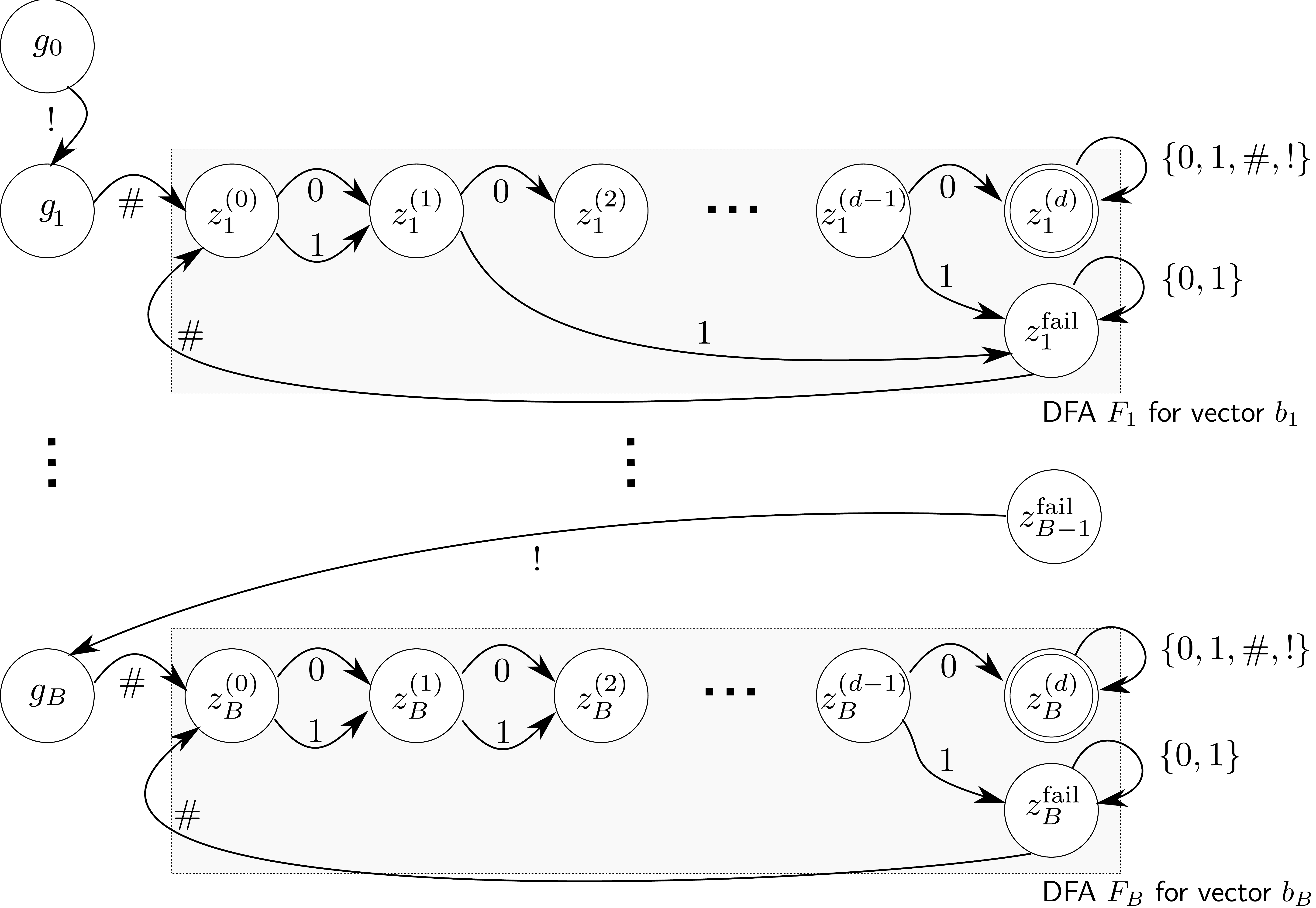}
\caption{Illustration of the DFA $F$. Any transition not specified leads to a absorbing non-accepting state $z^\fail$.}
\label{fig:dfa}
\end{figure}

\paragraph{Constructing the DFA \boldmath$F$}
For an illustration of the DFA construction see \figref{dfa}.
We start by defining ``vector gadgets'': For any vector $b_j\in \sB$ we construct a DFA $F_j$ over alphabet $\{0,1\}$ with states $z^{(0)}_j, z_j^{(1)}, \dots, z_j^{(d)}$ and $z_j^\fail$. The initial state is $z^{(0)}_j$. For any $k\in [d]$ we have the transitions $\transition{z^{(k-1)}_j}{0}{z^{(k)}_j}$ and
\[ \transition{z^{(k-1)}_j}{1}{ \begin{cases} z^{(k)}_j} &  \text{if } b_j[k] = 0, \\ z^{\fail}_j & \text{otherwise.} \end{cases} \]
We let $z^{(d)}_j$ be an accepting state with transition $\transition{z^{(d)}_j}{\{0,1\}}{z^{(d)}_j}$. Furthermore, we have $\transition{z^\fail_j}{\{0,1\}}{z^\fail_j}$. It is easy to see that after reading a string $a \in \{0,1\}^d$, $F_j$ is in either $z^{(d)}_j$ or $z^\fail_j$, and it is in $z^{(d)}_j$ if and only if $a$ and $b_j$ are orthogonal.

We combine these smaller DFAs to our final DFA $F$ over the slightly larger alphabet $\Sigma = \{0,1,\#,!\}$ as follows. We define additional states $g_0, g_1, \dots, g_B$ and let $g_0$ be the initial state of $F$. We define the following additional transitions:
\begin{align*}
&\transition{g_0}{!}{g_1} && \\
&\transition{g_j}{\#}{z_j^{(0)}} && \text{for } 1 \le j \le B,\\
&\transition{z^\fail_{j-1}}{!}{g_{j}} && \text{for } 1 < j \le B,\\
&\transition{z^\fail_j}{\#}{z_j^{(0)}} && \text{for } 1 \le j \le B,\\
&\transition{z^{(d)}_j}{\{\#,!\}}{z_j^{(d)}} && \text{for } 1 \le j \le B.
\end{align*}

In this way, each $z_{j}^{(d)}$ is an absorbing accepting state, and the symbols $\#$ and $!$ satisfy the semantics of jumping to the next vector in $\sA$ and $\sB$, respectively. This finishes the definition of the reduction.

\paragraph{Correctness}
We claim that the constructed DFA $F$ accepts $T$ if and only if $\sA,\sB$ contains an orthogonal pair. 
By structure of $F$ and $T$, as well as the properties argued for $F_j, j\in [B]$, it is straightforward to show that after reading any prefix $T'$ of $T$ ending on $\#$, $F$ is in the initial state of $F_j$, where $j$ is the number of !'s in $T'$  -- this holds until $F$ has encountered an accepting state for the first and final time. Thus, if $T$ is accepted by $F$, then some prefix $T'$ of $T$ that ends on $\#\, a_i$ for some $i\in [A]$ has led an accepting state of $F$. This can only happen if $a_i$ is orthogonal to $b_j$, where $j$ is the number of !'s in $T'$, i.e., $\sA,\sB$ contain an orthogonal pair. Conversely, if $\sA,\sB$ contains an orthogonal pair, let $a_i,b_j$ the smallest such pair in terms of the lexicographic order on $(j,i)$. Then the prefix $T'$ that ends on $\#a_i$ and contains $j$ !'s  leads to the accepting state $z^{(d)}_j$. 

\paragraph{Size Bounds}
We count that $|T|= B ((d+1)A+1) = O(d A B)$. Since $T$ consists of $B$ repetitions of a string of length $(d+1)A + 1$, we can compute an SLP $\ctext$ of size $|\ctext| \le \Oh(\log(B) + dA)= \Oh(d A)$ by \obsref{repetition}. The number of states of $F$ is $O(d B)$. This satisfies the claimed size bounds. Note that the reduction can be implemented in linear time in the output size.

\paragraph{Strengthening the Statement}
In the remainder, we verify that our construction proves the desired lower bound even restricted to instances with $N = \Theta(n^{\alpha_N})$ and $q = \Theta(n^{\alpha_q})$ for any $\alpha_N > 1$ and $\alpha_q > 0$. 
Note that the number of states, the size of the SLP, and the text length can all three be increased by easy padding. E.g., to increase the text length we introduce a garbage symbol ``$\natural$'' that can be read at any state of the automaton, not changing the current state, and add a suitable number of copies of ``$\natural$'' to the text.

We now set $\beta := \min\{\alpha_q, \alpha_N - 1\}$ and only consider \OV instances with $B = \Theta(A^\beta)$. Note that the \OV conjecture asserts a lower bound of $A^{1+\beta-o(1)}$ in this setting.
Note that the above construction yields $n = O(d A) = O(d^{\max\{1,1/\beta\}} A)$, and we can pad to equality. Moreover, we have $q = O(d B) = O(d^{\max\{\beta, 1\}} B) = O(d^{\max\{\beta, 1\}} A^\beta) = O(n^{\beta}) = O(n^{\alpha_q})$, since $\beta \le \alpha_q$, and we can pad to equality to obtain $q = \Theta(n^{\alpha_q})$. Similarly, we have $N = O(d A B) = O(d^{1+\max\{\beta,1\}} A^{1+\beta}) = O(n^{1+\beta}) = O(n^{\alpha_N})$, since $\beta \le \alpha_N - 1$, and we can pad to equality to obtain $N = \Theta(n^{\alpha_N})$. 
Finally, an $O(\min\{nq, N\}^{1-\eps})$-time algorithm for \DFAAccept restricted to $N = \Theta(n^{\alpha_N})$ and $q = \Theta(n^{\alpha_q})$
would imply an algorithm for \OV in time $O(\min\{n^{1+\alpha_q}, n^{\alpha_N} \}^{1-\eps}) = O(n^{(1+\beta)(1-\eps)}) = O((d^{\max\{1,1/\beta\}} A)^{(1+\beta)(1-\eps)}) = O(d^{\max\{1+\beta,1+1/\beta\}} A^{(1+\beta)(1-\eps)}) = O(A^{1+\beta-\eps} \poly(d))$, contradicting the \OV conjecture. This finishes the proof.
\end{proof}


\newcommand{\SLP}{SLP\xspace}

\newcommand{\grammar}{\ensuremath{\mathbb{A}}\xspace }

\newcommand{\Test}{\mathrm{Test}}
\newcommand{\List}{\mathrm{List}}

\newcommand{\s}[1]{S(#1)}
\newcommand{\sBig}[1]{S\Big(#1\Big)}
\newcommand{\sz}{\s{0}}
\newcommand{\so}{\s{1}}

\newcommand{\tuplify}{\mathrm{tuplify}}

\newcommand{\wildcard}{*}

\newcommand{\LB}{\mathrm{LB}}
\newcommand{\inn}{\mathrm{inner}}
\newcommand{\out}{\mathrm{outer}}

\newcommand{\gtwotuple}{(j_1,\dots,j_{k_2})}

\newcommand{\cSZ}{{{\mathcal S}(0)}}
\newcommand{\cSO}{{{\mathcal S}(1)}}

\newcommand{\cV}{\ensuremath{\mathcal V}\xspace}

\newcommand{\dH}{d_H}
\newcommand{\PM}{\mathrm{PM}}
\newcommand{\HD}{\mathrm{HD}}

\newcommand{\cost}{\mathrm{cost}}
\newcommand{\mismatch}{\mathrm{mismatch}}

\subsection{Approximate Pattern Matching and Substring Hamming Distance} \label{sec:genpatternmatch}

We study the following generalization of pattern matching.

\begin{problem}[Generalized Pattern Matching]
Given a text $T$ of length $N$ by an SLP \ctext of size~$n$, a pattern $P$ of length $M$ by an SLP $\cP$ of size $m$, both over some alphabet $\Sigma$, and given a cost function $\cost \colon \Sigma \times \Sigma \to \mathbb{N}$, compute $\min_{0 \le i \le N-M} \sum_{j=1}^M \cost(P[j],T[i+j])$, i.e., the minimum total cost of any alignment.
\end{problem}

In other words, we want to compute the length-$M$ substring $T'$ of $T$ minimizing the total cost of aligned symbols in $P$ and $T'$.
This problem has two important special cases: (1) We obtain Substring Hamming Distance when $\cost(\sigma,\sigma') = [\sigma \ne \sigma']$ for any $\sigma, \sigma' \in \Sigma$.
(2) We obtain Pattern Matching with Wildcards when $T$ is over alphabet $\Sigma$ and $P$ is over alphabet $\Sigma \cup \{*\}$, we have $\cost(*,\sigma) = 0$ for any $\sigma \in \Sigma$ and $\cost(\sigma,\sigma') = [\sigma \ne \sigma']$ for any $\sigma, \sigma' \in \Sigma$, and the task is to decide whether the minimum total cost of any alignment is 0. 

\begin{problem}[Substring Hamming Distance]
Given a text $T$ of length $N$ by an SLP \ctext of size~$n$ and a pattern $P$ of length $M$ by an SLP $\cP$ of size $m$, both over some alphabet $\Sigma$, compute $\min_{0 \le i \le N-M} \sum_{j=1}^M \big[P[j] \ne T[i+j] \big]$, i.e., the minimum Hamming distance of any alignment.
\end{problem}

\begin{problem}[Pattern Matching with Wildcards]
For some alphabet $\Sigma$, we are given a text $T$ of length $N$ by an SLP \ctext of size~$n$ over alphabet $\Sigma$ and a pattern $P$ of length $M$ by an SLP $\cP$ of size $m$ over alphabet $\Sigma \cup \{*\}$. We say that $\sigma' \in \Sigma \cup \{*\}$ and $\sigma \in \Sigma$ \emph{match} if $\sigma' = *$ or $\sigma' = \sigma$. Decide whether for some offset $0 \le i \le N-M$ all pairs $P[j],T[i+j]$ match for $1 \le j \le M$.
\end{problem}

In this section, for all three problems we show an upper bound of $O(\min\{ |\Sigma| N \log N, n M \})$ and a SETH-based lower bound of $\min\{N, n M\}^{1-o(1)}$. This yields a tight bound in case of constant alphabet size, as the lower bound constructs constant-alphabet strings. We leave it as an open problem to get tight bounds for larger alphabet size.

Note that it suffices to prove the upper bound for Generalized Pattern Matching and the lower bound for the special cases Substring Hamming Distance and Pattern Matching with Wildcards.
We start with the following two upper bounds, which follow standard arguments.

\begin{lem}
Generalized Pattern Matching can be solved in time $O(|\Sigma| N \log N)$.
\end{lem}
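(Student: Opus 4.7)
The plan is to use the classical FFT-based convolution trick, after first decompressing the SLPs.

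First I would decompress both $\cT$ and $\cP$ to obtain the explicit strings $T \in \Sigma^N$ and $P \in \Sigma^M$; since $M \le N$ and standard SLP traversal outputs the text in linear time in its length, this takes $O(N)$ time. From this point on we forget about the compression entirely and work with the explicit strings.

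The key identity is
\[
\sum_{j=1}^M \cost(P[j], T[i+j]) \;=\; \sum_{\sigma \in \Sigma} \sum_{j=1}^M \bigl[T[i+j] = \sigma\bigr]\,\cost(P[j], \sigma).
\]
For each fixed symbol $\sigma \in \Sigma$, I would build two arrays: the pattern-side weight vector $c_\sigma[j] := \cost(P[j], \sigma)$ for $j \in [M]$, and the text-side indicator $I_\sigma[i] := [T[i] = \sigma]$ for $i \in [N]$. The inner sum over $j$ is then exactly the (cross-)correlation of $c_\sigma$ with $I_\sigma$ evaluated at shift $i$, which can be computed at all shifts $0 \le i \le N-M$ simultaneously via one FFT-based convolution (after reversing $c_\sigma$) in time $O(N \log N)$.

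Summing the resulting $|\Sigma|$ vectors coordinate-wise yields, at each position $i$, the value $\sum_{j=1}^M \cost(P[j], T[i+j])$; returning the minimum over $0 \le i \le N-M$ solves the problem. The total running time is $O(N)$ for decompression plus $O(|\Sigma| \cdot N \log N)$ for the $|\Sigma|$ convolutions and their summation, giving the claimed $O(|\Sigma| N \log N)$ bound. There is no real obstacle here beyond observing that the alphabet-decomposition trick works verbatim for an arbitrary cost function (not just Hamming): the only $\sigma$-dependence on the text side is the indicator $I_\sigma$, so no blow-up to $|\Sigma|^2$ convolutions is needed.
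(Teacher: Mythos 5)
Your proof is correct and uses essentially the same approach as the paper: decompress, then for each $\sigma\in\Sigma$ reduce the contribution of that symbol to a single FFT-based convolution, and sum the $|\Sigma|$ results. The only cosmetic difference is which side carries the indicator — you put $[T[i]=\sigma]$ on the text side and $\cost(P[j],\sigma)$ on the pattern side, while the paper does the symmetric split with $[P[j]=\sigma]$ on the pattern side and $\cost(\sigma,T[i])$ on the text side; both yield the same $O(|\Sigma|N\log N)$ bound.
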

\begin{proof}
Decompress both the text $T$ and the pattern $P$. For each symbol $\sigma \in \Sigma$, build the
vector $v^\sigma \in \mathbb{R}^N$ with $v_i^\sigma := \cost(\sigma,T[i])$ and the vector $u^\sigma \in \{0,1\}^M$ with $u_j^\sigma := [P[j] = \sigma]$. Compute their convolution $w^\sigma \in \mathbb{R}^{N-M+1}$ with $w_i^\sigma = \sum_{j=1}^M u_j^\sigma v_{i+j}^\sigma$. Using FFT, $w^\sigma$ can be computed in time $O(N \log N)$. 
Finally, compute the vector $r \in \mathbb{R}^{N-M+1}$ with $r_i = \sum_{\sigma \in \Sigma} w_i^\sigma$ and return the minimal entry of $r$. Note that 
\[ r_i = \sum_{\sigma \in \Sigma} w_i^\sigma = \sum_{\sigma \in \Sigma} \sum_{j=1}^M u_j^\sigma v_{i+j}^\sigma = \sum_{\sigma \in \Sigma} \sum_{j=1}^M [P[j] = \sigma] \cdot \cost(\sigma,T[i+j]) = \sum_{j=1}^M \cost(P[j],T[i+j]), \]
which proves correctness. 
\end{proof}

\newcommand{\Mat}{\textup{Match}}
\newcommand{\Matl}{\textup{Match}_\textup{left}}
\newcommand{\Matr}{\textup{Match}_\textup{right}}
\newcommand{\Matf}{\textup{FixMatch}}

\begin{lem}
Generalized Pattern Matching can be solved in time $O(n M)$.
\end{lem}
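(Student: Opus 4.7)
The plan is to decompress the pattern $P$ into array form in time $O(M)$, compute $L_v := |\eval(v)|$ for every non-terminal in $O(n)$ bottom-up, and then in a single further bottom-up pass over $\cT$ maintain three tables at every non-terminal $v$ from which every candidate alignment cost can be read off. Concretely, I would store $\suff(v,s)$, the cost of aligning $P[1..s]$ with the last $s$ characters of $\eval(v)$, for $0 \le s \le \min(M, L_v)$; symmetrically $\pref(v,s)$, the cost of aligning $P[M-s+1..M]$ against the first $s$ characters of $\eval(v)$; and, whenever $L_v \le M$, the full-alignment table $\cost(v,j)$ giving the cost of aligning $P[j..j+L_v-1]$ against all of $\eval(v)$, for $1 \le j \le M - L_v + 1$. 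Each of the three tables has at most $M+1$ entries, so total storage is $O(nM)$.

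For a leaf $v \to \sigma$ all three tables are immediate, with $\cost(v,j)=\cost(P[j],\sigma)$ initialized in $O(M)$ time per leaf. For a rule $v \to v_L v_R$, writing $L = L_{v_L}$ and $R = L_{v_R}$, whenever $L+R \le M$ we use $\cost(v,j) = \cost(v_L,j) + \cost(v_R,j+L)$. For $\suff(v,s)$ I split on whether the required suffix lies entirely inside $\eval(v_R)$ or spills into $\eval(v_L)$:
\[ \suff(v,s) = \begin{cases} \suff(v_R,s) & \text{if } s \le R, \\ \suff(v_L,s-R) + \cost(v_R,s-R+1) & \text{if } R < s \le \min(M, L+R), \end{cases} \]
and $\pref$ is symmetric. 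The subtle point is that when $L_v > M$ the table $\cost(v,\cdot)$ is not stored; however the second branch above only fires for $R < s \le M$, which forces $R < M$, so $\cost(v_R,\cdot)$ is defined exactly when we need it. Each entry is obtained in $O(1)$, giving $O(M)$ work per non-terminal and $O(nM)$ overall.

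To extract the answer, I would use the fact that every alignment position $p \in \{1,\ldots,N-M+1\}$ has a unique lowest ancestor rule $v \to v_L v_R$ in the parse tree whose boundary lies strictly inside the window $[p,p+M-1]$. Setting $s := L - p + 1$, the cost of that alignment equals $\suff(v_L,s) + \pref(v_R,M-s)$, with $s$ ranging over $[\max(1,M-R),\min(L,M-1)]$. Since Generalized Pattern Matching asks for the minimum cost and $\eval(v)$ depends only on $v$ (not on which occurrence of $v$ in the parse tree we are looking at), taking the minimum over non-terminals $v$ of $\cT$ and over valid $s$ suffices, even though the same $v$ may appear multiple times in the parse tree. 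This amounts to $O(M)$ additional work per non-terminal and $O(nM)$ in total.

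The main obstacle is keeping the recurrences coherent despite the asymmetry that $\cost(v,\cdot)$ exists only when $L_v \le M$. The three-table split above, together with the observation that the recurrence for $\suff(v,s)$ and $\pref(v,s)$ never needs a $\cost$-entry from a child longer than $M$, resolves this cleanly; no balanced/AVL preprocessing (Theorem~\ref{AVL}) is required. Combined with the FFT-based bound of the previous lemma, this yields the announced $O(\min\{|\Sigma| N \log N,\, nM\})$ running time for Generalized Pattern Matching, hence also for Substring Hamming Distance and Pattern Matching with Wildcards.
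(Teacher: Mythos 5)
Your proof is correct (modulo the trivial edge case $M=1$, where no window straddles a rule boundary and your extraction step would also need to inspect the leaf rules) and takes essentially the same approach as the paper. The paper's single memoized function $\Matf(i,d)$ of non-terminal and signed offset plays exactly the role of your three tables $\suff$, $\pref$, $\cost$ (each covering a contiguous range of $O(M)$ boundary-crossing offsets), and the paper's recursive $\Mat(i)$ is the analogue of your direct extraction over lowest straddling rules, so the underlying dynamic program over the SLP and the $O(nM)$ accounting coincide.
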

\begin{proof}[Proof Sketch]
Let $S_1,\ldots,S_n$ be the non-terminals of the SLP $\cT$ that generates the text $T$. In this proof, for simplicity we write $T_i := \eval(S_i)$.
We decompress the pattern $P$. 
For $1 \le i \le n$ we define
\[ \Mat(i) := \min_{0 \le d \le |T_i| - M} \sum_{j=1}^M \cost(P[j], T_i[j+d]), \]
or $\infty$, if $|T_i|<M$.
This solves the Generalized Pattern Matching problem restricted to the substring $T_i$ of $T$.
Clearly, we can solve the given Generalized Pattern Matching instance $(T,P)$ by calling $\Mat(n)$. Moreover, for any offset $d$ and any $i \in [n]$ we define
\[ \Matf(i,d) := \sum_{j: \ \substack{1 \le j+d \le |T_i|,\\1 \le j \le M}} \cost(P[j], T_i[j+d]). \]
In other words, $\Matf(i,d)$ is equal to the total cost between $T_i$ and a shifted pattern~$P$ (by $d$ symbols to the right, or $-d$ symbols to the left), where we consider only the symbols that have an aligned counterpart. 

In the remainder we show how to compute these functions by simple recursive algorithms. We precompute all lengths $|T_i|$ in time $O(n)$. For $\Matf(.,.)$, observe that for a rule $S_i \to S_{\ell} S_{r}$ we have
\[ \Matf(i,d) = \Matf(\ell,d) + \Matf(r,d-|T_{\ell}|), \]
since the offset with respect to the first symbol of $T_r$ differs to the offset with respect to the first symbol of $T_i$ by $|T_\ell|$.
Moreover, for a rule $S_i \to \sigma \in \Sigma$ we can compute $\Matf(i,d)$ in constant time. Note that whenever the offset $d$ is such that no symbols get aligned, we can immediately return $0$. This completes our algorithm for $\Matf(.,.)$.

Now consider $\Mat(i)$. For a rule $S_i \to S_{\ell} S_{r}$, the optimal alignment of the pattern in $T_i$ is either completely contained in $T_{\ell}$ or completely contained in $T_{r}$ or it has a non-empty intersection with both of them, in which case it has an offset $-M < d < 0$ with respect to the starting symbol of~$T_{r}$, or equivalently an offset $|T_\ell|+d$ with respect to the starting symbol of $T_\ell$. Hence, we have
\[ \Mat(i) = \min\Big\{ \Mat(\ell),\, \Mat(r),\, \min_{-M < d < 0} \Matf(r,d) + \Matf(\ell,|T_{\ell}| + d)  \Big\}. \]
Again, for a rule $S_i \to \sigma \in \Sigma$ we can compute $\Mat(i)$ in constant time. This completes the algorithm for $\Mat(.)$. 

To obtain the claimed running time, we use memoization to ensure that each argument is called at most once. Clearly, there are $n$ possible arguments for $\Mat(.)$, and each call takes time $O(M)$, resulting in time $O(nM)$. 
Note that $\Mat(.)$ only calls $\Matf(i,d)$ for offsets $d$ such that the pattern crosses the left or right boundary of $T_i$. This property also holds as an invariant in the recursive subproblems of $\Matf(i,d)$. Hence, there are less than $2M$ possible offsets $d$ (i.e., less than $M$ offsets for the left and right boundary). As there are $n$ possible values for $i$, and each call to $\Matf(.,.)$ takes time $O(1)$, we obtain the claimed total running time of $O(nM)$.
\end{proof}

This completes the upper bound $O(\min\{ |\Sigma| N \log N, n M \})$ for Generalized Pattern Matching. It remains to prove the SETH-based lower bound of $\min\{N, n M\}^{1-o(1)}$ for Substring Hamming Distance and Pattern Matching with Wildcards.

We now make the intuition given at the beginning of Section~\ref{sec:sethlowerbounds} formal, by designing a text $T$ that enumerates all combinations of $k$ vectors in a given \kOV instance, while still being well compressible.
We give a slightly more general construction that will also be useful later for our SETH-based lower bounds for LCS, see \secref{lcs}. As usual, we consider $k$ as a constant.

\begin{lem}\label{lem:tuplify}
Consider a \kOV instance $\sA = \{a_1,\ldots,a_A\} \subseteq \{0,1\}^d$. Let $b \in \{0,1\}^d$ be an additional vector, and let $\sz,\so$ be strings of length $\gamma$ ($S(i)$ is a sequence that represents an entry that is equal to $i$).
We define the tuplified representation as follows:
\begin{align*}
V & = \tuplify\big(\sA, k, b, \sz, \so\big) \\
& := \bigconcat_{\ell=1}^d \bigconcat_{i_1,\dots,i_k \in [A]} \sBig{ b[\ell] \cdot a_{i_1}[\ell] \cdots a_{i_k}[\ell]},
\end{align*}
where the second $\bigconcat$ goes over all tuples $(i_1,\ldots,i_k) \in [A]^k$ in lexicographic order.
This representation satisfies the following properties.
\begin{enumerate}
\item We can compute, in linear time in the output size, an \SLP \cV generating $V$ of size $\Oh(d A + \gamma)$ or, when given SLPs $\cSZ,\cSO$ generating $S(0),S(1)$, of size $\Oh(d A + |\cSZ|+|\cSO|)$. 
\item Write $V = \bigconcat_{i=1}^{dA^k} V_i$ with $V_i \in \{\sz,\so\}$. Then there exist $i_1,\ldots,i_k \in [A]$ such that $(b, a_{i_1}, \dots, a_{i_k})$ is orthogonal if and only if there is an \emph{offset} $1 \le \Delta \le A^k$ such that 
\[V_\Delta =  V_{\Delta+A^k} = \ldots = V_{\Delta+(d-1)A^k} = \sz.\]
\end{enumerate}
\end{lem}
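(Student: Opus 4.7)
My plan is to handle Property 2 first, as it is essentially just unpacking the definition, and then devote the bulk of the argument to the SLP construction in Property 1. The ordering of the two $\bigconcat$'s is what makes the reduction work: the outer loop over coordinates $\ell$ ensures that a fixed tuple $(i_1,\dots,i_k)$ contributes one symbol in each of the $d$ coordinate-blocks at positions separated by exactly $A^k$.

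\textbf{Property 2.} By construction, writing $V = \bigconcat_{i=1}^{dA^k} V_i$ with each $V_i$ of length $\gamma$, the symbol $V_{(\ell-1)A^k + r}$ equals $S\big(b[\ell]\cdot a_{i_1}[\ell]\cdots a_{i_k}[\ell]\big)$, where $(i_1,\dots,i_k)$ is the $r$-th tuple of $[A]^k$ in lexicographic order. Fixing an offset $\Delta \in [A^k]$ thus fixes a tuple $(i_1,\dots,i_k)$, and the positions $\Delta, \Delta+A^k, \dots, \Delta+(d-1)A^k$ are exactly the symbols contributed by this tuple across all coordinates $\ell \in [d]$. Each of these equals $\sz$ iff the product $b[\ell]\cdot a_{i_1}[\ell]\cdots a_{i_k}[\ell]=0$, i.e., some vector among $b, a_{i_1}, \dots, a_{i_k}$ has a zero in coordinate $\ell$. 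All $d$ of them being $\sz$ is therefore equivalent to $(b, a_{i_1}, \dots, a_{i_k})$ being orthogonal, which proves the equivalence.

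\textbf{Property 1 — Shared zero blocks.} I would first build the strings $\sz$ and $\so$ (either using the $O(\gamma)$-size rules inherent to the definition or the given SLPs $\cSZ, \cSO$, contributing $O(\gamma)$ or $O(|\cSZ|+|\cSO|)$ rules, respectively). Then, using \obsref{repetition}, I construct auxiliary non-terminals $Z_0, Z_1, \dots, Z_k$ with $\eval(Z_j) = \sz^{A^{k-j}}$; since $k$ is constant this takes $O(k \log A) = O(\log A)$ additional rules.

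\textbf{Property 1 — Coordinate blocks via telescoping.} For each coordinate $\ell \in [d]$ and each level $j = k, k-1, \dots, 1$, I introduce a non-terminal $R_{\ell,j}$ generating $\bigconcat_{(i_j,\dots,i_k)\in[A]^{k-j+1}} S\big(a_{i_j}[\ell]\cdots a_{i_k}[\ell]\big)$, with the base case $R_{\ell,k+1} := \so$. Splitting on the value of $i_j$, the block for a fixed $i \in [A]$ equals $S(0)^{A^{k-j}} = \eval(Z_{k-j})$ if $a_i[\ell]=0$, and equals $\eval(R_{\ell,j+1})$ if $a_i[\ell]=1$ (since in the latter case the factor $a_i[\ell]=1$ drops out of the product, so the remaining $A^{k-j}$ entries are exactly the contents of $R_{\ell,j+1}$). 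Hence $R_{\ell,j}$ is the concatenation of $A$ blocks over the alphabet $\{Z_{k-j}, R_{\ell,j+1}\}$, which requires $A-1$ binary rules. Finally, let $W_\ell := R_{\ell,1}$ if $b[\ell]=1$ and $W_\ell := Z_0$ otherwise, and concatenate $W_1, \dots, W_d$ into the start symbol with $O(d)$ binary rules. In total the SLP has size $O(kdA + k \log A + \gamma) = O(dA + \gamma)$ (resp.\ $O(dA + |\cSZ|+|\cSO|)$), and every rule can be emitted in $O(1)$ time while scanning the vectors, so the whole construction runs in linear time in the output.

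The only point that requires care is justifying the substitution $R_{\ell,j+1}$ when $a_i[\ell]=1$: it relies on the fact that the inner string $\bigconcat_{(i_{j+1},\dots,i_k)} S(a_{i_{j+1}}[\ell]\cdots a_{i_k}[\ell])$ does not depend on the outer index $i$ at all, which is precisely why we can share the same non-terminal $R_{\ell,j+1}$ across all the $A$ blocks. Beyond this observation, the argument is routine bookkeeping.
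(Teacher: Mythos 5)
Your construction is essentially identical to the paper's: your $R_{\ell,j}$ is the paper's $\textup{List}^{(j)}_\ell$, your $Z$'s are its $P_{\sz}^{j}$'s, and your $W_\ell$ is its $\textup{Test}_\ell$, with the same crucial observation that once $a_i[\ell]=1$ the remainder is independent of $i$ and can be shared across blocks. One small indexing slip: having defined $\eval(Z_j) = \sz^{A^{k-j}}$, the block $\sz^{A^{k-j}}$ for $a_i[\ell]=0$ is $\eval(Z_j)$, not $\eval(Z_{k-j})$ as you wrote (alternatively, define $\eval(Z_j)=\sz^{A^j}$ and keep the reference $Z_{k-j}$); this does not affect the substance or the size bounds.
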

\begin{proof}
For the second property, note that by definition $V_{\Delta}, V_{\Delta+A^k}, \dots, V_{\Delta+(d-1)A^k}$ are all equal to $\sz$ for $\Delta \in [A^k]$ if and only if the $\Delta$-th tuple $(i_1,\dots,i_k)\in [A]^k$ in the lexicographic ordering of $[A]^k$ satisfies 
\[ b[\ell] \cdot a_{i_1}[\ell] \cdots a_{i_k}[\ell] = 0 \qquad \text{for all } \ell\in [d].\] 
This condition is equivalent to $(b, a_{i_1}, \dots, a_{i_k})$ being an orthogonal pair, so the claim follows.

It remains to construct a short SLP $\cV$ generating $V$. We construct non-terminals $P_{\sz},P_{\so}$ with $\eval(P_{\s{i}}) = \s{i}$ by an SLP of size $\gamma_S = O(\gamma)$ as in Observation~\ref{obs:repetition}, or of size $\gamma_S = O(|\cSZ|+|\cSO|)$ by using given SLPs $\cSZ,\cSO$. We can extend this, using \obsref{repetition}, to a slightly larger SLP of size $\Oh(\log A + \gamma_S)$ that includes, for every $1 \le j \le k$, a non-terminal $P_{\sz}^{j}$ with $\eval(P_{\sz}^j) = \sz^{A^j}$.

The crucial observation is the following: for any tuple $(i_1,\dots, i_k)\in [A]^k$, let $p_\ell(i_1,\dots,i_k) = a_{i_1}[\ell] \cdots a_{i_k}[\ell]$. Then for any $\ell \in [d], j\in [k]$ and $(i_1,\dots,i_j)\in [A]^j$, we have that $a_{i_j}[\ell] = 0$ implies $p_\ell(i_1,\dots,i_j,i'_{j+1},\dots,i'_{k}) = 0$ for \emph{all} $(i'_{j+1},\dots,i'_{k})\in [A]^{k-j}$.
We now define the final SLP using the starting non-terminal $S_0$ and the following productions
\begin{align*}
S_0 & \productionsign \Test_1 \dots \Test_d  & & \\
\Test_\ell & \productionsign
\begin{cases} 
  P_{\sz}^k & \text{if } b[\ell] = 0 \\
  \List_\ell^{(1)} & \text{otherwise} 
\end{cases}
& & \ell \in [d],\\
\List^{(j)}_\ell & \productionsign \bigconcat_{i \in [A]}
\begin{cases}
  P_{\sz}^{k-j} & \text{if } a_i[\ell] = 0, \\
  \List^{(j+1)}_\ell & \text{otherwise} 
\end{cases}
& & \ell\in[d],j\in[k], \\
\List^{(k+1)}_\ell & \productionsign P_{\so}.
\end{align*}
It is straight-forward to verify that $\eval(S_0) = V$.
Note that the size of this SLP, i.e., the total number of non-terminals on the right hand side of the above rules, is bounded by $\Oh(\gamma_S + dA)$. Moreover, the SLP can be constructed in linear time in its size.
\end{proof}

After this preparation, we can prove our conditional lower bounds.

\begin{thm}\label{thm:patmatchlb}
Assuming the $k$-OV conjecture, Pattern Matching with Wildcards over alphabet $\{0,1\}$ (plus wildcards $*$) takes time $\min\{N, nM\}^{1-o(1)}$. This holds even restricted to instances with $n = \Theta(N^{\alpha_n})$, $M = \Theta(N^{\alpha_M})$ and $m = \Theta(N^{\alpha_m})$ for any $0 < \alpha_n < 1$ and $0 < \alpha_m \le \alpha_M \le 1$.
\end{thm}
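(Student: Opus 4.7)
My plan is to reduce from $k$-OV (Conjecture~\ref{conj:kov}) using the $\tuplify$ construction of Lemma~\ref{lem:tuplify} as the core gadget, combined with padding and re-bracketing tricks to cover every parameter regime $(\alpha_n,\alpha_M,\alpha_m)$ allowed by the statement. The warm-up at the end of Section~\ref{sec:techoverview} already handles a uniform regime; the bulk of the work is designing the reduction for arbitrary exponents.

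The base reduction will start from a $k$-OV instance $\sA\subseteq\{0,1\}^d$ of size $A$: set $T_0 := \tuplify(\sA, k, \mathbf{1}^d, 0, 1)$ and $P_0 := 0\,(*^{A^k-1}\,0)^{d-1}$. By Lemma~\ref{lem:tuplify}, $T_0$ has length $dA^k$ with an SLP of size $O(dA)$, $P_0$ has length $(d-1)A^k+1$ with an SLP of size $O(\log N)$ (via Observation~\ref{obs:repetition}), and $P_0$ matches a substring of $T_0$ iff $\sA$ has an orthogonal $k$-tuple. To realize arbitrary $(\alpha_n,\alpha_M)$, I will combine three knobs: (i)~choose $d = A^\gamma$ for a sub-polynomial $\gamma = o(1)$, keeping the $\poly(d)$ factor in the $k$-OV conjecture at $A^{o(1)}$; (ii)~right-pad $T_0$ with a block of $1$'s, which safely inflates $N$ since every non-wildcard of $P_0$ is $0$, at a cost of only $O(\log N)$ in the SLP size; (iii)~\emph{re-bracket} the $\tuplify$ by a parameter $j \in \{0,\ldots,k-1\}$, setting
\[ T := \bigconcat_{(i_1,\ldots,i_j)\in[A]^j} T_{(i_1,\ldots,i_j)}\cdot 1^{A^{k-j}}, \qquad P := 0\,(*^{A^{k-j}-1}\,0)^{d-1}, \]
where $T_{(i_1,\ldots,i_j)} := \tuplify(\sA,k-j,a_{i_1}\wedge\cdots\wedge a_{i_j},0,1)$ encodes the residual orthogonality check with the first $j$ vectors fixed. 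The $1^{A^{k-j}}$-fences block any boundary-crossing match: by the regular $A^{k-j}$-spacing of $P$'s non-wildcards and the matching fence length, any offset forcing $P$'s non-wildcards beyond the current block must hit the fence, where the text is $1$ but $P$ demands $0$. Hence $P$ matches $T$ iff some outer tuple $(i_1,\ldots,i_j)$ together with some inner tuple $(i_{j+1},\ldots,i_k)$ forms an orthogonal $k$-tuple. Tracing the construction in Lemma~\ref{lem:tuplify} shows that the \List subtrees (which depend only on $\sA$ and $\ell$) can be shared across all outer blocks while only the $O(d)$ \Test rules per outer tuple (which depend on $a_{i_1}\wedge\cdots\wedge a_{i_j}$) need to be duplicated; hence the SLP for $T$ has size $O(d A^{\max(1,j)} + \log N)$.

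By jointly tuning $k$, $j$, $\gamma$, and the padding length, every pair $(\alpha_n,\alpha_M) \in (0,1)\times (0,1]$ is realizable: the ratio $\alpha_n/\alpha_M$ is controlled by $(k,j,\gamma)$, and the sum $\alpha_n+\alpha_M$ is scaled down by the right-padding. To further inflate $m$ from $O(\log N)$ up to $\Theta(N^{\alpha_m})$, I prepend a $1^{m'}$-block to both $T$ and $P$ and represent it inside $\cP$ by a fully expanded SLP of size $m'$ (while $\cT$ still uses the logarithmic SLP for $1^{m'}$), which preserves correctness and gives direct control over $|\cP|$. Finally, since the reduction runs in time $A^{1+o(1)}$, an $O(\min\{N,nM\}^{1-\varepsilon})$-time algorithm would yield a $k$-OV algorithm in time $O(A^{k-\varepsilon'})$ for some $\varepsilon'>0$, contradicting the $k$-OV conjecture. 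The main obstacle I anticipate is step (iii): a careful case analysis on offsets to verify both the SLP-size bound via the described sharing between outer blocks, and the impossibility of spurious matches at cross-boundary offsets.
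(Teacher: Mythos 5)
Your reduction is, up to cosmetic choices, the one in the paper: your re-bracketing parameter $j$ plays the role of the paper's $k_2$ (with $k-j=k_1$), the pattern is identical, and the fence/tuplify layout differs only in whether the $1^{A^{k-j}}$-fence precedes or follows each tuplify block. The observation that the $\List$-subtrees of Lemma~\ref{lem:tuplify} can be shared across the outer blocks is correct and tightens the SLP-size bound for $T$ to $O(dA^{\max(1,j)})$, but it is not needed: the binding term in the $k$-OV contradiction comes from $N$, so the paper's cruder $O(dA^{j+1})$ bound on $n$ works equally well.

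The one step that would genuinely fail as written is the padding. You prepend $1^{m'}$ to both $T$ and $P$ and assert this ``preserves correctness,'' but it does not. For a match at offset $\Delta\ge 1$ in the prepended instance $T'=1^{m'}T$, $P'=1^{m'}P$, the $1^{m'}$-prefix of $P'$ must align with $T'[\Delta+1..\Delta+m']$; the last $\Delta$ of these characters lie in the original $T$, so you incur the \emph{extra} constraint $T[1..\Delta]=1^{\Delta}$, which has no counterpart in the original instance. In your layout, $T$ begins with a tuplify block whose first symbol is data-dependent (it equals $a_1[1]$ and may well be $0$), so every offset $\Delta\ge 1$ is ruled out --- yet the legitimate offsets in the reduction range over all $A^j$ outer blocks, i.e., over essentially the full length of $T$. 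The paper avoids this by prepending a \emph{wildcard} prefix $*^{m'}$ to the pattern (together with $1$'s on the text), which is vacuous at every offset and hence preserves the match set exactly; the desired $m=\Theta(N^{\alpha_m})$ is then obtained by partially decompressing $\cP$. Replacing the $1^{m'}$-prefix of $P$ by $*^{m'}$ is the fix that makes your scheme work.
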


Before we prove \thmref{patmatchlb}, let us sketch the main idea by providing a simple $N^{1-o(1)}$-time conditional lower bound in the setting $n,m= O(N^{\varepsilon})$ and $N = \Theta(M)$. 
Let $\sA \subseteq \{0,1\}^d$ of size $A$ be an arbitrary $k$-\OV instance with $k > 1/\varepsilon$, and assume for simplicity $d \le A^{o(1)}$. Using \lemref{tuplify} on $\sA$, $k$, $\sz = 0, \so = 1$ and $b = (1,\dots,1) \in \{0,1\}^d$, we compute an SLP \ctext for
\[T = \tuplify(\sA,k,b,\sz,\so).\]
We define the pattern $P$ as
\[P = 0 (\wildcard^{A^k-1} 0)^{d-1}. \]
Note that Pattern Matching with Wildcards on instance $T,P$ checks whether for some offset $\Delta$ we have $T[\Delta] = T[\Delta + A^k] = \ldots = T[\Delta + (d-1) A^k] = 0$. Hence, by \lemref{tuplify}, pattern $P$ matches $T$ if and only if there is an orthogonal tuple $(a_1, \dots, a_k) \in \sA^k$, showing correctness of the reduction.

Note that we have $N = \Theta(M) = \Theta(d A^k)$. By \lemref{tuplify}, $T$ has an SLP of size $O(dA)$, and by \obsref{repetition}, $P$ has an SLP of size $O(d \log A)$. By $d \le A^{o(1)}$ and $k > 1/\eps$, we are indeed in the setting $n,m = O(N^\eps)$ and $N = \Theta(M)$. An $O(N^{1-\eps})$ algorithm for Pattern Matching with Wildcards would now imply an $O(A^{k(1-\eps)} \textup{poly}(d))$ for $k$-OV, contradicting the $k$-OV conjecture.

We now give the slightly more involved general construction.

\begin{proof}[Proof of \thmref{patmatchlb}]
For $k \ge 2$, let $\sA = \{a_1,\ldots,a_A\}$ be a $k$-\OV instance in $d$ dimensions, and let $k_1,k_2 \ge 1$ with $k_1 + k_2 = k$. We will construct an equivalent instance of Pattern Matching with Wildcards with $N = O(d A^k)$, $M = O(d A^{k_1})$, $n = O(d A^{k_2+1})$, and $m = O(d \log A)$. Any $O(\min\{N, n M\}^{1-\eps})$ algorithm for Pattern Matching with Wildcards would then imply an algorithm for $k$-OV in time $O(A^{(k+1)(1-\eps)} \textup{poly}(d)) = O(A^{k(1-\eps/2)} \textup{poly}(d))$ for $k \ge 2/\eps$, contradicting the $k$-OV conjecture. Below we strengthen this statement to hold restricted to instances with $n = \Theta(N^{\alpha_n})$, $M = \Theta(N^{\alpha_M})$ and $m = \Theta(N^{\alpha_m})$ for any $0 < \alpha_n < 1$ and $0 < \alpha_m \le \alpha_M \le 1$.

To give such a reduction, we define the text as 
\begin{align*}
T = \bigconcat_{(j_1,\dots,j_{k_2}) \in [A]^{k_2}} 1^{A^{k_1}} \concat \tuplify( \sA, k_1, \min(a_{j_1}, \dots, a_{j_{k_2}}), 0, 1), \hspace{0.5cm} 
\end{align*}
where $\min(b_1, \dots, b_\ell)$ denotes the component-wise minimum of $b_1,\dots,b_\ell$.

We define the pattern $P$ as
\[ P = 0 (\wildcard^{A^{k_1} - 1}0)^{d-1}. \]

\paragraph{Correctness}
Observe that $P$ cannot overlap any $1^{A^{k_1}}$-block, since never more than $A^{k_1} - 1$ wildcards are followed by a 0 in $P$. Thus, $P$ matches $T$ if and only if there is a tuple $\gtwotuple\in [A]^{k_2}$ such that $P$ matches $T(\gtwotuple) := \tuplify(\sA, k_1, \min(a_{j_1}, \dots, a_{j_{k_2}}), 0, 1)$. By the structure of the pattern, $P$ matches any string $S$ if and only if there is an offset $\Delta$ such that $S[\Delta]=S[\Delta + A^{k_1}]= \dots = S[\Delta + (d-1)A^{k_1}] = 0$. Thus, by \lemref{tuplify}, $P$ matches $T(\gtwotuple)$ if and only if there are vectors $a_1,\ldots,a_{k_1} \in \sA$ for which $(a_1,\dots,a_{k_1}, \min(a_{j_1},\dots,a_{j_{k_2}}))$ is an orthogonal tuple. The latter condition is equivalent to $(a_1,\dots,a_{k_1}, a_{j_1},\dots,a_{j_{k_2}})$ being an orthogonal tuple. Since $k_1 + k_2 = k$ and $T$ contains $T(\gtwotuple)$ for all $\gtwotuple\in [A]^{k_2}$, this proves that $P$ matches $T$ if and only if there is an orthogonal $k$-tuple in the instance $\sA$.

\paragraph{Size Bounds}
Note that $N = |T| = O(d A^{k})$. By \lemref{tuplify} and \obsref{repetition}, we can compute an \SLP $\ctext$ of size $n = \Oh(d A^{k_2 + 1})$ generating $T$, in linear time.
Similarly, note that $M = |P| = O(d A^{k_1})$. By \obsref{repetition}, we can compute an SLP $\cpat$ of length $m = \Oh(d \log A)$ generating $P$, in linear time. This proves the claimed bounds.

\paragraph{Strengthening the Statement}
We now prove the lower bound restricted to instances with $n = \Theta(N^{\alpha_n})$, $M = \Theta(N^{\alpha_M})$ and $m = \Theta(N^{\alpha_m})$ for any $0 < \alpha_n < 1$ and $0 < \alpha_m \le \alpha_M \le 1$. Let $\eps > 0$ and set $\beta := \min\{1, \alpha_M + \alpha_n\}$. We choose $k_1,k_2 \ge 1$ such that $k_1 + k_2 = k$ and $k_1 \approx \min\{\alpha_M, 1-\alpha_n\} k / \beta$ and $k_2 \approx \alpha_n k / \beta$. Note that $k_1,k_2$ are restricted to be integers, however, for sufficiently large $k$ depending only on $\eps, \alpha_M, \alpha_n$, we can ensure $k_1 \le (1+\eps/4) \min\{\alpha_M, 1-\alpha_n\} k / \beta$ and $k_2+1 \le (1+\eps/4) \alpha_n k / \beta$. Note that for the dimension $d$ we can assume $d \le A$, since otherwise an $O(A^{k-\eps} \textup{poly}(d))$ algorithm clearly exists. In particular, for sufficiently large $k$ we have $d \le A^{(\eps/4) \cdot \min\{\alpha_M, \alpha_m, \alpha_n, 1-\alpha_n \} k / \beta}$.  This yields
\begin{align*}
  N &= O(d A^k) = O(A^{(1+\eps/2)k / \beta}), \\
  M &= O(d A^{k_1}) = O(A^{(1+\eps/2) \min\{\alpha_M, 1-\alpha_n\} k / \beta}) = O(A^{(1+\eps/2) \alpha_M k / \beta}), \\
  n &= \Oh(d A^{k_2 + 1}) = O(A^{(1+\eps/2) \alpha_n k / \beta}), \\
  m &= \Oh(d \log A) = O(A^{(1+\eps/2) \alpha_m k / \beta}).
\end{align*}
Standard padding\footnote{Add a prefix of wildcards to the pattern and a prefix of 1's to the text, and partially decompress the SLPs.} of these four parameters allows us to achieve equality, up to constant factors, in the above inequalities, which yields the desired $n = \Theta(N^{\alpha_n})$, $M = \Theta(N^{\alpha_M})$ and $m = \Theta(N^{\alpha_m})$.
Any $O(\min\{N, n M\}^{1-\eps})$ algorithm for Pattern Matching with Wildcards in this setting would now imply an algorithm for $k$-OV in time $O(\min\{ A^{(1+\eps/2)k/ \beta}, A^{(1+\eps/2)(\alpha_M+\alpha_n) k/ \beta} \}^{1-\eps}) = O(A^{(1+\eps/2)(1-\eps) \min\{1, \alpha_M + \alpha_n\} k / \beta}) = O(A^{(1-\eps/2)k})$, where we used the definition of $\beta$ and $(1+\eps/2)(1-\eps) \le 1-\eps/2$. This contradicts the $k$-OV conjecture, finishing the proof.
\end{proof}

We next prove a lower bound similar to \thmref{patmatchlb} for another special case of generalized pattern matching, namely Substring Hamming Distance. Instead of a direct reduction from $k$-OV, we present a linear-time reduction from Pattern Matching with Wildcards over alphabet $\{0,1\}$ to Substring Hamming Distance.

\begin{thm} \label{thm:subhamminglb}
Assuming the $k$-OV conjecture, Substring Hamming Distance on constant-size alphabet takes time $\min\{N, nM\}^{1-o(1)}$. This holds even restricted to instances with $n = \Theta(N^{\alpha_n})$, $M = \Theta(N^{\alpha_M})$ and $m = \Theta(N^{\alpha_m})$ for any $0 < \alpha_n < 1$ and $0 < \alpha_m \le \alpha_M \le 1$.
\end{thm}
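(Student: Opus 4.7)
The plan is to give a simple gadget-based linear-time reduction from Pattern Matching with Wildcards over $\{0,1\}$ (with wildcards in the pattern) to Substring Hamming Distance, and then invoke \thmref{patmatchlb} to inherit its lower bound. Since the reduction will multiply the parameters $N,M,n,m$ by at most constant factors, all parameter restrictions transfer.

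First, I would define a symbol-level encoding $\phi$ over the constant-size alphabet $\Sigma' = \{0,1,2,3\}$ by setting $\phi(0) := 02$, $\phi(1) := 12$ in the text and additionally $\phi(*) := 32$ in the pattern. Let $T' := \phi(T[1])\cdots\phi(T[N])$ and $P' := \phi(P[1])\cdots\phi(P[M])$. The separator symbol $2$ in the second coordinate of every block is the key: it enforces that only ``in-phase'' (even) offsets can be informative. A direct case analysis gives that for any even offset $i = 2s$, the Hamming distance between $P'$ and $T'[i+1..i+2M]$ equals $W + \#(\text{non-wildcard mismatches})$, where $W$ is the number of wildcards in $P$ (each wildcard pair contributes exactly $1$ to the Hamming distance, each mismatched $(0,1)$ pair contributes exactly $1$, and matched pairs contribute $0$). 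For any odd offset the pattern's separator $2$'s collide with text symbols in $\{0,1\}$ and vice versa, so the Hamming distance is exactly $2M$. Hence the minimum substring Hamming distance of $(T',P')$ equals $W$ if and only if $P$ matches some substring of $T$ (it is always at least $W$; it equals $W$ at some even offset iff there is a genuine wildcard-match, and all odd offsets give $2M > W$).

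Next I would verify the size bounds and SLP-compressibility of $T'$ and $P'$. Clearly $N' = 2N$ and $M' = 2M$. An SLP $\cT'$ for $T'$ is obtained from $\cT$ by introducing $O(1)$ new non-terminals $A_0, A_1, B$ with $A_0 \to 0$, $A_1 \to 1$, $B \to 2$, and replacing every terminal rule $S_i \to \sigma$ of $\cT$ by $S_i \to A_\sigma B$; concatenation rules are left unchanged. Thus $|\cT'| = n + O(1) = O(n)$, and analogously $|\cP'| = O(m)$ (with an extra $A_3 \to 3$). The value $W$ can be computed from $\cP$ in $O(m)$ time by a trivial bottom-up counting DP on the SLP, so that one can compare the output of the Substring Hamming Distance oracle against the threshold $W$.

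Finally, I would chain this with \thmref{patmatchlb}: an $O(\min\{N', n'M'\}^{1-\eps})$ algorithm for Substring Hamming Distance would immediately yield an $O(\min\{N, nM\}^{1-\eps})$ algorithm for Pattern Matching with Wildcards over $\{0,1\}$, contradicting the $k$-OV conjecture. The parameter restrictions $n = \Theta(N^{\alpha_n})$, $M = \Theta(N^{\alpha_M})$, $m = \Theta(N^{\alpha_m})$ are preserved in the reduction up to constant factors, which suffices for the $\Theta(\cdot)$ bounds in the target setting. The only genuinely delicate point, and the main obstacle to verify, is the out-of-phase analysis: one must certify that every odd offset really does produce Hamming distance $2M$ strictly greater than $W$, so that the oracle's minimum cannot be spuriously achieved at a misaligned position. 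This is immediate from $W \le M < 2M$ together with the separator argument above.
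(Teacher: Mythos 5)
Your reduction is correct and follows essentially the same route as the paper: a linear-time, parameter-preserving reduction from Pattern Matching with Wildcards that block-encodes each symbol and appends a separator to force phase alignment, followed by an invocation of \thmref{patmatchlb}. The concrete gadget differs slightly. The paper uses length-6 blocks $G(s) = s\,234$ over alphabet $\{0,1,2,3,4\}$, with coordinate strings chosen so that every aligned block pair contributes exactly $1$ on a match or wildcard and $3$ on a mismatch; this makes the acceptance threshold exactly $M$, known a priori, and misaligned offsets (non-multiples of $6$) give cost at least $3M$. Your gadget uses length-2 blocks over alphabet $\{0,1,2,3\}$, with matches contributing $0$ and wildcards/mismatches contributing $1$, so the threshold is the wildcard count $W$; you correctly note $W$ is computable in $O(m)$ by a bottom-up count on $\cP$, and odd offsets give cost exactly $2M > W$. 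Both are fine; yours is slightly leaner, while the paper's choice avoids any pattern-dependent threshold.
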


\begin{proof}
For short, we write $\dH(X,Y)$ for the Hamming distance of strings $X,Y$.
We prove the result by reducing any Pattern Matching with Wildcards instance $T_\PM,P_\PM$ over alphabet $\Sigma=\{0,1\}$ to an instance  $T_\HD, P_\HD$ of Substring Hamming Distance. We first define coordinate strings
\begin{align*}
s_T(0) & := 100, & s_T(1) & := 010, &  &\\
s_P(0) & := 101, & s_P(1) & := 011, &  s_P(\wildcard) & := 000. 
\end{align*}
Observe that these strings are defined in such a way that $\dH(s_P(\wildcard), s_T(y)) = 1$ for $y \in \{0,1\}$, $\dH(s_P(x), s_T(y)) = 1$ for $x=y \in \{0,1\}$, and $\dH(s_P(x), s_T(y)) = 3$ if $x\ne y$, $x,y \in \{0,1\}$.

We introduce the \emph{guarding} $G(s):=s \concat 2\, 3\, 4$ for length-3 strings $s\in \{0,1\}^3$. This allows us to reduce $T_\PM$, $P_\PM$ to the following instance, using alphabet $\Sigma = \{0,1,2,3,4\}$,
\begin{align*}
 T_\HD & := G(s_T(T_\PM[1])) \dots G(s_T(T_\PM[N])),\\
 P_\HD & := G(s_P(P_\PM[1])) \dots G(s_P(P_\PM[M])).
\end{align*}
Note that for any $0 \le i \le N-M$,
\begin{align*}
 \dH(T_\HD[6i+1..6i+6M], P_\HD) & = \sum_{j=1}^M \dH(s_P(P_\PM[j]), s_T(T_\PM[i+j])) \\
& = M + 2\cdot \mismatch(T_\PM[i+1..i+M],P_\PM), 
\end{align*}
where $\mismatch(z,z') = \#\{i \mid z'[i]\ne \wildcard, z[i] \ne z'[i]\}$ is the number of mismatches of $z$ and $z'$. 

We now observe that for all $i$ with $i \bmod 6 \ne 0$, we have $\dH(T_\HD[i+1..i+6M], P_\HD) \ge 3M$, as no two symbols $2,3,4$ in $P_\HD$ are aligned, so that each $G(s_P(P_\PM[j]))$ contributes at least 3 to the Hamming distance. 
Since $\dH(T_\HD[6i+1..6i+6M], P_\HD) \le 3M$ for all $i$, the substring with smallest Hamming distance has thus a Hamming distance of $M + 2\cdot \min_{0 \le i \le N-M} \mismatch(T_\PM[i+1.. i+M], P_\PM)$. This value is equal to $M$ if and only if $P_\PM$ matches $T_\PM$, proving correctness.

The corresponding reduction of the compressed problems is straightforward: We can augment the SLP $\ctext_\PM$ for $T_\PM$ by $\Oh(1)$-sized productions to obtain an SLP $\ctext_\HD$ for $T_\HD$, by replacing each terminal $\sigma \in \{0, 1, \wildcard\}$ by a non-terminal evaluating to $G(s_T(\sigma))$. Analogously, we can compute an SLP for $P_\HD$ of size $|\cpat_\HD|=|\cpat_\PM|+\Oh(1)$ in linear time. Overall, since also $|T_\HD| = \Oh(|T_\PM|), |P_\HD|=\Oh(|P_\PM|)$, all parameters are preserved up to constant factors.
By this linear-time parameter-preserving reduction, the lower bound of \thmref{patmatchlb} translates to Substring Hamming Distance, yielding the claim.
\end{proof}


\newcommand{\cX}{{\cal X}}
\newcommand{\cY}{{\cal Y}}

\newcommand{\cS}{{\cal S}}

\newcommand{\guard}{\mathrm{G}}
\newcommand{\delmu}{{\mathrm{del-}\mu}}

\newcommand{\matching}{{\cal M}}
\newcommand{\tX}{{\tilde{X}}}
\newcommand{\tY}{{\tilde{Y}}}
\newcommand{\tell}{{\tilde{\ell}}}
\newcommand{\tdelta}{{\tilde{\delta}}}

\newcommand{\sC}{{\cal C}}

\newcommand{\inputs}{{\cal I}}
\newcommand{\type}{\mathrm{type}}
\newcommand{\algn}{{\mathbf{\Lambda}}}
\newcommand{\dist}{\delta}

\newcommand{\GA}{{\mathrm{GA}}}
\newcommand{\TG}{{\mathrm{TG}}}
\newcommand{\cTG}{{\cal TG}}
\newcommand{\NTG}{{\mathrm{NTG}}}
\newcommand{\cNTG}{{\cal NTG}}

\newcommand{\zx}{\mathbf{0}_{\x}}
\newcommand{\zy}{\mathbf{0}_{\y}}
\newcommand{\ox}{\mathbf{1}_{\x}}
\newcommand{\oy}{\mathbf{1}_{\y}}

\newcommand{\tuple}{{\boldsymbol{i}}}
\newcommand{\norm}{{\mathrm{norm}}}
\newcommand{\orth}{{\mathrm{orth}}}
\newcommand{\non}{{\mathrm{non}}}

\subsection{Longest Common Subsequence} \label{sec:lcs}

In this section, we study the Longest Common Subsequence (LCS) problem. Recall that a string $S$ of length $\ell$ is a substring of a string $X$ if there are  $1\le i_1 < \dots < i_{\ell} \le |X|$ with $S[j] = X[i_j]$ for any $j\in [\ell]$. In the LCS problem, given two strings $X,Y$, the task is to determine the longest string $S$ that is a subsequence of both $X$ and $Y$. We denote the length of the LCS by $L(X,Y) = |S|$, and more precisely consider the problem of computing $L(X,Y)$. In the whole section, the alphabet $\Sigma$ has constant size.

\begin{problem}[\LCS]
Given strings $X,Y$ of length at most $N$ by grammar-compressed representations $\cX,\cY$ of size at most $n$, compute the length of the LCS of $X$ and $Y$.
\end{problem}

As discussed in the introduction, the $O(nN \sqrt{\log{N/n}})$ time algorithm by Gawrychowski \cite{gawrychowski2012faster} is the fastest known. Here we prove a matching lower bound of $(Nn)^{1-o(1)}$, assuming the $k$-OV conjecture.

\begin{thm}\label{thm:lcslb}
Assuming the $k$-OV conjecture, there is no $(nN)^{1-\eps}$-time algorithm for LCS for any $\eps>0$. This even holds restricted to instances with $n = \Theta(N^{\alpha_n})$ for any $0 < \alpha_n < 1$, and an alphabet of constant size.
\end{thm}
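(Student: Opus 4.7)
The plan is to reduce $k$-OV to Longest Common Subsequence by extending the Bringmann-K\"unnemann framework~\cite{BK15} in combination with the tuplify construction of Lemma~\ref{lem:tuplify}. For each target $\alpha_n \in (0,1)$, the reduction splits $k = k_1 + k_2$ with $k_1 \le k_2$ and $k_1/k_2 \approx \alpha_n$, and produces LCS instances whose compressed-length product $n\cdot N$ matches the Gawrychowski~\cite{gawrychowski2012faster} upper bound $\tilde O(nN)$.

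First, I would design a refined version of the BK15 vector and tuple gadgets that is \emph{coordinate-decomposable}: every ``tuple gadget'' for a tuple $(a_{i_1},\dots,a_{i_r}) \in \sA^r$ takes the form $\bigconcat_{\ell=1}^d g\bigl(a_{i_1}[\ell] \wedge \dots \wedge a_{i_r}[\ell]\bigr)$, where $g$ is a short fixed per-coordinate gadget. With an appropriate composition of OR-of-AND selectors and guard padding, this will still guarantee (as in BK15) that the LCS of an $X$-string enumerating $k_1$-tuples and a $Y$-string enumerating $k_2$-tuples equals a fixed threshold $L^\star$ if and only if there exist indices such that $(a_{i_1},\dots,a_{i_{k_1}},a_{j_1},\dots,a_{j_{k_2}})$ is an orthogonal $k$-tuple in $\sA$.

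Second, I reorder the outer structure of both $X$ and $Y$ from the natural ``tuples outer, coordinates inner'' ordering into the ``coordinates outer, tuples inner'' ordering
\[
Y \;=\; \bigconcat_{\ell=1}^d \bigconcat_{(j_1,\dots,j_{k_2}) \in [A]^{k_2}} g\bigl(a_{j_1}[\ell] \wedge \dots \wedge a_{j_{k_2}}[\ell]\bigr),
\]
and analogously for $X$. The resulting $Y$ has exactly the shape of $\tuplify$, so Lemma~\ref{lem:tuplify} yields an SLP $\cY$ of size $O(dA + |g|)$; crucially, $X$ is left uncompressed, giving $|\cX| = |X| = \Theta(dA^{k_1}|g|)$ (compressing $X$ as well would shrink $nN$ far below $A^k$ and weaken the lower bound). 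The main technical obstacle is to show that this reordering does \emph{not} destroy the LCS-vs-orthogonality correspondence: in the original layout an LCS alignment naturally matches one tuple block in $X$ to one tuple block in $Y$, whereas in the reordered layout an optimal LCS must pick the \emph{same} tuple indices within each of the $d$ coordinate blocks, which does not come for free (unlike in Pattern Matching with Wildcards). I would enforce this by separating coordinate blocks with long runs of fresh guard symbols and by carefully sizing each tuple slot, so that any alignment ``drifting'' between coordinate blocks incurs a strict LCS penalty relative to a consistent alignment; this extends the normalization machinery of BK15 with a coordinate-consistency layer.

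Finally, for parameter calibration: with $k_1 \le k_2$, we have $N = |Y| = \Theta(dA^{k_2}|g|)$ and $n = |\cX| = \Theta(dA^{k_1}|g|)$, so $nN = \Theta(A^k d^2 |g|^2)$ and $\log n / \log N \approx k_1/k_2$. An $(nN)^{1-\eps}$-time algorithm for LCS would then yield a $k$-OV algorithm in time $O(A^{k(1-\eps)}\poly(d))$, which for any $\eps > 0$ contradicts Conjecture~\ref{conj:kov}, since $k\eps > 0$ is a positive constant. For an arbitrary $\alpha_n \in (0,1)$, we choose $k$ large enough so that some integer split $k_1 + k_2 = k$ satisfies $k_1/k_2 \approx \alpha_n$, and apply standard padding (e.g., appending a common run of a unique guard symbol to both $X$ and $Y$) to achieve $n = \Theta(N^{\alpha_n})$ exactly, while keeping the alphabet size constant.
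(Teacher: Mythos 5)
Your high-level plan — reduce from $k$-OV via the coordinate-outer $\tuplify$ ordering plus a BK15-style alignment gadget, then calibrate the split of $k$ to hit the target $\alpha_n$ — is the right intuition, and you correctly identify the crux: an LCS alignment must be forced to pick the \emph{same} tuple offset across all $d$ coordinate blocks, and this is not automatic as it is for Pattern Matching with Wildcards. However, your proposed fix (``long runs of fresh guard symbols between coordinate blocks'' so that ``drifting incurs a strict LCS penalty'') does not achieve this, and it is not what the paper does. Guard symbols between coordinate blocks only force the matching to align $X_\ell$ with $Y_\ell$ for each $\ell$; within each block, the LCS is still free to choose an \emph{independent} offset, and nothing couples those choices. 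There is no purely local separator construction that transfers the Pattern Matching ``global shift'' property to LCS.

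The paper resolves this by a subtler, nested mechanism. It splits $k = k_1 + 2k_2$ (not $k_1 + k_2$). The outer level enumerates pairs of $k_2$-tuples $(\bb, \bc)$ as the sub-gadgets of $X$ and $Y$. For each such pair one builds tuple gadgets $\TG_\x(\bb) = \GA(\tilde u_{\bb})$ and $\TG_\y(\bc) = \GA(\tilde v_{\bc})$, where $u_\bb$ has the tuplified coordinate-outer form of length $d A^{k_1}$ and, crucially, $v_\bc = (\bc[1], 0^{A^{k_1}-1}, \bc[2], 0^{A^{k_1}-1}, \dots, \bc[d])$ of length $(d-1)A^{k_1}+1$. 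Because $|u_\bb|-|v_\bc| = A^{k_1}-1$, the structured-alignment offset $\Delta$ ranges exactly over $[A^{k_1}]$ and thereby \emph{is} the choice of $\ba \in \sA_0^{(k_1)}$; the $0$-padding inside $v_\bc$ spaces the coordinates exactly $A^{k_1}$ apart so that at offset $\Delta$ every $\bc[\ell]$ lands on $\ba(\Delta{+}1)[\ell]\cdot\bb[\ell]$. Coordinate consistency is thus \emph{built into the geometry}, not enforced by guards. Making this work requires extending BK15's alignment gadget: the paper introduces a new cost term $(i_m - i_1 - m + 1)\gamma$ penalizing skipped $X$-indices (Definition~\ref{def:algngadget}), which is absent from the original framework and absent from your sketch; without it, unstructured alignments can cheat on $\Delta$ without sufficient penalty. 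You would likely get stuck precisely here if you tried to execute your plan.

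Two smaller deviations are worth noting. First, you achieve $n = \Theta(N^{\alpha_n})$ by artificially presenting $\cX$ uncompressed while $\cY$ is tuplified; the paper instead compresses both strings nontrivially ($|\cX| = O(dA^{k_2+1})$, $|\cY| = O(A^{k_2}(d+\log A))$, $N = \Theta(dA^{k_1+k_2})$), which arises naturally from the nested gadget sizes rather than by padding the representation. Second, your calibration $k_1/k_2 \approx \alpha_n$ with $k = k_1 + k_2$ does not match the paper's $k = k_1 + 2k_2$ split; the correct exponents follow from the paper's three-level structure (outer $\bb,\bc$; middle offset $\Delta = \ba$; inner coordinate $\ell$), not from a two-level one.
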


The general approach is very similar to the lower bound for Pattern Matching with Wildcards given in \secref{genpatternmatch}. In particular, we again use the tuplified representation $T= \bigconcat_{i=1}^{d A^k} T[i]$ of \lemref{tuplify} for a $k$-\OV instance $\sA$. Recall that this allows us to decide the $k$-OV instance by testing whether there is a subsequence of $d$ substrings $T[\Delta], T[\Delta+A^k], \dots, T[\Delta+(d-1)A^k]$ all equal to a certain 0-coordinate string. Finding a pattern to test this was quite simple for Pattern Matching with Wildcards, yielding an $N^{1-o(1)}$ lower bound. For LCS, enforcing a coherent offset is much more complicated, since the ``pattern'' is matched as a subsequence not as a substring. Furthermore, the extension to a $(nN)^{1-o(1)}$ lower bound is more involved and relies on the quadratic-time nature of LCS. Fortunately, we can overcome the technical obstacles for LCS using (an extension of) alignment gadgets developed in~\cite{BK15}. We first redevelop and extend the corresponding alignment gadget tools in \secref{alignmentgadget}, then give the lower bound for compressed instances for general distance measures in \secref{distlb} and then finish our LCS lower bound by designing an alignment gadget for LCS in~\secref{lcslb}.

\subsubsection{Alignment Gadget Framework}
\label{sec:alignmentgadget}

We start by reviewing and adapting the definitions of~\cite{BK15}. In particular, we extend the alignment gadget definition for our purposes.

More generally than LCS, we consider an arbitrary \emph{similarity measure} $\delta : \inputs \times \inputs \to \N$. For LCS, the set of inputs $\inputs$ is the set of all strings over some sufficiently large constant-sized alphabet $\Sigma$, and $\delta(X,Y) := |X|+|Y| - 2 L(X,Y)$, where $L(X,Y)$ is the length of the LCS of $X$ and $Y$.

Any sequence $X \in \inputs$ is assigned an (abstract) type $\type(X)$. For LCS, we use $\type(X) := (|X|, \Sigma)$, where $|X|$ is the length of $X$ and $\Sigma$ the alphabet over which $X$ is defined. We define $\inputs_{t}:=\{X \in \inputs\mid\type(X)=t\}$ as the set of all inputs of type $t$. 

\paragraph{Alignments}
Let $n \ge m$. An \emph{alignment} is a set $\Lambda = \{(i_1,j_1),\ldots,(i_k,j_k)\}$ with $0 \le k \le m$ such that $1 \le i_1 < \ldots < i_k \le n$ and $1 \le j_1 < \ldots < j_k \le m$. We say that $(i,j) \in \Lambda$ are \emph{aligned}. Any $i \in [n]$ or $j \in [m]$ that is not contained in any pair in $\Lambda$ is called \emph{unaligned}.
We denote the set of all alignments (with respect to $n,m$) by~$\algn_{n,m}$.

We call the alignment $\{(\Delta+1,1),\ldots,(\Delta+m,m)\}$, with $0 \le \Delta \le n-m$, a \emph{structured alignment}. We denote the set of all structured alignments by $\strc_{n,m}$.

Defining the \emph{cost} of an alignment $\Lambda \in \algn_{n,m}$, we deviate from~\cite{BK15}:  for any $X_1,\ldots,X_n \in \inputs$ and $Y_1,\ldots,Y_m \in \inputs$, we define the cost of $\Lambda = \{(i_1,j_1),\dots, (i_{|\Lambda|},j_{|\Lambda|})\}$ as
\begin{align*}
\cost(\Lambda) = \cost^{X_1,\ldots,X_n}_{Y_1,\ldots,Y_m}(\Lambda) & := \sum_{k=1}^{|\Lambda|} \dist(X_{i_k},Y_{j_k}) + \begin{cases} (m-|\Lambda|) \gamma, & \text{if } |\Lambda| < m \\ (i_{m} - i_1 - m+1) \gamma & \text{if } |\Lambda| = m,\end{cases}
\end{align*}
where we set $\gamma:= \max_{i,j} \dist(X_i,Y_j)$.
In other words, (1) for any $j \in [m]$ which is aligned to some $i$, we ``pay'' the distance $\dist(X_i,Y_j)$, (2) if $\Lambda$ is unstructured because it contains an unaligned $j$, we ``pay'' a penalty of $\gamma$ for each such unaligned $j$ (note that there are $m-|\Lambda|$ unaligned $j \in [m]$) and (3) if $\Lambda$ is unstructured because it aligns all $j$ but leaves out some $i$ between the first and last aligned $i$, then for any unaligned $i$ that is between the first aligned $i_1$ and last aligned $i_{|\Lambda|}$, we also ``pay'' a penalty of $\gamma$ (note that $\sum_{k=1}^{|\Lambda|-1} (i_{k+1}-i_k-1) = i_{|\Lambda|} - i_1 - |\Lambda|+1$). This means that we incur punishment for \emph{any} deviation from a structured alignment.

In~\cite{BK15}, the cost of an alignment was defined to be the smaller quantity
$\sum_{k=1}^{|\Lambda|} \dist(X_{i_k},Y_{j_k}) + (m-|\Lambda|) \gamma$, i.e., unstructured alignments (that still align all $j\in [m]$) were punished less. For structured alignments both definitions coincide. Hence, the following extended alignment gadget is more powerful than the alignment gadget defined in~\cite{BK15}.

\begin{defn}[Extended alignment gadget]
\label{def:algngadget}
The similarity measure $\dist$ admits an
\emph{extended alignment gadget, }if the following
conditions hold: 
given instances $X_{1},\dots,X_{n}\in\inputs_{t_{\x}}$, $Y_{1},\dots,Y_{m}\in\inputs_{t_{\y}}$ with $m\le n$ and types $t_\x=(\ell_\x,\Sigma),t_\y=(\ell_\y,\Sigma)$, we can
construct new instances
$X=\GA_{\x}^{m,t_{\y}}(X_{1},\dots,X_{n})$ and $Y=\GA_{\y}^{n,t_{\x}}(Y_{1},\dots,Y_{m})$
and $C\in\mathbb{Z}$ such that 
\begin{align} \label{eq:algngadget}
\min_{\Lambda \in \algn_{n,m}} \cost(\Lambda) \le \dist(X,Y) - C \le \min_{\Lambda \in \strc_{n,m}} \cost(\Lambda).
\end{align}
Moreover, $\type(X)$, $\type(Y)$ and $C$ only depend on $n,m,t_\x,t_\y$. Finally, $|X|,|Y|=\Theta((n+m)(\ell_\x+\ell_\y))$.
\end{defn}

\begin{defn}[Compressible alignment gadget] \label{def:algngcompr}
We call an extended alignment gadget \emph{compressible}, if $X$ and $Y$ are of the form $X= X_L \left(\bigconcat_{i=1}^n \pad_\x(X_i)\right) X_R$ and $Y= Y_L \left(\bigconcat_{j=1}^m \pad_\y(Y_j)\right) Y_R$ for some strings $X_L,X_R, Y_L, Y_R$ and functions $\pad_\x: \inputs_{t_\x}\to \inputs$ and $\pad_\y: \inputs_{t_\y} \to \inputs$ that satisfy the following properties:
\begin{enumerate}
\item $X_L, X_R, Y_L, Y_R$ have SLPs of size $\Oh(\log n +\log(\ell_\x+\ell_\y))$, computable in linear time in the output.
\item Given SLPs $\cX_i, \cY_j$ for $X_i,Y_j$, we can compute SLPs for $\pad_\x(X_i),\pad_\y(Y_j)$ of size $\Oh(|\cX_i|+\log (\ell_\x+\ell_\y)), \Oh(|\cY_j|+\log(\ell_\x+\ell_\y))$ in linear time in the output.
\end{enumerate} 
\end{defn}

In \secref{lcslb}, we provide a compressible extended alignment gadget for LCS.

At the lowest level of our construction, we need the following notion.

\begin{defn}\label{def:coordValues}
The similarity measure $\dist$ admits \emph{coordinate values}, if there exist $\zx,\zy,\ox,\oy\in\inputs$ satisfying
\[
\dist(\ox,\oy)>\dist(\zx,\oy)=\dist(\zx,\zy)=\dist(\ox,\zy),
\]
and, moreover, $\type(\zx)=\type(\ox)$ and $\type(\zy)=\type(\oy)$.
\end{defn}

\newcommand{\ba}{{\boldsymbol{a}}}
\newcommand{\bb}{{\boldsymbol{b}}}
\newcommand{\bc}{{\boldsymbol{c}}}
\newcommand{\bA}{{\boldsymbol{A}}}
\newcommand{\bB}{{\boldsymbol{B}}}
\newcommand{\bC}{{\boldsymbol{C}}}

\subsubsection{General Lower Bound}
\label{sec:distlb}

The following theorem proves a conditional lower bound of $(Nn)^{1-o(1)}$ for any similarity measure admitting a compressible extended alignment gadget and coordinate values.
\begin{thm}\label{thm:simlb}
Let $\delta$ be a similarity measure admitting a compressible extended alignment gadget and coordinate values. Then unless the $k$-OV conjecture fails, there is no $(nN)^{1-o(1)}$-time algorithm for computing the value $\delta(X,Y)$, given SLPs $\cX, \cY$ of size at most $n$ generating strings $X,Y$ of length at most $N$. This even holds restricted to instances with $n = \Theta(N^{\alpha_n})$ for any $0 < \alpha_n < 1$, and constant alphabet size.
\end{thm}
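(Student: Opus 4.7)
The plan is to reduce from the $k$-OV conjecture for a large constant $k = k(\alpha_n,\eps)$. Given a $k$-OV instance $\sA = \{a_1,\dots,a_A\} \subseteq \{0,1\}^d$, I will construct strings $X,Y$ of length $N$ with SLPs of size $n = \Theta(N^{\alpha_n})$ such that $\delta(X,Y)$ decides whether $\sA^k$ contains an orthogonal tuple. Three ingredients combine: the coordinate values $\zx,\ox,\zy,\oy$ of \defref{coordValues}, which give a strict distance gap between orthogonal and non-orthogonal bit-pairs; the tuplify construction of \lemref{tuplify} (applied in the variant with SLPs $\cSZ,\cSO$ generating $\zx,\ox$), which yields compressible bit-level structure; and the compressible extended alignment gadget, which combines compressed components while enforcing a consistent choice of offset. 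The combinatorial backbone is the observation that a split $k$-tuple $(a_{i_1},\dots,a_{i_{k/2}},a_{j_1},\dots,a_{j_{k/2}})$ is orthogonal iff the componentwise-minima $c^X_{\vec{i}} := \min(a_{i_1},\dots,a_{i_{k/2}})$ and $c^Y_{\vec{j}} := \min(a_{j_1},\dots,a_{j_{k/2}})$ satisfy $c^X_{\vec{i}}[\ell] \cdot c^Y_{\vec{j}}[\ell] = 0$ for all $\ell$, thereby recasting $k$-OV as a 2-OV-style search over $A^{k/2}$ combined vectors per side.

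The construction has two nested levels. At the inner level, for each side \lemref{tuplify} produces a string of length $\Theta(dA^{k/2})$ and SLP size $\Oh(dA)$ that lists, outer-loop over $\ell \in [d]$ and inner-loop over $\vec{i} \in [A]^{k/2}$, a copy of $\zx$ or $\ox$ according to the bit $c^X_{\vec{i}}[\ell]$; crucially the $A^{k/2}$ combined vectors are never materialized. At the outer level, I group the tuplified output of each side into $A^{k_x}$ pieces and feed them as inputs to the compressible extended alignment gadget, obtaining $X$ and $Y$ with $N = \Theta(dA^{k/2})$ and $n = \Theta(dA^{k_x+1})$. Setting $k_x = \lceil \alpha_n k/2 \rceil - 1$ and using the padding trick from the proof of \thmref{patmatchlb} yields $n = \Theta(N^{\alpha_n})$. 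Correctness then rests on four facts: (i) the coordinate-values gap makes every non-orthogonal aligned bit-pair contribute strictly more than every orthogonal one; (ii) by \lemref{tuplify}(2), an inner offset produces the all-orthogonal pattern iff the selected $(\vec{i},\vec{j})$ is orthogonal; (iii) the extended alignment gadget inequality \eqref{eq:algngadget} sandwiches $\delta(X,Y) - C$ between the minimum unrestricted and the minimum structured alignment costs; (iv) the penalties in the extended cost function (\defref{algngadget}) force the optimum to be realized by a structured alignment, corresponding to a legal $(\vec{i},\vec{j})$. The compressibility property (\defref{algngcompr}) propagates the claimed SLP sizes through the outer gadget, using that $X_L,X_R,Y_L,Y_R$ and each $\pad_\x(X_i),\pad_\y(Y_j)$ preserve compactness.

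The main obstacle is obtaining the $(nN)^{1-o(1)}$ bound for \emph{every} $\alpha_n \in (0,1)$ from a single reduction scheme. A one-sided construction (placing all structural complexity on $X$ and keeping $Y$ simple) yields $nN = A^{k+k_2+1}$, which contradicts $k$-OV only when $\alpha_n < \eps/(1-\eps)$, missing the regime of large $\alpha_n$. The balanced symmetric construction above gives instead $nN = A^{(1+\alpha_n)k/2 + \Oh(1)}$, so an $(nN)^{1-\eps}$-time algorithm would yield an $A^{(1+\alpha_n)(1-\eps)k/2 + \Oh(1)}$-time algorithm for $k$-OV; since $2 - (1+\alpha_n)(1-\eps) = (1-\alpha_n) + \eps(1+\alpha_n) > 0$ whenever $\alpha_n < 1$ and $\eps > 0$, this exponent is strictly less than $k$ for sufficiently large $k$, contradicting the $k$-OV conjecture. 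The remaining delicate point is verifying that the outer alignment gadget applied symmetrically on both sides interacts correctly with the inner tuplified representation, so that structured alignments precisely enumerate pairs $(\vec{i},\vec{j})$ of outer-tuple choices while unstructured alignments remain dominated; this is exactly where the \emph{extended} cost function of \defref{algngadget}, strictly more punishing than the cost function of~\cite{BK15}, becomes essential, as it rules out unstructured alignments that would otherwise exploit the coordinate-by-coordinate tuplify layout to spuriously achieve low cost.
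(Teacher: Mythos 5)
You correctly identify the three ingredients (coordinate values, tuplify, alignment gadget), but the symmetric two-way split you propose fails a degrees-of-freedom count. A single nested application of the alignment gadget encodes at most three ``slots'' for selecting a solution: (i) the outer structured-alignment offset $\Delta$ (which requires one side to have more blocks than the other, hence the paper pads $X$ to $2A^{k_2}$ blocks against $Y$'s $A^{k_2}$); (ii) the identity $j$ of the single aligned pair that attains the smaller of the two possible inner distances; and (iii) the inner offset $\Delta'$ within that one aligned pair. The paper's proof uses an asymmetric three-way split $k = k_1 + 2k_2$ matching these three slots exactly: slot (iii) selects $\ba \in \bA := \sA_0^{(k_1)}$ via the inner gadget, while slots (i) and (ii) jointly select $(\bb,\bc)\in\bB\times\bC$ where $\bB := \sA_0^{(k_2)}$ enumerates $X$'s outer blocks and $\bC := \sA_1^{(k_2)}$ enumerates $Y$'s, giving a total search space of $A^{k_1}\cdot A^{k_2}\cdot A^{k_2} = A^k$ as needed. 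Your balanced split into $\vec{i},\vec{j}\in[A]^{k/2}$ grouped into $A^{k_x}$ outer pieces per side yields at most $A^{2k_x}$ (outer offset, outer index) combinations and $A^{k/2-k_x}$ inner offsets, a total of $A^{k/2+k_x}$, which is strictly less than $A^k$ whenever $k_x < k/2$ --- precisely the regime you need for $\alpha_n<1$. There is no fourth slot to recover the missing $\vec{j}^{(2)}\in[A]^{k/2-k_x}$: each aligned pair has \emph{one} inner offset, not two.

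The compressibility and correctness of the inner gadget also hinge on an asymmetry you do not replicate. In the paper, $\TG_\x(\bb)$ is built from the dense tuplified string $\tilde u_\bb$ (length $dA^{k_1}$, outer loop over coordinates $\ell$, inner over tuples $\ba$), while $\TG_\y(\bc)$ is built from the \emph{sparse} string $\tilde v_\bc$, whose nontrivial entries occur only once every $A^{k_1}$ positions. It is the sparsity of $\tilde v_\bc$ that makes a single inner shift $\Delta'$ read the same tuple $\ba(\Delta'+1)$ at every coordinate $\ell$; with dense tuplified strings on both sides, the inner offset would simply misalign the coordinate blocks and would not correspond to any tuple choice. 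You also omit the normalization gadget $\TG_\norm$ and the normalized tuple gadgets $\NTG_\x,\NTG_\y$. Without them, the inner distance in the non-orthogonal case is only bounded from below, not pinned to the exact value $\delta_\non$; since the outer gadget's cost is a \emph{sum} of inner distances over all aligned pairs, an uncontrolled excess in even one non-orthogonal pair would destroy the threshold separating yes- and no-instances in \claref{XYcorrect}.
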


We prove this theorem in the remainder of this section.

Let $\sA = \{a_1,\ldots,a_A\}$ be a $k$-\OV instance in $d-1$ dimensions. 
We augment all vectors in $\sA$ by another dimension where all vectors are 0 to obtain $\sA_0$, or where all vectors are 1 to obtain $\sA_1$.
For any $k' \ge 1$ we let $\sA^{(k')} := \{ \min(a_{i_1},\ldots,a_{i_{k'}}) \mid i_1,\ldots,i_{k'} \in [A] \}$, i.e., for each $k'$-tuple of vectors in $\sA$ the set $\sA^{(k')}$ contains the pointwise minimum of this $k'$-tuple. 
Note that $\sA^{(k')}$ is in general a multiset, it has size $|\sA^{(k')}| = A^{k'}$, and is naturally ordered by the lexicographic ordering on $k'$-tuples $(i_1,\ldots,i_{k'}) \in [A]^{k'}$.
Similarly, we define $\sA_0^{(k')}$ and $\sA_1^{(k')}$ for the augmented vectors. 
We split $k = k_1 + 2 k_2$ for some $k_1, k_2 \ge 1$ and set 
\[ \bA := \sA_0^{(k_1)}, \quad \bB = \sA_0^{(k_2)}, \quad \bC := \sA_1^{(k_2)}. \] 
Observe that deciding the given $k$-OV instance is equivalent to testing whether there are orthogonal vectors $(\ba,\bb,\bc)$ with $\ba \in \bA$, $\bb \in \bB$ and $\bc \in \bC$. In particular, the additional dimension is irrelevant for orthogonality, since we choose at least one vector in $\bA$ and any such vector has the last coordinate equal to 0.
For any $\ell \in [A^{k_1}]$, we denote by $\ba(\ell)$ the $\ell$-th vector in $\bA$.

\paragraph{Tuple gadgets.}
For any $\bb \in \bB$, $\bc \in \bC$, we define vectors $u_\bb \in \{0,1\}^{d A^{k_1}}$ and $v_\bc \in \{0,1\}^{(d-1)A^{k_1} +1}$:
\begin{align*}
u_\bb & := (\ba(1)[1]\cdot \bb[1], \dots, \ba(A^{k_1})[1]\cdot \bb[1], \dots, \ba(1)[d]\cdot \bb[d], \dots, \ba(A^{k_1})[d]\cdot \bb[d])\\
v_\bc & := (\bc[1], \underbrace{0, \dots, 0}_{A^{k_1} - 1 \text{ times}}, \bc[2], \; \dots, \underbrace{0, \dots, 0}_{A^{k_1} - 1 \text{ times}}, \bc[d]).
\end{align*}
In other words, for $j \in [d]$ and $\ell \in [A^{k_1}]$ we have $(u_\bb)_{j \cdot d + \ell} = \ba(\ell)[j+1] \cdot \bb[j+1]$ as well as $(v_\bc)_{j \cdot d + \ell} = \bc[j+1]$ if $\ell=1$ and $(v_\bc)_{j \cdot d + \ell} = 0$ otherwise.

The key idea is as follows. Consider a structured alignment $\Lambda = \{(\Delta+1, 1), \dots, (\Delta+m, m)\}\in \strc_{n,m}$ for the above vectors, where $n=d A^{k_1}$ and $m= (d-1)A^{k_1} + 1$. This chooses some tuple $\ba(\Delta+1) \in \bA$ and aligns the pairs $(\ba(\Delta+1)[\ell] \cdot \bb[\ell], \bc[\ell])$ for all $\ell\in [d]$, additional to some trivial pairs where the coordinate of $v_\bc$ is 0. This allows us to determine whether $(\ba(\Delta+1), \bb, \bc)$ is orthogonal. 

To formalize this, create $\tilde{u}_\bb$ by replacing each 0- and 1-entry in $u_\bb$ by $\zx$ and $\ox$ (from Definition~\ref{def:coordValues}), and create $\tilde{v}_\bc$ by replacing each 0- and 1-entry in $v_\bc$ by $\zy$ and $\oy$, respectively. Let $t_\x$ and $t_\y$ be the types of $\zx,\ox$ and $\zy,\oy$, respectively. Set $\delta_0 := \dist(\zx,\zy) = \dist(\zx,\oy)=\dist(\ox,\zy)$ and $\delta_1 := \dist(\ox,\oy)$. We define the \emph{tuple gadgets}
\begin{align*}
\TG_\x(\bb) & := \GA^{(d-1) A^{k_1}  + 1, t_\y}(\tilde{u}_\bb),\\
\TG_\y(\bc) & := \GA^{d A^{k_1}, t_\x}(\tilde{v}_\bc).
\end{align*}
Let $t_\x', t_\y'$  denote the types of $\TG_\x(\bb)$, $\TG_\y(\bc)$, and let $C$ be the number obtained from \defref{algngadget} when creating $\TG_\x(\bb)$, $\TG_\y(\bc)$. Note that $t_\x', t_\y'$, and $C$ do not depend on the choice of $\bb \in \bB, \bc \in \bC$.

\begin{claim}\label{cla:TGs}
Let $\bb \in \bB, \bc \in \bC$ and set $n := d A^{k_1}$ and $m := (d-1) A^{k_1} + 1$.
If there exists $\ba \in \bA$ such that $(\ba,\bb,\bc)$ are orthogonal, then $\delta(\TG_\x(\bb),\TG_\y(\bc)) = C + m\cdot  \delta_0$. Otherwise $\delta(\TG_\x(\bb),\TG_\y(\bc)) \ge C + (m-1) \delta_0 +  \delta_1$.
\end{claim}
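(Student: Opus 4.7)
The plan is to apply the extended alignment gadget inequality (Definition~\ref{def:algngadget}), which says
\[ C + \min_{\Lambda \in \algn_{n,m}} \cost(\Lambda) \;\le\; \delta(\TG_\x(\bb), \TG_\y(\bc)) \;\le\; C + \min_{\Lambda \in \strc_{n,m}} \cost(\Lambda), \]
and then compute both sides by exploiting the very special structure of $u_\bb$ and $v_\bc$. Since $v_\bc$ is nonzero only at the $d$ positions $(r-1)A^{k_1}+1$ for $r\in[d]$ (with value $\bc[r]$), a structured alignment at offset $\Delta\in\{0,\ldots,A^{k_1}-1\}$ aligns each such position with position $\Delta+(r-1)A^{k_1}+1$ of $\tilde{u}_\bb$, which contains $\ox$ or $\zx$ according to whether $\ba(\Delta+1)[r]\cdot\bb[r]$ is $1$ or $0$. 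The other $m-d$ aligned pairs each pair a $\zy$ with some entry of $\tilde{u}_\bb$ and contribute $\delta_0$. Combining and using the defining relation $\delta(\ox,\oy)=\delta_1$ versus $\delta_0$ in the other three cases yields
\[ \cost(\Lambda_\Delta) \;=\; m\delta_0 + (\delta_1-\delta_0)\cdot\big|\{r\in[d]:\ba(\Delta+1)[r]\cdot\bb[r]\cdot\bc[r]=1\}\big|, \]
which equals $m\delta_0$ exactly when $(\ba(\Delta+1),\bb,\bc)$ is orthogonal and is at least $(m-1)\delta_0+\delta_1$ otherwise.

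Since $\Delta$ ranges over all of $\{0,\ldots,A^{k_1}-1\}$ and $\bA=\{\ba(1),\ldots,\ba(A^{k_1})\}$, this immediately gives $\min_{\Lambda\in\strc_{n,m}}\cost(\Lambda)=m\delta_0$ when some $\ba\in\bA$ makes $(\ba,\bb,\bc)$ orthogonal, and $\min_{\Lambda\in\strc_{n,m}}\cost(\Lambda)\ge(m-1)\delta_0+\delta_1$ otherwise. Plugged into the right-hand side of the gadget inequality, this already delivers the upper bound $\delta(\TG_\x(\bb),\TG_\y(\bc))\le C+m\delta_0$ in the orthogonal case.

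The remaining task, and the main technical point, is matching lower bounds via $\min_{\Lambda\in\algn_{n,m}}\cost(\Lambda)$, i.e.\ ruling out that an \emph{unstructured} alignment can beat the thresholds above. Note $\gamma=\max\{\delta_0,\delta_1\}=\delta_1$. For the orthogonal case it suffices to show $\cost(\Lambda)\ge m\delta_0$ for every $\Lambda$: each aligned pair contributes at least $\delta_0$, and if $|\Lambda|<m$ the penalty $(m-|\Lambda|)\delta_1$ dominates and pushes the total to $\ge m\delta_0$. For the non-orthogonal case I would strengthen this to $\cost(\Lambda)\ge(m-1)\delta_0+\delta_1$ by splitting into three subcases: (i) if $|\Lambda|<m$ the penalty alone contributes at least one extra $\delta_1$ above the $(m-1)\delta_0$ baseline; (ii) if $|\Lambda|=m$ and $\Lambda$ is structured, the per-offset cost formula from the previous paragraph applies, and non-orthogonality of every $(\ba(\ell),\bb,\bc)$ forces the bound; (iii) if $|\Lambda|=m$ and $\Lambda$ is unstructured, the penalty $(i_m-i_1-m+1)\delta_1\ge\delta_1$ again does the job. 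The key conceptual ingredient, and the reason we introduced the \emph{extended} alignment gadget rather than the plain BK15 variant, is precisely that the cost function penalizes \emph{any} deviation from a structured alignment by $\gamma=\delta_1$; without this stronger punishment of unstructured alignments with $|\Lambda|=m$, subcase~(iii) could only yield $m\delta_0$, and the clean $(m-1)\delta_0+\delta_1$ threshold — and hence a tight LCS lower bound — would be out of reach.
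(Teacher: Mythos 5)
Your proof is correct and follows essentially the same route as the paper's: compute the cost of each structured alignment in the equivalent closed form $m\delta_0 + (\delta_1-\delta_0)\cdot t$ where $t$ counts the triply-$1$ coordinates, then establish the matching lower bound on $\min_{\Lambda}\cost(\Lambda)$ by splitting into exactly the same three subcases (some $j$ unaligned, all $j$ aligned and structured, all $j$ aligned but a gap among the $i$'s). One small inaccuracy: you assert $\gamma=\delta_1$ unconditionally, but $\gamma=\max_{i,j}\delta(X_i,Y_j)$ is over the specific gadget entries, and in the orthogonal case it can happen that $\tilde u_\bb$ contains no $\ox$ so that $\gamma=\delta_0$; this does no harm since the orthogonal lower bound only needs $\gamma\ge\delta_0$, and one can check that in the non-orthogonal case $\tilde u_\bb$ must contain an $\ox$ (and $\tilde v_\bc$ always contains $\oy$ by the $\sA_1$ augmentation), so $\gamma=\delta_1$ precisely where you need it.
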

\begin{proof}
If there is an $\ba \in \bA$ for which $(\ba,\bb,\bc)$ are orthogonal, let $\Delta$ be such that $\ba=\ba(\Delta+1)$, where $\ba(\ell)$ is the $\ell$-th tuple in the lexicographic ordering of $\bA$. The structured alignment $\Lambda = \{(\Delta+1,1),\dots,(\Delta+m,m)\}$ satisfies 
\[ \cost(\Lambda) = \left( \sum_{\ell=1}^d \delta_{\ba(\Delta+1)[\ell] \cdot \bb[\ell] \cdot \bc[\ell]} \right) + (A^{k_1} -1) (d-1) \delta_0 = m \cdot \delta_0. \]
Furthermore, for any $\Lambda \in \algn_{n,m}$, we have $\cost(\Lambda) \ge m\cdot  \delta_0$, since $\cost(\Lambda)$ contains at least $m$ summands of value at least $\min\{\gamma, \min_{i,j}\dist(X_i,Y_j)\} = \min_{i,j} \dist(X_i,Y_j) \ge \delta_0$. Thus $\dist(\TG_\x(\bb),\TG_\y(\bc)) = C + m\cdot  \delta_0$ by \defref{algngadget}.

Otherwise, if no such $\ba$ exists, let $\Lambda \in \algn_{n,m}$ be arbitrary. If $\Lambda= \{(\Delta+1,1),\dots,(\Delta+m,m)\}$ is a structured alignment, then
\[ \cost(\Lambda) = \left( \sum_{\ell=1}^d \delta_{\ba(\Delta+1)[\ell] \cdot \bb[\ell] \cdot \bc[\ell]} \right) + (A^{k_1} -1) (d-1) \delta_0 \ge (m-1) \cdot \delta_0 + \delta_1,\]
since there exists some $\ell\in [d]$ with $\ba(\Delta+1)[\ell] = \bb[\ell] = \bc[\ell] = 1$ which contributes a value of $\dist_1$.

If $\Lambda= \{(i_1,j_1),\dots,(i_{|\Lambda|},j_{|\Lambda|})\}$ is unstructured, then either $|\Lambda| < m$, in which case we have 
\[ \cost(\Lambda) \ge |\Lambda| \cdot \delta_0 + (m-|\Lambda|) \gamma \ge (m-1) \delta_0 + \delta_1,\]
or $|\Lambda|=m$ and $i_{m} -i_1 > m-1$, and thus
\[ \cost(\Lambda) \ge m \delta_0 + (i_m - i_1 - (m-1)) \gamma \ge (m-1) \delta_0 + \delta_1. \]
Thus by \defref{algngadget}, $\dist(\TG_\x(\bb),\TG_\y(\bc)) \ge C + (m-1) \delta_0 +  \delta_1$.
\end{proof}

\paragraph{Normalization.}
As usual in these kinds of reductions, we need a normalization step. We define a normalization sequence as
\[ \TG_\norm = \GA^{(d-1) A^{k_1} + 1, t_\y}( \underbrace{\zx, \dots\ldots, \zx}_{(d-1) A^{k_1} \text{ times}}, \underbrace{\ox, \dots,\ox}_{A^{k_1} \text{ times}} ). \]

\begin{claim}\label{cla:TGnorm}
For any $\bc\in \bC$, we have $\dist(\TG_\norm, \TG_\y(\bc)) = C + (m-1)\delta_0 + \delta_1$.
\end{claim}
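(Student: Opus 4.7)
The strategy is to invoke the extended alignment gadget sandwich~\eqref{eq:algngadget} with $n = dA^{k_1}$ and $m = (d-1)A^{k_1}+1$, and to show that both the structured minimum $\min_{\Lambda \in \strc_{n,m}} \cost(\Lambda)$ and the unrestricted minimum $\min_{\Lambda \in \algn_{n,m}} \cost(\Lambda)$ are equal to $(m-1)\delta_0 + \delta_1$. Once this is established, the sandwich collapses to equality and forces $\dist(\TG_\norm, \TG_\y(\bc)) = C + (m-1)\delta_0 + \delta_1$ as claimed. Recall that the $\x$-inputs are $(d-1)A^{k_1}$ copies of $\zx$ followed by $A^{k_1}$ copies of $\ox$, while the $\y$-inputs are the entries of $\tilde v_\bc$, which equal $\oy$ precisely at the special indices $j=(s-1)A^{k_1}+1$ with $\bc[s]=1$ and $\zy$ otherwise.

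For the structured minimum, I would fix an arbitrary offset $\Delta \in \{0,\dots,A^{k_1}-1\}$ and observe that, for a special index $j=(s-1)A^{k_1}+1$, the aligned $\x$-position $\Delta+j$ lies in the $\ox$-region $\{(d-1)A^{k_1}+1,\dots,dA^{k_1}\}$ if and only if $s=d$. Since $\bc \in \bC = \sA_1^{(k_2)}$ is obtained by augmenting every vector with a coordinate identically equal to $1$, we have $\bc[d]=1$, so $Y_{(d-1)A^{k_1}+1} = \oy$, and the single pair $(\ox,\oy)$ contributes exactly $\delta_1$. Every other aligned pair contains at least one of $\zx,\zy$ and hence contributes $\delta_0$ by \defref{coordValues}. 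So every structured alignment has cost exactly $(m-1)\delta_0 + \delta_1$, independently of $\Delta$.

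For the unrestricted minimum I would split on $|\Lambda|$, using that $\gamma = \delta_1$ (because some $Y_j$ is $\oy$ and some $X_i$ is $\ox$). If $|\Lambda| < m$, then $\cost(\Lambda) \ge |\Lambda|\delta_0 + (m-|\Lambda|)\delta_1 \ge (m-1)\delta_0 + \delta_1$; if $|\Lambda|=m$ but $\Lambda$ is unstructured, then $i_m - i_1 - m + 1 \ge 1$ forces $\cost(\Lambda) \ge m\delta_0 + \delta_1 \ge (m-1)\delta_0 + \delta_1$. Thus the unrestricted minimum matches the structured one. The only subtle point is that the expensive slot $s=d$ can never be avoided by any choice of $\Delta$ and that $\gamma = \delta_1$; both rely on the $\sA_1$-augmentation forcing $\bc[d]=1$. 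With this in hand, the two minima coincide and \eqref{eq:algngadget} gives Claim~\ref{cla:TGnorm}.
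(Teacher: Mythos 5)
Your proof is correct and follows essentially the same route as the paper: compute the structured cost exactly by observing that the unique pair contributing $\delta_1$ is $(\Delta+m,m)$ (forced by the $\sA_1$-augmentation giving $\bc[d]=1$), bound the unstructured cost from below by $(m-1)\delta_0+\gamma$ with $\gamma=\delta_1$, and close via the sandwich inequality~\eqref{eq:algngadget}. The only cosmetic difference is that you enumerate the special $\y$-indices and check which ones land in the $\ox$-region, whereas the paper starts from the $\ox$-block and asks which $\y$-entry could meet it; both yield the same unique pair.
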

\begin{proof}
Let $n = d A^{k_1}$ and $m = (d-1) A^{k_1} + 1$.
Let $\Lambda = \{(\Delta+1, 1), \dots, (\Delta+m,m)\}\in \strc_{n,m}$ be a structured alignment.
Then by construction of $\TG_\norm$ and $\TG_\y(\bc)$, the only pair corresponding to $\ox$ and possibly $\oy$ entries is the pair $(\Delta+m, m)$, since only the last $A^{k_1}$ entries of $\TG_\norm$ are $\ox$, and the only possible $\oy$-entry of $\TG_\y(\bc)$ that could be aligned with one of them is its final entry. Now we use that we constructed the vectors $\bC$ as $\sA_1^{(k_2)}$, i.e., we augmented all vectors by a $d$-th coordinate 1, which implies that the $m$-th entry of $\TG_\y(\bc)$ is indeed $\oy$. Hence, the pair $(\Delta+m,m)$ contributes a distance of $\delta_1$ while all others contribute $\delta_0$. This yields $\cost(\Lambda) = (m-1) \dist_0 + \dist_1$.

Let $\Lambda \in \algn_{n,m} \setminus \strc_{n,m}$ be an unstructured alignment. Then its cost is at least $\cost(\Lambda) \ge (m-1) \delta_0 + \gamma \ge (m-1) \delta_0 + \delta_1$, since it contains at least $m-1$ summands of value $\min\{\gamma, \min_{i,j} \dist(X_i,Y_j)\} \ge \delta_0$ and at least one punishment term $\gamma \ge \delta_1$ for a deviation from a structured assignment. Thus by \defref{algngadget}, we have $\dist(\TG_\norm, \TG_\y(\bc)) = C + (m-1)\delta_0 + \delta_1$.
\end{proof}

We now define for any $\bb \in \bB$, $\bc \in \bC$ the \emph{normalized tuple gadgets}
\begin{align*}
\NTG_\x(\bb) & := \GA^{1, t'_\y}(\TG_\x(\bb),\TG_\norm),\\
\NTG_\y(\bc) & := \GA^{2, t'_\x}(\TG_\y(\bc)).\\
\end{align*}

We let $t_\x'', t_\y''$ denote the resulting types of $\NTG_\x(\bb),\NTG_\y(\bc)$, and $C'$ be the number obtained from \defref{algngadget} when creating $\NTG_\x(\bb),\NTG_\y(\bc)$. Note that $t_\x'', t_\y''$, and $C'$ do not depend on the choice of $\bb \in \bB, \bc \in \bC$. This definitions satisfies the following properties. 

\begin{claim}\label{cla:NTGs}
Let $\bb \in \bB, \bc \in \bC$. If there exists $\ba \in \bA$ such that $(\ba,\bb,\bc)$ are orthogonal, then $\delta(\NTG_\x(\bb),\NTG_\y(\bc)) = \delta_\orth$, otherwise we have $\delta(\NTG_\x(\bb),\NTG_\y(\bc)) = \delta_\non$, where
\begin{align*}
\delta_\orth & := C' + C + ((d-1) A^k + 1) \cdot  \delta_0 ,\\
\delta_\non  & := C' + C + (d-1) A^k \cdot  \delta_0 + \delta_1.
\end{align*}
\end{claim}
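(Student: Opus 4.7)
The plan is to unfold the compressible extended alignment gadget for $\NTG_\x(\bb)=\GA^{1,t'_\y}(\TG_\x(\bb),\TG_\norm)$ and $\NTG_\y(\bc)=\GA^{2,t'_\x}(\TG_\y(\bc))$ (so in the gadget we have $n=2$, $m=1$), enumerate the finitely many alignments in $\algn_{2,1}$, use \claref{TGs} and \claref{TGnorm} to evaluate their costs, and finally invoke the sandwich inequality of \defref{algngadget}.

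First I would identify $\strc_{2,1}=\{\Lambda_1,\Lambda_2\}$, where $\Lambda_1=\{(1,1)\}$ aligns $\TG_\x(\bb)$ with $\TG_\y(\bc)$ and $\Lambda_2=\{(2,1)\}$ aligns $\TG_\norm$ with $\TG_\y(\bc)$. Since $|\Lambda_i|=m=1$, the padding term $(i_m-i_1-m+1)\gamma$ vanishes, so $\cost(\Lambda_1)=\dist(\TG_\x(\bb),\TG_\y(\bc))$ and $\cost(\Lambda_2)=\dist(\TG_\norm,\TG_\y(\bc))$. The only unstructured alignment in $\algn_{2,1}$ is $\emptyset$, contributing cost $(m-0)\gamma=\gamma=\max\{\cost(\Lambda_1),\cost(\Lambda_2)\}$ by definition of $\gamma$.

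Next I would plug in the values obtained from the preceding claims, with $m=(d-1)A^{k_1}+1$. In the orthogonal case, \claref{TGs} gives $\cost(\Lambda_1)=C+m\,\delta_0$ while \claref{TGnorm} gives $\cost(\Lambda_2)=C+(m-1)\delta_0+\delta_1$; since $\delta_0<\delta_1$ by \defref{coordValues}, we get $\min_{\Lambda\in\strc_{2,1}}\cost(\Lambda)=C+((d-1)A^{k_1}+1)\delta_0$. In the non-orthogonal case, \claref{TGs} yields $\cost(\Lambda_1)\ge C+(m-1)\delta_0+\delta_1=\cost(\Lambda_2)$, so $\min_{\Lambda\in\strc_{2,1}}\cost(\Lambda)=C+(d-1)A^{k_1}\delta_0+\delta_1$.

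Finally, I would observe that in both cases $\gamma\ge \cost(\Lambda_2)\ge \min_{\Lambda\in\strc_{2,1}}\cost(\Lambda)$, so the sole unstructured alignment $\emptyset$ does not improve the optimum and $\min_{\Lambda\in\algn_{2,1}}\cost(\Lambda)=\min_{\Lambda\in\strc_{2,1}}\cost(\Lambda)$. The sandwich of \defref{algngadget} then forces $\dist(\NTG_\x(\bb),\NTG_\y(\bc))-C'$ to equal this common value, yielding exactly $\delta_\orth$ in the orthogonal case and $\delta_\non$ otherwise. The one step that requires genuine verification (everything else is bookkeeping) is the last squeeze: making sure the extended notion of $\cost$ from \defref{algngadget} does not allow an unstructured alignment to undercut the structured optimum — here this is immediate because there is a single such alignment with cost $\gamma$ that is already at least as large as the larger of the two structured costs.
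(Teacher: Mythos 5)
Your proof is correct and follows exactly the same route as the paper's: enumerate $\algn_{2,1}$, identify the two structured alignments $\{(1,1)\}$ and $\{(2,1)\}$, evaluate their costs via \claref{TGs} and \claref{TGnorm}, observe that the lone unstructured alignment $\emptyset$ has cost $\gamma$ dominating both, and squeeze with \defref{algngadget}. One small note: your derivation correctly produces $((d-1)A^{k_1}+1)\,\delta_0$ and $(d-1)A^{k_1}\,\delta_0+\delta_1$; the exponent $A^k$ appearing in the claim's statement of $\delta_\orth,\delta_\non$ is a typo for $A^{k_1}$, consistent with the $m=(d-1)A^{k_1}+1$ used in \claref{TGs} and \claref{TGnorm}.
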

\begin{proof}
We check all possible alignments $\Lambda\in \algn_{2,1}$: If $\Lambda = \{(1,1)\}$, then $\cost(\Lambda) = \dist(\TG_\x(\bb),\TG_\y(\bc))$. If $\Lambda = \{(2,1)\}$, we have $\cost(\Lambda) = \dist(\TG_\norm, \TG_\y(\bc))$. For the only unstructured alignment $\Lambda = \emptyset$, we have $\cost(\Lambda) = \gamma \ge \max \{\dist(\TG_\x(\bb),\TG_\y(\bc)), \dist(\TG_\norm, \TG_\y(\bc))\}$. Thus by \defref{algngadget}, we have $\dist(\NTG_\x(\bb),\NTG_\y(\bc)) = C' +  \min\{\dist(\TG_\x(\bb),\TG_\y(\bc)), \dist(\TG_\x(\bb),\TG_\norm)\}$. The claim now follows from Claims~\ref{cla:TGs} and~\ref{cla:TGnorm}.
\end{proof}

\paragraph{Final construction.}
To obtain our final instance, we enumerate all $\bb(1),\ldots,\bb(A^{k_2}) \in \bB$ and $\bc(1),\ldots,\bc(A^{k_2}) \in \bC$ in an arbitrary fashion.
We finally combine their corresponding normalized tuple gadgets by defining
\begin{align*}
X & := \GA^{A^{k_2},t_\y''}(\NTG_\x(\bb(1)), \dots \NTG_\x(\bb(A^{k_2})), \NTG_\y(\bb(1)), \dots \NTG_\y(\bb(A^{k_2}))),\\
Y & := \GA^{2A^{k_2},t_\x''}(\NTG_\y(\bc(1)), \dots \NTG_\y(\bc(A^{k_2}))).
\end{align*}
Let $C''$ be the number obtained from \defref{algngadget} when creating $X$, $Y$. 

\begin{claim}\label{cla:XYcorrect}
We have $\dist(X,Y) \le C'' + (A^{k_2}-1) \delta_\non + \delta_\orth$ if and only if there are $\ba \in \bA, \bb \in \bB, \bc \in \bC$ such that $(\ba,\bb,\bc)$ is orthogonal.
\end{claim}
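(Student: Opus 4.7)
The plan is to apply \defref{algngadget} to the outer gadget used to construct $X$ and $Y$. With outer dimensions $n_{\out} = 2A^{k_2}$ and $m_{\out} = A^{k_2}$, this yields the sandwich
\[
\min_{\Lambda \in \algn_{n_{\out}, m_{\out}}} \cost(\Lambda) \le \dist(X, Y) - C'' \le \min_{\Lambda \in \strc_{n_{\out}, m_{\out}}} \cost(\Lambda),
\]
where each elementary distance in $\cost(\Lambda)$ is of the form $\dist(\NTG_\x(\bb(i)), \NTG_\y(\bc(j)))$ or a distance involving a padding element from the second half of $X$. By Claim~\ref{cla:NTGs}, such a distance equals $\delta_\orth$ if and only if some $\ba \in \bA$ makes $(\ba, \bb(i), \bc(j))$ orthogonal, and equals $\delta_\non > \delta_\orth$ otherwise.

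For the ``if'' direction, suppose $(\ba^*, \bb(i^*), \bc(j^*))$ is orthogonal. I would exhibit a structured alignment $\Lambda^* = \{(\Delta^* + j, j) : j \in [A^{k_2}]\}$ whose aligned pair at coordinate $j^*$ involves $\NTG_\x(\bb(i^*))$; the point of placing the second-half copies of the $\bb$-gadgets into $X$ is precisely to ensure that for every $(i^*, j^*) \in [A^{k_2}]^2$ some offset $\Delta^* \in \{0, \dots, A^{k_2}\}$ achieves this (a standard cyclic-shift coverage argument). Then $\cost(\Lambda^*)$ contains exactly one $\delta_\orth$-term for $(\bb(i^*), \bc(j^*))$ and $A^{k_2} - 1$ terms each bounded by $\delta_\non$, so $\cost(\Lambda^*) \le (A^{k_2} - 1)\delta_\non + \delta_\orth$; the upper bound of \defref{algngadget} then yields $\dist(X, Y) \le C'' + (A^{k_2} - 1)\delta_\non + \delta_\orth$.

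For the ``only if'' direction I would argue the contrapositive. If no orthogonal triple exists, Claim~\ref{cla:NTGs} gives $\dist(\NTG_\x(\bb(i)), \NTG_\y(\bc(j))) = \delta_\non$ for all $i,j$, and by design each padding-vs-$\NTG_\y$ distance is also at least $\delta_\non$ (hence so is $\gamma$). Fix any $\Lambda \in \algn_{n_{\out}, m_{\out}}$. A brief case analysis shows: if $\Lambda$ is structured then $\cost(\Lambda) = A^{k_2} \delta_\non$; if $|\Lambda| < A^{k_2}$ then $\cost(\Lambda) \ge |\Lambda| \delta_\non + (A^{k_2} - |\Lambda|) \gamma \ge A^{k_2} \delta_\non$; and if $|\Lambda| = A^{k_2}$ but $\Lambda$ is not consecutive then $\cost(\Lambda) \ge A^{k_2} \delta_\non + \gamma > A^{k_2} \delta_\non$. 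In every case $\cost(\Lambda) \ge A^{k_2} \delta_\non > (A^{k_2} - 1) \delta_\non + \delta_\orth$, so the lower bound of \defref{algngadget} implies $\dist(X, Y) > C'' + (A^{k_2} - 1) \delta_\non + \delta_\orth$, contradicting the hypothesis.

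The only non-routine step is verifying the coverage property used in the ``if'' direction, namely that every pair $(\bb(i), \bc(j)) \in \bB \times \bC$ really occurs in the aligned set of some structured alignment. This depends on the precise layout of the second half of $X$ (rather than the abstract interface of \defref{algngadget}) and is where the cyclic-repetition structure of the padding earns its keep. Apart from this bookkeeping, the rest of the proof is a direct exploitation of the inequalities $\delta_\orth < \delta_\non \le \gamma$ supplied by Claim~\ref{cla:NTGs} together with the penalty terms in the extended alignment gadget's cost function.
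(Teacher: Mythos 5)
Your proposal is correct and follows essentially the same route as the paper: you apply the extended-alignment-gadget sandwich, exhibit a structured alignment with offset $\Delta = i-j$ (or the cyclically shifted $\Delta = A^{k_2}+i-j$) for the upper bound, and argue the lower bound via the observation that every alignment's cost decomposes into $A^{k_2}$ terms each at least $\delta_\non$. The paper's second half of $X$ is indeed a verbatim copy of the first half of $\NTG_\x$ gadgets (there is a typo in the displayed definition of $X$), which is exactly what makes your cyclic-shift coverage argument go through.
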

\begin{proof}
Assume that there exists an orthogonal set of vectors and let $\ba \in \bA$, $\bb(i) \in \bB$, $\bc(j) \in \bC$ be the vectors representing them. Let $n= 2A^{k_2}$ and $m= A^{k_2}$. If $i\ge j$, we consider the structured alignment $\Lambda = \{ (i-j + 1, 1), \dots, (i-j + m, m)\}$. Then $\Lambda$ aligns $\NTG_\x(\bb(i)), \NTG_\y(\bc(j))$, yielding cost $\delta_\orth$ by \claref{NTGs}. Since $\dist(\NTG_\x(\bb),\NTG_\y(\bc))\le \delta_\non$ for any $\bb, \bc$, we conclude that $\cost(\Lambda) \le (m-1) \delta_\non + \delta_\orth$. Similarly, if $i < j$, we define the structured alignment $\Lambda = \{(n+i-j + 1, 1), \dots, (n+i-j + m, m)\}$. Then, again, $\Lambda$ aligns $\NTG_\x(\bb(i)),\NTG_\y(\bc(j)))$. As before, we obtain $\cost(\Lambda) \le (m-1) \delta_\non + \delta_\orth$. Thus, in both cases \defref{algngadget} yields $\dist(X,Y) \le C'' + (m-1) \delta_\non + \delta_\orth$.

To prove the claim, it remains to prove that $\dist(X,Y) \ge C'' + m \delta_\non$ if all choices of vectors are non-orthogonal. Note that for any $\Lambda \in \algn_{n,m}$, $\cost(\Lambda)$ consists of $m$ summands with a value of at least $\min_{i,j} \dist(\NTG_\x(\bb(i)),\NTG_\y(\bc(j))) \ge \delta_\non$. This concludes the claim by \defref{algngadget}.
\end{proof}

It remains to prove bounds on the lengths and compressed sizes of the constructed strings.

\begin{claim}\label{cla:distsizebounds}
The strings $X,Y$ have length $\Oh(d A^{k_1+k_2})$. We can, in linear time in the output size, compute SLPs $\cX, \cY$ for $X,Y$ of size $\Oh(d A^{k_2+1})$. 
\end{claim}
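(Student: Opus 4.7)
The plan is a bottom-up analysis bounding decompressed length and SLP size at each of three levels: the tuple gadgets $\TG_\x(\bb), \TG_\y(\bc), \TG_\norm$, the normalized tuple gadgets $\NTG_\x(\bb), \NTG_\y(\bc)$, and finally $X, Y$. The length bounds follow by routine induction from the identity $|X|,|Y|=\Theta((n+m)(\ell_\x+\ell_\y))$ in Definition~\ref{def:algngadget}, applied three times starting from $|\zx|,|\ox|,|\zy|,|\oy|=O(1)$: this gives $|\TG_\x(\bb)|, |\TG_\y(\bc)|, |\TG_\norm| = \Theta(dA^{k_1})$, then $|\NTG_\x(\bb)|,|\NTG_\y(\bc)| = \Theta(dA^{k_1})$, and finally $|X|, |Y| = \Theta(A^{k_2} \cdot dA^{k_1}) = \Theta(dA^{k_1+k_2})$.

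The SLP-size analysis must be more careful, since a naive coordinate-by-coordinate expansion of $\tilde u_\bb$ inside $\TG_\x(\bb)$ would cost $\Theta(dA^{k_1})$ already at the innermost level, ultimately blowing $n$ up to $\Theta(dA^{k_1+k_2+1})$. The key observation, and the main obstacle to overcome, is that $\tilde u_\bb$ is precisely the tuplified representation $\tuplify(\sA, k_1, \bb, \zx, \ox)$ from Lemma~\ref{lem:tuplify}: it enumerates the $k_1$-tuples of $\sA$ in lex order across $d$ coordinates, with each entry equal to $\zx$ or $\ox$ depending on the product of $\bb[\ell]$ with the $\ell$-th coordinates of the $k_1$ chosen vectors. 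By compressibility of the alignment gadget (Definition~\ref{def:algngcompr}) we can write $\TG_\x(\bb) = X_L \bigl( \bigconcat_{i=1}^{dA^{k_1}} \pad_\x((\tilde u_\bb)_i) \bigr) X_R$; since $\pad_\x$ depends only on its input, the inner concatenation is itself a tuplified representation with base strings $\pad_\x(\zx), \pad_\x(\ox)$, each of which admits an SLP of size $O(\log(\ell_\x+\ell_\y))$ by property~2 of compressibility applied to the trivial SLPs for $\zx, \ox$. Lemma~\ref{lem:tuplify} then directly produces an SLP of size $O(dA + \log(\ell_\x+\ell_\y))$ for this concatenation, and combining with SLPs of size $O(\log(dA^{k_1}))$ for $X_L, X_R$ from property~1 yields a total SLP size of $O(dA)$ for $\TG_\x(\bb)$.

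For $\TG_\y(\bc)$ and $\TG_\norm$, the input sequences have obvious repetition structure (blocks of $A^{k_1}-1$ identical coordinate values in the former, two long runs in the latter), so combining Observation~\ref{obs:repetition} with compressibility yields SLPs of size $O(dA)$ as well. At the middle level, each $\NTG$ is an alignment gadget on a constant number of tuple gadgets of SLP size $O(dA)$ and length $O(dA^{k_1})$; property~2 of compressibility, with overhead $O(\log(dA^{k_1})) = O(dA)$, then gives SLPs of size $O(dA)$ for both $\NTG_\x(\bb)$ and $\NTG_\y(\bc)$. Finally, $X$ and $Y$ are alignment gadgets over $\Theta(A^{k_2})$ normalized tuple gadgets: applying property~2 to each input at cost $O(dA)$, concatenating (sharing non-terminals between the two copies of each $\NTG_\x(\bb(i))$ used in $X$), and appending the $O(\log(dA^{k_1+k_2}))$-size SLPs for the outer $X_L, X_R$ yields total SLP sizes of $\Theta(A^{k_2}) \cdot O(dA) = O(dA^{k_2+1})$ for both $X$ and $Y$, as claimed. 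Linear-time construction at each level is immediate from the linear-time guarantees in Lemma~\ref{lem:tuplify} and Definition~\ref{def:algngcompr}.
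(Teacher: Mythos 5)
Your proof is correct and takes essentially the same approach as the paper's: a bottom-up three-level analysis using the $\Theta((n+m)(\ell_\x+\ell_\y))$ length identity from Definition~\ref{def:algngadget}, the compressibility decomposition of Definition~\ref{def:algngcompr} at each level, and the key observation that the middle of $\TG_\x(\bb)$ is a tuplified representation so that Lemma~\ref{lem:tuplify} gives the crucial $O(dA)$ bound there. (Two cosmetic nits: the tuplification is over the augmented set $\sA_0$ rather than $\sA$, and the paper gets a tighter $O(d+\log A)$ for $\cTG_\y(\bc)$ while your looser $O(dA)$ is still sufficient for the claimed total.)
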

\begin{proof}
We will frequently make use of the compressibility of the alignment gagdet (\defref{algngcompr}).
We start by constructing an SLP $\cTG_\x(\bb)$ for $\TG_\x(\bb)$ for any $\bb\in [A]^{k_2}$.
Note that we can split $\TG_\x(\bb)$ into $\TG_\x(\bb)_L \concat (\bigconcat_{i=1}^{d A^{k_1}} \pad_\x(\tilde{u}_i)) \concat \TG_\x(\bb)_R$.
We can apply \lemref{tuplify} by observing that
\begin{equation}
\label{eq:middlepart}
\bigconcat_{i=1}^{d A^{k_1}} \pad_\x(\tilde{u}_i) = \tuplify(\sA_0, k_1, \bb, \pad_\x(\zx), \pad_\x(\ox)), 
\end{equation}
where $S^\x(0) := \pad_\x(\zx)$ and $S^\x(1):= \pad_\x(\ox)$. 
Since $\zx,\ox$ are of constant size, we can compute SLPs $\cS(0), \cS(1)$ for $S^\x(0), S^\x(1)$ of size $\Oh(1)$ by the compressibility assumption. Thus we can compute an SLP for \eqref{eq:middlepart} of size $\Oh(d A)$. Since the left and right bounding string of $\TG_\x(\bb)$ have SLPs of size $\Oh(\log A)$, we obtain an SLP $\cTG_\x(\bb)$ for $\TG_\x(\bb)$ of size $\Oh(d A)$, while $|\TG_\x(\bb)| = \Theta(d A^{k_1})$.

To compute an SLP $\cTG_\y(\bc)$ for $\TG_\y(\bc)$ for any $\bc\in \bC$, we note that 
\[\TG_\y(\bc) = \TG_\y(\bc)_L \concat \left(\bigconcat_{i=1}^{(d-1) A^{k_1} + 1} \pad_\y(\tilde{v}_i)\right) \concat \TG_\y(\bc)_R,\]
 where 
\begin{equation*}
\bigconcat_{i=1}^{(d-1) A^{k_1} + 1} \pad_\y(\tilde{v}_i) = \left( \bigconcat_{\ell=1}^{d-1} S^\y({\bc[\ell]}) \concat S^\y(0)^{A^{k_1}-1} \right) \concat S^\y(\bc[d]),
\end{equation*}
where $S^\y(0) := \pad_\y(\zy)$ and $S^\y(1):= \pad_\y(\oy)$. This immediately admits an SLP of size $\Oh(d + \log A)$ by \obsref{repetition}. Again, using SLPs of size $\Oh(\log A)$ for $\TG_\y(\bc)_L ,\TG_\y(\bc)_R$, we obtain an SLP $\cTG_\y(\bc)$ for $\TG_\y(\bc)$ of size $\Oh(d+\log A)$, while $|\TG_\y(\bc)| = \Oh(d A^{k_1})$.

In the construction of $\NTG_\x(\bb),\NTG_\y(\bc)$ we use constant $n,m$. Together with the compressibility of the alignment gadget, we obtain SLPs $\cNTG_\x(\bb),\cNTG_\y(\bc)$ for $\NTG_\x(\bb),\NTG_\y(\bc)$ of size $\Oh(|\cTG_\x(\bb)|),\Oh(|\cTG_\y(\bc)|)$. Furthermore, $|\NTG_\x(\bb)|=\Theta(|\TG_\x(\bb)|), |\NTG_\y(\bc)|= \Theta(|\TG_\y(\bc)|)$. 

Finally, to obtain SLPs $\cX,\cY$ for $X,Y$, we use a final application of the compressibility of the alignment gadget. This yields $|\cX| = \Oh(\log A + \sum_{\bb \in \bB} |\cNTG_\x(\bb)|) = \Oh(d A^{k_2+1})$ and $|\cY| = \Oh(\log A + \sum_{\bc \in \bC} |\cNTG_\y(\bc)|) = \Oh(A^{k_2} (d + \log A))$. Note that $|X|,|Y| = \Theta(d A^{k_1+k_2})$. 
It is easy to verify that constructing $\cX,\cY$ takes time $\Oh(d A^{k_1+k_2})$.
\end{proof}

We are now ready to prove the theorem.

\begin{proof}[Proof of \thmref{simlb}]
Let $0 < \alpha_n < 1$ and set $\beta := \frac{\alpha_n}{1+\alpha_n}$.
Let $k \ge 2$ and let $\sA$ be a $k$-OV instance with $A$ vectors in dimension $d$. 
We split $k = k_1 + 2 k_2$ with $k_1,k_2 \ge 1$ and $k_2 \approx \beta k$ and $k_1 \approx (1-2\beta) k$. Note that $k_1,k_2$ are restricted to be integers, however, for any $\eps > 0$ and sufficiently large $k$ depending only on $\eps$ and $\alpha_n$ we can ensure $k_2+1 \le (1-\eps/8) \beta k$ and $k_1 \le (1+\eps/4) (1-2\beta) k$. Since $k = k_1 + 2k_2$, it follows that $k_1 + k_2 \ge (1-\beta) k$. 
Note that for the dimension $d$ we can assume $d \le A$, since otherwise an $O(A^{k-\eps} \textup{poly}(d))$ algorithm clearly exists. In particular, for sufficiently large $k$ we have $d \le A^{(\eps/8) \cdot \min\{\beta, 1-\beta \} k}$. 
By Claim~\ref{cla:distsizebounds}, the constructed strings $X,Y$ have length $N$ bounded from above by $\Theta(d A^{k_1+k_2}) = O(d A^{(1+\eps/4) (1-\beta) k}) = O(A^{(1+\eps/2)(1-\beta) k})$ and bounded from below by $\Theta(d A^{k_1+k_2}) = \Omega(A^{(1-\beta) k})$.
The constructed SLPs have size $n = O(d A^{k_2+1}) = O(d A^{(1-\eps/8) \beta k}) = O(A^{\beta k})$. Since $\beta/(1-\beta) = \alpha_n$, it follows that $n = O(N^{\alpha_n})$, and by partially decompressing the SLPs we can ensure the desired $n = \Theta(N^{\alpha_n})$, while keeping $n = O(A^{(1+\eps/2) \beta k})$. 
By \claref{XYcorrect}, computing $\dist(X,Y)$ allows us to decide feasibility of the given $k$-\OV instance.  Hence, any $O((nN)^{1-\eps})$ time algorithm for $\dist(.,.)$ in the setting $n = \Theta(N^{\alpha_n})$ would yield an algorithm for $k$-OV in time $O((A^{(1+\eps/2) k})^{1-\eps}) = O(A^{(1-\eps/2)k})$, contradicting the $k$-OV conjecture.
\end{proof}

\subsubsection{Extended Alignment Gadget for LCS}
\label{sec:lcslb}

In this section, we fix the distance measure to be the LCS distance $\delta(X,Y) = |X|+|Y|-2\cdot L(X,Y)$, where $L(X,Y)$ denotes the length of an LCS $S$ of $X$ and $Y$. Note that $\dist(X,Y)$ counts the number of symbols to be deleted in $X$ to obtain $S$ plus the number of symbols to be deleted in $Y$ to obtain~$S$. We show that $\dist$ admits coordinate values and a compressible extended alignment gadget. Together with \thmref{simlb}, this will yields our conditional lower bound for LCS.

We make use of the same coordinate values as in~\cite{BK15}.

\begin{lem}[{\cite[Lemma V.2]{BK15}}]\label{lem:coordvalues}
LCS admits coordinate values by setting
\[ \ox := 11100,\; \zx := 10011,\; \oy := 00111,\; \zy := 11001. \]
These strings have type $(5, \{0,1\})$.
\end{lem}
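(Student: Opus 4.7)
The plan is to verify the definition of coordinate values (Definition~\ref{def:coordValues}) by direct calculation on the four candidate strings, which are all of length~$5$ over alphabet $\{0,1\}$, so the type condition $\type(\zx)=\type(\ox) = \type(\zy)=\type(\oy) = (5,\{0,1\})$ is immediate. Since $\delta(X,Y) = |X|+|Y| - 2L(X,Y)$ and all four strings have length~$5$, for any pair we have $\delta(X,Y) = 10 - 2L(X,Y)$, so it suffices to compute $L(\cdot,\cdot)$ on the four pairs $(\ox,\oy)$, $(\zx,\oy)$, $(\zx,\zy)$, $(\ox,\zy)$, and check that the first is strictly smaller than the other three, which must all be equal.

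Concretely, I would first compute $L(\ox,\oy) = L(11100, 00111)$. The intuition behind the choice of these strings is that $\ox$ has all its $1$s before its $0$s while $\oy$ has all its $1$s after its $0$s, so any common subsequence can either reuse the three $1$s or the two $0$s but not mix them in a useful way. A short case analysis shows $L(\ox,\oy) = 3$, so $\delta(\ox,\oy)=4$. Next I would compute the three ``equal'' pairs. For $(\zx,\oy) = (10011,00111)$ the subsequence $0011$ is common (positions $2,3,4,5$ in $\zx$ and $1,2,3,4$ in $\oy$), giving $L \ge 4$; equality $L=5$ is ruled out since neither string is a subsequence of the other (e.g.\ $\zx$ would require a $1$ to appear in $\oy$ before its $0$s). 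Analogously, for $(\zx,\zy)=(10011,11001)$ the common subsequence $1001$ gives $L\ge 4$, and $L=5$ is ruled out by a short prefix/suffix argument. For $(\ox,\zy) = (11100,11001)$ the common subsequence $1100$ gives $L\ge 4$, and again $L=5$ fails. Hence all three pairs have $L=4$ and therefore $\delta = 2$.

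Combining these computations gives $\delta(\ox,\oy) = 4 > 2 = \delta(\zx,\oy) = \delta(\zx,\zy) = \delta(\ox,\zy)$, which is the required inequality chain, and the type condition has already been noted, completing the proof. There is no real obstacle here: the argument is a finite verification, and the only step that benefits from care is ruling out $L=5$ for the three ``equal'' pairs, which always follows from a structural observation that neither string is a subsequence of the other (they disagree on the position of at least one symbol relative to the others). The essential design insight, which is not something to prove but worth recording, is that $\ox,\oy$ are built as ``aligned blocks'' that conflict, whereas $\zx,\zy$ interleave $0$s and $1$s so as to always admit a length-$4$ common subsequence with either of $\ox,\oy$.
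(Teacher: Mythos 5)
The paper does not give a proof of this lemma at all --- it cites \cite[Lemma V.2]{BK15} and moves on --- so there is nothing to compare against in this document. Your direct verification is correct: the type condition is immediate since all four strings have length $5$ over $\{0,1\}$, and the LCS lengths $L(11100,00111)=3$ and $L(10011,00111)=L(10011,11001)=L(11100,11001)=4$ give $\delta(\ox,\oy)=4>2=\delta(\zx,\oy)=\delta(\zx,\zy)=\delta(\ox,\zy)$ as required; the ruling-out of $L=5$ is even simpler than you state, since equal-length strings have LCS of length $5$ only if they are identical.
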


It remains to implement a compressible extended alignment gadget. Let us first disregard compressibility.

\begin{lem}\label{lem:lcsalgngadget}
The following construction implements an extended alignment gadget: Let $X_1,\dots,X_n$ of length $\ell_\x$ and $Y_1,\dots,Y_m$ of length $\ell_\y$ be strings over $\Sigma$. We introduce new symbols $\sigma,\rho,\mu \notin \Sigma$,  define  $\kappa_1:= 4(\ell_\x+\ell_\y)$ and $\kappa_2 := 2\kappa_1 + \ell_\x$, and set
\begin{alignat*}{4}
\guard(S)  &:= \sigma^{\kappa_1} & S & \rho^{\kappa_1},
\end{alignat*}
The alignment gadget is now defined as
\begin{alignat*}{10}
X & = &&\guard(X_1) \, &&Z_1^\x \, &&\guard(X_2) && \dots &&Z^\x_{n-1} &&\guard(X_n), \\
Y & = L^\y &&\guard(Y_1) \, &&Z_1^\y \, &&\guard(Y_2) && \dots &&Z^\y_{m-1} &&\guard(Y_m) R^\y, 
\end{alignat*}
where $Z_i^\x = Z_j^\y = \mu^{\kappa_2}$ for $i\in[n-1],j\in[m-1]$ and $L^\y = R^\y = \mu^{n\kappa_2}$. This satisfies property~\eqref{eq:algngadget} of \defref{algngadget} with $C := 2n \kappa_2$.
\end{lem}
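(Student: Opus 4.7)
The plan is to verify the two inequalities of property~\eqref{eq:algngadget} separately, using $C = 2n\kappa_2$.

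For the upper bound $\delta(X,Y) - C \le \min_{\Lambda \in \strc_{n,m}} \cost(\Lambda)$, I will construct, for each structured alignment $\Lambda = \{(\Delta+1,1),\ldots,(\Delta+m,m)\}$, a common subsequence of $X$ and $Y$ of length $2m\kappa_1 + (n-1)\kappa_2 + \sum_{k=1}^m L(X_{\Delta+k},Y_k)$. The idea is to pair $\guard(X_{\Delta+k})$ with $\guard(Y_k)$, matching all $2\kappa_1$ guard symbols together with an optimal $\Sigma$-LCS of the inner $X_{\Delta+k},Y_k$; simultaneously, one absorbs $Z_1^\x,\ldots,Z_\Delta^\x$ into the $\mu$-padding $L^\y$, matches each $Z_{\Delta+k}^\x$ with $Z_k^\y$, and absorbs $Z_{\Delta+m}^\x,\ldots,Z_{n-1}^\x$ into $R^\y$; both absorptions are possible because $|L^\y|=|R^\y|=n\kappa_2$. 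Using $\kappa_2 = 2\kappa_1+\ell_\x$, I verify $|X|+|Y| = (4n+2m-2)\kappa_2 + m(\ell_\y-\ell_\x)$ and reduce $\delta(X,Y) \le |X|+|Y|-2L(X,Y)$ to the clean form $2n\kappa_2 + \sum_k\delta(X_{\Delta+k},Y_k) = C + \cost(\Lambda)$.

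For the lower bound $\min_{\Lambda \in \algn_{n,m}} \cost(\Lambda) \le \delta(X,Y) - C$, I will extract a witnessing alignment from an optimal LCS matching $M$ of $X,Y$. Since $\sigma$ and $\rho$ occur only inside guards while $\mu$ occurs only outside, every match in $M$ is either a ``guard match'' inside some $\guard(X_i)\times\guard(Y_j)$ pair, contributing to a count $G_{ij}$, or a ``$\mu$-match''. The monotonicity of $M$ forces the set $P := \{(i,j): G_{ij}>0\}$ to be a chain in the product order (ties on $i$ or $j$ allowed). I take $\Lambda = \{(i_1,j_1),\ldots,(i_t,j_t)\}$ to be a maximal strictly monotone subchain of $P$, extracted greedily. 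The key claim is
\begin{equation*}
L(X,Y) \;\le\; 2t\kappa_1 + \sum_{k=1}^t L(X_{i_k},Y_{j_k}) + (n-1)\kappa_2 - \kappa_2 \cdot \mathrm{skew}(\Lambda),
\end{equation*}
where $\mathrm{skew}(\Lambda) := \sum_{r=1}^{t-1}\max(0,(i_{r+1}-i_r)-(j_{r+1}-j_r))$ records the unavoidable $\mu$-deficit: the $(i_{r+1}-i_r)\kappa_2$ copies of $\mu$ in $Z_{i_r}^\x,\ldots,Z_{i_{r+1}-1}^\x$ can only match $\mu$'s in $Z_{j_r}^\y,\ldots,Z_{j_{r+1}-1}^\y$, giving at most $\min(i_{r+1}-i_r,j_{r+1}-j_r)\kappa_2$ matches, while $L^\y,R^\y$ absorb the $\mu$-blocks outside $[i_1,i_t]$ without loss. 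Combining this with $\cost(\Lambda) = \sum_k \delta(X_{i_k},Y_{j_k}) + \textup{penalty}(\Lambda)$ and substituting $\kappa_2 = 2\kappa_1+\ell_\x$, the desired inequality reduces to
\begin{equation*}
(m-t)(4\kappa_1 + \ell_\x + \ell_\y) + 2\kappa_2 \cdot \mathrm{skew}(\Lambda) \;\ge\; \textup{penalty}(\Lambda),
\end{equation*}
which holds using $\gamma \le \ell_\x+\ell_\y$: if $t<m$ then $\textup{penalty}(\Lambda)=(m-t)\gamma$ is dwarfed by the $4\kappa_1$ factor on the left, while if $t=m$ then necessarily $j_k=k$, so $\mathrm{skew}(\Lambda) = i_m-i_1-m+1 = \textup{penalty}(\Lambda)/\gamma$ and $2\kappa_2 \gg \gamma$ closes the gap.

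The main obstacle is establishing the displayed upper bound on $L(X,Y)$ when $P$ carries $i$- or $j$-ties that $\Lambda$ omits, i.e.\ when a single $X$-guard's positions are split across several $Y$-guards (or dually). Such ``off-$\Lambda$'' matches contribute to $L(X,Y)$ but are not counted in $\sum_k L(X_{i_k},Y_{j_k})$, so bounding each $G_{ij}$ by $2\kappa_1 + L(X_i,Y_j)$ individually would overcount. I will control them via the per-guard capacity constraints $\sum_{j:G_{ij}>0} G_{ij} \le \kappa_2$ and $\sum_{i:G_{ij}>0} G_{ij} \le 2\kappa_1+\ell_\y$, which cap the total matches involving any single guard regardless of how they are spread. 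The construction's choice $\kappa_1 = 4(\ell_\x+\ell_\y)$ is precisely calibrated so that the resulting slack per ``tie'' pair is only $O(\ell_\x+\ell_\y)$ and gets absorbed into the $4\kappa_1$ factor of the final inequality above.
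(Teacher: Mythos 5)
Your upper bound is sound and matches the paper's in substance (your explicit common-subsequence construction is the dual of the paper's ordered-partition computation via Claims~\ref{cla:lcsobs}--\ref{cla:prefix}; the length you derive, $2m\kappa_1+(n-1)\kappa_2+\sum_k L(X_{\Delta+k},Y_k)$, is exactly the LCS length implied by the paper's bound $\dist(X,Y)\le 2n\kappa_2+\sum_k\dist(X_{\Delta+k},Y_k)$). The lower bound, however, has a genuine gap, and the route you take differs materially from the paper's. The paper does not look at the set $P$ of guard-pair matches at all. It applies \claref{lcsobs} to partition $X$ according to $Y$'s block structure, and then defines $\Lambda$ by the ``more-than-half'' rule: align $(i,j)$ iff the partition piece $X(\guard(Y_j))$ contains more than half of $X_i$. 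Because the pieces are disjoint and each has bounded width, this rule \emph{by construction} assigns each $i$ to at most one $j$ and each $j$ to at most one $i$, so ties simply cannot arise; the work then goes into the per-piece case analysis (C1--C3, F1--F3) and the $\mu$-counting bound~\eqref{eq:mubound}.

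The gap in your approach is the centrally displayed inequality on $L(X,Y)$. As written it is \emph{false} for a badly chosen maximal strictly monotone subchain. Take $n=1$, $m=2$, $X_1=a^{\ell}$, $Y_1=b^{\ell}$, $Y_2=a^{\ell}$ with $\ell_\x=\ell_\y=\ell$. An optimal LCS can match $\sigma^{\kappa_1}$ of $\guard(X_1)$ into $\guard(Y_1)$ and $a^{\ell}\rho^{\kappa_1}$ into $\guard(Y_2)$, giving $G_{1,1}=\kappa_1$, $G_{1,2}=\kappa_1+\ell$, and $L(X,Y)=2\kappa_1+\ell$. Both $\{(1,1)\}$ and $\{(1,2)\}$ are maximal strictly monotone subchains; if the greedy picks $\{(1,1)\}$, your bound reads $L(X,Y)\le 2\kappa_1+L(X_1,Y_1)+0=2\kappa_1$, which is off by $\ell$. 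So the extraction must be chosen cleverly (e.g.\ maximizing $\sum_k L(X_{i_k},Y_{j_k})$ or $\sum_k G_{i_k,j_k}$ over strictly monotone subchains), and ``extracted greedily'' is not enough. More fundamentally, even after fixing the extraction, the displayed inequality needs a nontrivial argument: off-chain guard matches inflate $\sum_{(i,j)\in P}G_{ij}$ beyond $2t\kappa_1+\sum_k L(X_{i_k},Y_{j_k})$, and you must show that this inflation is always compensated either by additional unaccounted $\mu$-loss (which happens for $j$-ties, since they trap $Z^\x$-blocks inside a single $\guard(Y_j)$) or is small enough to be absorbed. The capacity constraints $\sum_j G_{ij}\le\kappa_2$ and $\sum_i G_{ij}\le 2\kappa_1+\ell_\y$ are the right raw material, but the balancing act between excess $G$ and lost $\mu$'s is exactly the hard part, and your proposal leaves it as a sketch (``gets absorbed into the $4\kappa_1$ factor''); note in particular that when $t=m$ and $\mathrm{skew}=0$ the factor $4(m-t)\kappa_1$ vanishes and there is no obvious slack to absorb into, so the argument must lean entirely on the $\mu$-loss compensation, which is not worked out. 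In short: the decomposition via $P$ and a strictly monotone subchain is a genuinely different (and plausible) route, but the tie-handling and the choice of $\Lambda$ are precisely where the paper's ``more-than-half'' trick does real work, and your version of that step is incomplete.
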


\begin{proof}
To analyze our alignment gadget construction (adapting the proof of the LCS gadget of the full version of~\cite{BK15}), we prepare some useful facts.

\begin{claim}[{\cite[Fact V.7]{BK15}}] \label{cla:lcsobs}
  Let $X$ and $Z_1,\ldots,Z_k$ be strings. Set $Z:= Z_1 \concat \dots \concat Z_k$. We have
  \[ \dist(X,Z) = \min_{X(Z_1),\ldots,X(Z_k)} \sum_{j=1}^k \dist(X(Z_j),Z_j),  \]
  where $X(Z_1),\ldots,X(Z_k)$ range over all \emph{ordered partitions} of $X$ into $k$ substrings, i.e., $X(Z_1) = x[i_0+1..i_1], X(Z_2) = x[i_1+1..i_2], \ldots, X(Z_k) = x[i_{k-1}+1..i_k]$ for any $0 = i_0 \le i_1 \le \ldots \le i_{k} = |X|$.
\end{claim}

  \begin{claim} \label{cla:lcsprop}
    Let $U,V$ be strings over $\Sigma$, $\alpha \in \Sigma$ and $k \in \mathbb{N}_0$. Then we have 
\begin{enumerate}[label=(\roman*)]
\item $\dist(U,V) \ge \left| |U| - |V| \right|$, \label{item:distdiff}
\item $\dist(\alpha^k U, \alpha^k V) = \dist(U,V)$, \label{item:distgreedy}
\item Let $W$ be a string not containing $\alpha$. Then  $\dist(W \alpha U,  \alpha^k V) \ge \min\{k, \dist(\alpha U,\alpha^k V)\}$. \label{item:distblock}
\end{enumerate}
We obtain symmetric statements by reversing all involved strings.
  \end{claim}
  \begin{proof}
    \ref{item:distdiff} Suppose $|U| \ge |V|$, then at least $|U|-|V|$ many symbols must be deleted in $U$. The claim follows by symmetry.

    \ref{item:distgreedy} It suffices to show the claim for $k=1$, then the general statement follows by induction. Consider a \LCS\ $S$ of $(\alpha U, \alpha V)$. At least one $\alpha$ is matched in $S$, as otherwise we can extend $S$ by matching both $\alpha$'s. If exactly one $\alpha$ is matched in $S$, then the other $\alpha$ is free, so we may instead match the two $\alpha$'s. Thus, without loss of generality a \LCS\ of $(\alpha U,\alpha V)$ matches the two $\alpha$'s. This yields $L(\alpha U, \alpha W) = 1 + L(U,V)$. Hence, $\dist(\alpha U, \alpha V) = |\alpha U| + |\alpha V| - 2L(\alpha U, \alpha V) = |U| + |V| - 2 L(U,V) = \dist(U,V)$.
 
    \ref{item:distblock} Fix an LCS $S$ of $W \alpha U $ and $\alpha^k V$. If $S$ starts with a symbol other than $\alpha$, then $S$ cannot use any symbol from the $\alpha^k$-prefix of $\alpha^k V$, i.e., the $\alpha^k$-prefix has to be deleted and thus $\dist(W \alpha U, \alpha^k V) \ge k$. Otherwise, if $S$ starts with an $\alpha$, then $S$ cannot us any symbol from $W$ (which is a string over $\Sigma\setminus\{\alpha\}$), i.e., $S$ is an LCS of $\alpha U$ and $\alpha^k W$. Thus $\dist(W \alpha U, \alpha^k V) = |W \alpha U| + |\alpha^k V| - 2 L(\alpha U, \alpha^k V) = |W| + \dist(\alpha U, \alpha^k V)$ and the claim follows.
  \end{proof}

  \begin{claim} \label{cla:prefix}
    Let $\ell \ge 0$. For any prefix $X'$ of $X$ we have $\dist(X',\mu^\ell) \ge \ell$. Moreover, if $X'$ is of the form $\guard(X_1) Z^\x_1 \ldots \guard(X_i) Z^\x_i$ for some $0 \le i < n$ and $\ell \ge i \cdot \kappa_2$, then $\dist(X',\mu^\ell) = \ell$. Symmetric statements hold for any suffix of $X$.
  \end{claim}
  \begin{proof}
    Note that for any $i \in [n]$ the string $\guard(X_i) Z^\x_{i}$ contains $|Z^\x_i| = \kappa_2$ many $\mu$'s and $|\guard(X_i)| = 2\kappa_1 + \ell_\x = \kappa_2$ many non-$\mu$'s. Furthermore, any prefix of $\guard(X_i) Z^\x_i$ contains at least as many non-$\mu$'s as $\mu$'s.
    Hence, the LCS of $X'$ and $\mu^\ell$ has a length of at most $|X'|/2$. This yields $\dist(X',\mu^\ell) = |X'| + |\mu^\ell| - 2L(X',\mu^\ell) \ge \ell$. If $X'$ is of the form $\guard(X_1) Z^\x_1 \ldots \guard(X_i) Z^\x_i$ and $\mu^\ell$ has at least $|X'|/2=i\kappa_2$ many $\mu$'s, we have equality.
  \end{proof}

We now prove that our construction yields an extended alignment gadget.
We start with the upper bound of property~\eqref{eq:algngadget}, i.e., $\dist(X,Y) \le 2n \kappa_2 + \min_{\Lambda \in \strc_{n,m}} \cost(\Lambda)$.

Let $\Lambda = \{(\Delta+1,1),\dots,(\Delta+m,m)\}$ be a structured alignment and consider an ordered partition of $X$ as in \claref{lcsobs} defined as follows:
\begin{align*}
    X(\guard(Y_j)) &:= \guard(X_{\Delta+j}) \qquad \text{for }j \in [m],  \\
    X(Z^\y_j) &:= Z^\x_{\Delta+j} \qquad \text{for }j \in [m-1],  \\
    X(L^\y) &:= \guard(X_1) Z^\x_1 \ldots \guard(X_{\Delta}) Z^\x_{\Delta},  \\
    X(R^\y) &:= Z^\x_{\Delta+m} \guard(X_{\Delta+m+1}) \ldots Z^\x_{n-1} \guard(X_n). 
\end{align*} 
\claref{lcsobs} thus yields
\[\dist(X,Y) \le \dist(X(L^\y),L^\y) + \dist(X(R^\y),R^\y) + \sum_{j=1}^m \dist(\guard(X_{\Delta+j}),\guard(Y_j)) + \sum_{j=1}^{m-1} \dist(Z^\x_{\Delta+j},Z^\y_j). \]

By \claref{prefix}, we obtain $\dist(X(L^\y),L^y) = n\kappa_2$ and symmetrically, $\dist(X(R^\y),R^\y) = n \kappa_2$. Trivially, $\dist(Z^\x_{\Delta+j},Z^\y_j) = 0$. Finally, by matching the padding around $X_i,Y_j$ in $\guard(X_i),\guard(Y_j)$, we obtain $\dist(\guard(X_{\Delta+j}),\guard(Y_j))= \dist(X_{\Delta+j},Y_j)$ by \claref{lcsprop}\ref{item:distgreedy}. Summing up all contributions, we obtain
\[ \dist(X,Y) \le 2n\kappa_2 + \sum_{(i,j) \in \Lambda} \dist(X_i,Y_j),\]
which holds for an arbitrary $\Lambda \in \strc_{n,m}$, thus concluding the upper bound.

It remains to prove the lower bound of property~\eqref{eq:algngadget}, i.e., $\dist(X,Y) \ge 2n\kappa_2 + \min_{\Lambda \in \algn_{n,m}} \cost(\Lambda)$.
  Set $M^\y = \guard(Y_1) Z^\y_1 \dots Z^\y_{m-1} \guard(Y_m)$. Using \claref{lcsobs}, we let $X(L^\y)$, $X(M^\y)$ and $X(R^\y)$ be an ordered partition of $X$ such that
\[  \dist(X,Y) = \dist(X(L^\y),L^\y) + \dist(X(M^\y),M^\y) + \dist(X(R^\y),R^\y).\] 
   Since $L^\y = \mu^{n\kappa_2}$ and $X(L^\y)$ is a prefix of $X$, by \claref{prefix} we have $\dist(X(L^\y),L^\y) \ge n\kappa_2$, and similarly we get $\dist(X(R^\y),R^\y) \ge n\kappa_2$.
  It remains to construct an alignment $\Lambda \in \algn_{n,m}$ satisfying 
  \begin{align} \label{eq:Alcs}
    \cost(\Lambda) \le \dist(X(M^\y),M^\y),
  \end{align}
  then together we have shown the desired inequality 
  $\dist(X,Y) \ge 2n\kappa_2 + \min_{\Lambda \in \algn_{n,m}} \cost(\Lambda)$.

As in  \claref{lcsobs}, we let $X(\guard(Y_j))$ for $j \in [m]$ and $X(Z^\y_j)$ for $j \in [m-1]$ be an ordered partition of $X(M^\y)$ such that
\[  \dist(X(M^\y),M^\y) = \sum_{j=1}^m \dist(X(\guard(Y_j)),\guard(Y_j)) + \sum_{j=1}^{m-1} \dist(X(Z^\y_j),Z^\y_j). \]

Let $\mu(U)$ be the number of $\mu$'s in a string $U$ and let $\dist_\delmu(U,V)$ denote the LCS distance of $U$ and $V$ after deleting all $\mu$'s in $U$ and $V$. Clearly, since $|\mu(U)-\mu(V)|$ $\mu$'s have to be deleted in any LCS, we have 
\begin{equation}
\dist(X(M^\y),M^\y) \ge \left(\sum_{j=1}^m \dist_\delmu(X(\guard(Y_j)),\guard(Y_j))\right)  + |\mu(U)-\mu(V)|. \label{eq:mubound}
\end{equation}

  Let us construct an alignment $\Lambda$ satisfying~\eqref{eq:Alcs}. For any $j \in [m]$, if $X(\guard(Y_j))$ contains more than half of some $X_{i'}$ (which is part of $\guard(X_{i'})$), then let $i$ be the leftmost such index and align $i$ and $j$. Note that the set $\Lambda$ of all these aligned pairs $(i,j)$ is a valid alignment in $\algn_{n,m}$, since no $X_i$ or $Y_j$ can be aligned more than once.
  
We prove the following claims:

\begin{claim}  
For any aligned pair $(i,j) \in \Lambda$, we have $\dist_\delmu(X(\guard(Y_j)),\guard(Y_j)) \ge \dist(X_i,Y_j)$.
\end{claim}
\begin{proof}
Let $U$ be $X(\guard(Y_j))$ with all $\mu$'s deleted (note that $\guard(Y_j)$ contains no $\mu$'s). We will prove $\dist(U,\guard(Y_j)) \ge \dist(X_i,Y_j)$. Recall that $X(\guard(Y_j))$ contains more than half of $X_i$, thus so does $U$. If $\left| |U| - |\guard(Y_j)| \right| \ge \ell_\x+\ell_\y$, then we have $\dist(U,\guard(Y_j)) \ge \ell_\x+\ell_\y \ge \dist(X_i,Y_j)$ by \claref{lcsprop}\ref{item:distdiff}. Since $|\guard(Y_j)| = 2\kappa_1 + \ell_y$, we may hence assume $2\kappa_1 - \ell_\x \le |U| \le 2\kappa_1 + \ell_\x  + 2\ell_\y$.

We distinguish three cases: Either $U$ contains $X_i$ fully (C1), or at least its right half but not fully (C2), or at least its left half but not fully (C3).

In case (C2), $U$ is of the form $X_i' \rho^{\kappa_1} \sigma^{a} X'_{i+1} \rho^{b}$ where $X'_i$ is a suffix of $X_i$, $a\le\kappa_1$, $X'_{i+1}$ is a prefix of $X_{i+1}$ and $b \le 2\ell_\y$. In this case, by \claref{lcsprop}\ref{item:distblock} with $\alpha=\sigma$ and $W=X_i' \rho^{\kappa_1}$, we have $\dist(U,\guard(Y_j)) \ge \min\{\kappa_1, \dist(\sigma^{a} X'_{i+1} \rho^b, \sigma^{\kappa_1} Y_j \rho^{\kappa_1})\}$. Note that since the second string contains $\kappa_1$ $\rho$'s and the first string contains less than $2\ell_\y$ $\rho$'s, we have $\dist(\sigma^{a} X'_{i+1} \rho^b, \sigma^{\kappa_1} Y_j \rho^{\kappa_1}) \ge \kappa_1 - 2\ell_\y$. Thus $\dist(U,\guard(Y_j)) \ge \kappa_1 - 2\ell_\y \ge \ell_\x+\ell_\y \ge \dist(X_i,Y_j)$.

The case (C3) is symmetric to (C2). 

Finally, in case (C1), $U$ takes one of three forms: either (F1) $\sigma^a X_i \rho^{\kappa_1} \sigma^{b} X'_{i+1} \rho^c$, where $a\ge 0$, $b \le \kappa_1$, $X'_{i+1}$ is a (possibly empty) prefix of $X_{i+1}$ and $c \le 2\ell_\y$, or the symmetric version (F2) $X_{i-1}' \rho^b \sigma^{\kappa_1} X_i \rho^a$ with $X_{i-1}'$ a suffix of $X_{i-1}$ and all other paremeters as before, or finally (F3) $\rho^a \sigma^{\kappa_1} X_i \rho^{\kappa_1} \sigma^b$ with $a,b\le 2\ell_\y$.

For form (F1), we compute
\begin{align*}
\dist(U,\guard(Y_j)) & = \dist(\sigma^a X_i \rho^{\kappa_1} \sigma^{b} X'_{i+1}, \sigma^{\kappa_1} Y_j \rho^{\kappa_1-c}) \\
& \ge  \min\{\kappa_1- c, \dist(\sigma^a X_i \rho^{\kappa_1}, \sigma^{\kappa_1} Y_j \rho^{\kappa_1})\},
\end{align*}
where we used \claref{lcsprop}\ref{item:distgreedy} in the first line and \claref{lcsprop}\ref{item:distblock} with $\alpha = \rho$ and $W= \sigma^{b} X'_{i+1}$ in the second line.  Note that by \claref{lcsprop}\ref{item:distgreedy} and by deleting all $\sigma$'s only occuring in one string, $\dist(\sigma^{a} X_i \rho^{\kappa_1}, \sigma^{\kappa_1} Y_j \rho^{\kappa_1}) = (\kappa_1 - a) + \dist(X_i, Y_j) \ge \dist(X_i,Y_j)$. Since $\kappa_1 - c \ge \ell_\x+\ell_\y \ge \dist(X_i,Y_j)$, the claim follows for (F1). Symmetrically, we can do the same for (F2).

For the final form (F3), we compute, using \claref{lcsprop}\ref{item:distblock} from the left with $\alpha = \sigma$ and $W = \rho^a$ and from the right with $\alpha = \rho$ and $W= \sigma^b$, $\dist(U,\guard(Y_j)) \ge \min\{\kappa_1, \dist(\sigma^{\kappa_1} X_i \rho^{\kappa_1}, \sigma^{\kappa_1} Y_j \rho^{\kappa_1})\} = \dist(X_i,Y_j)$, where the last equality follows from \claref{lcsprop}\ref{item:distgreedy} and $\kappa_1 \ge \dist(X_i,Y_j)$.
\end{proof}

  \begin{claim}\label{cla:aligned}
    If $j$ is unaligned in $\Lambda$, then $\dist_\delmu(X(\guard(Y_j)),\guard(Y_j)) \ge \ell_\x+\ell_\y$.
  \end{claim}
\begin{proof}
Let $U$ be $X(\guard(Y_j))$ with all $\mu$'s deleted (note that $\guard(Y_j)$ contains no $\mu$'s). We will prove $\dist(U,\guard(Y_j)) \ge \ell_\x+\ell_\y$.
Since $X(\guard(Y_j))$ contains less than half of any $\guard(X_i)$, $U$ is of the form $X_{i}' \rho^{a} \sigma^{b} X'_{i+1}$ for some $i$, a suffix $X'_i$ of $X_i$, some $a,b \le \kappa_1$ and a prefix $X'_{i+1}$ of $X_{i+1}$.

Furthermore using \claref{lcsprop}\ref{item:distblock} with $\alpha=\sigma$ and $W=X_{i}' \rho^a$, we obtain that $\dist(U,\guard(Y_j)) \ge \min\{\kappa_1, \dist(\sigma^b X'_{i+1}, \sigma^{\kappa_1} Y_j \rho^{\kappa_1})\}.$ Since $\sigma^{\kappa_1} Y_j \rho^{\kappa_1}$ contains $\kappa_1$ $\rho$'s, while $\sigma^b X'_{i+1}$ contains none, we conclude $\dist(U,\guard(Y_j)) \ge \kappa_1 \ge \ell_\x+\ell_\y$.
\end{proof}

Let us prove \eqref{eq:Alcs}. If $|\Lambda| < m$, that is, there is an unaligned $j$, combining the two previous claims with~\eqref{eq:mubound} results in
\begin{align*}
\dist(X(M^\y),M^\y) \ge \left(\sum_{(i,j)\in \Lambda} \dist(X_i,Y_j)\right) + (m-|\Lambda|) (\ell_\x+\ell_\y) + |\mu(X(M^\y))-\mu(M^\y)|\kappa_2\ge \cost(\Lambda),
\end{align*}
since $\ell_\x+\ell_\y \ge \max_{i,j} \dist(X_i,Y_j)$ and $|\mu(X(M^\y))-\mu(M^\y)| \ge 0$.

Otherwise, if $|\Lambda| = m$, we have $\Lambda= \{(i_1,1),\dots,(i_m, m)\}$ with $i_1 < i_2 < \dots < i_m$. Note that $X(M^\y)$ is a substring that contains at least half of all $X_{i_1},\dots,X_{i_m}$ by definition of the alignment~$\Lambda$. Thus, $\mu(X(M^\y)) \ge (i_m-i_1)\kappa_2$, since it contains all  $Z^\x_{i_1}, \dots, Z^\x_{i_m-1}$. Since $\mu(M^\y) = (m-1)\kappa_2$, we obtain by~\eqref{eq:mubound} and \claref{aligned},
\begin{align*}
\dist(X(M^\y),M^\y) \ge \left(\sum_{(i,j)\in \Lambda} \dist(X_i,Y_j)\right) + (i_m-i_1-m+1) \kappa_2 \ge \cost(\Lambda),
\end{align*}
where we used that $\kappa_2 \ge \ell_\x + \ell_\y \ge \max_{i,j} \dist(X_i,Y_j)$.

This concludes the proof of \lemref{lcsalgngadget}, showing that our construction yields an extended alignment gadget.
\end{proof}

It remains to argue that a slight adaption of this gadget is compressible.
\begin{lem}\label{lem:lcscomalgngadget}
Consider the setting of \lemref{lcsalgngadget}. Adapt the definition of the extended alignment gadget slightly by defining
\begin{alignat*}{20}
X' & =\,  && Z^\x_0 \;  && X \;  && Z^\x_{n}   && =  & & Z^\x_0\,  && \guard(X_1)\,  && Z^\x_1  && \dots  && Z^\x_{n-1}\,  && \guard(X_n)\,  && Z^\x_n, \\
Y' & =\,   && Z^\y_0 \;  && Y \;  && Z^\y_{m}   && =  L^\y\, & & Z^\y_0\,  && \guard(Y_1)\,  && Z^\y_1  && \dots  && Z^\y_{m-1}\,  && \guard(Y_m)\,  && Z^\x_m\, R^\y,
\end{alignat*}
where we define the additional blocks $Z^\x_i,Z^\y_j = \mu^{\kappa_2}$ with $i\in \{0,n\},j\in \{0,m\}$. This construction $(X',Y')$ yields a \emph{compressible} extended alignment gadget.
\end{lem}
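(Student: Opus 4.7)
The plan is to verify the two defining properties: (i) the extended alignment gadget inequality~\eqref{eq:algngadget} with a suitable constant $C'$, and (ii) the decomposition required by Definition~\ref{def:algngcompr}.

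\textbf{Compressibility.} I would set $\pad_\x(X_i) := \guard(X_i)\, Z^\x_i = \sigma^{\kappa_1} X_i \rho^{\kappa_1} \mu^{\kappa_2}$ and $\pad_\y(Y_j) := \guard(Y_j)\, Z^\y_j$, with bounding strings $X_L := Z^\x_0 = \mu^{\kappa_2}$, $X_R$ empty, $Y_L := L^\y Z^\y_0 = \mu^{(n+1)\kappa_2}$, and $Y_R := R^\y = \mu^{n\kappa_2}$. (In $Y'$, $L^\y$ and $Z^\y_0$ are both all-$\mu$, hence commute, so they can be merged into $Y_L$; similarly on the right.) The identities $X' = X_L \bigconcat_{i=1}^n \pad_\x(X_i)$ and $Y' = Y_L \big(\bigconcat_{j=1}^m \pad_\y(Y_j)\big)\, Y_R$ follow directly from the construction. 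The bounding strings are all single-symbol powers, so by Observation~\ref{obs:repetition} they have SLPs of size $O(\log n + \log(\ell_\x + \ell_\y))$. For $\pad_\x(X_i)$, given an SLP $\cX_i$, concatenating it with SLPs of size $O(\log(\ell_\x + \ell_\y))$ for $\sigma^{\kappa_1}$, $\rho^{\kappa_1}$, $\mu^{\kappa_2}$ yields an SLP of size $O(|\cX_i| + \log(\ell_\x + \ell_\y))$; the construction for $\pad_\y(Y_j)$ is analogous.

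\textbf{Alignment gadget property.} I would take $C' := C = 2n\kappa_2$ and adapt the proof of Lemma~\ref{lem:lcsalgngadget}. The only structural change is that the left and right paddings of $Y'$ now have length $(n+1)\kappa_2$ instead of $n\kappa_2$ (absorbing $Z^\y_0$ and $Z^\y_m$), and $X'$ has an extra $\mu^{\kappa_2}$-block prepended and appended. For the upper bound, given a structured alignment $\Lambda = \{(\Delta+1,1),\dots,(\Delta+m,m)\}$, I would use the natural extension of the partition from Lemma~\ref{lem:lcsalgngadget}, putting the new $Z^\x_0$ block into $X'(\tilde L^\y)$ and the new $Z^\x_n$ block into $X'(\tilde R^\y)$, where $\tilde L^\y := L^\y Z^\y_0$ and $\tilde R^\y := Z^\y_m R^\y$. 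A direct count of $\mu$'s versus non-$\mu$'s (the left piece has $(\Delta+1)\kappa_2$ many $\mu$'s and $\Delta \kappa_2$ non-$\mu$'s) yields $\dist(X'(\tilde L^\y), \tilde L^\y) = \dist(X'(\tilde R^\y), \tilde R^\y) = n\kappa_2$, summing to $2n\kappa_2 = C'$. The middle contribution is unchanged.

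\textbf{Lower bound.} The subtle point is that Claim~\ref{cla:prefix} does not apply verbatim to $X'$: the leading $Z^\x_0 = \mu^{\kappa_2}$ means prefixes of $X'$ can have up to $\kappa_2$ more $\mu$'s than non-$\mu$'s. I would prove an adapted version: \emph{for any prefix $P$ of $X'$ and any $\ell \ge 0$, $\dist(P, \mu^\ell) \ge \ell - \kappa_2$.} The proof is a minor modification of Claim~\ref{cla:prefix}, noting that within any prefix of $X'$ the quantity $\#\mu - \#\text{non-}\mu$ stays in the interval $[0,\kappa_2]$, because $X'$ alternates $\kappa_2$-sized all-$\mu$ blocks and $\kappa_2$-sized all-non-$\mu$ blocks, starting with an all-$\mu$ block. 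Combined with $|\tilde L^\y| = (n+1)\kappa_2$ and the symmetric suffix version, this yields $\dist(X'(\tilde L^\y), \tilde L^\y) + \dist(X'(\tilde R^\y), \tilde R^\y) \ge 2n\kappa_2$. The middle-part analysis (producing a witness alignment $\Lambda$ with $\dist(X'(M^\y), M^\y) \ge \cost(\Lambda)$) is identical to that of Lemma~\ref{lem:lcsalgngadget}, since its counting arguments only look at symbols inside $M^\y$ and the structure of the $Y_j$'s, which are unchanged.

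\textbf{Main obstacle.} The only point requiring care is the bookkeeping of the $\kappa_2$-slack in the adapted prefix claim, which must cancel exactly with the extra $\kappa_2$ in the length of $\tilde L^\y$ (and $\tilde R^\y$) to yield the matching constant $C' = 2n\kappa_2$ on both sides of \eqref{eq:algngadget}. Since $\kappa_2 = 2\kappa_1 + \ell_\x$ dwarfs any $\dist(X_i, Y_j) \le \ell_\x + \ell_\y$, this slack does not interfere with the alignment cost terms and the argument should go through cleanly.
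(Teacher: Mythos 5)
Your proof is correct, but it misses the short-cut the paper uses and as a result is far longer than necessary. Since $X' = \mu^{\kappa_2}\, X\, \mu^{\kappa_2}$ and $Y' = \mu^{\kappa_2}\, Y\, \mu^{\kappa_2}$ (the all-$\mu$ blocks $Z^\y_0, L^\y, Z^\y_m, R^\y$ commute freely), one can simply invoke Claim~\ref{cla:lcsprop}\ref{item:distgreedy} twice (once from the left, once in its reversed/suffix form) to get $\dist(X',Y') = \dist(X,Y)$ outright. Then $(X',Y')$ inherits the entire alignment-gadget inequality of Lemma~\ref{lem:lcsalgngadget} with the same constant $C$, and there is nothing to re-prove: no adapted prefix claim, no $\kappa_2$-slack bookkeeping, no re-examination of the middle-part analysis. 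Your detour — re-running the upper bound with the modified partition, formulating and proving the weakened prefix bound $\dist(P,\mu^\ell)\ge \ell-\kappa_2$, and checking that the middle-part claims transfer because they only see $\dist_\delmu$ — is sound, but it duplicates a delicate argument that the paper deliberately factored out; the greedy-matching observation is exactly what makes the slight modification of the gadget "free."

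For the compressibility part you use an asymmetric decomposition ($\pad_\x(X_i)=\guard(X_i)\,Z^\x_i$ with $X_L=\mu^{\kappa_2}$ and $X_R$ empty), whereas the paper splits the $\mu$-padding symmetrically ($\pad_\x(S)=\mu^{\kappa_2/2}\guard(S)\mu^{\kappa_2/2}$ with $X_L=X_R=\mu^{\kappa_2/2}$). Both satisfy Definition~\ref{def:algngcompr} with the required SLP sizes, so this is a harmless stylistic difference. One small remark: your $\ge 2n\kappa_2$ lower bound for the boundary pieces, together with $C'=2n\kappa_2$, reproduces the paper's constant, but you should note explicitly that this matches the $C$ from Lemma~\ref{lem:lcsalgngadget}, so the inherited upper bound (which you establish directly) and the re-derived lower bound fit together — once you see $\dist(X',Y')=\dist(X,Y)$, this consistency is automatic.
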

\begin{proof}
By \claref{lcsprop}\ref{item:distgreedy}, we see that $\dist(X',Y') = \dist(X,Y)$, and thus $X',Y'$ satisfies the extended alignment gadget condition~\eqref{eq:algngadget} of \defref{algngadget} by \lemref{lcsalgngadget}.

We define $\pad_\x(S) = \pad_\y(S) = \mu^{\kappa_2/2} \guard(S) \mu^{\kappa_2/2}$ and $X_L=X_R=\mu^{\kappa_2/2}$ and $Y_L = Y_R = \mu^{n\kappa_2 + \kappa_2/2}$. Then we have
$X' = X_L \left(\bigconcat_{i=1}^n \pad_\x(X_i) \right) X_R$ and $Y' = Y_L \left(\bigconcat_{j=1}^m \pad_\x(Y_j) \right) Y_R$.
By \obsref{repetition}, we can construct SLPs $\cX_L, \cX_R, \cY_L, \cY_R$ for $X_L,X_R,Y_L,Y_R$ of size $\Oh(\log n\kappa_2) = \Oh(\log n+\log(\ell_\x+\ell_\y))$.
Likewise, given SLPs $\cX_i,\cY_j$ for $X_i,Y_j$, we can construct SLPs for $\pad_\x(X_i)$, $\pad_\y(Y_j)$ of size $\Oh(|\cX_i| + \log(\ell_\x+\ell_\y))$, $\Oh(|\cY_j| + \log(\ell_\x+\ell_\y))$, respectively, as we can generate the paddings $\mu^{\kappa_2/2} \sigma^{\kappa_1}$ and $\rho^{\kappa_1} \mu^{\kappa_2/2}$ around $X_i$ and $Y_j$ using \obsref{repetition}. This concludes the proof.
\end{proof}

Our LCS lower bound now follows.

\begin{proof}[Proof of \thmref{lcslb}]
Since $\delta$ admits coordinate values and a compressible extended alignment gadget by \lemrefs{coordvalues}{lcscomalgngadget}, we obtain the claim by the general lower bound of  \thmref{simlb}, as computing the length of the LCS of $X$ and $Y$ is equivalent to computing $\dist(X,Y)$.
\end{proof}

\section{Tight Bounds Assuming (Combinatorial) \boldmath$k$-Clique} \label{sec:cliquelowerbounds}


In this section we prove matching conditional lower bounds based on the $k$-Clique conjecture or combinatorial $k$-Clique conjecture for the following problems: 
\begin{itemize}
\item NFA Acceptance, i.e., deciding whether a given non-deterministic finite automaton accepts a given string,
\item CFG Parsing, i.e., deciding whether a given context-free grammar generates a given string,
\item RNA Folding, i.e., computing the maximum number of non-crossing matching pairs of indices in a given string.
\end{itemize}
See the respective subsections for precise problem definitions. 

For NFA Acceptance, the compression used in our proof is extremely simple, in that we only rely on the fact that any repetition $T^\ell$ can be generated by an SLP of size $O(|T| + \log \ell)$ (Observation~\ref{obs:repetition}). For CFG Parsing and RNA Folding, our construction is much more subtle. For both problems, we use that the following string and some variants thereof are compressible:
\[ S_v := \bigconcat_{u_1,\ldots,u_k \in V} [\text{$v$ is adjacent to every $u_i$}] \]
That is, we enumerate all $k$-tuples $(u_1,\ldots,u_k) \in V^k$ and for each one check whether all $u_i$'s are adjacent to a fixed vertex $v$, writing 1 or 0 depending on this check. This string is generated by an SLP of size $O(V)$: Enumerate all $u_1 \in V$. If $u_1$ is not adjacent to $v$, then for all $u_2,\ldots,u_k$ the check results in 0, so we can simply write $0^{V^{k-1}}$, which is well compressible by Observation~\ref{obs:repetition}. Otherwise, if $u_1$ is adjacent to $v$, then we can recurse to $u_2$, and the following $V^{k-1}$ symbols do not depend on $u_1$ anymore. 
  More formally, denote by $\textup{Repeat}^{(d)}_0$ an SLP generating the string $0^{V^d}$. Then with the following SLP rules, for $1 \le d \le k$, we have $S_v = \eval(\textup{Adj}_v^{(k)})$.
  \begin{align*}
    \textup{Adj}^{(0)}_v &\to 1, \\
    \textup{Adj}^{(d)}_v &\to \bigconcat_{u \in V} \begin{cases} \textup{Adj}^{(d-1)}_v, \text{ if } \{u,v\} \in E \\ \textup{Repeat}^{(d-1)}_0, \text{ otherwise} \end{cases}
  \end{align*}
  Here we use the ``syntactic sugar'' of having more than two SLP symbols on the right hand side, but clearly this can be converted to a proper SLP of size $O(V)$.
  
  We stress that if in the string $S_v$ we would enumerate only the $k$-cliques instead of all $k$-tuples, then $S$ would no longer be easily compressible, since then even the length of a substring depends on the ``history'' of choosing $u_1,\ldots,u_{k-d}$, and thus the above recursive way of writing $S$ would fail. This demonstrates how subtle our argument is.

  \paragraph{Known Lower Bounds from Classic Complexity Theory} 
Plandowski and Rytter~\cite{plandowski1999complexity} showed that deciding whether a given compressed text can be generated by a given CFG is {\sf PSPACE}-complete. Later, Lohrey~\cite{lohrey2006word} showed that this holds even if we restrict the CFG to be fixed (i.e., not part of the input) and deterministic. We observe that the RNA Folding problem is at least as hard as Longest Common Subsequence (see, e.g.~\cite{ABV15b}). This implies that RNA Folding is {\sf PP}-hard (see the discussion at the beginning of Section~\ref{sec:subseqlower}). Finally, the NFA Acceptance problem can be solved in polynomial $O(nq^{\omega})$ time (see below) and previously no conditional lower bounds were known.

\subsection{NFA Acceptance} \label{sec:nfaaccept}

For general notation regarding finite automata, see Section~\ref{sec:dfaaccept}. 
Consider the compressed variant of the acceptance problem of nondeterministic finite automata (NFAs).
\begin{problem}[\NFAAccept]
We are given a text $T$ of length $N$ by a grammar-compressed representation $\cT$ of size $n$ as well as a NFA $F$ with $q$ states, i.e., for any two states $z,z'$ and any symbol $\sigma \in \Sigma$ we are given whether $\transition{z}{\sigma}{z'}$. 
Decide whether $T$ is accepted by $F$.
\end{problem}

Note that the input size is $\tilde O(n + q^2)$, since we again assume the alphabet size $|\Sigma|$ to be constant.

The naive solution is to decompress $\cT$ to obtain $T$ and run the standard acceptance algorithm for NFAs, which takes time $\Oh(|T| q^2)=\Oh(N q^2)$. Exploiting the compressed setting, one can obtain an $\Oh(n q^\omega)$-time algorithm~\cite{plandowski1999complexity}: 
Recall that $\cT$ is a set of rules of the form $S_i \to S_{\ell(i)} S_{r(i)}$ or $S_i \to \sigma_i$, with $\ell(i), r(i) < i$ and $\sigma_i \in \Sigma$, for $1 \le i \le n$.
We compute, for increasing $i$, the state transition matrix $A_i$, where $(A_i)_{z,z'} = 1$ if we can start in state $z$, read the string $\eval(S_i)$, and end in state $z'$, and $(A_i)_{z,z'} = 0$ otherwise. 
For $S_i \to S_{\ell(i)} S_{r(i)}$ we can compute $A_i$ as $A_{\ell(i)} \cdot A_{r(i)}$, where $\cdot$ is Boolean matrix multiplication. For $S_i \to \sigma_i$ we simply have $(A_i)_{z,z'} = 1$ if $\transition{z}{\sigma_i}{z'}$, and 0 otherwise. 
Hence, $A_i$ can be computed in time $\Oh(q^\omega)$ for every $i$. 
The text $T$ is then accepted by $F$ if there is an accepting state $z$ such that $(A_n)_{z_0,z} = 1$, where $z_0$ is the starting state of $F$. 

Note that this best-known upper bound $O(\min\{n q^\omega, N q^2\})$ contains ``mixed terms'' with some factors having exponent $\omega$ but not all. Since no standard conjecture contains such mixed terms, we cannot hope to prove a matching lower bound of $\min\{n q^\omega, N q^2\}^{1-o(1)}$. However, restricting our attention to combinatorial algorithms the best-known running time simplifies to $O(\min\{n q^3, N q^2\})$, and we can hope to prove a matching lower bound under some assumption on combinatorial algorithms, say for matrix multiplication or $k$-Clique. For matrix multiplication, the typical issue that we would need to considerably compress the input graph~\cite{ABV15b} is a barrier for a reduction. Hence, we can only hope to prove a matching lower bound for combinatorial algorithms assuming the $k$-Clique conjecture. We prove such a result in the following.

\begin{thm} \label{NFA_hardness}
  Assuming the combinatorial $k$-Clique conjecture, there is no combinatorial algorithm for \NFAAccept\  in time $O(\min\{n q^3, N q^2\}^{1-\eps})$ for any $\eps > 0$. This holds even restricted to instances with $n = \Theta(q^{\alpha_n})$ and $N = \Theta(q^{\alpha_N})$ for any $\alpha_N \ge \alpha_n > 0$.
\end{thm}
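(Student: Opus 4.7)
The plan is to reduce combinatorial $k$-Clique to \NFAAccept, in direct analogy to \thmref{dfaaccept} but exploiting the NFA's nondeterminism to handle the significantly more demanding $k$-Clique problem. Given a graph $G = (V, E)$ on $V$ vertices, I aim to construct an NFA $F$ with $q = \Theta(V)$ states together with a compressed text $\cT$ of size $n = O(V^{k-3})$ whose decompressed text has length $N = O(V^{k-2})$, such that $F$ accepts $\eval(\cT)$ iff $G$ contains a $k$-clique. Plugged into an assumed combinatorial algorithm of running time $O(\min\{n q^3, N q^2\}^{1-\eps})$, this gives an $O(V^{k(1-\eps)})$ combinatorial $k$-Clique algorithm, contradicting \conjref{combclique}.

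First I would set up the NFA: its states are $\{z_v : v \in V\}$ together with a constant number of auxiliary phase states, and its transitions encode the adjacency structure of $G$ (so that, for instance, a transition $z_u \to z_v$ on a dedicated ``edge'' symbol exists iff $\{u,v\} \in E$). Next, I would design the text as the concatenation of $V^{k-3}$ tuple gadgets of length $\Theta(V)$ each,
\[ T \;=\; \bigconcat_{(v_4, \ldots, v_k) \in V^{k-3}} B(v_4, \ldots, v_k), \]
over a constant-size alphabet. The role of $B(v_4, \ldots, v_k)$ is to enforce, together with the NFA's nondeterminism, that an accepting run exists for this block iff three further vertices $v_1, v_2, v_3 \in V$ can be chosen so that $\{v_1, \ldots, v_k\}$ is a $k$-clique in $G$. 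Adjacencies among $v_4, \ldots, v_k$ can be verified along the transitions between consecutive blocks (via the same edge-encoded transitions as above), while adjacencies involving the nondeterministically guessed $v_1, v_2, v_3$ are verified inside the blocks.

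The crucial point for matching the $k$-Clique bound is the compressibility of $T$: mimicking the SLP of $S_v$ sketched in the technical overview, we recursively construct an SLP by iterating over the outermost variable $v_4$; whenever $v_4$ fails a local adjacency check against an anchor vertex of the construction, the corresponding sub-text is a long run of a constant default block that compresses via \obsref{repetition}, and otherwise we recurse into $(v_5, \ldots, v_k)$. Unrolling this recursion over $k-3$ levels gives the target SLP of size $O(V^{k-3})$. The strengthening to arbitrary regimes $n = \Theta(q^{\alpha_n})$, $N = \Theta(q^{\alpha_N})$ with $\alpha_N \ge \alpha_n > 0$ is then obtained by choosing a suitable split $k = k_1 + k_2$ (trading vertices between the NFA's nondeterministic guesses and the text's tuple enumeration) and padding the NFA and text with dummy states and symbols, in the same spirit as the final paragraph of the proof of \thmref{dfaaccept}.

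The main obstacle will be the detailed design of $B(v_4, \ldots, v_k)$ together with the NFA's transitions. The NFA has only enough states to hold a single vertex at a time, yet it must verify all $\binom{k}{2}$ pairwise adjacencies among the chosen $v_1, \ldots, v_k$. I expect to arrange several verification passes within each block, where the NFA sequentially guesses $v_1, v_2, v_3$ and the block's symbols, together with the transition function, force each adjacency check $v_i \sim v_j$ to occur precisely when the relevant vertex is being held in the NFA's current state. Choreographing these passes so that a consistent clique witness is forced -- rather than an inconsistent mixing of ``old'' guesses across blocks -- is where most of the technical care will have to go.
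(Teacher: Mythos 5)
The central difficulty --- which you correctly flag but do not resolve --- is that an NFA with few states holds only one vertex at a time, yet must verify three pairwise adjacencies among its nondeterministic guesses $v_1,v_2,v_3$. Walking $z_{v_1} \to z_{v_2} \to z_{v_3}$ through adjacency-encoded transitions checks $v_1 \sim v_2$ and $v_2 \sim v_3$, but $v_1 \sim v_3$ is lost once the NFA has forgotten $v_1$. The paper's fix is to pass through \emph{four} vertex stages $CG^1 \to CG^2 \to CG^3 \to CG^4$ and then \emph{force} $CG^4$ to encode the same index as $CG^1$, via a counting device: chains of states $s_1,\ldots,s_{m(\kappa)}$ and $t_1,\ldots,t_{m(\kappa)}$ (indexed by $\kappa$-cliques, here $m(\kappa)=V$) together with a text block containing exactly $m(\kappa)+4$ copies of a separator symbol $\#$, so that an accepting run entering at $s_{i_1}$ and exiting at $t_{i_4}$ reads $m(\kappa)+4+i_1-i_4$ separators, forcing $i_1 = i_4$. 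This supplies the missing check $v_3 \sim v_1$ as $v_3 \sim v_4$. Without such a device nothing in your sketch prevents acceptance on a non-triangle triple, and ``choreographing the passes'' is precisely where the substance has to go. Relatedly, $q = \Theta(V)$ is too optimistic over a constant alphabet: letting a state $z_v$ verify that a $\Theta(\log V)$-bit text encoding names a neighbor of $v$ requires $\Theta(|N(v)|\log V)$ states per $v$, so $q = \tilde\Theta(V^2)$ on dense graphs; with the paper's $\kappa=1$ case the bound becomes $\min\{nq^3,\, Nq^2\} = \tilde\Theta(V^{k+2})$, which still contradicts the combinatorial $k$-Clique conjecture but only after the usual ``take $k$ large'' step, not at the clean $V^k$ level your sketch claims.

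The remaining parts of the sketch are also misaligned with what actually works. ``Verifying adjacencies among $v_4,\ldots,v_k$ between consecutive blocks'' cannot be made coherent: consecutive blocks in your concatenation correspond to lexicographically adjacent tuples, which bear no graph-structural relation. The paper sidesteps the issue by enumerating only $\kappa'$-\emph{cliques} $C'$ in the text (so $C'$'s internal adjacencies come for free), and --- crucially --- the text carries \emph{no adjacency information at all}; every adjacency check is baked into the NFA's transition function. That is exactly why compression is trivial: each block $\big(\#\,CG_T(C')\big)^{m(\kappa)+4}$ is a plain repetition, handled by \obsref{repetition}, giving $n = O(V^{\kappa'}\log V)$. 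The ``tuplified'' $S_v$-style SLP you cite from the technical overview is used for CFG Parsing and RNA Folding; it is not used for \NFAAccept, and if your blocks had to spell out per-vertex adjacency lists (as your sketch of inside-the-block verification would require), the compression argument would be far from the immediate application of the repetition bound you suggest.
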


\begin{proof}
Let $k \ge 3$ and let $G=(V,E)$ be a $k$-Clique instance. 
In the following, for any $\kappa, \kappa' \ge 1$ with $3 \kappa + \kappa' = k$ we will construct an equivalent \NFAAccept\ instance with $q = O(V^{\kappa+1} \log V)$, $N=|T| = O(V^{\kappa+\kappa'} \log V)$, and $n=|\cT| = O(V^{\kappa'} \log V)$. Note that a combinatorial $O(\min\{n q^3, N q^2\}^{1-\eps})$ time algorithm for \NFAAccept\ then yields a combinatorial algorithm for $k$-Clique in time $O(V^{(3\kappa + \kappa'+3)(1-\eps)} \log^4 V) = O(V^{(k+4)(1-\eps)})$, which for $k \ge 8/\eps$ is $O(V^{k(1+\eps/2)(1-\eps)}) = O(V^{k(1-\eps/2)})$, contradicting the combinatorial $k$-Clique conjecture. This yields the desired conditional lower bound.
At the end of this proof we will strengthen this statement to even hold for all restrictions $n = \Theta(q^{\alpha_n})$ and $N = \Theta(q^{\alpha_N})$. 

Our construction uses the following gadgets. 

\paragraph{Neighborhood Gadgets} 
Let $V = \{v_1,\ldots,v_n\}$ and denote by $NG_T(v_i)$ the binary encoding of the number $i$ using $\lceil \log V \rceil$ bits.
For any $v \in V$, let $NG_F(v)$ be the NFA that has start state $s$ and target state $t$, and $|N(v)|$ disjoint directed paths from $s$ to $t$ such that the path corresponding to neighbor $u \in N(v)$ spells $NG_T(u)$. Clearly, we can walk from $s$ to $t$ in $NG_F(v)$ parsing the string $NG_T(u)$ if and only if $u$ is a neighbor of $v$.

\paragraph{Clique Gadgets} 
For two neighborhood gadgets $NG_F(u), NG_F(v)$ as above, we define their \emph{concatenation} $NG_F(u) \concat NG_F(v)$ as the NFA where we identify the target state $t$ of $NG_F(u)$ with the starting state $s$ of $NG_F(v)$. The start state of the concatenation is the start state of $NG_F(u)$, and the target state is the target state of $NG_F(v)$.
We combine neighborhood gadgets to clique gadgets as follows. 
Let $\kappa,\kappa' \ge 1$. Let $C = \{u_1,\ldots,u_\kappa\}$ be a $\kappa$-clique and $C' = \{u'_1,\ldots,u'_{\kappa'}\}$ be a $\kappa'$-clique in $G$. 
We define the following concatenation of NFAs and strings, respectively:
\begin{align*}
 CG_F(C, \kappa, \kappa') &:= \bigconcat_{i=1}^{\kappa} \bigconcat_{j=1}^{\kappa'} NG_F(u_i), \\
 CG_T(C', \kappa, \kappa') &:= \bigconcat_{i=1}^\kappa \bigconcat_{j=1}^{\kappa'} NG_T( u'_j ).
\end{align*}
Observe that we can walk from start to target state of $CG_F(C, \kappa, \kappa')$ parsing $CG_T(C', \kappa, \kappa')$ if and only if $C \cup C'$ forms a $(\kappa+\kappa')$-clique, since the neighborhood gadgets check adjacency for each pair of nodes $u_i \in C$ and $u'_j \in C'$. 

\newcommand{\CC}{\mathcal{C}}

\paragraph{Complete Construction}
For $\kappa \ge 1$, let $\CC(\kappa)$ be the set of $\kappa$-cliques in $G$, and set $m(\kappa) := |\CC(\kappa)|$. Let $\kappa, \kappa' \ge 1$ such that $3\kappa + \kappa' = k$. 
The final text is defined as
\[ T := \$ \concat \bigconcat_{C' \in \CC(\kappa')} \Big(\big(\#\concat CG_T(C', \kappa, \kappa') \big)^{m(\kappa)+4} \concat \$ \Big),\]
using alphabet $\{0,1,\#,\$\}$. 

The NFA $F$ consists of four copies of the clique gadgets $CG_F(C,\kappa,\kappa')$ for any $\kappa$-clique $C$, denoted by $CG^r(i)$ for $1 \le i \le m(\kappa)$ and $1 \le r \le 4$. Additionally, we have states $s, s_1, s_2, \ldots, s_{m(\kappa)}$ and $t, t_1, t_2, \ldots, t_{m(\kappa)}$. These states are connected as follows.
In the starting state $s$ we can stay as long as we want, reading any symbol in the alphabet $\{0,1,\#,\$\}$. When reading $\$$ we can alternatively go to state $s_1$. In any state $s_i$ when reading $0$ or $1$ we stay in $s_i$, while when reading $\#$ we either go to to the starting state of $CG^1(i)$ or to $s_{i+1}$ (the latter is only possible if $i < m(\kappa)$). For any $1 \le r < 4$ and $i,j$, from the ending state of $CG^r(i)$ when reading $\#$ we can go to the starting state of $CG^{r+1}(j)$ if the corresponding cliques together form a $2\kappa$-clique. From the ending state of $CG^4(i)$ when reading $\#$ we go to $t_i$.  In any state $t_i$ when reading $0$ or $1$ we stay in $t_i$, while when reading $\#$ we go to $t_{i+1}$, or to $t$ if $i = m(\kappa)$. Finally, $t$ is the only accepting state and we stay in $t$ reading any symbol in the alphabet. This finishes the construction of the \NFAAccept\ instance. See Figure \ref{NFA_figure} for the illustration of the NFA. 

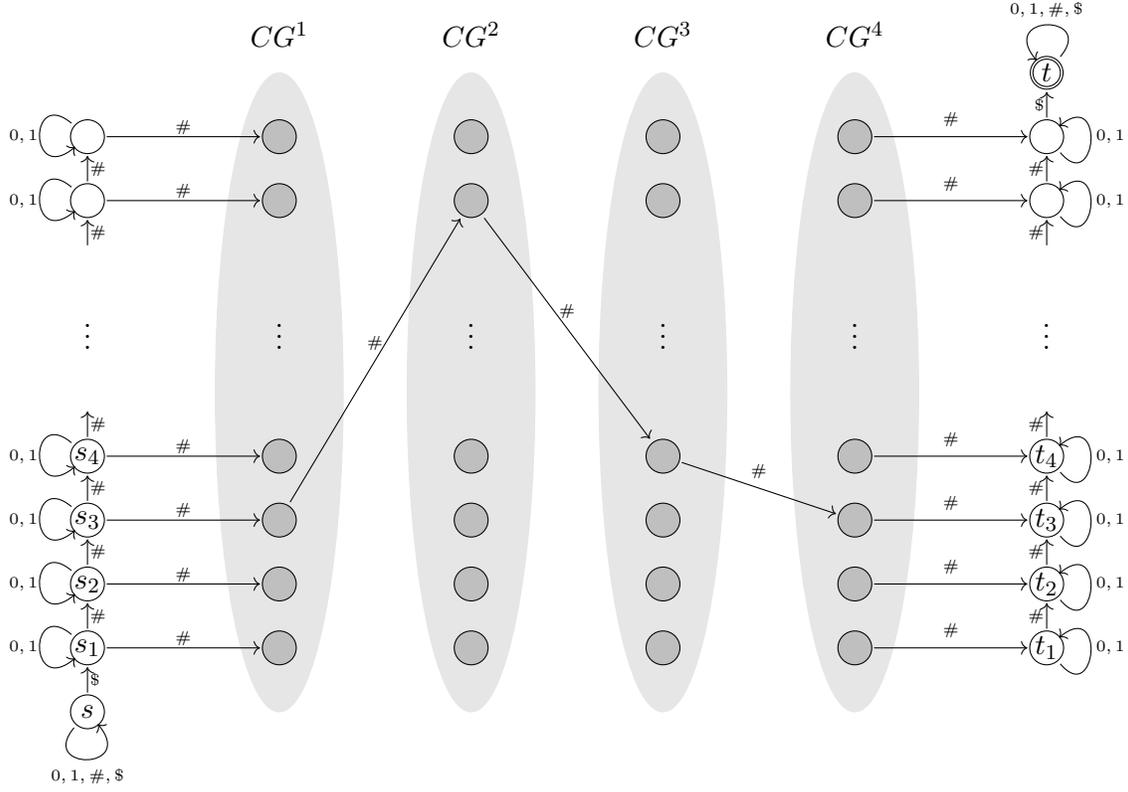
\begin{figure*}
	\centering
	
	\begin{tikzpicture}[scale=1.7]
		\node at (1.5,5.3) {$CG^1$};
		\draw [gray!20,fill=gray!20] (1.5,2.5) ellipse (.5cm and 2.5cm);
	
		\node at (3,5.3) {$CG^2$};
		\draw [gray!20,fill=gray!20] (3,2.5) ellipse (.5cm and 2.5cm);
		
		\node at (4.5,5.3) {$CG^3$};
		\draw [gray!20,fill=gray!20] (4.5,2.5) ellipse (.5cm and 2.5cm);
		
		\node at (6,5.3) {$CG^4$};
		\draw [gray!20,fill=gray!20] (6,2.5) ellipse (.5cm and 2.5cm);

		\node at (0,0) (s) {};
		\node[draw,circle,minimum size=0.45cm,inner sep=0pt] at (s) {$s$};
		
		\node [draw=none] at (0,0) {} edge [in=-50,out=-130,shorten <=1 mm,loop] ();
		\node at (0,-.5) {\tiny $0,1,\#,\$$};
		
		\node at (0,0.5) (s1) {};
		\node [draw=none] at (s1) {} edge [in=220,out=140,loop,shorten <=1 mm] ();
				\node at (-0.5,0.5) {\tiny $0,1$};
		\node[draw,circle,minimum size=0.45cm,inner sep=0pt] at (s1) {$s_1$};
		
		\node at (0,1) (s2) {};
		\node [draw=none] at (s2) {} edge [in=220,out=140,loop,shorten <=1 mm] ();
				\node at (-0.5,1) {\tiny $0,1$};
		\node[draw,circle,minimum size=0.45cm,inner sep=0pt] at (s2) {$s_2$};
		
		\node at (0,1.5) (s3) {};
		\node [draw=none] at (s3) {} edge [in=220,out=140,loop,shorten <=1 mm] ();
				\node at (-0.5,1.5) {\tiny $0,1$};
		\node[draw,circle,minimum size=0.45cm,inner sep=0pt] at (s3) {$s_3$};
		
		\node at (0,2) (s4) {};
		\node [draw=none] at (s4) {} edge [in=220,out=140,loop,shorten <=1 mm] ();
				\node at (-0.5,2) {\tiny $0,1$};
		\node[draw,circle,minimum size=0.45cm,inner sep=0pt] at (s4) {$s_4$};
		
		\node at (0,2.5) (s5) {};
		
		\node at (0,3) (ss) {$\vdots$};
		
		\node at (0,3.5) (sk2) {};
		
		\node at (0,4) (sk1) {};
		\node [draw=none] at (sk1) {} edge [in=220,out=140,loop,shorten <=1 mm] ();
				\node at (-0.5,4) {\tiny $0,1$};
		
		\node at (0,4.5) (sk) {};
		\node [draw=none] at (sk) {} edge [in=220,out=140,loop,shorten <=1 mm] ();
				\node at (-0.5,4.5) {\tiny $0,1$};
		\node[draw,circle,minimum size=0.45cm,inner sep=0pt] at (sk) {};
		\node[draw,circle,minimum size=0.45cm,inner sep=0pt] at (sk1) {};

		
		\draw[->,shorten >=1.2 mm,shorten <=1.2 mm] (s) -- node[midway,right=-1mm]{\tiny $\$$} (s1);
		
		\draw[->,shorten >=1.2 mm,shorten <=1.2 mm] (s1) -- node[midway,right=-1mm]{\tiny $\#$} (s2);
		
		\draw[->,shorten >=1.2 mm,shorten <=1.2 mm] (s2) -- node[midway,right=-1mm]{\tiny $\#$} (s3);
		
		\draw[->,shorten >=1.2 mm,shorten <=1.2 mm] (s3) -- node[midway,right=-1mm]{\tiny $\#$} (s4);
		
		\draw[->,shorten >=1.2 mm,shorten <=1.2 mm] (s4) -- node[midway,right=-1mm]{\tiny $\#$} (s5);
		
		\draw[->,shorten >=1.2 mm,shorten <=1.2 mm] (sk2) -- node[midway,right=-1mm]{\tiny $\#$} (sk1);
		
		\draw[->,shorten >=1.2 mm,shorten <=1.2 mm] (sk1) -- node[midway,right=-1mm]{\tiny $\#$} (sk);
		
		
		\node at (1.5,0.5) (CG11) {};
		\node[draw,circle,minimum size=0.45cm,inner sep=0pt,fill=gray!50] at (CG11) {};
		
		\node at (1.5,1) (CG12) {};
		\node[draw,circle,minimum size=0.45cm,inner sep=0pt,fill=gray!50] at (CG12) {};
		
		\node at (1.5,1.5) (CG13) {};
		\node[draw,circle,minimum size=0.45cm,inner sep=0pt,fill=gray!50] at (CG13) {};
		
		\node at (1.5,2) (CG14) {};
		\node[draw,circle,minimum size=0.45cm,inner sep=0pt,fill=gray!50] at (CG14) {};
				
		\node at (1.5,3) (CG1) {$\vdots$};
		
		\node at (1.5,4) (CG1k1) {};
		\node[draw,circle,minimum size=0.45cm,inner sep=0pt,fill=gray!50] at (CG1k1) {};
		
		\node at (1.5,4.5) (CG1k) {};
		\node[draw,circle,minimum size=0.45cm,inner sep=0pt,fill=gray!50] at (CG1k) {};
		
		\draw[->,shorten >=1.2 mm,shorten <=1.2 mm] (s1) -- node[midway,above=-1mm]{\tiny $\#$} (CG11);
		\draw[->,shorten >=1.2 mm,shorten <=1.2 mm] (s2) -- node[midway,above=-1mm]{\tiny $\#$} (CG12);
		\draw[->,shorten >=1.2 mm,shorten <=1.2 mm] (s3) -- node[midway,above=-1mm]{\tiny $\#$} (CG13);
		\draw[->,shorten >=1.2 mm,shorten <=1.2 mm] (s4) -- node[midway,above=-1mm]{\tiny $\#$} (CG14);
		\draw[->,shorten >=1.2 mm,shorten <=1.2 mm] (sk1) -- node[midway,above=-1mm]{\tiny $\#$} (CG1k1);
		\draw[->,shorten >=1.2 mm,shorten <=1.2 mm] (sk) -- node[midway,above=-1mm]{\tiny $\#$} (CG1k);
		
		
		\node at (3,0.5) (CG21) {};
		\node[draw,circle,minimum size=0.45cm,inner sep=0pt,fill=gray!50] at (CG21) {};
		
		\node at (3,1) (CG22) {};
		\node[draw,circle,minimum size=0.45cm,inner sep=0pt,fill=gray!50] at (CG22) {};
		
		\node at (3,1.5) (CG23) {};
		\node[draw,circle,minimum size=0.45cm,inner sep=0pt,fill=gray!50] at (CG23) {};
		
		\node at (3,2) (CG24) {};
		\node[draw,circle,minimum size=0.45cm,inner sep=0pt,fill=gray!50] at (CG24) {};
		
		\node at (3,3) (CG2) {$\vdots$};
		
		\node at (3,4) (CG2k1) {};
		\node[draw,circle,minimum size=0.45cm,inner sep=0pt,fill=gray!50] at (CG2k1) {};
		
		\node at (3,4.5) (CG2k) {};
		\node[draw,circle,minimum size=0.45cm,inner sep=0pt,fill=gray!50] at (CG2k) {};
		
		
		\node at (4.5,0.5) (CG31) {};
		\node[draw,circle,minimum size=0.45cm,inner sep=0pt,fill=gray!50] at (CG31) {};
		
		\node at (4.5,1) (CG32) {};
		\node[draw,circle,minimum size=0.45cm,inner sep=0pt,fill=gray!50] at (CG32) {};
		
		\node at (4.5,1.5) (CG33) {};
		\node[draw,circle,minimum size=0.45cm,inner sep=0pt,fill=gray!50] at (CG33) {};
		
		\node at (4.5,2) (CG34) {};
		\node[draw,circle,minimum size=0.45cm,inner sep=0pt,fill=gray!50] at (CG34) {};
		
		\node at (4.5,3) (CG3) {$\vdots$};
		
		\node at (4.5,4) (CG3k1) {};
		\node[draw,circle,minimum size=0.45cm,inner sep=0pt,fill=gray!50] at (CG3k1) {};
		
		\node at (4.5,4.5) (CG3k) {};
		\node[draw,circle,minimum size=0.45cm,inner sep=0pt,fill=gray!50] at (CG3k) {};
		
		
		\node at (6,0.5) (CG41) {};
		\node[draw,circle,minimum size=0.45cm,inner sep=0pt,fill=gray!50] at (CG41) {};
		
		\node at (6,1) (CG42) {};
		\node[draw,circle,minimum size=0.45cm,inner sep=0pt,fill=gray!50] at (CG42) {};
		
		\node at (6,1.5) (CG43) {};
		\node[draw,circle,minimum size=0.45cm,inner sep=0pt,fill=gray!50] at (CG43) {};
		
		\node at (6,2) (CG44) {};
		\node[draw,circle,minimum size=0.45cm,inner sep=0pt,fill=gray!50] at (CG44) {};
		
		\node at (6,3) (CG4) {$\vdots$};
		
		\node at (6,4) (CG4k1) {};
		\node[draw,circle,minimum size=0.45cm,inner sep=0pt,fill=gray!50] at (CG4k1) {};
		
		\node at (6,4.5) (CG4k) {};
		\node[draw,circle,minimum size=0.45cm,inner sep=0pt,fill=gray!50] at (CG4k) {};
		
		\draw[->,shorten >=1.2 mm,shorten <=1.2 mm] (CG13) -- node[midway,above]{\tiny $\#$} (CG2k1);
		\draw[->,shorten >=1.2 mm,shorten <=1.2 mm] (CG2k1) -- node[midway,above]{\tiny $\#$} (CG34);
		\draw[->,shorten >=1.2 mm,shorten <=1.2 mm] (CG34) -- node[midway,above]{\tiny $\#$} (CG43);
		
		
		\node at (7.5,0.5) (t1) {};
		\node [draw=none] at (t1) {} edge [in=40,out=-50,loop,shorten <=1 mm] ();
				\node at (8,0.5) {\tiny $0,1$};
		\node[draw,circle,minimum size=0.45cm,inner sep=0pt] at (t1) {$t_1$};
		
		\node at (7.5,1) (t2) {};
		\node [draw=none] at (t2) {} edge [in=40,out=-50,loop,shorten <=1 mm] ();
				\node at (8,1) {\tiny $0,1$};
		\node[draw,circle,minimum size=0.45cm,inner sep=0pt] at (t2) {$t_2$};
		
		\node at (7.5,1.5) (t3) {};
		\node [draw=none] at (t3) {} edge [in=40,out=-50,loop,shorten <=1 mm] ();
				\node at (8,1.5) {\tiny $0,1$};
		\node[draw,circle,minimum size=0.45cm,inner sep=0pt] at (t3) {$t_3$};
		
		\node at (7.5,2) (t4) {};
		\node [draw=none] at (t4) {} edge [in=40,out=-50,loop,shorten <=1 mm] ();
				\node at (8,2) {\tiny $0,1$};
		\node[draw,circle,minimum size=0.45cm,inner sep=0pt] at (t4) {$t_4$};
		
		\node at (7.5,2.5) (t5) {};
		
		\node at (7.5,3) (tt) {$\vdots$};
		
		\node at (7.5,3.5) (tk2) {};
		
		\node at (7.5,4) (tk1) {};
		\node [draw=none] at (tk1) {} edge [in=40,out=-50,loop,shorten <=1 mm] ();
				\node at (8,4) {\tiny $0,1$};
		
		\node at (7.5,4.5) (tk) {};
		\node [draw=none] at (tk) {} edge [in=40,out=-50,loop,shorten <=1 mm] ();
				\node at (8,4.5) {\tiny $0,1$};
		\node[draw,circle,minimum size=0.45cm,inner sep=0pt] at (tk) {};
		
		\node at (7.5,5) (t) {};
		\node[draw,circle,double,minimum size=0.4cm,inner sep=0pt] at (t) {$t$};

		\draw[->,shorten >=1.2 mm,shorten <=1.2 mm] (t1) -- node[midway,left=-1mm]{\tiny $\#$} (t2);
		
		\draw[->,shorten >=1.2 mm,shorten <=1.2 mm] (t2) -- node[midway,left=-1mm]{\tiny $\#$} (t3);
		
		\draw[->,shorten >=1.2 mm,shorten <=1.2 mm] (t3) -- node[midway,left=-1mm]{\tiny $\#$} (t4);
		
		\draw[->,shorten >=1.2 mm,shorten <=1.2 mm] (t4) -- node[midway,left=-1mm]{\tiny $\#$} (t5);
		
		
		\node[draw,circle,minimum size=0.45cm,inner sep=0pt] at (tk1) {};
		\draw[->,shorten >=1.2 mm,shorten <=1.2 mm] (tk2) -- node[midway,left=-1mm]{\tiny $\#$} (tk1);
		
		\draw[->,shorten >=1.2 mm,shorten <=1.2 mm] (tk1) -- node[midway,left=-1mm]{\tiny $\#$} (tk);
		
		\draw[->,shorten >=1.2 mm,shorten <=1.2 mm] (tk) -- node[midway,left=-1mm]{\tiny $\$$} (t);

		\draw[->,shorten >=1.2 mm,shorten <=1.2 mm] (CG43) -- node[midway,above]{\tiny $\#$} (t3);
		\draw[->,shorten >=1.2 mm,shorten <=1.2 mm] (CG41) -- node[midway,above]{\tiny $\#$} (t1);
		\draw[->,shorten >=1.2 mm,shorten <=1.2 mm] (CG42) -- node[midway,above]{\tiny $\#$} (t2);
		\draw[->,shorten >=1.2 mm,shorten <=1.2 mm] (CG44) -- node[midway,above]{\tiny $\#$} (t4);
		\draw[->,shorten >=1.2 mm,shorten <=1.2 mm] (CG4k1) -- node[midway,above]{\tiny $\#$} (tk1);
		\draw[->,shorten >=1.2 mm,shorten <=1.2 mm] (CG4k) -- node[midway,above]{\tiny $\#$} (tk);
		
		\node [draw=none] at (t) {} edge [in=130,out=50,loop,shorten <=1 mm] ();
		\node at (7.5,5.5) {\tiny $0,1,\#,\$$};
	\end{tikzpicture}
	
	\caption{An illustration of the NFA constructed in the proof of Theorem \ref{NFA_hardness}. $s$ is the starting state and $t$ is the only accepting state. The first column consists of states $s,s_1,s_2, \ldots, s_{m(\kappa)}$, the last column consists of $t_1,t_2,\ldots,t_{m(\kappa)},t$. The second, third, fourth and fifth columns contain the gadgets $CG^1(i), CG^2(i), CG^3(i), CG^4(i)$, respectively. We do not show all transitions between gadgets $CG^r(i)$ and $CG^{r+1}(j)$ for $r=1,2,3$. As shown in the picture and as we prove, an accepting execution must visit $CG^1(i)$ and $CG^4(j)$ for $i=j$.}
	\label{NFA_figure}
\end{figure*}

\paragraph{Correctness}
Let us first show that if $G$ contains a $(3 \kappa + \kappa')$-clique $C$ then $F$ accepts $T$. Write $C = C_1 + C_2 + C_3 + C'$, where $C'$ is a $\kappa'$-clique and $C_1, C_2, C_3$ are $\kappa$-cliques (with indices $i_1, i_2, i_3$ in $\CC(\kappa)$). 
We can stay in $s$ until the beginning of the substring $T' := \$ \concat \big(\#\concat CG_T(C', \kappa, \kappa') \big)^{m(\kappa)+4} \concat \$ $. With the first symbol $\$$ in $T'$ we go to $s_1$. We then walk to $s_{i_1}$ reading $\big(\#\concat CG_T(C', \kappa, \kappa') \big)^{i_1-1}$. With $\#$ we then step to the starting state of $CG^1(i_1)$, corresponding to clique $C_1$. Since $C_1 \cup C'$ forms a $(\kappa+\kappa')$-clique, we can walk to the ending state of $CG^1(i_1)$ reading $CG_T(C', \kappa, \kappa')$. Since $C_1 \cup C_2$ forms a $2\kappa$-clique, we can next step to the starting state of $CG^2(i_2)$ (corresponding to $C_2$).
Similarly, we can then walk through $CG^2(i_2)$, $CG^3(i_3)$ (corresponding to $C_3$), and $CG^4(i_1)$ (corresponding to $C_1$ again). 
Next we step to $t_{i_1}$ reading $\#$, and then we simply walk to $t_{m(\kappa)}$ reading $\big(\#\concat CG_T(C', \kappa, \kappa') \big)^{m(\kappa)-i_1}$. Note that the number of times we read a symbol $\#$ is $i_1-1$ (for walking to $s_{i_1}$) plus 5 (for walking from $s_{i_1}$ to $t_{i_1}$) plus $m(\kappa) - i_1$ (for walking from $t_{i_1}$ to $t_{m(\kappa)}$), summing to $m(\kappa)+4$. Hence, indeed we parse all symbols $\#$ in $T'$. Thus, we can next step to $t$ reading the final symbol $\$$ of $T'$.
We then stay in $t$ reading the remainder of $T$. Since $t$ is accepting, we are done.

For the other direction, note that if $F$ accepts $T$ then it also accepts some substring $T' := \$ \concat \big(\#\concat CG_T(C', \kappa, \kappa') \big)^{m(\kappa)+4} \concat \$ $. Moreover, when reading $T'$ we must walk through some clique gadgets $CG^1(i_1), CG^2(i_2), CG^3(i_3)$, and $CG^4(i_4)$, corresponding to $\kappa$-cliques $C_1, C_2, C_3$, and $C_4$. Note that the number of symbols $\#$ on such a walk is $i_1 -1$ (for walking to $s_{i_1}$) plus 5 (for walking from $s_{i_1}$ to $t_{i_4}$) plus $m(\kappa)-i_4$ (for walking from $t_{i_4}$ to $t_{m(\kappa)}$), summing to $m(\kappa)+4+i_1-i_4$. Since $T'$ contains exactly $m(\kappa)+4$ symbols $\#$, we obtain $i_1 = i_4$ and thus $C_1 = C_4$. By the restrictions on the edges from $CG^r(i)$ to $CG^{r+1}(j)$ we see that $C_1 \cup C_2$, $C_2 \cup C_3$, and $C_3 \cup C_4 = C_3 \cup C_1$ form $2\kappa$-cliques. Moreover, since we walked through the clique gadgets we see that $C' \cup C_1$, $C' \cup C_2$, and $C' \cup C_3$ form $(\kappa + \kappa')$-cliques. In total, we obtain that $C_1 \cup C_2 \cup C_3 \cup C'$ forms a $(3\kappa + \kappa' = k)$-clique, finishing the correctness argument.

\paragraph{Size Bounds}
Note that clique gadgets $CG_T$ in the text have length $O(\log V)$, while the clique gadgets $CG_F$ in the automaton have $O(V \log V)$ states. We can thus read off a text length of $N = O(m(\kappa) m(\kappa') \log V) = O(V^{\kappa + \kappa'} \log V)$. Since the repetition $\big(\#\concat CG_T(C', \kappa, \kappa') \big)^{m(\kappa)+4}$ can be easily compressed to size $\Oh(\log V)$ by Observation~\ref{obs:repetition}, we obtain a compressed size of $n = O(m(\kappa') \log V) = O(V^{\kappa'} \log V)$. Finally, the number of states is $q = O(m(\kappa) V \log V) = O(V^{\kappa+1} \log V)$. Note also that the output of this reduction can be computed in time $O(n + q^2)$, i.e., in linear time in the output description. We thus obtain the desired reduction which, as argued in the beginning of this proof, rules out a combinatorial $O(\min\{n q^3, N q^2\}^{1-\eps})$ algorithm for \NFAAccept, assuming the combinatorial $k$-Clique conjecture.
 
\paragraph{Strengthening the Statement}
In the remainder, we verify that our construction proves the desired lower bound even restricted to instances with $n = \Theta(q^{\alpha_n})$ and $N = \Theta(q^{\alpha_N})$ for any $\alpha_N \ge \alpha_n > 0$.
Note that the number of states, the size of the SLP, and the text length can all three be increased by easy padding. E.g., to increase the text length we introduce a garbage symbol ``!'' that can be read at any state of the automaton, not changing the current state, and add a suitable number of copies of ``!'' to the text.
We now consider two cases. 

Case 1: If $\alpha_N \ge \alpha_n + 1$, then set $\kappa, \kappa' \ge 1$ such that $3\kappa + \kappa'=k$ and $\kappa \approx k/(\alpha_n+3)$ (recall that $\kappa,\kappa'$ are restricted to be integers). We can ensure that $\kappa < k/(\alpha_n+3) + 2$ and $\kappa' < \alpha_n k/(\alpha_n+3) + 3$. Note that for any $\eps > 0$, for sufficiently large $k=k(\eps,\alpha_n)$ we have $\kappa+1 < (1+\eps/2)k/(\alpha_n+3)$ and $\kappa' < (1+\eps/2) \alpha_n k/(\alpha_n+3)$. We can thus pad the number of states from $O(V^{\kappa+1} \log V)$ to $q = \Theta(V^{(1+\eps/2)k/(\alpha_n+3)})$ and the compressed size from $O(V^{\kappa'} \log V)$ to $n = \Theta(V^{(1+\eps/2) \alpha_n k/(\alpha_n+3)}) = \Theta(q^{\alpha_n})$. Similarly, for the decompressed text length, using $\alpha_N \ge \alpha_n + 1$, we have $N = O(V^{\kappa + \kappa'} \log V) = O(V^{(1+\eps/2) (\alpha_n + 1) k/(\alpha_n+3)}) = O(V^{(1+\eps/2) \alpha_N k/(\alpha_n+3)}) = O(q^{\alpha_N})$, which we can pad to equality. Then we indeed end up with an instance with $N = \Theta(q^{\alpha_N})$ and $n = \Theta(q^{\alpha_n})$. Hence, if \NFAAccept\ can be solved in combinatorial time $O(\min\{n q^3, N q^2\}^{1-\eps})$ restricted to such instances, then we obtain a combinatorial algorithm for $k$-Clique in time $O((n q^3)^{1-\eps}) = O(V^{(\alpha_n + 3) \cdot (1-\eps) (1+\eps/2)k/(\alpha_n+3)}) = O(V^{k(1-\eps/2)})$, contradicting the combinatorial $k$-Clique conjecture.

Case 2: If $\alpha_N < \alpha_n + 1$, then we have to slightly adapt the above construction. We introduce a third parameter $\hat \kappa \le \kappa$ and let the first and fourth column of clique gadgets $CG^1(i)$ and $CG^4(i)$ range over $\hat \kappa$-cliques. At the same time, we change the number of repetitions of each part in the text from $m(\kappa) + 4$ to $m(\hat \kappa)+4$. We are now detecting $(2 \kappa + \hat \kappa + \kappa')$-cliques in $G$. It can be checked that this does not violate the correctness of the construction. The new size bounds are $N = O(V^{\hat \kappa + \kappa'} \log V)$, $n = O(V^{\kappa'} \log V)$, and $q = O(V^{\kappa} \log V)$. 
Furthermore, we now allow to set $\kappa' = 0$, in which case the text is not responsible for choosing any part of the clique. Since in this case we do not need any clique gadgets, we define $CG_F(C, \kappa, 0)$ to consist of a single state $s=t$ and $CG_T(C', \kappa, 0)$ to be the empty string. In this case we set the final string to be $T := \$ \, \#^{m(\kappa)+4} \, \$$. The same correctness proof goes through.

We now choose integers $\kappa, \hat \kappa \ge 1$ and $\kappa' \ge 0$ with $2 \kappa + \hat \kappa + \kappa' = k$ and $\hat \kappa \le \kappa$ such that $\kappa \approx k/(\alpha_N+2)$, $\kappa' \approx \max\{0, (\alpha_N-1) k /(\alpha_N+2)\}$, and $\hat \kappa \approx \min\{\alpha_N,1\} \cdot k/(\alpha_N+2)$. Similarly to case 1, we can ensure for any $\eps > 0$ and sufficiently large $k=k(\eps,\alpha_N)$ that $\kappa < (1+\eps/2) k/(\alpha_N+2)$, $\kappa' \le \max\{0, (1+\eps/2) (\alpha_N-1) k /(\alpha_N+2)\}$, and $\hat \kappa < (1+\eps/2) \min\{\alpha_N,1\} \cdot k/(\alpha_N+2)$. We can thus pad the number of states to $q = \Theta(V^{(1+\eps/2) k/(\alpha_N+2)})$ and since $\hat \kappa + \kappa' < (1+\eps/2) \alpha_N k / (\alpha_N + 2)$ we can pad the decompressed text length to $N = \Theta(V^{(1+\eps/2) \alpha_N k/(\alpha_N + 2)}) = \Theta(q^{\alpha_N})$. For the compressed size, note that by the assumptions $\alpha_N < \alpha_n + 1$ and $\alpha_n > 0$ we have $\kappa' \le \max\{0, (1+\eps/2) (\alpha_N-1) k /(\alpha_N+2)\} < (1+\eps/2) \alpha_n k / (\alpha_N+2)$, and thus $n = O(V^{(1+\eps/2) \alpha_n k / (\alpha_N+2)}) = O(q^{\alpha_n})$, which we can pad to equality. Then we indeed end up with an instance with $N = \Theta(q^{\alpha_N})$ and $n = \Theta(q^{\alpha_n})$. Hence, if \NFAAccept\ can be solved in combinatorial time $O(\min\{n q^3, N q^2\}^{1-\eps})$ restricted to such instances, then we obtain a combinatorial algorithm for $k$-Clique in time $O((N q^2)^{1-\eps}) = O(V^{(\alpha_N + 2) \cdot (1-\eps) (1+\eps/2)k/(\alpha_N+2)}) = O(V^{k(1-\eps/2)})$, contradicting the combinatorial $k$-Clique conjecture.
\end{proof}


\subsection{Context-Free Grammar Parsing} \label{sec:cfg}

We again assume that the alphabet size $|\Sigma|$ is constant throughout this section.

In this section we show a strong conditional lower bound for context-free grammar parsing.
Recall that a context-free grammar (CFG) $\Gamma$ consists of a set of terminals $\Sigma$, a set of non-terminals $\Omega$, a starting non-terminal $S \in \Omega$, and a set of productions $\Phi$, each of the form $A \to \alpha$, where $A \in \Omega$ and $\alpha \in (\Sigma \cup \Omega)^*$. The size $|\Gamma|$ is the total length of all $\alpha$ over all productions. 
Applying a production $A \to \alpha$ to a string $\beta = \beta_1 A \beta_2 \in (\Sigma \cup \Omega)^*$ means to generate the string $\beta_1 \alpha \beta_2$. 
The language $L(\Gamma)$ is the set of strings in $\Sigma^*$ that can be generated by starting with $S$ and repeatedly applying productions. More generally, for any non-terminal $A$ the language $L(A)$ is the set of strings in $\Sigma^*$ that can be generated by starting with $A$.

\begin{problem}[\CFGrecognition]
Given a text $T$ of length $N$ by a grammar-compressed representation $\cT$ of size $n$ as well as a CFG $\Gamma$, decide whether $T \in L(\Gamma)$.
\end{problem}

(CFG parsing is an augmentation of this decision problem where in case $T \in L(\Gamma)$ we also need to return a sequence of productions as a certificate.)

As discussed in the introduction, after decompressing the text $T$ we can use classic parsers to solve CFG recognition in time $O(N^3 \poly(|\Gamma|))$~\cite{cocke1970programming,kasami1965efficient,younger1967recognition,earley1970efficient}, while Valiant's parser uses fast matrix multiplication to obtain an improved running time of $O(N^\omega \poly(|\Gamma|))$~\cite{valiant1975general}.\footnote{We ignore the specific polynomial dependence on $|\Gamma|$, since we are more interested in the dependence on $N$.} In the uncompressed setting, matching lower bounds based on the $k$-Clique conjecture were shown by Abboud et al.~\cite{ABV15b}.

In the compressed setting no improved algorithms are known, even for, say, $n = N^{0.01}$. 
Below we prove a matching lower bound for both running times $O(N^3)$ and $O(N^\omega)$, even restricted to very small grammars and quite compressible strings. Our proof differs considerably from the conditional lower bound in the uncompressed setting by Abboud et al.~\cite{ABV15b}, as their strings are not compressible in a strong sense. On a high level, their construction implements adjacency tests locally, around three chosen positions that encode three $k$-cliques. In our construction, we instead implement adjacency tests on a more global level, by choosing three offsets and reading all text positions that adhere to these offsets. This global view makes it possible to construct a compressible text.

\begin{thm} \label{thm:cfglowerbound}
  Assuming the $k$-Clique conjecture, there is no $O(N^{\omega - \eps})$ time algorithm for CFG recognition for any $\eps > 0$. Assuming the combinatorial $k$-Clique conjecture, there is no combinatorial $O(N^{3 - \eps})$ time algorithm for CFG recognition for any $\eps > 0$. Both results hold even restricted to instances with $|\Gamma| = O(\log N)$ and $n = O(N^\eps)$. 
\end{thm}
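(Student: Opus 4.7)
The plan is to reduce $k$-Clique to grammar-compressed CFG recognition, adapting the uncompressed strategy of Abboud, Backurs, and Vassilevska Williams~\cite{ABV15b} but redesigning the text so that it admits an SLP of size $\poly(V)$, where $V$ is the vertex count of the input graph. As noted in the preamble of this section, the original ABV15b construction performs adjacency tests \emph{locally}, in three regions that encode three $k/3$-cliques, and the corresponding strings are not compressible. We will instead implement adjacency tests \emph{globally}: the CFG picks three offsets, and parsing consists of reading all positions that adhere to these offsets, in the spirit of the tuple-enumeration string $S_v$ highlighted in the overview.

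Given a $k$-Clique instance $G = (V,E)$ for $k$ a large multiple of $3$ depending on $\eps$, we will construct a text $T$ of length $N = \Theta(V^{k/3})$, a fixed CFG $\Gamma$ of size $O(1)=O(\log N)$, and an SLP for $T$ of size $n = V^{O(1)}$. Conceptually, positions in $T$ are indexed by $k/3$-tuples $\tau \in V^{k/3}$, and the symbol at position $\tau$ encodes adjacency indicator bits that couple $\tau$ to the other two (yet to be chosen) tuples. Because the text is built from concatenations and products of $S_v$-style strings, the SLP factoring trick from the overview applies recursively: whenever the first coordinate of a tuple fails an adjacency test, we emit a block of $V^{k/3-1}$ copies of a fixed symbol, which is compressible by Observation~\ref{obs:repetition}; otherwise we recurse on the remaining coordinates. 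This yields a total SLP size of $V^{O(1)}$, and choosing $k = \Theta(1/\eps)$ gives $n = O(N^\eps)$ as required.

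For the CFG $\Gamma$, we adapt the Valiant-style nested non-terminals of \cite{ABV15b} so that parsing $T$ amounts to nondeterministically selecting three offsets $\Delta_1,\Delta_2,\Delta_3 \in [V^{k/3}]$---each naturally encoding a $k/3$-tuple $\tau_i$---and then verifying, by reading symbols at positions depending on these offsets, that $\tau_1 \cup \tau_2 \cup \tau_3$ is a $k$-clique. Only a constant number of productions are needed (independent of $V$), so $|\Gamma|=O(\log N)$ with room to spare. Standard padding handles the case when $k$ is not divisible by $3$ and allows us to meet the exact parameter restrictions. Given all this, an $O(N^{\omega-\delta})$ algorithm would yield an $O(V^{(k/3)(\omega-\delta)})$ algorithm for $k$-Clique, refuting Conjecture~\ref{conj:clique}; the combinatorial $O(N^{3-\delta})$ case follows analogously from Conjecture~\ref{conj:combclique}.

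The main obstacle is reconciling the two opposing requirements on $T$: the CFG must be able to interpret $T$ as ``a selection of three $k/3$-cliques plus verification of all cross-clique edges,'' but $T$ cannot enumerate cliques explicitly, since the set of $k/3$-cliques has no recursive factoring and does not admit a small SLP in general. The resolution is to enumerate \emph{all} $k/3$-tuples (indexed by offsets) with compressible indicator symbols, pushing the ``is-a-clique'' and ``cross-edges-present'' verification into the CFG productions themselves. Designing the precise symbol encoding at each tuple position so that (i) the adjacency checks factor through the $S_v$-style SLP, (ii) a constant-size set of productions can verify all $\binom{k}{2}$ required edges by reading only symbols at the three chosen offsets, and (iii) the parse is unique modulo correct choice of offsets, is where the bulk of the technical work lies.
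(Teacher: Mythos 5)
Your high-level strategy is the right one and matches the paper's: reduce from $k$-Clique, enumerate \emph{all} tuples rather than cliques so the $S_v$-style factoring (Observation~\ref{obs:repetition}) applies, and shift the clique/biclique verification into the grammar by having the CFG select offsets and read positions at a fixed stride. However, the concrete parameters you propose are not achievable, and the central mechanism is left unspecified at exactly the point where the paper does the real work.

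First, the text length. You propose $N = \Theta(V^{k/3})$ with one symbol per $k/3$-tuple $\tau$, where ``the symbol at position $\tau$ encodes adjacency indicator bits that couple $\tau$ to the other two (yet to be chosen) tuples.'' This cannot work with a constant-size alphabet: the two other tuples range over $V^{2k/3}$ choices, so the adjacency information relevant to $\tau$ is $\Theta(V^{k/3})$ bits, not $O(1)$. The paper's resolution is that the text is much longer than the number of tuples: the biclique gadget $T_B'$, for example, has length $\Theta(V \cdot V^k)$, one block of $V^k$ positions per vertex $u$, where the $i$-th position of block $u$ records whether $u$ is adjacent to every vertex of the $i$-th tuple. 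The CFG then reads one position per block at stride $V^k$, accumulating all $V$ per-vertex checks for a single fixed offset. Second, your claim that $|\Gamma| = O(1)$ is asserted but not justified, and I do not believe it holds for this kind of reduction: the grammar must enforce that every ``step'' advances by exactly $V^k - 1$ symbols (to keep the offset coherent), and expressing a jump of that exact length over a constant alphabet requires $\Theta(\log V^k) = \Theta(\log N)$ productions (the binary-doubling family $X_0,\dots,X_{\log(V^k-1)}$ in the paper). This $\log N$ factor is precisely the bottleneck in $|\Gamma|$, and the theorem statement only claims $|\Gamma| = O(\log N)$, not $O(1)$. Third, you acknowledge that the bulk of the technical work is (i)--(iii) but do not describe how the CFG actually selects three independent offsets in a single context-free parse and reads at consistent positions. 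The paper solves this with three dedicated padding blocks $x^{V^k}, y^{V^k}, z^{V^k}$ embedded at the left end, middle, and right end of $T$, whose prefix/suffix consumption encodes the three offsets, plus a careful sequencing of gadgets $T_C, T_B, T_B'$ so that the ``inside-out'' CFG derivation visits each pair of offsets exactly once. Without this architecture your sketch does not close into a proof.
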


\begin{proof}
  Let $k \ge 1$ and let $G=(V,E)$ be a $k$-Clique instance. We will construct a CFG $\Gamma$ of size $O(\log V)$ and a text $T$ of length $N = O(V^{k+2})$ generated by an SLP $\cT$ of size $n = O(V^3)$ such that $T \in L(\Gamma)$ holds if and only if $G$ contains a $3k$-clique. Note that an $O(N^{\omega - \eps}) = O(N^{\omega(1-\eps/3)})$ algorithm for CFG recognition would then imply an algorithm for $3k$-Clique in time $O(V^{(k+2)\omega(1-\eps/3)})$, which for $k \ge 12/\eps$ is bounded by $O(V^{k(1+\eps/6)\omega(1-\eps/3)}) = O(V^{\omega k (1-\eps/3)})$, contradicting the $3k$-Clique conjecture. The argument for combinatorial algorithms is analogous. Moreover, we have $|\Gamma| = O(\log V) = O(\log N)$ and $n = O(V^3) = O(N^{3/(k+2)}) = O(N^\eps)$ for $k \ge 3/\eps$.\footnote{Strictly speaking, we need to pad the text length to $\Theta(V^{k+2})$ first. This can easily be accomplished by adding garbage to the text and garbage handling rules to the grammar.}
 
In our construction we enumerate all $k$-tuples of vertices $U = (u_1,\ldots,u_k)$. Choosing three such $k$-tuples $U_1, U_2, U_3$ we then need to check that (1) each $k$-tuple $U_i$ forms a $k$-clique and (2) each pair $U_i, U_j$ forms a biclique for $i \ne j$. We remark that it is indeed necessary to enumerate all $k$-tuples and not just, say, all $k$-cliques, as the $k$-tuples are much more structured, leading to compressible strings.
In the following we construct gadgets that perform these tests. We will use alphabet $\Sigma = \{0,1,\#,\$,x,y,z\}$.

\paragraph{Offsets}
Let $U(i)$ be the $i$-th $k$-tuple $(u_1,\ldots,u_k) \in V^k$ in lexicographic order.
Choosing a $k$-tuple thus correspond to choosing a number $1 \le i \le V^k$, which we will interpret as an offset in the text~$T$, resulting in \emph{relevant} positions of the form $i + V^k \cdot \mathbb{N}$. In order to only read the relevant positions, we need to implement jumping over $V^k - 1$ symbols, so that after reading one relevant symbol we can jump to the next one. To this end, we construct a non-terminal $X$ of $\Gamma$ with $L(X) = \Sigma^{V^k - 1}$. This can be build by constructing non-terminals $X_d$ with $L(X_d) = \Sigma^{2^d}$ by the productions
\begin{alignat*}{2}
  X_0 &\to \sigma &&\qquad \text{for any $\sigma \in \Sigma$}, \\
  X_d &\to X_{d-1} X_{d-1} &&\qquad \text{for $1 \le d \le \log(V^k - 1)$}.
\end{alignat*}
Then the production $X \to X_{i_1} \ldots X_{i_\ell}$, where $i_1,\ldots,i_\ell$ are the 1-bits in the binary encoding of $V^k - 1$, yields the desired non-terminal $X$. Note that this yields a grammar of size $O(\log V)$.

\paragraph{Clique Test}
We now design gadgets that allow to test for any offset $i$ whether $U(i)$ forms a $k$-clique.
Let $\bar E = \binom{V}{2} \setminus E$ be the non-edges of $G$. Let $[.]$ be the Kronecker symbol, i.e., $[\textup{true}] = 1$ and $[\textup{false}] = 0$. We use the following text:
\[ T_C := \$^{V^k} \concat \Big( \bigconcat_{\{u,v\} \in \bar E} \, \bigconcat_{1 \le i \le V^k} \big[ \text{$u$ and $v$ appear in $U(i)$}\big] \Big) \concat \$^{V^k}. \]

For any offset $1 \le i \le V^k$, if $U(i) = (u_1,\ldots,u_k)$ forms a $k$-clique then no non-edge appears among $\{u_1,\ldots,u_k\}$, and thus $T_C[i + j\cdot V^k] = 0$ for all $1 \le j \le |\bar E|$. The opposite implication holds as well. This leads us to testing for a $k$-clique via the following CFG rules:
\[ C \;\;\to\;\; \$ X \tilde C, \qquad \tilde C \;\; \to \;\; 0\, X\, \tilde C \quad | \quad \$. \]

\begin{lem} \label{lem:cfgcorrectone}
  We call $T_C(i) := T_C[i..i + 1 + (|\bar E|+1) \cdot V^k]$ for $1 \le i \le V^k$ the \emph{valid} substrings of $T_C$.
  Any substring of $T_C$ that is parsable by $C$ is valid. Moreover, substring $T_C(i)$ is parsable by $C$ if and only if the $k$-tuple $U(i)$ forms a $k$-clique in $G$.
\end{lem}
\begin{proof}
  The first statement follows by $C$ starting and ending with a $\$$ symbol and advancing by $V^k - 1$ steps via $X$. The second statement follows from the argument above this lemma.
\end{proof}

\begin{lem} \label{lem:cfgslpsize}
  The string $T_C$ has an SLP of size $O(V^3)$.
\end{lem}
\begin{proof}
  For any $1 \le d \le k$, $\sigma \in \Sigma = \{0,1,\$,\#,x,y,z\}$, and $S \subseteq V$ with $|S| \le 2$ we define the following SLP rules:
  \begin{align*}
    \textup{Repeat}^{(0)}_{\sigma} &\to \sigma, \\
    \textup{Repeat}_{\sigma}^{(d)} &\to \bigconcat_{v \in V} \textup{Repeat}^{(d-1)}_{\sigma}, \\
    \textup{Incl}_S^{(0)} &\to \begin{cases} 1, \text{ if } S = \emptyset, \\ 0, \text{ otherwise} \end{cases} \\
    \textup{Incl}_S^{(d)} &\to \bigconcat_{v \in V} \textup{Incl}^{(d-1)}_{S \setminus \{v\}}, \\
    \textup{C-Test} &\to \textup{Repeat}^{(k)}_\$ \concat \Big( \bigconcat_{\{u,v\} \in \bar E} \textup{Incl}^{(k)}_{\{u,v\}} \Big) \concat \textup{Repeat}^{(k)}_\$.
  \end{align*}
  
  We claim that $\eval(\textup{C-Test}) = T_C$. Note that $\textup{Repeat}_{\sigma}^{(d)}$ generates the string $\sigma^{V^d}$, and thus the prefix and suffix $\$^{V^k}$ is correct. Further, it can be checked that $\textup{Incl}_S^{(d)}$ generates a string of length $V^d$ where the $i$-th position, corresponding to a $d$-tuple $(u_1,\ldots,u_d) \in V^d$, is 1 if $S \subseteq \{u_1,\ldots,u_d\}$ and 0 otherwise. Hence, writing the string $\textup{Incl}_{\{u,v\}}^{(k)}$ for all $\{u,v\} \in \bar E$ yields the middle part of the string $T_C$. This proves the claim.
  
  Note that the total size of the above SLP for $T_C$, i.e., the total number of symbols on the right hand sides of the above rules, is indeed $O(V^3)$.
\end{proof}

\paragraph{Biclique Test}
We next design gadgets that allow us to test for two offsets $i,j$ whether $u \sim v$ for all $u \in U(i), v \in U(j)$, i.e., whether $U(i), U(j)$ form a biclique.
To this end, we let $V^\rev$ be the reverse ordering of the vertices in $V$ and define the texts
\begin{align*}
  T_B &:= \#^{V^k} \concat \Big( \bigconcat_{u \in V} \, \bigconcat_{1 \le i \le V^k} \big[ \text{$u$ appears in $U(i)$}\big] \Big) \concat \#^{V^k}, \\
  T_B' &:= \#^{V^k} \concat \Big( \bigconcat_{u \in V^\rev} \, \bigconcat_{1 \le i \le V^k} \big[ \text{$u$ is adjacent to every vertex in $U(i)$}\big] \Big) \concat \#^{V^k}.
\end{align*}

Note that $U(i),U(j)$ form a biclique if every vertex that appears in $U(i)$ is adjacent to every vertex in $U(j)$. Thus, for every $1 \le \ell \le V$ we want that if $T_B[i + \ell \cdot V^k] = 1$ then also $T_B'[j + (V+1-\ell) \cdot V^k] = 1$. This leads us to testing for a biclique via the following CFG rules:
\begin{align*}
  B_{\textup{in}} \;\;&\to\;\; \#\, X\, B\, X\, \# \\
  B \;\;&\to\;\; 1\, X\, B\, X\, 1 \quad | \quad 0\, X\, B\, X\, 1 \quad | \quad 0\, X\, B\, X\, 0 \quad | \quad \# B_\textup{out} \#
\end{align*}
We view this part of the grammar as a subroutine that is started by invoking $B_{\textup{in}}$ and that can be followed by further operations by adding productions starting from $B_\textup{out}$. 
Note that each call of a rule of $B_{\textup{in}}$ or $B$ reads $V^k$ symbols from the left and from the right, except for the last one, which reads 1 symbol from the left and from the right. That is, the offsets are never changed throughout the parsing process. The parsing rules check that a 1 at a certain position in $T_B$ implies a 1 at the corresponding position in $T_B'$. Hence, when starting with offsets $i$ in $T_B$ and $j$ in $T_B'$, this process checks that $U(i),U(j)$ form a biclique. It stops when we reach the $\#$-blocks at the end of $T_B$ and at the beginning of $T_B'$, where we exit to $B_\textup{out}$. Then it depends on the (not yet defined) productions involving $B_\textup{out}$ whether the remainder of the string can be parsed. In summary, we obtain the following.

\begin{lem} \label{lem:cfgcorrecttwo}
  We call $T_B(i) := T_B[i .. i+1+ (V+1)\cdot V^k]$ and $T_B'(j) := T_B'[j .. j + 1 + (V+1)\cdot V^k]$ for $1 \le i,j \le V^k$ the \emph{valid} substrings of $T_B$ and $T_B'$, respectively. 
  Let $R$ be any string. 
  Then $B_{\textup{in}}$ can parse $T_B(i)\, R\, T_B'(j)$ if and only if $U(i),U(j)$ form a biclique and $B_\textup{out}$ can parse $R$. Moreover, if $\tilde T_B$ and $\tilde T_B'$ are substrings of $T_B$ and $T_B'$, respectively, and $B_\textup{in}$ can parse $\tilde T_B\, R\, \tilde T_B'$ such that $B_\textup{out}$ parses $R$, then $\tilde T_B$ and $\tilde T_B'$ are valid.
\end{lem}

\begin{lem} \label{lem:cfgslpboundtwo}
  The strings $T_B$ and $T_B'$ have SLPs of size $O(V^2)$.
\end{lem}
\begin{proof}
  Note that $T_B$ is the string generated by the following SLP, where we use notation as in Lemma~\ref{lem:cfgslpsize}:
  \[ \textup{B-Test} \to \textup{Repeat}^{(k)}_\# \concat \Big( \bigconcat_{v \in V} \textup{Incl}^{(k)}_{\{v\}} \Big) \concat \textup{Repeat}^{(k)}_\#. \]
  This has size $O(V^2)$ as shown in the proof of Lemma~\ref{lem:cfgslpsize}. 
  
  For $T_B'$ we use the following SLP rules for $1 \le d \le k$ and $v \in V$:
  \begin{align*}
    \textup{Adj}^{(0)}_v &\to 1, \\
    \textup{Adj}^{(d)}_v &\to \bigconcat_{u \in V} \begin{cases} \textup{Adj}^{(d-1)}_v, \text{ if } \{u,v\} \in E \\ \textup{Repeat}^{(d-1)}_0, \text{ otherwise} \end{cases} \\
    \textup{B'-Test} &\to \textup{Repeat}^{(k)}_\# \concat \Big( \bigconcat_{v \in V^\rev} \textup{Adj}^{(k)}_{v} \Big) \concat \textup{Repeat}^{(k)}_\#
  \end{align*}
  An easy inductive proof shows that $\textup{Adj}_v^{(d)}$ generates a string of length $V^d$ where the $i$-th position, corresponding to a $d$-tuple $(u_1,\ldots,u_d) \in V^d$, is 1 if $v$ is adjacent to every $u_i$, and 0 otherwise. Hence, writing $\textup{Adj}^{(k)}_{v}$ for all $v \in V$ (in reverse order) yields the middle part of $T_B'$, and thus $\textup{B'-Test}$ generates $T_B'$.
  Again, the total size of the right hand sides is $O(V^2)$, so the SLP has size $O(V^2)$.
\end{proof}

\paragraph{Complete Construction}
The final string is
\[ T := x^{V^k}\, T_C\, T_B\, T_B\, T_B'\, y^{V^k}\, T_C\, T_B\, T_B'\, T_B'\, T_C\, z^{V^k}. \]
Here, the parts $x^{V^k}, y^{V^k}$, and $z^{V^k}$ are used to choose three offsets $i_1,i_2,i_3$, corrsponding to three $k$-tuples $U(i_1),U(i_2),U(i_3)$. The three copies of $T_C$ are used to check that each $U(i_j)$ forms a $k$-clique. The left copy of $T_B T_B'$ is used for checking that $U(i_1),U(i_2)$ forms a biclique, similarly for the right copy and $U(i_2), U(i_3)$. Finally, the leftmost $T_B$ and rightmost $T_B'$ are used to check that $U(i_1),U(i_3)$ form a biclique. 
Note that $T$ uses alphabet $\Sigma = \{0,1,\#,\$,x,y,z\}$. 

We now describe the final grammar $\Gamma$. We copy the non-terminals $B_\textup{in}, B, B_\textup{out}$ to $\tilde B_\textup{in}, \tilde B, \tilde B_\textup{out}$, since we need this subroutine twice with different productions starting from $B_\textup{out}$. We let $S$ be a new starting symbol and define the following productions, additional to the ones defined above:
\begin{align*}
  S \;\;&\to\;\; x\, S \quad | \quad S\, z \quad | \quad X\, C\, X\, B_\textup{in}\, X\, C\, X \\
  B_\textup{out} \;\;&\to\;\; X\, \tilde B_\textup{in}\, X\, y\, X\, C\, X\, \tilde B_\textup{in}\, X \\
  \tilde B_\textup{out} \;\;&\to\;\; \#\, \tilde B_\textup{out} \quad | \quad \epsilon,
\end{align*}
where $\epsilon$ denotes the empty string. This finishes the construction of the CFG recognition instance.

\paragraph{Correctness}
We show that $T \in L(\Gamma)$ holds if and only if there is a $3k$-clique in $G$.
Assume that $G$ contains a $3k$-clique and let $1 \le i_1,i_2,i_3 \le V^k$ be such that $U(i_1) \cup U(i_2) \cup U(i_3)$ forms a $3k$-clique. 
Remove $i_1$ symbols $x$ from the left end of $T$ and $V^k - i_3 + 1$ symbols $z$ from the right, leaving offsets $i_1$ and $i_3$, respectively. Then apply the rule $S \to X\, C\, X\, B_\textup{in}\, X\, C\, X$. The outer calls to $X$ keep the offsets $i_1$ and $i_3$ by advancing to the next relevant positions w.r.t.\ offsets $i_1$ and $i_3$, respectively.
By Lemma~\ref{lem:cfgcorrectone}, the calls of $C$ parse valid substrings of $T_C$ starting and ending with offset $i_1$ and $i_3$, respectively. 
The lemma is applicable since $U(i_1)$ and $U(i_3)$ form $k$-cliques. 
The further calls to $X$ again advance to the next relevant positions w.r.t.\ offsets $i_1$ and $i_3$, now lying in the outer $\#$-blocks in the leftmost $T_B$ and rightmost $T_B'$, respectively. Finally, by Lemma~\ref{lem:cfgcorrecttwo} the call to $B_\textup{in}$ reads valid substrings of the leftmost $T_B$ and rightmost $T_B'$ and ends with $B_\textup{out}$. 
The lemma is applicable since $U(i_1),U(i_3)$ forms a biclique.
The outer calls to $X$ in the rule $B_\textup{out} \to X\, \tilde B_\textup{in}\, X\, y\, X\, C\, X\, \tilde B_\textup{in}\, X$ then advances the left and right end to the first relevant position w.r.t.\ offset $i_1$ in the second copy of $T_B$ and the last relevant position w.r.t.\ offset $i_3$ in the second-to-last copy of $T_B'$. 
We match the $y$ appearing in this rule to the $i_2$-th $y$ in the $y^{V^k}$ part of $T$. To the right of $y$, $C$ parses a valid substring of $T_C$, which works since $U(i_2)$ forms a $k$-clique. The remaining $\tilde B_\textup{in}$ then has to parse valid substrings of the right copy of $T_B T_B'$, starting with offset $i_2$ and ending with offset $i_3$. Similarly, to the left of $y$, $\tilde B_\textup{in}$ has to parse valid substrings of the left copy of $T_B T_B'$, starting with offset $i_1$ and ending with offset $i_2$. This works as $U(i_1), U(i_2)$ and $U(i_2),U(i_3)$ form bicliques. Note that after reaching $\tilde B_\textup{out}$ we are left with some symbols of the last $\#$-block of $T_B$ and some symbols of the first $\#$-block of $T_B'$. Both can be parsed completely using the rules involving $\tilde B_\textup{out}$. Thus, we have $T \in L(\Gamma)$.

For the other direction, we follow the same line of arguments, observing that there was no choice except for the offsets $i_1,i_2,i_3$. The core of the argument is that $U(i_1) \cup U(i_2) \cup U(i_3)$ forms a $3k$-clique if and only if each $U(i_j)$ forms a $k$-clique and each pair $U(i_j),U(i_{j'})$ forms a biclique.

\paragraph{Size Bounds}
Since $T$ consists of $O(|V| + |\bar E|)$ parts of length $V^k$, the text length is $O(V^{k+2})$. By Lemmas~\ref{lem:cfgslpsize} and \ref{lem:cfgslpboundtwo} and since $x^{V^k}$ has an SLP of size $O(\log V)$, $T$ has an SLP of size $O(V^3)$. Finally, the size of the grammar $\Gamma$ is $O(\log V)$, the bottleneck being the non-terminal $X$ that ensures offset consistency. Hence, all claimed size bounds are met. Note also that the constructed instance can be computed in time linear in the output size.
This finishes the proof of Theorem~\ref{thm:cfglowerbound}.
\end{proof}

\newcommand{\RNA}{\textup{RNA}}
\newcommand{\WRNA}{\textup{WRNA}}

\subsection{RNA Folding} \label{sec:rna}

We now give a variant of the construction for CFG recognition, proving a matching conditional lower bound for RNA folding. 

Again we consider a constant-size aphabet $\Sigma$, however, now each symbol $\sigma \in \Sigma$ has a unique counterpart $\bar \sigma \in \Sigma$ such that $\bar {\bar \sigma} = \sigma$. We say that $\sigma \in \Sigma$ and its counterpart $\bar \sigma$ \emph{match}.

Two pairs of indices $(i,j), (i',j')$ with $i < j$ and $i' < j'$ are said to \emph{cross} if at least one of the following conditions holds: (1) $i = i'$ or $i = j'$ or $j = i'$ or $j = j'$, (2) $i < i' < j < j'$, or (3) $i' < i < j' < j$. 
In other words, $(i,j), (i',j')$ with $i < j$ and $i' < j'$ are non-crossing if they are disjoint, i.e., $i < j < i' < j'$ or $i' < j' < i < j$, or they are nesting, i.e., $i < i' < j' < j$ or $i' < i < j < j'$. 

\begin{problem}[\RNAfolding]
Given a text $T$ of length $N$ by a grammar-compressed representation $\cT$ of size $n$, compute the maximum number of pairs $R \subseteq \{(i,j) \mid 1 \le i < j \le N\}$ such that for every $(i,j) \in R$ the symbols $T[i]$ and $T[j]$ match and there are no crossing pairs in $R$. We denote this maximum number by $\RNA(T)$.
\end{problem}

We refer to the set $R$ as a \emph{matching} of $T$.

In the uncompressed setting, \RNAfolding\ has an easy dynamic programming solution in time $O(N^3)$~\cite{eddy2004rna}. Using fast matrix multiplication, this was recently improved to $O(N^{2.82})$~\cite{bringmann2016truly}. For combinatorial algorithms, a matching lower bound of $N^{3-o(1)}$ assuming the combinatorial $k$-Clique conjecture was recently shown by Abboud et al.~\cite{ABV15b}. They also prove a conditional lower bound of $N^{\omega-o(1)}$ assuming the $k$-Clique conjecture, however, this leaves a gap to the current upper bound. 

As for CFG parsing, no improved algorithms are known in the compressed setting, even for, say, $n = N^{0.01}$. Here we prove lower bounds of $N^{3-o(1)}$ for combinatorial algorithms and $N^{\omega-o(1)}$ in general, assuming the (combinatorial) $k$-Clique conjecture.

\begin{thm} \label{thm:rnalowerbound}
  Assuming the $k$-Clique conjecture, there is no $O(N^{\omega - \eps})$ time algorithm for \RNAfolding\ for any $\eps > 0$. Assuming the combinatorial $k$-Clique conjecture, there is no combinatorial $O(N^{3 - \eps})$ time algorithm for \RNAfolding\ for any $\eps > 0$. Both results hold even restricted to instances with $n = O(N^\eps)$. 
\end{thm}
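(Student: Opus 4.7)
The plan is to reduce from $3k$-Clique, closely mirroring the construction in the proof of \thmref{cfglowerbound}. Given a graph $G=(V,E)$, I will build a text $T$ of length $N = O(V^{k+2})$ with SLP size $n = O(V^3)$, over an alphabet equipped with a complementation $\sigma \leftrightarrow \bar\sigma$, together with a threshold $K$ such that $\RNA(T) \ge K$ iff $G$ contains a $3k$-clique. Setting $k$ sufficiently large in terms of $1/\eps$ yields $n = O(N^\eps)$, and a faster algorithm for RNA folding then contradicts the (combinatorial) $k$-Clique conjecture by the same calculation as in \thmref{cfglowerbound}.

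First I would form $T$ as the concatenation of a ``forward'' portion $F$ with a ``mirror'' portion $\overline{F}^{\rev}$, so that the natural canonical non-crossing matching pairs each position of $F$ with its image in $\overline{F}^{\rev}$. Inside $F$ I would embed the three offset-selection blocks $x^{V^k}$, $y^{V^k}$, $z^{V^k}$ together with the clique-test string $T_C$ and biclique-test strings $T_B, T_B'$ from the proof of \thmref{cfglowerbound}, arranged so that an optimal matching encodes the choice of three offsets $i_1, i_2, i_3 \in [V^k]$ and simultaneously checks that each $U(i_j)$ is a $k$-clique and each pair $U(i_j),U(i_{j'})$ a biclique. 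The correspondence with the CFG proof is tight: each $B_\textup{in}$-style rule $B \to 1 X B X 1 \mid 0 X B X 1 \mid 0 X B X 0$ translates to pairing complementary symbols from $T_B$ with those of $T_B'$ across the mirror (a $1$ in $T_B$ may only be paired with a $\bar 1$ in $\overline{T_B'}^{\rev}$, forcing the corresponding position in $T_B'$ to be $1$), and the offset-jumping non-terminal $X$ is replaced by filler symbols that match each other trivially.

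For compressibility, I would invoke \lemref{cfgslpsize} and \lemref{cfgslpboundtwo} to build SLPs of size $O(V^3)$ for $T_C, T_B, T_B'$; these immediately yield SLPs of the same order for the complemented-reversed versions by relabeling and reversing, both of which preserve SLP size up to constant factors. Together with \obsref{repetition} for the offset blocks, this gives $n = O(V^3)$ while $|T| = O(V^{k+2})$, exactly matching the parameters of the CFG construction.

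The main obstacle is converting the qualitative CFG correctness argument into a \emph{quantitative} one. Unlike CFG recognition, which is a yes/no check, RNA folding maximizes matched pairs, so every potential deviation from the intended canonical matching --- misaligned offsets in any of the three clique or biclique tests, skipped relevant positions, or any crossing pair --- must cost at least one matched pair. I would establish this via a block-wise accounting argument: define $K$ as the size of the canonical mirror-matching in the satisfiable case, then show that (i) choosing three offsets whose $k$-tuples form a $3k$-clique achieves exactly $K$, while (ii) any alternative matching loses at least one mirror-pair in some clique- or biclique-test block, either from a ``non-edge conflict'' in $T_C$ or from an unmatchable $1$ forced by $T_B$ against a $0$ in $T_B'$. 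This gap amplification is analogous to the argument of Abboud et al.~\cite{ABV15b} for uncompressed RNA folding, and is the most technically delicate step --- here additionally requiring that the gap survives the highly compressible, repetitive structure of $T$, which I expect to handle by choosing each gadget block of equal length so that offset-shifted alignments consume whole blocks cleanly.
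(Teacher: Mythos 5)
Your high-level plan — reduce from $3k$-Clique, reuse the clique-test and biclique-test gadgets, and aim for $N = O(V^{k+2})$ with $n = O(V^3)$ — does match the paper's parameters. But two concrete pieces are wrong or missing, and they are exactly where the proof's real work lies.

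First, the mirror structure $T = F\,\overline{F}^{\rev}$ cannot work as stated. If $\overline{F}^{\rev}$ is the exact complement-reversal of $F$, then for every $i$ the positions $i$ and $2|F|-i+1$ of $T$ hold complementary symbols and the set $\{(i, 2|F|-i+1) : 1 \le i \le |F|\}$ is a non-crossing matching of size $|F|$. Hence $\RNA(T) = |F|$ regardless of the graph; the instance encodes nothing. The paper's text $T$ is not a palindrome: the $p,q$ gadgets that must cross-match (e.g.\ $p_{i_1}$ against $q_{i_3}$) live in different, non-mirrored blocks, and the clique-test ones and zeros match \emph{locally} against a preceding $1^{\bar E}$ buffer rather than across a mirror. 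If you want to salvage a mirror-like idea you must break the palindrome in a controlled way, at which point you are back to designing the paper's gadgets.

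Second, and more fundamentally, you have identified the offset-consistency problem but offered no mechanism to solve it. In the CFG reduction, the grammar actively \emph{chooses} how far to jump via the production $X \to X_{i_1}\cdots X_{i_\ell}$; there is no analogue of this choice in RNA folding. ``Filler symbols that match each other trivially'' does not give you a choice of offset: a maximizer will always match all the fillers, independent of anything else, so the filler blocks carry no information. The paper's solution is its \lemref{rnatechnical}: a guarding string $G(\{x_{a,b}\})$ built from fresh symbols $5,\bar 5,6,\bar 6,7,\bar 7$ whose \emph{weights} are set to $\Theta(AW)$, much heavier than any payload symbol. The counting argument in that lemma (bounding $m_5,m_6,m_7$, telescoping $\sum_i |b_{i+1}-b_i|$, and deducing $\ell \le 2A$ and then $b_1 = \cdots = b_\ell$) is what forces a single consistent offset $b \in [B]$ to be selected, losing exactly weight $\rho$ to the guards. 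This in turn requires moving to \weightedRNAfolding{} and converting back to unweighted via \lemref{equweightedrna}, which you do not mention; without the weight asymmetry between guards and payload the forcing argument fails. Your ``block-wise accounting'' sketch does not address how an adversarial folding that mixes two offsets is penalized, which is precisely the content of \lemref{rnatechnical}. Until you supply a replacement for that lemma, the reduction has a genuine gap.

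Finally, a smaller point: because the paper routes through weighted folding, the unweighted text length becomes $O(V^{k+4})$ (not $O(V^{k+2})$) and the SLP size $O(V^3\log V)$; the arithmetic with $k$ versus $\eps$ is then redone for those exponents. Your size bookkeeping, which quotes the CFG bounds verbatim, would also need to be adjusted once the weighting step is incorporated.
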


Abboud et al.~\cite{ABV15b} showed that \RNAfolding\ is equivalent to the following weighted variant.

\begin{problem}[\weightedRNAfolding]
We are given a text $T$ of length $N$ by a grammar-compressed representation $\cT$ of size $n$ as well as a weight function $w\colon \Sigma \to [M]$ with $w(\sigma) = w(\bar \sigma)$ for all $\sigma \in \Sigma$. For any set $R \subseteq \{(i,j) \mid 1 \le i < j \le N\}$ define its weight as $\sum_{(i,j) \in R} w(T[i])$. Compute the maximum weight of any set $R$ such that for every $(i,j) \in R$ the symbols $T[i]$ and $T[j]$ match and there are no crossing pairs in $R$. We denote this maximum weight by $\WRNA(T)$.
\end{problem}

\begin{lem}[Lemma 2 in \cite{ABV15b}] \label{lem:equweightedrna}
  For an instance $T$ of \weightedRNAfolding, consider the string $\tilde T := T[1]^{w(T[1])} \ldots T[N]^{w(T[n])}$, i.e., each symbol $T[i]$ is repeated $w(T[i])$ times. Then we have $\WRNA(T) = \RNA(\tilde T)$.
\end{lem}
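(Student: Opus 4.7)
The plan is to prove both inequalities separately. For the easy direction $\RNA(\tilde T) \ge \WRNA(T)$, I would take an optimal WRNA matching $R^*$ on $T$ and lift it to a matching $R'$ on $\tilde T$: for each pair $(i,j) \in R^*$ with $i < j$, let $w := w(T[i]) = w(T[j])$ (these are equal since $T[i]$ and $T[j]$ are matching counterparts), and insert the $w$ nested arcs $(a_i + k,\, a_j + w - 1 - k)$ for $k = 0, \ldots, w-1$, where $[a_i, a_i+w-1]$ and $[a_j, a_j+w-1]$ are the position intervals of blocks $i$ and $j$ in $\tilde T$. Every such arc matches (all copies of $T[i]$ in block $i$ match all copies of $T[j]$ in block $j$), the $w$ arcs for a single pair are pairwise nested, and distinct pairs preserve the block-level non-crossing structure of $R^*$. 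This yields $|R'| = \sum_{(i,j) \in R^*} w(T[i])$, which equals the weight of $R^*$.

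For the harder direction $\RNA(\tilde T) \le \WRNA(T)$, I would take an optimal matching $R'$ on $\tilde T$ and project each arc $(p,q) \in R'$ to the block edge $\{i,j\}$ (where $p$ lies in block $i$ and $q$ in block $j$), obtaining a multigraph $G$ on blocks with multiplicities $m_{ij}$. Three properties would then drive the argument: (i) $\sum_{j} m_{ij} \le w(T[i])$ for every block $i$, since positions of block $i$ appear in at most one arc; (ii) any two vertex-disjoint edges $\{i,j\},\{i',j'\} \in E(G)$ are non-crossing as block pairs on the line, since otherwise all position-level arcs between blocks $i,j$ would cross those between blocks $i',j'$, contradicting $R'$'s non-crossing property; (iii) $G$ is bipartite, since each arc pairs complementary symbols $\sigma, \bar\sigma$ and (under the standard assumption $\bar\sigma \ne \sigma$) blocks can be 2-colored by picking one representative from each pair $\{\sigma,\bar\sigma\}$.

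With these in place, the fractional assignment $x_{ij} := m_{ij}/w(T[i])$ is feasible in the matching LP of $G$ by (i), and has LP value $\sum_{(i,j)} x_{ij}\, w(T[i]) = \sum m_{ij} = |R'|$. By integrality of the bipartite matching polytope (property (iii)), there exists an integral matching $R \subseteq E(G)$ of weight at least $|R'|$. Since $R$ is a proper matching its edges are vertex-disjoint, so by property (ii) its edges are also non-crossing as block pairs; hence $R$ constitutes a valid WRNA matching on $T$ of weight $\sum_{\{i,j\} \in R} w(T[i]) \ge |R'| = \RNA(\tilde T)$.

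The main obstacle I expect is property (ii): one must carefully track how position-level arcs project to block-level edges, especially in the subtle case where a single block has arcs to several distinct other blocks. The canonical nested arrangement within such a block (furthest right-neighbor consuming the leftmost positions, furthest left-neighbor consuming the rightmost positions) has to be made explicit to rule out position-level crossings from block-level crossings. The bipartite LP integrality step is then standard. Combining the two directions yields the claimed equality $\WRNA(T) = \RNA(\tilde T)$.
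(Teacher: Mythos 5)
Your proof is correct. The lift in the easy direction is exactly right (each pair $(i,j)\in R^*$ becomes $w(T[i])$ nested arcs between blocks $i$ and $j$, and the block-level non-crossing structure is inherited). For the harder direction, the LP route works: projecting $R'$ gives multiplicities $m_{ij}$ satisfying $\sum_j m_{ij}\le w(T[i])$, so $x_{ij}:=m_{ij}/w(T[i])$ is a feasible fractional matching with objective value $\sum m_{ij}=|R'|$; bipartiteness (property (iii)) makes the matching polytope integral, so an integer matching $R\subseteq E(G)$ of weight $\ge|R'|$ exists; and since $R$ is vertex-disjoint, property (ii) ensures it is also non-crossing, so it is a valid WRNA matching of weight $\ge\RNA(\tilde T)$. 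You were right to flag the implicit hypothesis $\bar\sigma\ne\sigma$: without it the lemma is actually false (take $T=\sigma\sigma\sigma$ with $\bar\sigma=\sigma$ and $w(\sigma)=2$; then $\WRNA(T)=2$ but $\RNA(\tilde T)=3$), and without it $G$ need not be bipartite, so your integrality step would also break.

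One small point: the difficulty you anticipate in establishing property (ii) is a non-issue. Property (ii) only concerns \emph{vertex-disjoint} edges, and for those the verification is immediate --- if $i<i'<j<j'$ are four distinct blocks with $m_{ij},m_{i'j'}\ge 1$, pick any arc $(p,q)\in R'$ between blocks $i,j$ and any $(p',q')\in R'$ between blocks $i',j'$; block ordering forces $p<p'<q<q'$, a crossing, contradiction. No canonical nested arrangement within a block is needed, because the extracted matching $R$ is vertex-disjoint by definition and property (ii) already applies to all of its pairs; the case of a block incident to several partners only matters for the fractional solution, and there the degree constraint (i) is all you use. Regarding the cited proof: the paper simply invokes Lemma~2 of \cite{ABV15b}, whose argument is a direct combinatorial one that consolidates, per block, all incident arcs to a single partner. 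Your LP-integrality shortcut replaces that consolidation step by a black-box appeal to the integrality of the bipartite matching polytope --- more modular, equally valid.
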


\begin{proof}[Proof of Theorem~\ref{thm:rnalowerbound}]
Let $k \ge 1$ and let $G=(V,E)$ be a $k$-Clique instance. We will construct a \weightedRNAfolding\ instance $T$ of length $O(V^{k+2})$ (and $\Omega(V^k)$) generated by an SLP $\cT$ of size $O(V^{3})$ and a number $\lambda$ such that $\WRNA(T) \ge \lambda$ holds if and only if $G$ contains a $3k$-clique.
The alphabet size will be $|\Sigma| = 48$ and the weights are bounded by $O(V^{2})$. 
By Lemma~\ref{lem:equweightedrna}, the corresponding unweighted text $\tilde T$ has $\RNA(\tilde T) = \WRNA(T)$ and thus $\RNA(\tilde T) \ge \lambda$ holds if and only if $G$ contains a $3k$-clique. Moreover, since the weights in $T$ are bounded by $O(V^{2})$ we have $N = |\tilde T| = O(V^{2} |T|) = O(V^{k+4})$. Finally, by compressing $O(V^{2})$ repetitions to $O(\log V)$ SLP rules, $\tilde T$ has an SLP $\tilde \cT$ of size $n = O(|\cT| \log V) = O(V^{3} \log V)$. 

Hence, an $O(N^{\omega-\eps}) = O(N^{\omega(1-\eps/3)})$ algorithm for \RNAfolding\ would imply an algorithm for $3k$-Clique in time $O(V^{(k+4)\omega(1-\eps/3)})$, which for $k \ge 24/\eps$ is bounded by $O(V^{k(1+\eps/6)\omega(1-\eps/3)}) = O(V^{\omega k (1-\eps/6)})$, contradicting the $3k$-Clique conjecture. The argument for combinatorial algorithms is analogous. Moreover, we have $n = O(V^{3} \log V) = O(N^{3/k} \log V) = O(N^\eps)$ for $k \ge 3/\eps$.

To construct the desired instance $T,\cT$ of \weightedRNAfolding, we again enumerate all $k$-tuples $U(i)$ for $1 \le i \le V^k$, as in the proof for CFG parsing. We again choose three such $k$-tuples $U(i_1),U(i_2),U(i_3)$ and check that each $U(i_j)$ forms a $k$-clique and all pairs $U(i_j),U(i_{j'})$ form a biclique for $j \ne j'$. 

\paragraph{Clique Test}
Consider alphabet $\{0,\bar 0,1,\bar 1\}$ (with weights 1) and set for $e \in \bar E$ and $1 \le i \le V^k$
\[ r_{e,i} := \begin{cases} \bar 1, \;\text{if some node in $e$ does not appear in $U(i)$} \\ \bar 0, \;\text{otherwise} \end{cases} \]
Since $U(i)$ forms a $k$-clique iff for every non-edge at least one of the endpoints does not appear in $U(i)$, we obtain:
\begin{lem} \label{lem:rnaclique}
  Set $r_i := \bigconcat_{e \in \bar E} r_{e,i}$. We have $\WRNA\big( 1^{\bar E} r_i \big) \le \bar E$, with equality if and only if $U(i)$ forms a $k$-clique. 
\end{lem}

\paragraph{Biclique Test}
Consider alphabet $\{2,\bar 2,3, \bar 3,4,\bar 4\}$ (with weights 1) and set for $v \in V$ and $i \in [V^k]$
\[ p_{v,i} := \begin{cases} 2\, 4, \;\;\;\, \text{if $v$ appears in $U(i)$} \\ 2\, 3\, 4, \; \text{otherwise} \end{cases} 
\qquad q_{v,i} := \begin{cases} \bar 2\, \bar 4, \; \text{if $v$ is adjacent to every node in $U(i)$} \\ \bar 3\, \bar 4, \; \text{otherwise} \end{cases} \]
\begin{lem} \label{lem:rnabiclique}
  Set $p_i := \bigconcat_{v \in V} p_{v,i}$ and $q_i := \bigconcat_{v \in V} q_{v,i}$. For any $i,j$, we have $\WRNA(p_i\, q_j) \le 2V$, with equality if and only if $U(i),U(j)$ form a biclique.
\end{lem}
\begin{proof}
  Note that the total weight of $q_j$ is $2V$, which shows the upper bound $\WRNA(p_i\, q_j) \le 2V$.
  To obtain equality, all symbols in $q_j$ must be matched. In particular, the $\bar 4$ in $q_{v,j}$ must be matched to the 4 in $p_{v,i}$. If follows that the $\bar 2$ or $\bar 3$ in $q_{v,j}$ can only be matched to a 2 or 3 in $p_{v,i}$. Hence, we have $\WRNA(p_i\, q_j) = 2V$ if and only if there is no $v \in V$ such that $v$ appears in $U(i)$ but $v$ is not adjacent to every node in $U(j)$, which happens if and only if $U(i),U(j)$ form a biclique.
\end{proof}

\paragraph{Complete Construction}
For any symbol $\sigma$ used so far, we introduce two copies $\sigma'$ and $\sigma''$. For the strings $r_{e,i},p_{v,i},q_{v,i}$ defined above, we write $r'_{e,i},p'_{v,i},q'_{v,i}$ and $r''_{e,i},p''_{v,i},q''_{v,i}$ to denote that we replace all symbols by their primed copies. 
For $i_1,i_2,i_3 \in [V^k]$ consider the string
\[ T(i_1,i_2,i_3) := 1^{\bar E}\, r_{i_1}\, p_{i_1}\, p'_{i_1}\; 1'^{\bar E}\, r'_{i_2}\, q'_{i_2}\, p''_{i_2}\; 1''^{\bar E}\, r''_{i_3}\, q''_{i_3}\, q_{i_3}. \]
Note that the alphabet is partitioned such that the only possible matchings are among $1^{\bar E}\, r_{i_1}$, $1'^{\bar E}\, r'_{i_2}$, $1''^{\bar E}\, r''_{i_3}$ as well as $p_{i_1}\, q_{i_3}$, $p'_{i_1}\, q'_{i_2}$, $p''_{i_2}\, q''_{i_3}$. Also note that these pairs are non-crossing. Hence, by Lemmas~\ref{lem:rnaclique} and \ref{lem:rnabiclique}, we have $\WRNA(T(i_1,i_2,i_3)) \le 6 V + 3 \bar E$, with equality if and only if $U(i_j)$ forms a $k$-clique and $U(i_j),U(i_{j'})$ form a biclique for any $j \ne j'$, which happens if and only if $U(i_1) \cup U(i_2) \cup U(i_3)$ forms a $3k$-clique. 

This is close to a complete reduction. It remains to force the choice of consistent offsets $i_1,i_2,i_3$, which we accomplish with the following lemma. Its proof is technical and defered to the end of this section.

\begin{lem} \label{lem:rnatechnical}
  Let $A,B,W \ge 1$. Let $x_{a,b}$ for $a \in [A],\, b \in [B]$ be strings over alphabet $\Sigma$, each with total weight $\sum_i w(x_{a,b}[i]) \le W$. 
  Assume that no two symbols in $\bigconcat_{a,b} x_{a,b}$ match.
  Let $5,\bar 5,6,\bar 6,7,\bar 7$ be new symbols not appearing in $\Sigma$, with weights $w(5) = w(\bar 5) = w(7) = w(\bar 7) = 4 A W$ and $w(6) = w(\bar 6) = 8 A W$. Set $\rho := (8A+12) A B W$ and
  \[ G(\{x_{a,b}\}) := 5^B (6\, \bar 5)^B \concat \Big( \bigconcat_{a \in [A]} \Big( \bigconcat_{b \in [B]} \bar 6\, x_{a,b} \Big) \concat 6^B \Big) \concat \bar 6^B (7\, 6)^B \bar 7^B. \]
  Then for any strings $y_1,y_2$ over alphabet $\Sigma$ we have
  \[ \WRNA(y_1\, G(\{x_{a,b}\}) \, y_2) = \rho + \max_{b \in [B]} \WRNA\Big(y_1 \concat \big(\bigconcat_{a \in [A]} x_{a,b} \big) \concat y_2\Big). \]
\end{lem}

We apply the above lemma as follows. Let $B = V^k$ and $A = V + 2\bar E$, and for $b \in [B]$ set $x_{a,b} := r_{a,b}$ for $a \in [\bar E]$, $x_{\bar E + a,b} := p_{a,b}$ for $a \in [V]$, and $x_{\bar E + V + a,b} := p'_{a,b}$ for $a \in [V]$. Note that $\bigconcat_{a \in [A]} x_{a,i_1} = r_{i_1} \, p_{i_1}\, p'_{i_1}$, which is a substring of $T(i_1,i_2,i_3)$. Construct $G(\{x_{a,b}\})$. Similarly define $y_{a,b}$ so that $\bigconcat_{a \in [A]} y_{a,i_2} = r'_{i_2}\, q'_{i_2}\, p''_{i_2}$, and construct $G'(\{y_{a,b}\})$, where the new symbols are now $5',\bar 5',6',\bar 6',7',\bar 7'$. Similarly define $z_{a,b}$ so that $\bigconcat_{a \in [A]} z_{a,i_3} = r''_{i_3}\, q''_{i_3}\, q_{i_3}$, and construct $G''(\{z_{a,b}\})$, where the new symbols are now $5'',\bar 5'',6'',\bar 6'',7'',\bar 7''$. 

The final text is
\[ T := 1^{\bar E}\, G(\{x_{a,b}\}) \; 1'^{\bar E}\, G'(\{y_{a,b}\})\; 1''^{\bar E}\, G''(\{z_{a,b}\}). \]
Applying Lemma~\ref{lem:rnatechnical} three times, we see that
\[ \WRNA(T) = 3 \rho + \max_{i_1,i_2,i_3 \in [V^k]} \WRNA(T(i_1,i_2,i_3)). \]
Since $\WRNA(T(i_1,i_2,i_3)) \le 6 V + 3 \bar E$ with equality if and only if $U(i_1) \cup U(i_2) \cup U(i_3)$ forms a $3k$-clique, we obtain that $\WRNA(T) \ge 3\rho + 6 V + 3 \bar E$ if and only if $G$ contains a $3k$-clique. This finishes the construction and proves the correctness.

\paragraph{Size Bounds}
Note that for each symbol $\sigma \in \{0,1,\ldots,7\}$ we have a counterpart $\bar \sigma$, and both have three primed variants. Thus, the alphabet size is $|\Sigma| = 8 \cdot 2 \cdot 3 = 48$. Since $A = O(V + \bar E) = O(V^2)$ and $B = V^k$, the text length is $N = O(V^{k+2})$. Note that each $x_{a,b}, y_{a,b}$, and $z_{a,b}$ has total weight $W \le 3$. Hence, the weight of the symbols introduced by the guarding $G(.)$ is $8 A W = O(A) = O(V^2)$. The following lemma analyzes the compressibility of the constructed text. We thus obtain all size bounds as claimed in the beginning of this proof.

\begin{lem}
  The text $T$ has an SLP $\cT$ of size $O(V^3)$.
\end{lem}
\begin{proof}
  As in Lemmas~\ref{lem:cfgslpsize} and \ref{lem:cfgslpboundtwo}, for any $a \in A$ there are SLPs for the strings $\bigconcat_{b \in [B]} \bar 6\, x_{a,b}$, $\bigconcat_{b \in [B]} \bar 6\, y_{a,b}$, and $\bigconcat_{b \in [B]} \bar 6\, z_{a,b}$ of size $O(V)$. Indeed, any such string is equal to $\bigconcat_{i \in [V^k]} \bar 6\, r_{e,i}$, $\bigconcat_{i \in [V^k]} \bar 6\, p_{v,i}$, or $\bigconcat_{i \in [V^k]} \bar 6\, q_{v,i}$, or their primed variants, for some $v \in V, e \in \bar E$. By definition of $r_{e,i},p_{v,i},q_{v,i}$, these strings are generated by $\textup{Incl}^{(k)}_{e}$, $\textup{Incl}^{(k)}_{v}$, and $\textup{Adj}^{(k)}_v$, respectively, except that the terminals $0,1$ are replaced by some constant-length strings over $\{0,\bar 0, \ldots, 4, \bar 4\}$. The final text $T$ consists of $O(A) = O(V^2)$ strings of the form $\bigconcat_{b \in [B]} \bar 6\, x_{a,b}$, $\bigconcat_{b \in [B]} \bar 6\, y_{a,b}$, or $\bigconcat_{b \in [B]} \bar 6\, z_{a,b}$, plus some very repetetive padding strings that can be compressed to length $O(\log V)$ by Observation~\ref{obs:repetition}. The bound follows.
\end{proof}

It remains to prove Lemma~\ref{lem:rnatechnical} to finish the proof of Theorem~\ref{thm:rnalowerbound}.

\begin{proof}[Proof of Lemma~\ref{lem:rnatechnical}]
  Let $x := G(\{x_{a,b}\})$ and fix $b \in [B]$. In every block $\bigconcat_{b \in [B]} \bar 6\, x_{a,b}$ or $\bar 6^B$ of $x$, we match the first $b$ $\bar 6$'s to the directly preceeding 6's, and match the last $B-b$ $\bar 6$'s to the directly succeeding 6's. At the beginning, this leaves $B-b$ $\bar 5$'s to be matched to the first 5's, and at the end this leaves $b$ 7's to be matched to the last $\bar 7$'s. Since we match all $(A+1)B$ $\bar 6$'s and $B-b$ $\bar 5$'s and $b$ 7's, the total weight of this matching is $(A+1)B \cdot 8 A W + (b + (B-b)) \cdot 4 A W = \rho$.  
  Note that this matching leaves all $x_{a,b}$ for $a \in A$ unmatched and uncovered, i.e., for no two matched symbols $x[i],x[j]$ we have that $x[i]$ is to the left of $x_{a,b}$ and $x[j]$ is to the right of $x_{a,b}$ in $x$. Hence, any solution to $\WRNA(y_1 \concat (\bigconcat_{a \in [A]} x_{a,b} ) \concat y_2)$ can be added to the pairs matched so far. This yields 
  \[ \WRNA(y_1\, x\, y_2) \ge \rho + \max_{b \in [B]} \WRNA\Big(y_1 \concat \big(\bigconcat_{a \in [A]} x_{a,b} \big) \concat y_2\Big). \]
  
  For the other direction, consider an optimal matching $R$ of $y_1\, x\, y_2$, realizing $\WRNA(y_1\, x\, y_2)$. 
  Write $w_x$ for the total weight of pairs in $R$ with both indices in $x$, and let $w_{x,y}$ be the total weight of pairs in $R$ with one end in $x$ and the other in $y_1$ or $y_2$. 
  Note that $w_x + w_{x,y} \ge \rho$, since otherwise, as shown above, we could replace the pairs of $R$ incident with $x$ to obtain $w_x = \rho$ and $w_{x,y} = 0$, yielding a higher total weight, which contradicts optimality of $R$.
  
  Note that symbols in $x_{a,b}$ can only be matched to symbols in $y_1$ or $y_2$, and the only possible matchings between $x$ and $y_1$ or $y_2$ happen in the strings $x_{a,b}$.
  Let $Z \subseteq [A] \times [B]$ be the set of all pairs $(a,b)$ such that $x_{a,b}$ contains at least one position matched by $R$. 
  Consider first the case $Z = \emptyset$, so that $w_{x,y} = 0$. Denote by $m_5,m_6,m_7$ the number of matched symbols $5,6,7$ in $x$. Note that each matched $\bar 5$ and each matched $7$ covers one $6$. Hence, at most $B-m_5 + B-m_7 + AB$ 6's can be matched. Since the number of $\bar 6$'s is $(A+1)B$, we have $m_6 \le \min\{B-m_5 + B-m_7 + AB, (A+1)B\}$. We thus obtain an upper bound on $w_x$ of 
  \begin{align*} 
    (m_5 + 2m_6 + m_7) \cdot 4AW &\le (m_5 + m_7) \cdot 4AW + \min\{B-m_5 + B-m_7 + AB, (A+1)B\} \cdot 8AW \\
    &= \min\{ 2(A+2)B-m_5-m_7, 2(A+1)B + m_5 + m_7 \} \cdot 4AW. 
  \end{align*}
  Optimizing over $m_5,m_7$ yields 
  \[ w_x \le (2A+3)B = \rho. \]
  Hence, in the current case $Z = \emptyset$ we have $w_{x,y} = 0$ and $w_x = \rho$, which yields
  \[ \WRNA(y_1\, x\, y_2) \le \rho + \WRNA(y_1 y_2) \le \rho + \max_{b \in [B]} \WRNA\Big(y_1 \concat \big(\bigconcat_{a \in [A]} x_{a,b} \big) \concat y_2\Big). \]
  
  Now consider the remaining case $|Z| \ge 1$.
  Write $Z = \{ x_{a_1,b_1},\ldots, x_{a_\ell,b_\ell} \}$, lexicographically sorted by $(a,b)$. Then we can bound $w_{x,y} \le \ell \cdot W$, since the total weight of each $x_{a,b}$ is bounded from above by $W$.
  
  In the following we bound $w_x$. Note that between $x_{a_i,b_i}$ and $x_{a_{i+1},b_{i+1}}$ the only symbols contributing to $w_x$ are $6$ and $\bar 6$. We count $(a_{i+1} - a_i)B$ 6's and $(a_{i+1}-a_i)B + b_{i+1} - b_i$ $\bar 6$'s in this substring. Hence, this contribution is bounded from above by 
  \[ \min\{(a_{i+1} - a_i)B, (a_{i+1}-a_i)B + b_{i+1} - b_i\} \cdot 8 AW  = \big(2(a_{i+1} - a_i)B + \min\{0, 2b_{i+1}-2b_i\}\big) \cdot 4AW. \] 
  Using the identity $\min\{0,2z\} = z - |z|$, we can rewrite this bound as
  \[ \big(2(a_{i+1} - a_i)B + b_{i+1}-b_i - |b_{i+1}-b_i|\big) \cdot 4AW. \] 
  
  We next analyze the contribution to $w_x$ before $x_{a_1,b_1}$. We count $(a_1-1)B+b_1$ $\bar 6$'s and $a_1 B$ 6's as well as $B$ 5's and $\bar 5$'s in this substring of $x$. Denote by $m_5$ the number of matched 5's, and note that this covers $m_5$ 6's from matching with $\bar 6$'s. Hence, we can match at most $\min\{(a_1-1)B+b_1, a_1 B - m_5 \}$ 6's. Summing up the weights, we obtain an upper bound on the contribution to $w_x$ before $x_{a_1,b_1}$ of 
  \[ m_5 \cdot 4 A W + \min\{(a_1-1)B+b_1, a_1 B - m_5 \} \cdot 8 A W = \min\{2(a_1-1)B+2b_1+m_5, 2a_1 B - m_5\} \cdot 4AW. \]
  Optimizing over $m_5$, we obtain an upper bound of $((2 a_1-1)B+b_1)\cdot 4AW$.
  
  Lastly, we analyze the contribution to $w_x$ after $x_{a_\ell,b_\ell}$. We count $(A-a_\ell+2)B-b_\ell$ $\bar 6$'s and $(A-a_\ell+2)B$ 6's as well as $B$ 7's and $\bar 7$'s. Similarly to the last paragraph, when matching $m_7$ 7's we obtain an upper bound on the contribution of
  \begin{align*} 
    &m_7 \cdot 4AW + \min\{(A-a_\ell+2)B - b_\ell, (A-a_\ell+2)B- m_7\}\cdot 8AW \\
    &= \min\{2(A-a_\ell+2)B - 2b_\ell + m_7, 2(A-a_\ell+2)B- m_7\} \cdot 4AW.
  \end{align*}
  Optimizing over $m_7$ yields an upper bound of $(2(A-a_\ell+2)B - b_\ell)\cdot 4AW$. 
  
  Summing over all three cases, we obtain an upper bound on $w_x$ of
  \[\Big( (2 a_1-1)B+b_1 + 2(A-a_\ell+2)B - b_\ell + \sum_{i=1}^{\ell-1} \big( 2 (a_{i+1}-a_i)B + b_{i+1} - b_i - |b_{i+1}-b_i| \big) \Big) \cdot 4AW. \]
  Note that all $a_i$'s and almost all $b_i$'s cancel as they form telescoping sums. What remains is 
  \[ w_x \le \Big(-B+2(A+2)B-\sum_{i=1}^\ell |b_{i+1} - b_i|\Big) \cdot 4AW = \rho - 4AW \sum_{i=1}^{\ell-1} |b_{i+1} - b_i|. \]
  In combination with the inequalities $w_{x,y} \le \ell W$ and $w_x + w_{x,y} \ge \rho$ shown above, we obtain
  \[ \sum_{i=1}^{\ell-1} |b_{i+1} - b_i| \le \frac{\ell}{4A}. \]
  Note that we have $|b_{i+1} - b_i| = 0$ for at most $A-1$ $i$'s, since $b_{i+1} = b_i$ implies $a_{i+1} > a_i$. This yields
  \[ \sum_{i=1}^{\ell-1} |b_{i+1} - b_i| \ge \ell - 1 - (A-1) = \ell - A. \]
  Together with the upper bound, we obtain $\ell - A \le \ell/(4A) \le \ell/2$, which yields $\ell \le 2A$. Hence, we have 
  \[ \sum_{i=1}^{\ell-1} |b_{i+1} - b_i| \le \frac{\ell}{4A} \le 1/2 < 1, \]
  which implies that $b_{i+1} = b_i$ for all $i$. Let $b := b_1 = \ldots = b_\ell$. Then $R$ matches only the strings $x_{a,b}$ for $a \in A$, among all strings in $X$. Since we showed $w_x \le \rho$, we indeed obtain 
  \[ \WRNA(T) \le \rho + \max_{b \in [B]} \WRNA\Big(y_1 \concat \big(\bigconcat_{a \in [A]} x_{a,b} \big) \concat y_2\Big). \qedhere \]
\end{proof}

\end{proof}

\section{Disjointness, Hamming Distance, and Subsequence} \label{sec:partial}


In this section we consider the following three problems on compressed sequences. In all problems we are given SLPs $\cT$ and $\cP$ of size $n$ and $m$, representing a text $T = \eval(\cT)$ of length $N$ and a pattern $P = \eval(\cP)$ of length $M$.

\begin{problem}[Disjointness]
	Given two compressed sequences $\cT$ and $\cP$ of equal decompressed lengths $N=M$ over alphabet $\{0,1\}$, decide whether there is a position such that both sequences have symbol $1$ at that position, i.e., whether $T[i]=P[i]=1$ holds for some $i$.
\end{problem}

\begin{problem}[Hamming Distance]
	Given two compressed sequences $\cT$ and $\cP$ of equal decompressed lengths $N=M$, output $\hamming(P,T)=|\{i|P[i] \neq T[i]\}|$. That is, output the number of positions where the decompressed sequences differ.
\end{problem}

\begin{problem}[Subsequence]
	Given two compressed sequences $\cT$ and $\cP$ of decompressed length $N \ge M$, decide whether the pattern sequence $P$ is a subsequence of the text sequence~$T$.
\end{problem}

We note that in the uncompressed setting all three problems have linear time trivial algorithms.
This immediately implies that all three problems can be solved in time $O(N)$ by decompressing the sequences and running the trivial algorithms. 
Below we show that this running time is not optimal and can be improved for all three problems for sufficiently compressible strings.
Furthermore, we show conditional lower bounds for the three problems assuming the Combinatorial $k$-Clique conjecture, $k$-SUM conjecture, and Strong $k$-SUM conjecture (see \secref{hardnessassumptions} for definitions). We were, however, not able to establish matching upper and lower bounds and we leave it as an open problem to close the gap.

\paragraph{Known Lower Bounds from Classic Complexity Theory} 
In~\cite{lifshits2007processing} it was shown that the Hamming Distance problem is \#{\sf P}-complete and thus a polynomial time $(nm)^{O(1)}$ algorithm for it is unlikely to exist. Lohrey~\cite{lohrey2011leaf} showed that the Subsequence problem is at least as hard as {\sf PP} and is contained in {\sf PSPACE}. It is conjectured that the subsequence problem is {\sf PSPACE}-complete~\cite{Lohrey12}. Note that the class {\sf PP} contains computationally very difficult problems. In particular, Toda's theorem states that the entire polynomial hierarchy PH is contained in ${\sf P}^{\sf PP}$.

We can easily check that the Disjointness problem is in {\sf NP}. A variant of our Theorem~\ref{lb1} below implies that the Subset Sum problem can be reduced to the Disjointness problem and thus Disjointness is in fact {\sf NP}-complete.

\subsection{Algorithms} \label{sec:subsequpper}

We start this section by showing a simple algorithm for the Subsequence problem that runs in time $O((n|\Sigma| + M) \log N)$ (see Theorem~\ref{alg_subsequence}). 
An algorithm with very similar guarantees was obtained in~\cite{bille2014compressed}.
Note that in a natural setting, namely when $|\Sigma|\leq O(1)$, $n\leq M$ and $N\leq M^{O(1)}$, the algorithm runs in time $\tOh(M)$. That is, we do not need to decompress the text sequence to be able to solve the  Subsequence problem.

In Theorems \ref{faster_hamming} and \ref{faster_subsequence} below we show $O\left(\max(m,n)^{1.5}\cdot N^{0.6}\right)$ time algorithms for the Hamming Distance and Subsequence problems, respectively. We observe that both running times that we obtain for the Subsequence problem are incomparable. 
Finally, by Theorem \ref{disj_subs} from Section~\ref{sec:subseqlower}, the Disjointness problem can be reduced to the Subsequence problem. This implies an $O(\max(m,n)^{1.5}\cdot N^{0.6})$ time algorithm for the Disjointness problem.
To the best of our knowledge these upper bounds are new. 

\begin{thm} \label{alg_subsequence}
	The Subsequence problem can be solved in time $O((n|\Sigma| + M) \log N)$.
\end{thm}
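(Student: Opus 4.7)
The plan is to implement the textbook greedy subsequence matcher on top of the compressed text, supported by an efficient ``next occurrence'' primitive on the SLP. Recall that $P$ is a subsequence of $T$ if and only if the following greedy procedure succeeds: maintain a cursor position $i$ in $T$ (initially $0$); for $j = 1, \ldots, M$, update $i$ to the smallest index $i' > i$ with $T[i'] = P[j]$, failing if no such $i'$ exists. So the whole problem reduces to answering $M$ queries of the form ``given a position $i \in [N]$ and a symbol $\sigma \in \Sigma$, return the smallest $i' > i$ with $T[i'] = \sigma$, or report that no such $i'$ exists''. Decompressing $P$ up front costs $O(M)$ time and lets us read the query symbols in sequence.

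First I would invoke \thmref{AVL} to replace $\cT$ by an AVL-grammar $\cT'$ of size $O(n \log N)$ and depth $O(\log N)$ in $O(n \log N)$ time; this gives us a balanced parse tree on which to navigate. Then, for every non-terminal $S$ of $\cT'$ and every symbol $\sigma \in \Sigma$, precompute a single bit $\mathrm{has}(S,\sigma) := [\sigma \text{ occurs in } \eval(S)]$. This is done by one bottom-up sweep over $\cT'$: leaves have their single symbol set, and for $S \to S_\ell S_r$ we set $\mathrm{has}(S,\sigma) = \mathrm{has}(S_\ell,\sigma) \vee \mathrm{has}(S_r,\sigma)$. The sweep touches every non-terminal once per symbol, spending $O(n |\Sigma| \log N)$ time in total. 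Additionally cache $|\eval(S)|$ for every non-terminal, which is needed to translate between positions in $T$ and paths in the parse tree.

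I would then represent the current cursor not as an integer $i$, but as the root-to-leaf path to the leaf of the parse tree spelling $T[i]$. A next-occurrence query for symbol $\sigma$ is answered in two phases. \emph{Ascending phase:} walk up the stored path; whenever we ascend from a left child $S_\ell$ into a rule $S \to S_\ell S_r$, test $\mathrm{has}(S_r,\sigma)$; stop at the first such sibling for which this is true (if we reach the root without a hit, report failure). \emph{Descending phase:} starting from that right sibling, repeatedly descend to the leftmost child whose subtree contains $\sigma$ (using the $\mathrm{has}$ bits again). Both phases have length equal to the depth of $\cT'$, so each query runs in $O(\log N)$ time; updating the stored path to the newly found leaf also takes $O(\log N)$ time.

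Putting the pieces together: preprocessing costs $O(n |\Sigma| \log N)$, decompressing $P$ costs $O(M)$, and the $M$ next-occurrence queries cost $O(M \log N)$ in total, giving the claimed $O((n|\Sigma| + M)\log N)$ running time. Correctness of the greedy strategy is the standard exchange argument (the earliest feasible choice at each step is never worse than any later choice), and correctness of the query primitive follows since the ascending phase finds the unique deepest ancestor whose right subtree first contains an occurrence of $\sigma$ strictly after position $i$, while the descending phase returns its leftmost occurrence. The main technical care-point will be bookkeeping the cursor path and the ancestor-to-sibling checks so that the depth bound of the AVL-grammar is actually used; no matrix or FFT machinery is needed.
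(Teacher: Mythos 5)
Your proposal is correct and follows essentially the same route as the paper's proof: decompress $P$, convert $\cT$ to an AVL-grammar via \thmref{AVL}, precompute per-symbol occurrence bits for all non-terminals in $O(n|\Sigma|\log N)$ time, maintain the current text position as (the endpoint of) a root-to-leaf path in the parse tree, and answer each next-occurrence query by an $O(\log N)$-time walk up to the first ancestor whose right subtree contains the sought symbol, followed by an $O(\log N)$-time leftmost descent. The only cosmetic difference is that you make the cursor-as-path representation and the cached subtree lengths fully explicit, whereas the paper's writeup leaves these bookkeeping points implicit.
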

\begin{proof}
	We start by decompressing the pattern sequence $\cP$ in $O(M)$ time.
	To decide whether $P$ is a subsequence of the text sequence $T$, for $i=1, \ldots, M$ (in this order) we will find the smallest $j\in \{1, \ldots, N\}$ such that $P[1..i]$ (the prefix of the decompressed pattern of length $i$) is a subsequence of $T[1..j]$. In the rest of the proof we will describe how to do this efficiently.
	
	We start by transforming the compressed text $\cT$ into an AVL-grammar of size $O(n\log N)$ and depth $O(\log N)$ according to Theorem~\ref{AVL}. This takes $O(n\log N)$ time. Additionally, for every alphabet symbol $\sigma\in \Sigma$ and every non-terminal $T_i$ of the AVL-grammar, we decide whether the sequence produced by the non-terminal $T_i$ contains the symbol $\sigma$. For every symbol, this can be done in $O(n\log N)$ time. Since the size of the alphabet is $|\Sigma|$, this takes $O(n|\Sigma|\log N)$ total time.
	
	Given an index $i=1, \ldots, M$, suppose that we know the smallest index $j\in \{1, \ldots, N\}$ such that $P[1..i]$ is a subsequence of $T[1..j]$. 
	We will show how to find the smallest $j'>j$ such that $P[1..i+1]$ is a subsequence of $T[1..j']$.
	The required running time will follow since we will be able to do this in $O(\log N)$ time for every index $i$.
	We find the smallest $j'>j$ in two steps. In the first step we traverse the parse tree bottom-up from the symbol $T[j]$ until the current node has $T[j]$ in the left subtree and the right subtree contains symbol $P[i+1]$. In the second step we go to the right subtree and then keep going to the left-most child that contains the symbol $P[i+1]$. Since the height of the parse tree is $O(\log N)$, this takes $O(\log N)$ time. This finishes the description of the algorithm. Note that we did not decompress the text sequence $T$ in this process.
\end{proof}

\begin{thm} \label{faster_hamming}
	The Hamming Distance problem can be solved in time
	$$
		\tOh\left(\max(m,n)^{2 - 1/\log_2(2\varphi)}\cdot N^{1/\log_2(2\varphi)}\right)=\tOh\left(\max(m,n)^{1.409\ldots}\cdot N^{0.592\ldots}\right),
	$$
	where $\varphi=\frac{1+\sqrt{5}}{2}$ is the golden ratio.
\end{thm}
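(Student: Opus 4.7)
The plan is to convert both $\cT$ and $\cP$ to AVL-grammars of size $s := \tOh(\max(n,m))$ and depth $O(\log N)$ via \thmref{AVL} (where we name the theorem appropriately; here this is the AVL-grammar construction), then run a memoized divide-and-conquer that recursively aligns the two parse trees. Recall that in an AVL grammar any rule $S \to S_\ell S_r$ satisfies $|\depth(S_\ell) - \depth(S_r)| \le 1$, so the smaller child generates at least roughly a $1/\varphi^2$-fraction of the symbols of $\eval(S)$; this Fibonacci-balanced splitting is the source of the golden ratio in the final exponent.

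At each step of the recursion we maintain two equal-length windows $X,Y$ given by AVL-grammars of depth $O(\log N)$. Starting from $X = T, Y = P$, we split $X$ using its root rule as $X_L X_R$, split $Y$ at position $|X_L|$ into $Y_L, Y_R$ (a standard operation on AVL-grammars that adds $O(\log N)$ new non-terminals while preserving balance), and recurse on $(X_L, Y_L)$ and $(X_R, Y_R)$, summing the Hamming contributions. Once the window length drops below a threshold $L^\star = \Theta(s)$ we decompress both sides and evaluate directly in $O(s)$ time. Crucially, every subproblem encountered is determined by the pair consisting of (i)~a non-terminal $A$ of $\cT$ and (ii)~a ``canonical'' position $i \in [N]$ in $P$ obtainable as an $O(\log N)$-long sequence of left/right turns in the AVL-parse tree of $\cP$; we memoize on this representation.

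The analysis then runs as follows. Because each descent splits both windows in a Fibonacci-balanced manner, the recursive branching yields a recurrence of shape $T(L) = 2\,T(L/(2\varphi)) + \tOh(1)$ per $(A,i)$-pair, whose solution is $L^{1/\log_2(2\varphi)}$. Summing over the $\tOh(s)$ canonical positions hit at the top levels of the recursion, and charging $O(s)$ to each of the $\tOh((N/s)^{1/\log_2(2\varphi)})$ leaf subproblems that decompress their windows, gives a total runtime of $\tOh\!\bigl(s \cdot s^{1-1/\log_2(2\varphi)} \cdot N^{1/\log_2(2\varphi)}\bigr) = \tOh\!\bigl(s^{2-1/\log_2(2\varphi)} N^{1/\log_2(2\varphi)}\bigr)$, matching the claimed bound since $\log_2(2\varphi) = 1+\log_2\varphi$.

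The main obstacle will be making the memoization count rigorous: one must show that, over the course of the recursion, the number of distinct canonical positions in $P$ that get paired with any fixed non-terminal $A$ of $\cT$ is $\tOh(1)$, so that the total number of distinct memoized pairs is $\tOh(s)$ rather than $\Omega(N)$ and no exponential blowup occurs across the $O(\log N)$ recursion depth. The AVL balance is essential here, since it forces each induced split point on $\cP$ to land at a position whose description is a short up-path in the AVL-parse tree of $\cP$, and the Fibonacci recurrence then dictates the precise rate $2\varphi$ at which the window length shrinks per binary branching step. Once the combinatorics of canonical positions is established, the claimed bound follows from a routine amortized accounting over the parse-tree DAG.
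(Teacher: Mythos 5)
Your high-level plan --- convert to AVL-grammars, exploit the golden-ratio balance, recurse, and memoize --- is the right one, and the target exponent $\log_2(2\varphi)$ does arise from AVL balance. But there is a genuine gap: the memoization count, which you correctly flag as ``the main obstacle,'' is precisely where the substantive work lies, and your intermediate recurrence does not help establish it.

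Two concrete issues. First, your recursion forces equal-length windows by cutting $Y$ at position $|X_L|$, so the $Y$-pieces are arbitrary substrings of $P$ that need new grammars on the fly; the memoization key is then a pair (non-terminal of $\cT$, position in $P$), and it is not clear --- and you do not argue --- why only $\tOh(1)$ positions per non-terminal arise (a non-terminal $A$ can occur at up to $N/|\eval(A)|$ places in the parse tree of $T$, and each occurrence induces its own alignment with $P$). The paper avoids this entirely by recursing on triples $(P_i,T_j,d)$ in which both sides are evaluations $\eval(\cdot)$ of existing non-terminals and $d$ is an integer offset, always splitting the longer side at its own SLP boundary; the subproblem space is then indexed by two non-terminals and an offset bounded by the shorter side's length. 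Second, the recurrence $T(L) = 2\,T(L/(2\varphi)) + \tOh(1)$ is not justified by AVL balance: in the Fibonacci-worst case the children have lengths roughly $L/\varphi$ and $L/\varphi^2$, which sum to $L$, so a recurrence over window lengths alone yields a linear count, and the sublinear savings must come from the memoization bound --- the recurrence would be a consequence of that bound, not a way to derive it. The paper's actual argument is a direct threshold count: for $P_i$ with $|\eval(P_i)| < N^\alpha$, the reachable offsets satisfy $|d| = O(|\eval(P_i)|)$, giving at most $n N^\alpha$ triples; for $P_i$ with $|\eval(P_i)| \ge N^\alpha$, the invariant $|\eval(T_j)| \ge \Omega(|\eval(P_i)|^{\log_2\varphi})$ (from always splitting the longer side together with $|\eval(S)| \ge F_{\depth(S)}$) bounds the number of relevant parse-tree nodes of $T$ by $O(N^{1-\alpha\log_2\varphi})$; optimizing over $\alpha$ then gives the exponent.
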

\begin{proof}
	Let $P_1,P_2,\ldots,P_m$ be the SLP $\cP$ corresponding to the decompressed pattern sequence $P$
	and let $T_1,T_2,\ldots,T_n$ be the SLP $\cT$ corresponding to the decompressed text sequence $T$.
	We assume that the decompressed length of the sequences $P$ and $T$ is $|P|=|T|=N$.

	By Theorem \ref{AVL} we can assume that $P_1,P_2,\ldots,P_m$ and $T_1,T_2,\ldots,T_n$ are AVL-grammars. This increases the running time by a factor of at most $\poly \log N$, which is hidden in the $\tOh(\cdot)$ notation. Fix an $i=1, \ldots, m$ and consider the sequence $\eval(P_i)$ with the corresponding parse tree of height $\depth(P_i)$. Then one can verify that the length of the sequence is bounded from above by $|\eval(P_i)|\leq 2^{\depth(P_i)}$ and from below by 
	\begin{equation} \label{balanced}
		|\eval(P_i)|\geq F_{\depth(P_i)}\geq \Omega\left(\varphi^{\depth(P_i)}\right),
	\end{equation}
	where $F_{\depth(P_i)}$ is the $\depth(P_i)$-th Fibonacci number and $\varphi$ is the golden ratio~\cite{Cormen:2001:IA:580470}. Analogous properties hold for $T_j$ for $j=1, \ldots, n$.

	For every $P_i$ and $T_j$ we precompute the length of $\eval(P_i)$ and $\eval(T_j)$, respectively.
	We define the function 
	$$
		\ham(P_i, T_j, d):=\sum_{r: \ \substack{r \in \{1,\ldots,|\eval(P_i)|\},\\r+d \in \{1,\ldots,|\eval(T_j)|\}}} \big[\eval(P_i)_r \neq \eval(T_j)_{r+d}\big],
	$$
	where $d$ is a negative or a non-negative integer.
	In other words, $\ham(P_i, T_j, d)$ is equal to the Hamming distance between $T_j$ and a shifted $P_i$ (by $d$ symbols to the right if $d>0$ and by $|d|$ symbols to the left otherwise), where we consider only the symbols that have aligned counterparts.
	Clearly, we can solve the Hamming Distance problem by outputting $\hamming(P,T) = \ham(P_m, T_n, 0)$.

	A simple algorithm for computing the Hamming distance is the following recursive method.
	Assume that the sequence $\eval(P_i)$ is longer than the sequence $\eval(T_j)$, and $P_i \to P_{\ell(i)}, P_{r(i)}$. Then
	$$
		\ham(P_i, T_j, d) = \ham(P_{\ell(i)}, T_j, d) + \ham(P_{r(i)}, T_j, d + |\eval(P_{\ell(i)})|).
	$$
	Otherwise, if the sequence $\eval(T_j)$ is longer and $T_j \to T_{\ell'(j)}, T_{r'(j)}$, then
	$$
		\ham(P_i, T_j, d) = \ham(P_i, T_{\ell'(j)}, d) + \ham(P_i, T_{r'(j)}, d - |\eval(T_{\ell'(j)})|).
	$$

	Clearly, for any recursive subproblem where the argument $d$ is such that no
	symbols get aligned, we can immediately return 0. When $P_i$ or $T_j$
	encode a single symbol, we compute their Hamming distance in a constant time.

	We use this recursive algorithm with memoization, i.e., if we call the same inputs twice, then we return the stored answer.

	\paragraph{Running Time} We crucially use the fact that we split
	the longer text in each step, and property~\ref{balanced}. Both
	together imply that 
	$$
		|\eval(T_j)| \geq \Omega\left(|\eval(P_i)|^{\log_2 \varphi}\right) = \Omega\left(|\eval(P_i)|^{0.694\ldots}\right)
	$$
	for each call $\ham(P_i,T_j,d)$.
	We bound the running time by counting for each $P_i$ how many different
	calls there are of the form $\ham(P_i,T_j,d)$ with $|\eval(T_j)| \leq |\eval(P_i)|$.
	The running time corresponding to the calls with $|\eval(T_j)| > |\eval(P_i)|$ can be analyzed analogously.
	Note that $|\eval(T_j)| \leq |\eval(P_i)|$ implies $|d| \leq O(|\eval(P_i)|)$, as larger shifts
	immediately give answer 0. Let $0 < \alpha < 1$ to be fixed later. If $|\eval(P_i)| < N^\alpha$ we can thus bound the
	contribution of $P_i$ to the running time by $n N^\alpha$ (there are $n$ $T_j$'s and
	$N^\alpha$ possible offsets $d$). Otherwise, if $|\eval(P_i)| \ge N^\alpha$, then
	$|\eval(T_j)| \ge N^{\alpha\log_2 \varphi}$, and thus there are at most $N^{1 - \alpha\log_2 \varphi}$
	calls to such $T_j$ in the parse tree for $T=\eval(T_n)$. Thus, there are at
	most this many calls $\ham(P_i,T_j,d)$, so the contribution of $P_i$ to the
	running time is at most $N^{1 - \alpha\log_2 \varphi}$. Summed over all $m$ different $P_i$'s the total
	running time is bounded by
	$O\left(m (n N^\alpha + N^{1 - \alpha\log_2 \varphi})\right)$.
	Minimizing over $\alpha$ gives the running time $O\left(m\cdot n^{1 - 1/\log_2(2\varphi)}\cdot N^{1/\log_2(2\varphi)}\right)$. 
	The running time corresponding to the calls with $|\eval(T_j)| > |\eval(P_i)|$ can be similarly  bounded by $O\left(n\cdot m^{1 - 1/\log_2(2\varphi)}\cdot N^{1/\log_2(2\varphi)}\right)$. 
	It remains to observe that the total running time is bounded by $O\left(\max(m,n)^{2 - 1/\log_2(2\varphi)}\cdot N^{1/\log_2(2\varphi)}\right)$ as required.
\end{proof}

\begin{thm} \label{faster_subsequence}
	The Subsequence problem can be solved in time
	$$
		\tOh\left(\max(m,n)^{2 - 1/\log_2(2\varphi)}\cdot N^{1/\log_2(2\varphi)}\right)=\tOh\left(\max(m,n)^{1.409\ldots}\cdot N^{0.592\ldots}\right),
	$$
	where $\varphi=\frac{1+\sqrt{5}}{2}$ is the golden ratio.
\end{thm}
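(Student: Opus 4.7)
The plan is to mirror the structure of the proof of Theorem~\ref{faster_hamming}, replacing the Hamming summation by a subsequence-matching recursion, and to argue that the same pair-counting argument goes through.

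First I would apply Theorem~\ref{AVL} to convert both $\cP$ and $\cT$ into AVL-grammars, at the cost of a $\poly\log N$ factor that will be absorbed into the $\tOh(\cdot)$ notation. I would precompute the lengths $|\eval(P_i)|$ and $|\eval(T_j)|$, and also, for each non-terminal and each position, the $O(\log N)$-sized decomposition of any suffix (respectively prefix) of $\eval(\cdot)$ into existing non-terminals along the right (respectively left) spine of its parse tree. By the AVL property and \eqref{balanced}, we have $|\eval(P_i)|, |\eval(T_j)| = \Theta(\varphi^{\depth(\cdot)})$ up to constants.

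Next I would define a recursive function $\mathrm{Match}(P_i,T_j)$ returning a pair $(k,p)$, where $k$ is the length of the longest prefix of $\eval(P_i)$ that is a subsequence of $\eval(T_j)$, and $p$ is the smallest ending position in $\eval(T_j)$ realising this (with $p=0$ if $k=0$). The answer to the Subsequence problem is then determined by whether $\mathrm{Match}(P_m,T_n)$ returns $k=M$. To compute $\mathrm{Match}(P_i,T_j)$, I always split the longer of $\eval(P_i), \eval(T_j)$ via its AVL children. If $|\eval(T_j)| \geq |\eval(P_i)|$ and $T_j \to T_{\ell(j)}T_{r(j)}$, first recurse on $(P_i,T_{\ell(j)})$; if that returns $k_1 = |\eval(P_i)|$ we are done, otherwise decompose the unmatched suffix $\eval(P_i)[k_1+1..]$ into $O(\log N)$ existing subtree non-terminals $P^{(1)},\ldots,P^{(s)}$ of $\cP$, and process them one after another, each time maintaining the remaining suffix of $\eval(T_{r(j)})$ (itself decomposed into $O(\log N)$ existing non-terminals), and recursively calling $\mathrm{Match}$ on pairs of \emph{original} non-terminals. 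The symmetric case $|\eval(P_i)| > |\eval(T_j)|$ is handled analogously by splitting $P_i \to P_{\ell(i)}P_{r(i)}$. All recursive invocations are memoised on the pair $(P_i,T_j)$ of original AVL non-terminals.

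The running-time analysis is then essentially identical to that of Theorem~\ref{faster_hamming}. Because we always split the longer side, any memoised call $\mathrm{Match}(P_i,T_j)$ encountered satisfies $|\eval(T_j)| \geq \Omega(|\eval(P_i)|^{\log_2\varphi})$ (or the symmetric inequality). Fixing any $P_i$, the number of partners $T_j$ with $|\eval(T_j)| \le |\eval(P_i)|$ appearing in the recursion is at most $n$ times the number of admissible "offsets,'' bounded by $O(|\eval(P_i)|)$; but if $|\eval(P_i)| \ge N^\alpha$, the AVL depth bound forces $|\eval(T_j)| \ge N^{\alpha \log_2 \varphi}$, so there are at most $O(N^{1-\alpha\log_2\varphi})$ such $T_j$ available. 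Balancing the two regimes at $\alpha$ optimally gives a contribution of $O(m\cdot n^{1-1/\log_2(2\varphi)}\cdot N^{1/\log_2(2\varphi)})$ per fixed $P_i$; the symmetric half gives $O(n\cdot m^{1-1/\log_2(2\varphi)}\cdot N^{1/\log_2(2\varphi)})$, and the sum is bounded by $\tOh\!\left(\max(m,n)^{2-1/\log_2(2\varphi)}\cdot N^{1/\log_2(2\varphi)}\right)$, as required.

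The main obstacle will be ensuring that the suffix/prefix decompositions do \emph{not} inflate the recursion: if they introduced fresh non-terminals the memoisation table would blow up. I would overcome this by insisting that the AVL spine decompositions always yield \emph{original} non-terminals of $\cP$ and $\cT$, so that only $O(nm)$ distinct pairs can ever appear as arguments. The $O(\log N)$ per-call overhead from walking the spines and assembling the decompositions is polylogarithmic and thus absorbed into the $\tOh(\cdot)$, so the pair-counting analysis from Theorem~\ref{faster_hamming} transfers verbatim.
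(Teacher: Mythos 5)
Your proposal diverges from the paper's proof in one structural respect: the paper's function $\subseq(P_i,T_j,d)$ carries an explicit integer offset $d$, memoizes on the \emph{triple} $(P_i,T_j,d)$, and splits only the larger of $\eval(P_i),\eval(T_j)$, making at most two sub-calls per invocation. The running-time analysis in Theorem~\ref{faster_hamming}, which the paper invokes verbatim, counts distinct triples $(P_i,T_j,d)$, not pairs; the offsets are exactly what make the count larger than $O(nm)$ and what the balancing over $\alpha$ tames. You instead define an offset-free $\mathrm{Match}(P_i,T_j)$, memoize on pairs, and compensate by spine-decomposing the leftover suffixes into $O(\log N)$ existing non-terminals and then ``processing them one after another.''

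This substitution introduces a gap in two places. First, the running-time argument does not transfer as you claim. The key invariant the paper uses---that every call in the recursion satisfies $|\eval(T_j)| \ge \Omega(|\eval(P_i)|^{\log_2\varphi})$ (or its symmetric twin)---is preserved precisely because the paper only ever splits the larger side and never introduces fresh pairs mid-flight. Your list-processing phase does introduce fresh pairs: after partially matching $P^{(u)}$ against $T^{(v)}$ you re-decompose a suffix of $\eval(T^{(v)})$ into subtree non-terminals $T^{(v,1)},\dots,T^{(v,c)}$ and invoke $\mathrm{Match}$ on $(P^{(u+1)}, T^{(v,1)})$, where $|\eval(T^{(v,1)})|$ can be arbitrarily small relative to $|\eval(P^{(u+1)})|$. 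These calls violate the imbalance invariant, so the $N^{1-\alpha\log_2\varphi}$ counting step is not justified for them. Second, and relatedly, you assert the decomposition overhead is ``polylogarithmic and thus absorbed,'' but the list-processing loop is a ping-pong: every iteration removes one piece from one list and re-decomposes the other side into $O(\log N)$ smaller pieces, so the number of iterations inside a single first-computation is not self-evidently $\poly\log N$; a potential-function argument would be needed, and the most natural ones (total remaining length, or $\sum 2^{\depth}$) bound the iteration count only by $O(N)$. Finally, your analysis is internally inconsistent with your own memoization scheme: you memoize on pairs $(P_i,T_j)$ so that ``only $O(nm)$ distinct pairs can ever appear as arguments,'' yet the very next sentence counts ``$n$ times the number of admissible offsets, bounded by $O(|\eval(P_i)|)$''---that is the triple count from the paper's proof, not a pair count, and the offsets do not appear in your memoization key. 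The cleanest repair is to do what the paper does: pass the offset $d$ explicitly, memoize on $(P_i,T_j,d)$, keep the recursion to the two-way split (no spine decompositions inside the recursion), and then the Theorem~\ref{faster_hamming} triple-counting argument applies directly.
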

\begin{proof}
	The algorithm follows a similar recursive method as in Theorem~\ref{faster_hamming}. As above, we assume that the SLPs $\cP$ and $\cT$ are AVL-grammars.
	
	For non-terminals $P_i$ and $T_j$ and an integer $d$ we define the function $\subseq(P_i,T_j,d)$. If $d\geq 0$, then we assume that we already matched a prefix of $\eval(P_i)$ of length $d$ (the prefix is a subsequence of an earlier part of the text) and our goal is to match the rest of $\eval(P_i)$ with $\eval(T_j)$. On the other hand, if $d<0$, then we assume that we already matched a prefix of $\eval(T_j)$ of length $|d|$ (a previous part of the pattern is a subsequence of the prefix) and our goal is to match $\eval(P_i)$ to the rest of $\eval(T_j)$. The function returns an integer as follows.
	Let $d'$ be the length of the longest prefix of $\eval(P_i)$ that can be matched to $\eval(T_j)$. (If $d\geq 0$, then we match only the remainder of $\eval(P_i)$ to $\eval(T_j)$. If $d<0$, then we match $\eval(P_i)$ to the remainder of $\eval(T_j)$.) If $d'<|\eval(P_i)|$, that is, we cannot match entire $\eval(P_i)$ to $\eval(T_j)$, then the function $\subseq(P_i,T_j,d)$ returns $d'$. Otherwise, if $d'=|\eval(P_i)|$, the function returns the length of the shortest prefix of (the remainder of) $\eval(T_j)$ that can be matched to (the remainder of) $\eval(P_i)$.
	
	Given the description of the function, the recursive implementation of it is straightforward and is described below.
	To evaluate $\subseq(P_i,T_j,d)$, we consider three cases.
	
	\paragraph{Case 1} $P_i$ or $T_j$ represents a single symbol. The problem is trivial to solve in this case.
	
	\paragraph{Case 2} $|\eval(P_i)|\geq |\eval(T_j)|$. Let $P_i \to P_{\ell(i)},P_{r(i)}$ be the SLP rule corresponding to $P_i$. If $d\geq |\eval(P_{\ell(i)})|$, then the function returns $\subseq(P_{r(i)},T_j,d-|\eval(P_{\ell(i)})|)$, which we compute recursively. If, on the other hand, $d<|\eval(P_{\ell(i)})|$, we recursively compute $d':=\subseq(P_{\ell(i)},T_j,d)$ and return $d'$ if $d'\geq 0$ or return $\subseq(P_{r(i)},T_j,d')$ if $d'<0$.
	
	\paragraph{Case 3} $|\eval(P_i)|<|\eval(T_j)|$. This case is similar to the previous one. Let $T_j \to T_{\ell'(j)},T_{r'(j)}$ be the SLP rule corresponding to $T_j$. If $-d>|\eval(T_{\ell'(j)})|$, we return $\subseq(P_{i},T_{r'(j)},-d-|\eval(T_{\ell'(j)})|)$, which we compute recursively. Otherwise, we define $d':=\subseq(P_i,T_{\ell'(j)},d)$ and return $d'$ if $d'< 0$ or return $\subseq(P_{i},T_{r'(j)},d')$ if $d'\geq 0$.
	
	The correctness of the algorithm follows from the description and the definition of the function $\subseq(P_i,T_j,d)$.
	The running time analysis is similar to Theorem~\ref{faster_hamming} and we omit it.
\end{proof}


\subsection{Lower Bounds} \label{sec:subseqlower}

In this section we show conditional lower bounds for the Disjointness, Hamming Distance and Subsequence problems.
First, we show that the Disjointness problem can be reduced to the Subsequence problem (Theorem~\ref{disj_subs}) and to the Hamming Distance problem (Theorem~\ref{disj_hamm}). Thus, any algorithmic improvement for the latter two problems implies a faster algorithm for the Disjointness problem. Alternatively, we can think about the Disjointness problem as the core hard problem explaining hardness for the two other problems. Second, we show a matching $N^{1-o(1)}$ lower bound for combinatorial algorithms for the Subsequence problem in the setting where $N \approx M \approx n^2 \approx m^2$. We use the combinatorial $k$-Clique conjecture to establish this hardness. Finally, we use the $k$-SUM conjecture (Conjecture~\ref{conj1}) for all three aforementioned problems. The lower bounds that we show are not tight. We show that assuming a stronger version of the $k$-SUM conjecture (Conjecture~\ref{conj2}) allows us to get higher lower bounds, but still not matching.

\begin{thm} \label{disj_subs}
	The Disjointness problem can be reduced to the Subsequence problem. The reduction loses at most constant factors in the length of compressed and decompressed sequences.
\end{thm}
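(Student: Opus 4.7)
The plan is to complement $T$ bitwise and then intersperse distinguished markers into both sequences, so that the Subsequence test is forced into a rigid one-to-one alignment of positions, whereupon the subsequence relation collapses to a simple per-position condition equivalent to Disjointness.

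Concretely, given a Disjointness instance consisting of SLPs $\cT, \cP$ generating $T, P \in \{0,1\}^N$, I would first compute an SLP $\overline{\cT}$ for the bitwise complement $\overline{T}$ by simply swapping the two terminal rules $S_i \to 0$ and $S_i \to 1$ throughout $\cT$; this preserves the SLP size. Over the alphabet $\{0,1,\#\}$ I would then define
\[
T'' := \bigconcat_{i=1}^N \# \, \overline{T}[i], \qquad P'' := \bigconcat_{i=1}^N \# \, f(P[i]),
\]
where $f(0) = \varepsilon$ and $f(1) = 1$. An SLP $\cT''$ generating $T''$ is obtained from $\overline{\cT}$ by adding three fresh non-terminals $A \to \#$, $B_0 \to 0$, $B_1 \to 1$ and replacing each terminal rule $S_i \to \sigma$ by the binary rule $S_i \to A\, B_\sigma$; an SLP $\cP''$ for $P''$ is constructed analogously, replacing $S_i \to 0$ by the terminal rule $S_i \to \#$ and $S_i \to 1$ by $S_i \to A\, B_1$. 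This yields $|\cT''| \le |\cT| + O(1)$ and $|\cP''| \le |\cP| + O(1)$, while $|T''| = 2N$ and $|P''| \le 2N$, so all four size parameters change by at most constant factors.

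For correctness, note that both $T''$ and $P''$ contain exactly $N$ occurrences of $\#$ and no occurrence of $\#$ elsewhere, so any subsequence embedding of $P''$ into $T''$ must match the $N$ markers one-to-one and in order, forcing the $i$-th $\#$ of $P''$ to align with the $i$-th $\#$ of $T''$. Between consecutive markers this requires the block $f(P[i])$ of $P''$ to be a subsequence of the single symbol $\overline{T}[i]$ of $T''$, which is automatic when $P[i]=0$ and, when $P[i]=1$, demands $\overline{T}[i]=1$, i.e.\ $T[i]=0$. Therefore $P''$ is a subsequence of $T''$ if and only if no index $i$ has both $P[i]=1$ and $T[i]=1$, so the Disjointness answer is simply the negation of the Subsequence oracle's answer. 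The construction itself runs in linear time; the only point requiring any care is the rigidity of the marker alignment, which is immediate from the equality of marker counts together with the fact that $\#$ is distinct from $0$ and $1$.
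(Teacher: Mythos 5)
Your reduction is correct and takes essentially the same approach as the paper's: both insert a separator symbol at each position, use the fact that pattern and text have the same separator count to force a rigid one-to-one alignment, and then observe that the per-position residual condition (a ``$1$'' in the pattern block requires a ``$1$'' in the corresponding text block) is exactly Disjointness. The only cosmetic difference is that you introduce a fresh symbol $\#$, whereas the paper reuses the symbol $0$ as the separator (encoding $P[i]=0\mapsto 0$, $P[i]=1\mapsto 10$, $T[i]=0\mapsto 10$, $T[i]=1\mapsto 0$), which keeps the alphabet binary and makes the complementation of $T$ implicit.
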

\begin{proof}
	Let $P$ and $T$ be two binary sequences, forming an instance of the Disjointness problem.
	We construct a sequence $P'$ from $P$ by replacing every symbol $0$ with symbol ``0'' and every symbol $1$ with two symbols ``10''. 
	Similarly, we construct a sequence $T'$ from $T$ by replacing every symbol $0$ with two symbols ``10'' and every symbol $1$ with ``0''.
	
	The resulting sequences $P'$ and $T'$ are compressible similarly as $P$ and $T$. We can check that $P'$ is a subsequence of $T'$ if and only if we have $P[i]=0$ or $T[i]=0$ for all $i$. This completes the reduction.
\end{proof}

\begin{thm} \label{disj_hamm}
	The Disjointness problem can be reduced to the Hamming Distance problem. The reduction loses at most constant factors in the length of compressed and decompressed sequences.
\end{thm}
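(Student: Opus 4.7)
The plan is to replace each symbol of $P$ and $T$ by a constant-length gadget in $\{0,1\}^\ell$, chosen so that the Hamming distance contribution from each aligned pair of positions depends only on whether the original pair was $(1,1)$ or not. Concretely, I will seek strings $p_0,p_1$ (used to replace $0$ and $1$ in $P$) and $t_0,t_1$ (used to replace $0$ and $1$ in $T$) satisfying
\begin{equation*}
  d_H(p_0,t_0) \;=\; d_H(p_0,t_1) \;=\; d_H(p_1,t_0) \;=\; c_1, \qquad d_H(p_1,t_1) \;=\; c_2 \;\neq\; c_1.
\end{equation*}
If $P'$ and $T'$ denote the resulting length-$\ell N$ strings, and $k := |\{i : P[i]=T[i]=1\}|$, then $\hamming(P',T') = c_1(N-k) + c_2 k$, so $k = (c_1 N - \hamming(P',T'))/(c_1 - c_2)$ is recoverable from a single Hamming Distance query, and in particular $P,T$ are disjoint iff $k=0$, i.e.\ iff $\hamming(P',T') = c_1 N$.

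Such gadgets exist already at length $\ell=3$: taking $p_0 = 000$, $p_1 = 110$, $t_0 = 101$, $t_1 = 110$ gives $c_1 = 2$ and $c_2 = 0$, so the test becomes $\hamming(P',T') \stackrel{?}{=} 2N$. (Any other consistent choice would do; one can also verify by a short search that length $2$ is impossible, which is the only mildly interesting verification in the construction.)

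Finally, I need to argue that the reduction preserves compressed and decompressed lengths up to constant factors. Clearly $|P'| = |T'| = \ell N = O(N)$. For the compressed side, I take the SLP $\cP$ generating $P$ and replace each terminal rule $S_i \to 0$ by a rule $S_i \to P_0$, and each rule $S_i \to 1$ by $S_i \to P_1$, where $P_0$ and $P_1$ are fresh non-terminals evaluating to $p_0$ and $p_1$ respectively. Since $p_0,p_1$ have length $O(1)$, the non-terminals $P_0$ and $P_1$ can be built by adding only $O(1)$ further rules to $\cP$; the same treatment applied to $\cT$ using $t_0,t_1$ yields $\cT'$. Thus $|\cP'| = m + O(1)$ and $|\cT'| = n + O(1)$, as required.

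There is no real obstacle here; the only step that needs a little care is exhibiting concrete gadgets with the desired four-way Hamming profile, which I handle above with a length-$3$ example.
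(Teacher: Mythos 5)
Your proposal is correct and follows the same approach as the paper: replace each bit by a constant-length gadget engineered so that the per-position Hamming contribution is a fixed value $c_1$ unless both original bits are 1, in which case it is $c_2\neq c_1$, and then test whether $\hamming(P',T')$ equals $c_1 N$. The paper uses the gadgets $p_0=011,\ p_1=000,\ t_0=001,\ t_1=111$ (giving $c_1=1$, $c_2=3$) instead of your $p_0=000,\ p_1=110,\ t_0=101,\ t_1=110$ (giving $c_1=2$, $c_2=0$), but this difference is cosmetic and the compression argument is identical.
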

\begin{proof}
	Let $P$ and $T$ be two binary sequences, forming an instance of the Disjointness problem.
	We construct a sequence $P'$ from $P$ by replacing every symbol $0$ with three symbols ``011'' and every symbol $1$ with three symbols ``000''. 
	Similarly, we construct a sequence $T'$ from $T$ by replacing every symbol $0$ with ``001'' and every symbol $1$ with ``111''.
	
	These four gadget sequences have Hamming distance $1$ for all pairs except when both original symbols are $1$. In this case the Hamming distance between the two gadgets is $3$.
	We conclude that $\hamming(P',T')>N=|P|=|T|$ if and only if there exists $i$ with $P[i]=T[i]=1$. This concludes the reduction.
\end{proof}

\begin{thm} \label{thm:subseqlower}
	The Subsequence problem has no combinatorial $O(N^{1-\eps})$ time algorithm for any $\eps > 0$ in the setting $N = \Theta(M) = \Theta(n^2) = \Theta(m^2)$ and $|\Sigma| = O(N^\eps)$, assuming the combinatorial $k$-Clique conjecture.
\end{thm}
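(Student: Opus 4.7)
The plan is to reduce combinatorial $k$-Clique to Disjointness on suitably compressible strings, then invoke Theorem~\ref{disj_subs} to transfer this to Subsequence. Given a $k$-Clique instance $G=(V,E)$ on $V$ vertices and a target $\eps > 0$, fix a large even $k \ge 4/\eps$ and set $k' := k/2$. To overcome the issue that two $k'$-tuples could share vertices (producing a ``false'' clique on fewer than $k$ distinct vertices), we first enumerate $2^{O(k)} \log V$ balanced partitions $V = V_1 \sqcup V_2$ coming from a $(V,k)$-splitter family, chosen so that every $k$-clique of $G$ puts exactly $k'$ of its vertices in each side of some partition.

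For each such partition, enumerate ordered $k'$-tuples $U_1(1),\dots,U_1(|V_1|^{k'})$ over $V_1$ and $U_2(1),\dots,U_2(|V_2|^{k'})$ over $V_2$ in lexicographic order, and define binary strings indexed by pairs $(i,j)$:
\begin{align*}
T[(i,j)] &:= [\,U_1(i) \text{ is a clique in } G\,] \cdot [\,\text{every vertex of } U_2(j) \text{ lies in } N(U_1(i))\,], \\
P[(i,j)] &:= [\,U_2(j) \text{ is a clique in } G\,].
\end{align*}
Because $V_1\cap V_2=\emptyset$, any index where both are $1$ yields a pair of disjoint $k'$-cliques forming a biclique, i.e., a genuine $k$-clique in $G$; and by choice of partitions some partition witnesses every $k$-clique. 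Thus running Disjointness on $(T,P)$ across all partitions decides $k$-Clique.

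The main technical step is the SLP construction. Write $T = \bigconcat_i T_i$ with $T_i := \bigconcat_j T[(i,j)]$. If $U_1(i)$ is not a clique, $T_i = 0^{|V_2|^{k'}}$ has an SLP of size $O(\log V)$ by Observation~\ref{obs:repetition}. Otherwise, $T_i$ records, for each $k'$-tuple $U_2(j)$, whether every coordinate of $U_2(j)$ lies in the common neighborhood $N(U_1(i))$; this is exactly the format of the compressible string $S_v$ described in the technical overview of Section~\ref{sec:cliquelowerbounds} (with $N(U_1(i))$ playing the role of $N(v)$ and $k'$ in place of $k$), and so admits an SLP of size $O(Vk')$. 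Combining all $T_i$'s via a balanced tree gives an SLP for $T$ of size $n_T = O(V^{k'+1})$. For the pattern, $P[(i,j)]$ depends only on $j$, so $P = P_0^{|V_1|^{k'}}$ with $P_0 := \bigconcat_j [U_2(j)\text{ is a clique}]$ of length at most $V^{k'}$; together with Observation~\ref{obs:repetition} this yields an SLP of size $n_P = O(V^{k'})$. After padding $T,P$ with trailing zeros to length $N := \Theta(V^{k+2})$ and padding the SLPs to sizes $\Theta(V^{k'+1}) = \Theta(\sqrt{N})$, Theorem~\ref{disj_subs} produces a Subsequence instance over a constant-size alphabet satisfying $N = \Theta(M) = \Theta(n^2) = \Theta(m^2)$. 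A combinatorial $O(N^{1-\eps})$ algorithm for Subsequence would solve $k$-Clique in time $2^{O(k)}(\log V)\cdot O(V^{(k+2)(1-\eps)}) = O(V^{k(1-\eps/2)})$ for $k \ge 4/\eps$, contradicting the combinatorial $k$-Clique conjecture.

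The main obstacle is arranging for \emph{both} the text and pattern SLPs to have size $O(\sqrt{N})$. A direct encoding of ``$U_1(i)\cup U_2(j)$ forms a $k$-clique'' would force a string similar to $C^{(k)} := \bigconcat_{u_1,\dots,u_k} [\{u_1,\dots,u_k\}\text{ is a clique}]$, whose clique condition does not factorize across coordinates and so does not admit the $S_v$-style compression. Our workaround is to push each clique-check ($U_1(i)$ and $U_2(j)$ separately) to only one side of the Disjointness instance, while the cross-term biclique-check is combined with the $U_1(i)$-clique-check in the text via the $S_v$-recursion; the splitter family then ensures disjointness of $U_1(i), U_2(j)$ so that passing both checks really certifies a $k$-clique.
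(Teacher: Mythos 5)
Your reduction is correct and it takes a genuinely different route from the paper's. The paper reduces $k$-Clique directly to Subsequence over the alphabet $V \cup \{\#,\$\}$ of size $\Theta(V)$: it enumerates $(k/2)$-\emph{cliques} (not tuples), encodes each one by a gadget $CG(C) = (v_1\cdots v_{k/2}\#)^{k/2}$ respectively $CG'(C') = \text{Neighbors}(u_1)\#\cdots\text{Neighbors}(u_{k/2})\#$, and exploits that $CG(C)$ is a subsequence of $CG'(C')$ iff $C\cup C'$ is a $k$-clique; the text and pattern are grids of these gadgets with ``skip'' strings $Z$, compressed purely via Observation~\ref{obs:repetition}. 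You instead route through Disjointness over the binary alphabet and then apply Theorem~\ref{disj_subs}, obtaining compressibility from the $S_v$-style per-coordinate recursion that the paper deploys only in the CFG/RNA section. The approaches buy different things: the paper's gadgetry is self-contained within the Subsequence section and needs no auxiliary compression lemma, while your route yields a binary-alphabet instance --- which is strictly stronger than the stated $|\Sigma| = O(N^\eps)$ and in fact addresses the paper's open problem (Section~\ref{sec:conclusion}) of whether this lower bound can be made to hold over binary strings.

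Two remarks. First, the splitter family is unnecessary: just as in the paper's CFG parsing proof, the biclique test $[\text{every vertex of } U_2(j) \text{ lies in } N(U_1(i))]$ already forces $U_1(i)\cap U_2(j)=\emptyset$ (a shared vertex would have to be its own neighbor), and the per-tuple clique tests force all coordinates within each tuple to be distinct, so you may take $V_1=V_2=V$ and drop the $2^{O(k)}\log V$ outer loop entirely. Second, note that Theorem~\ref{disj_subs} produces a pattern and a text whose lengths need not be equal (the blowup of $0$ and $1$ differs on the two sides), so the resulting Subsequence instance has $|P'|\le |T'|$ only when the original Disjointness instance is not ``overfull''; this is harmless since $|P'|>|T'|$ already certifies non-disjointness, but it is worth stating explicitly if you spell the reduction out.
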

\begin{proof}
	The reduction will rule out combinatorial algorithms with running time $N^{1-\eps}$ by using the Combinatorial $k$-Clique conjecture~\ref{conj:combclique} with $k=O(1/\eps)$. 
	Let $k \ge 4$ be even, and let $G=(V,E)$ be an instance of $k$-Clique. 
	In the following we will construct an equivalent instance of the Subsequence problem, i.e., a text $T = \eval(\cT)$ and a pattern $P = \eval(\cP)$, satisfying $N = |T| = O(V^{k+1})$, $n = |\cT| = O(V^{(k/2)+1})$, $M = |P| = O(V^{k})$, and $m = |\cP| = O(V^{k/2})$. The alphabet size will be $|\Sigma| = O(V)$. By a simple padding\footnote{Specifically, let $\natural$ be a fresh symbol and add $\natural^{V^{k+2}}$ as a prefix to $T$ and $P$. Compress this string $\natural^{V^{k+2}}$ to length $V^{k/2+1}$ by writing it as $(\natural^{V^{k/2+1}})^{V^{k/2+1}}$ and using Observation~\ref{obs:repetition}.}, we can then ensure that $N,M = \Theta(V^{k+2})$ and $n,m = \Theta(V^{k/2+1})$, so that indeed $N = \Theta(M) = \Theta(n^2) = \Theta(m^2)$, and we have $|\Sigma| = O(N^\eps)$ for any $k \ge 1/\eps$. Finally, a combinatorial $O(N^{1-\eps})$ algorithm for the Subsequence problem in this setting would yield a combinatorial algorithm for $k$-Clique in time $O(V^{(k+2)(1-\eps)}) = O(V^{k(1-\eps/2)})$ for any $k \ge 4/\eps$, contradicting the combinatorial $k$-Clique conjecture.
	
	We first construct clique gadgets and then the pattern and the text.
	The alphabet will be $\Sigma = V \cup \{\#,\$\}$.
	
	\paragraph{Construction of the clique gadgets \boldmath$CG$}
	Given a $(k/2)$-clique $C=\{v_1, ..., v_{k/2}\}$, we construct the clique gadget $CG(C)$ as:
	$$
		CG(C) = (v_1 v_2 ... v_{k/2} \#)^{k/2}.
	$$
	That is, we write down the labels of the vertices (in increasing order), put ``\#'' at the end and repeat the resulting sequence $k/2$ times.

	\paragraph{Construction of the clique gadgets \boldmath$CG'$}
	Given a $(k/2)$-clique $C'=\{u_1, ..., u_{k/2}\}$, we construct $CG'(C')$ as:
	$$
		CG'(C') =
			\text{Neighbors}(u_1) \#
			\text{Neighbors}(u_2) \#
			...  \#
			\text{Neighbors}(u_{k/2}) \#
	$$
	where $\text{Neighbors}(u)$ lists all neighbors of vertex $u$ in increasing order.
	
	We can check that for any $(k/2)$-cliques $C,C'$, $CG(C)$ is a subsequence of $CG'(C')$ if and only if $C \cup C'$ forms a $k$-clique.
	
	\paragraph{Construction of the sequence \boldmath$Z$} We construct $Z$ as:
	$$
		Z:=(L\#)^{k/2},
	$$
	where $L$ is the sequence containing all $V$ vertices in the graph in increasing order. We can verify the any clique gadget $CG(C)$ is a subsequence of $Z$.
	
	\paragraph{Construction of the Pattern}
	The pattern consists of clique gadgets as follows. 
	Enumerate all $(k/2)$-cliques $C_1,\ldots,C_Q$ with $Q \le V^{k/2}$ in $G$. 
	The pattern sequence $P$ is constructed as: 
	$$
		P := \left(CG(C_1) \$ CG(C_2) \$ ... \$ CG(C_Q) \$\right)^Q.
	$$
	That is, we concatenate the $Q$ clique gadgets $CG(C_1), \ldots, CG(C_Q)$ in one sequence and put ``\$'' after every gadget, and repeat the resulting sequence $Q$ times. Note that the symbol ``\$'' does not appear in any clique gadget.
	
	\paragraph{Construction of the Text}
	The text is somewhat similar to the pattern, defined by:
	$$
		T := 
			(CG'(C_1) \$ Z \$)^Q \,
			(CG'(C_2) \$ Z \$)^Q \,
			...
			(CG'(C_{Q-1}) \$ Z \$)^Q \,
			(CG'(C_Q) \$ Z \$)^{Q-1} \,
			CG'(C_Q) \$.
	$$

	\paragraph{Correctness} The pattern consists of $Q^2$ clique gadgets with the symbol $\$$ in between any two of them. The text consist of $Q^2$ cliques gadgets with the sequence $\$Z\$$ in between any two of them. Since there are only $Q^2-1$ $Z$'s in the text, we cannot match all clique gadgets of the pattern to $Z$'s in the text. Hence, if $P$ is a subsequence of $T$, then at least one clique gadget $CG(C_i)$ is a subsequence of $CG'(C_j)$ for some $i,j$. This happens only if $C_i \cup C_j$ form as $k$-clique in $G$. 
	
	For the other direction, we show that if $G$ contains a $k$-clique, so that there are $i,j$ with $C_i \cup C_j$ forming a $k$-clique, implying that $CG(C_i)$ is a subsequence of $CG'(C_j)$, then the pattern is a subsequence of the text. Indeed, let $q=j\cdot Q+i$. The $q$-th clique gadget in the pattern is $CG(C_i)$ and the $q$-th clique gadget in the text is $CG'(C_j')$. We match all clique gadgets before the $q$-th one as well as after the $q$-th one to $Z$'s, and we match  $CG(C_i)$ to $CG'(C_j')$. This shows that $P$ is a subsequence of $T$.
    
    Since $|L| = V$, $Q \le V^{k/2}$, and $k$ is a constant, the length bounds $N = O(V^{k+1})$ and $M = O(V^{k})$ are immediate. Using Observation~\ref{obs:repetition} to compress strings of the form $X^Q$ to size $O(|X| + \log Q)$, we also immediately obtain $n = O(V^{k/2+1})$ and $m=O(V^{k/2})$.
    This finishes the proof.
\end{proof}

\begin{thm} \label{lb1}
	Let $k\geq 1$ be an integer. Consider the Disjointness problem with $N=M=\Theta(n^{4k+1})=\Theta(m^{4k+1})$. Solving the Disjointness problem in this setting requires $N^{\frac{1}{4}+\frac{3}{16k+4}-o(1)}$ time assuming the $(2k+1)$-SUM conjecture.
\end{thm}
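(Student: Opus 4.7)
The plan is to reduce the $(2k+1)$-SUM problem to compressed Disjointness. Let $Z=\{z_1,\ldots,z_r\}\subseteq [0,R]$ with $R=r^{2k+1}$ and target $t\le (2k+1)R$ be a $(2k+1)$-SUM instance. The idea is to split the $2k+1$ summands into $2k$ ``enumerated'' indices $(i_1,\ldots,i_{2k})$ and one ``free'' index $\ell$: the text $T$ will enumerate all $2k$-tuples $(i_1,\ldots,i_{2k})\in[r]^{2k}$ as ``cells'' of length $(2k+1)R+1$, placing in each cell a single $1$ at offset $z_{i_1}+\cdots+z_{i_{2k}}$; the pattern $P$ will be $r^{2k}$ copies of the same cell $p_0$, in which $1$s are placed at every offset of the form $t-z_\ell$. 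Concretely, with $e_s^{(L)}:=0^s1\,0^{L-s}$,
\[
T \;:=\; \bigconcat_{(i_1,\ldots,i_{2k})\in [r]^{2k}} e^{((2k+1)R)}_{z_{i_1}+\cdots+z_{i_{2k}}},
\qquad
P \;:=\; p_0^{\,r^{2k}}.
\]
Both strings have length $N=M=\Theta(r^{4k+1})$. Disjointness holds iff some $(i_1,\ldots,i_{2k},\ell)$ satisfies $z_{i_1}+\cdots+z_{i_{2k}} = t-z_\ell$, i.e., iff $(2k+1)$-SUM has a solution. No cross-cell alignment can occur because $z_{i_1}+\cdots+z_{i_{2k}}\le 2kR$ and $0\le t-z_\ell\le t\le (2k+1)R$ both fit strictly inside a single cell.

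The main technical step is to compress $T$ to an SLP of size $\tilde O(r)$. The key identity is
\[
e^{(L)}_{z_{i_1}+s} \;=\; 0^{z_{i_1}}\cdot e^{(L-R)}_{s}\cdot 0^{R-z_{i_1}},
\]
which lets us pull the outermost shift $z_{i_1}$ out of the inner concatenation. When we expand the concatenation with $i_1$ fixed, the interior zero-runs telescope, since $0^{R-z_{i_1}}\cdot 0^{z_{i_1}}=0^R$ glues adjacent cells. The result is $0^{z_{i_1}}\cdot U\cdot 0^{R-z_{i_1}}$, where $U$ is essentially the $(2k-1)$-fold analogue of $T$ (with block length still $(2k+1)R+1$), up to trimming $R$ trailing zeros from its last cell, which only modifies the final inner block and costs $O(\log R)$ extra productions. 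Recursing $2k$ times, each recursion level contributes only $O(r)$ ``shift-gadget'' nonterminals of the form $0^{z_i}\,\bullet\,0^{R-z_i}$; each $0^{z_i}$ and $0^{R-z_i}$ has an SLP of size $O(\log R)=O(\log r)$ by Observation~\ref{obs:repetition}. Since $k$ is a constant, the total is $|\cT|=O(r\log r)$. Analogously, $|\cP|=O(r\log r)$: $p_0$ has $r$ ones interspersed with zero-runs whose lengths are at most $(2k+1)R$, and $P=p_0^{r^{2k}}$ adds only $O(\log r)$ more productions.

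Putting the parameters together, $N=M=\Theta(r^{4k+1})$ and $n=m=\Theta(r\log r)$, so after light padding we can arrange $N=\Theta(n^{4k+1})=\Theta(m^{4k+1})$ exactly. An algorithm solving compressed Disjointness in time $O(N^{1/4+3/(16k+4)-\varepsilon})$ in this regime would then solve $(2k+1)$-SUM in time $O(r^{(4k+1)(1/4+3/(16k+4))-\varepsilon'})=O(r^{k+1-\varepsilon'})$, using $(4k+1)\bigl(\tfrac14+\tfrac{3}{16k+4}\bigr)=\tfrac{4k+1}{4}+\tfrac{3}{4}=k+1$. This contradicts Conjecture~\ref{conj1} since $\lceil(2k+1)/2\rceil=k+1$.

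The hard part will be the careful accounting of the recursive SLP construction---specifically, implementing the ``trailing-zero trimming'' at each level so that the SLP size really stays $O(r\log r)$, and verifying that the choice of cell length $(2k+1)R+1$ together with the factoring identity precludes any spurious match of a text $1$ with a pattern $1$ across cell boundaries. Everything else (the correctness of the reduction and the parameter arithmetic matching the stated exponent $\tfrac{1}{4}+\tfrac{3}{16k+4}$) follows straightforwardly once these compressibility details are in place.
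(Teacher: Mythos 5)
Your proposal is correct, and while it establishes the same bound, the construction differs in genuine ways from the paper's. The paper first normalizes the target by defining shifted sets $B := \{t/k + R - a \mid a \in A\}$ and $C := \{Rk/(k+1) + a \mid a \in A\}$ so that the $(2k+1)$-SUM condition becomes an equality $b_1+\cdots+b_k = c_1+\cdots+c_{k+1}$; the pattern is $Z^{r^k}$ where $Z$ enumerates $k$-tuples of $B$ in cells of length $R'=2R$, and the text enumerates $k$-tuples of $C$, each followed by $r^k$ copies of a shifted block $Y(c_1,\ldots,c_k)$ with $r$ ones. The text's outer enumeration index and the pattern's repetition index are transposed so that every ($B$-tuple, $C$-tuple) combination occurs at some aligned cell. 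Your version is structurally simpler on the combinatorial side: you enumerate all $2k$-tuples directly in the text, keep the pattern a constant repetition $p_0^{r^{2k}}$, and encode the free summand via the $r$ ones in $p_0$, so there is no need for the shift sets $B,C$ nor the transposed double loop (at the cost of somewhat longer cells, $\Theta(kR)$ instead of $2R$). Both constructions rely on the same compressibility mechanism---pull the outermost index's contribution out as a shift and recurse---though yours telescopes through $2k$ levels while the paper telescopes through $k$ levels and appends an $r^k$-fold repetition afterwards. Two small corrections to your sketch. First, at recursion level $m$ the trailing gadget is actually $0^{z_j}\,\bullet\,0^{L-(m-1)z_r-z_j}$, not $0^{z_j}\,\bullet\,0^{R-z_j}$: the exponent is level-dependent. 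Relatedly, it is cleanest to make each nonterminal represent the \emph{essential prefix} ending at its last $1$ (padding back up to the full length only at the very end), rather than literally ``trim $R$ trailing zeros,'' since the trailing-zero count of the last sub-block changes from level to level; this is precisely the device the paper uses in its recursion~\eqref{induction1}. With that adjustment the $O(r\log r)$ SLP bound goes through, as all exponents stay in $[1,L]$. Second, for indices $\ell$ with $z_\ell > t$ you should simply omit the corresponding $1$ from $p_0$. You correctly identified the compression bookkeeping as the technical heart of the argument; it is resolvable as above.
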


\begin{thm} \label{lb2}
	Let $k\geq 1$ be an integer. Consider the Disjointness problem with $N=M=\Theta(n^{3k+1})=\Theta(m^{3k+1})$. Solving the Disjointness problem in this setting requires $N^{\frac{1}{3}+\frac{2}{9k+3}-o(1)}$ time assuming the Strong $(2k+1)$-SUM conjecture.
\end{thm}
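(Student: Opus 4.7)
My plan is to reduce Strong $(2k+1)$-SUM to compressed Disjointness by a $2k$-level recursive construction whose text $T$ encodes all sums of $2k$ numbers from $Z$ and whose pattern $P$ encodes the single completion $t - z_c$. The construction parallels that behind Theorem~\ref{lb1}, but the smaller range $R = r^{k+1}$ guaranteed by Conjecture~\ref{conj2} lets us choose the block sizes so that all ``sum offsets'' fit into a single short period, producing a pattern that is a pure repetition and hence an instance of size $N = \Theta(r^{3k+1})$ instead of $\Theta(r^{4k+1})$.

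Given $Z = \{z_1,\ldots,z_r\} \subseteq \{0,\ldots,R\}$ with $R = r^{k+1}$ and target $t$, I set $B_1 := (2k+1)R + 1$ and $B_j := r B_{j-1} + B_1$ for $j = 2,\ldots,2k$. A straightforward induction gives $B_j = B_1(r^j-1)/(r-1)$, so every $B_j$ is a multiple of $B_1$. I then define SLP non-terminals $T_0,\ldots,T_{2k}$ by $T_0 := 1$ and
\[
  T_j \;:=\; \bigconcat_{i=1}^r 0^{z_i}\, T_{j-1}\, 0^{B_j - |T_{j-1}| - z_i},
\]
so that $|T_j| = r B_j$. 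Unrolling shows that $T := T_{2k}$ has ones exactly at the positions $p(\vec i) := \sum_{j=1}^{2k}(i_j-1)B_j + \sum_{j=1}^{2k} z_{i_j}$, one per tuple $\vec i \in [r]^{2k}$. Since each $B_j$ is a multiple of $B_1$ and $\sum_j z_{i_j} \le 2kR < B_1$, I obtain the key identity $p(\vec i) \bmod B_1 = \sum_j z_{i_j}$ without wrap-around. The pattern is then the periodic string $P := V^{|T|/B_1}$, where $V$ has length $B_1$ with $V[t-z_c] = 1$ for each $c \in [r]$ with $z_c \le t$, and $V = 0$ elsewhere. Using \obsref{repetition} to compress each padding $0^{\cdot}$ and the repetition defining $P$ yields SLPs of size $\tOh(r)$ for both $T$ and $P$, with $|T| = |P| = \Theta(r^{3k+1})$. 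A final zero-padding step enforces $N = M = \Theta(n^{3k+1})$ exactly.

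Correctness is immediate: a common one-position $p$ of $T$ and $P$ exists iff $p = p(\vec i)$ for some $\vec i$ and $p \bmod B_1 = t - z_c$ for some $c$, i.e., iff $z_{i_1} + \cdots + z_{i_{2k}} + z_c = t$ is a $(2k+1)$-SUM solution. Hence an $O(N^{(k+1)/(3k+1)-\eps})$-time algorithm for Disjointness would solve Strong $(2k+1)$-SUM in $O(r^{k+1-\eps(3k+1)})$ time, contradicting Conjecture~\ref{conj2}; since $(k+1)/(3k+1) = 1/3 + 2/(9k+3)$, this matches the claimed bound. The main obstacle is choosing block sizes $B_j$ that are simultaneously (i) large enough that a shifted copy of $T_{j-1}$ fits inside a single block and (ii) a multiple of $B_1$, so that the one-positions of $T$ have a clean offset modulo $B_1$; the recursion $B_j = rB_{j-1} + B_1$ satisfies both, and the resulting text length $\Theta(r^{2k} B_1) = \Theta(r^{3k+1})$ is smaller than in Theorem~\ref{lb1} precisely because $B_1 = \Theta(R) = \Theta(r^{k+1})$ is smaller than $r^{2k+1}$.
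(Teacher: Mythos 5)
Your construction is correct (up to a trivial off-by-one: since $T_0 = 1$ contributes an additive $+1$ to $p(\vec i)$, with $1$-based indexing you want $V[t - z_c + 1] = 1$, or else use $0$-based indexing throughout), and it achieves the same bound by a genuinely different route.

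The paper first recasts $(2k+1)$-SUM as a balanced bi-sum problem---find $b_1,\dots,b_k\in B$ and $c_1,\dots,c_{k+1}\in C$ with $\sum b_i = \sum c_j$, via an affine re-centering of $Z$ into sets $B,C$---and then builds a text that iterates over $k$-tuples of $C$ at the outermost level, with each $R'$-block containing $r$ ones (one per extra $c\in C$) and repeated $r^k$ times, while the pattern is $Z^{r^k}$ for a length-$R'r^k$ string $Z$ encoding all $k$-sums of $B$. You instead keep a $2k$-vs-$1$ split, enumerate all $2k$-tuples directly in the text, and make the pattern a much shorter periodic string of period $B_1 = \Theta(R)$. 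The load the paper carries in its two-level ($r^k$ outer by $r^k$ inner) loop structure is carried in your construction by the divisibility trick $B_j = r B_{j-1} + B_1$, which forces $B_1 \mid B_j$ for all $j$ and hence $p(\vec i) \bmod B_1 = \sum_j z_{i_j}$, with no wrap-around since $\sum_j z_{i_j} \le 2kR < B_1$. Both constructions yield $|T| = \Theta(R\, r^{2k})$ and SLPs of size $\tilde O(r)$, so with $R = r^{k+1}$ from the Strong $(2k+1)$-SUM conjecture both land at $N = \Theta(r^{3k+1})$ and the lower bound $N^{(k+1)/(3k+1)-o(1)} = N^{1/3 + 2/(9k+3) - o(1)}$. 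Your version is arguably more uniform---Theorem~\ref{lb1} follows verbatim from the same construction by taking $R = r^{2k+1}$---and avoids the affine re-centering, at the modest cost of carrying the $B_j$-divisibility argument.
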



By Theorems~\ref{disj_subs} and \ref{disj_hamm}, the same kind of hardness holds for the Subsequence and Hamming Distance problems.

\begin{proof}[Proof of Theorems \ref{lb1} and \ref{lb2}]
	Let $k \geq 1$ be an integer and let $A\subseteq \{0,1, \ldots, R-1, R\}$ be an instance of the $(2k+1)$-SUM problem with $|A|=r$ and target sum $t$.
	Without loss of generality, $R$ is divisible by $k+1$ and $t$ is divisible by $k$. We define the set $B:=\left\{\frac{t}{k}+R-a \mid a \in A\right\}$ and the set $C:=\left\{\frac{Rk}{k+1}+a \mid a \in A\right\}$. We can verify that there exist $b_1, \ldots, b_k \in B$ and $c_1, \ldots, c_{k+1} \in C$ with $b_1 + \ldots + b_{k}=c_1 + \ldots + c_{k+1}$ if and only if there exist $a_1, \ldots, a_{2k+1} \in A$ with $a_1 + \ldots + a_{2k+1}=t$. We note that $B,C \subseteq \{1, 2, \ldots, R'\}$ for $R':=2R$.
	
	In $O(r \log r)$ time we will construct an instance to the Disjointness problem with the following properties.
	\begin{itemize}
		\item Pattern $P = \eval(\cP)$ is constructed from the set $B$ and has length $M = R'\cdot r^{2k}$ and compressed size $m= O(r \log r)$,
		\item Text $T = \eval(\cT)$ is constructed from the set $C$ and has length $N = R'\cdot r^{2k}$ and compressed size $n= O(r \log r)$,
		\item There exists $i$ such that $P[i]=T[i]=1$ if and only if there exist $b_1, \ldots, b_k \in B$ and $c_1, \ldots, c_{k+1} \in C$ with $b_1 + \ldots + b_{k}=c_1 + \ldots + c_{k+1}$.
	\end{itemize}
	Simply padding allows us to increase the text length and pattern length to $R' r^{2k} \log^{k'} r$ for any $k' \ge 0$, and to achieve $n,m = \Theta(r \log r)$. Setting $R=r^{2k+1}$, we thus have $N = M = 2 r^{4k+1} \log^{4k+1} r = \Theta(n^{4k+1}) = \Theta(m^{4k+1})$. Any $O(N^{1/4 + 3/(16k+4) - \eps}) = O(N^{(k+1-\eps)/(4k+1)})$ time algorithm for Disjointness would now imply an algorithm for $(2k+1)$-SUM in time $O((r \log r)^{k+1-\eps}) = O(r^{k+1-\eps/2})$, contradicting the $(2k+1)$-SUM conjecture (Conjecture~\ref{conj1}). This proves Theorem~\ref{lb1}.
	Similarly, setting $R=r^{k+1}$ and using the Strong $(2k+1)$-SUM conjecture (Conjecture~\ref{conj2}) we obtain Theorem~\ref{lb2}.
	
	In the remainder of the proof we present the promised construction.
	
	Without loss of generality, we have $R' > 10k \cdot \max (B\cup C)$.
	
	\paragraph{Construction of the Pattern} 
	We define the pattern as
	\[ P := \Big( \bigconcat_{b_1,\ldots,b_k \in B} 0^{b_1+\ldots+b_k}\, 1\, 0^{R'-(b_1+\ldots+b_k)-1} \Big)^{r^k}, \]
	where the $\bigconcat$ goes over all tuples $(b_1,\ldots,b_k) \in B^k$ in lexicographic order.
	That is, P consists of $r^k$ repetitions of a sequence $Z$ of length $R'\cdot r^k$. The sequence $Z$ consists of sequences $Z_1,\ldots,Z_{r^k}$, corresponding to $k$-tuples $(b_1,\ldots,b_k) \in B^k$. Each sequence $Z_i$ has length $R'$, and the sequence $Z_i$ corresponding to tuples $(b_1,\ldots,b_k)$ has 0's everywhere except at position $b_1+\ldots+b_k+1$.
	
	\paragraph{Construction of the Text}
	We define the text as 
	\begin{equation} \label{eq:ksumtext}
		T := \bigconcat_{c_1,\ldots,c_k \in C} \Big( Y(c_1,\ldots,c_k) \Big)^{r^k},
	\end{equation}
	where $Y(c_1,\ldots,c_k)$ is a string of length $R'$ with $Y(c_1,\ldots,c_k)[j+1] = 1$ if $j \in \{c+c_1+\ldots+c_k \mid c\in C\}$, and $Y(c_1,\ldots,c_k)[j+1] = 0$ otherwise.
	
    \paragraph{Analysis}
    Note that there is an index $i$ with $P[i]=T[i]=1$ if and only if there exist $b_1, \ldots, b_k \in B$ and $c_1, \ldots, c_{k+1} \in C$ with $b_1 + \ldots + b_{k}=c_1 + \ldots + c_{k+1}$. Hence, correctness of the reduction can be easily verified.
    The length $N = M = R' r^{2k}$ is immediate. It remains to show that the pattern and the text are compressible.
	
	\paragraph{Compressing the Pattern} 
	Since $P = Z^{r^k}$, by Observation~\ref{obs:repetition} it suffices to compress $Z$. 
	We construct the sequence $Z$ inductively. We write $B = \{B_1,\ldots,B_r\}$.
	We define $S_0\to 1$ to be a non-terminal generating a sequence of length $1$ containing a single symbol $1$.
	For $i \in [k]$ we define the non-terminal $S_i$ as follows:
	\begin{equation} \label{induction1}
		S_i\to\left( \bigcirc_{w=1}^{r-1} 0^{B_w} S_{i-1} 0^{R'r^{i-1}-B_w-|\eval(S_{i-1})|} \right)\circ 0^{B_r}S_{i-1}.
	\end{equation}
	
	Finally, we set $S\to S_{k} \circ 0^{R'r^k-|S_k|}$.
	Here the right hand side contains more than two SLP non-terminals, but using Observation~\ref{obs:repetition} it is easy to convert this into a proper SLP of size $O(r \log r)$ as required.
	It remains to check that $Z = \eval(S)$, i.e., $\eval(S)=\bigconcat_{b_1,\ldots,b_k \in B} 0^{b_1+\ldots+b_k}\, 1\, 0^{R'-(b_1+\ldots+b_k)-1}$. Indeed, a straightforward induction shows that we constructed $S_i$, $i \in [k]$ such that 
	$$
		\eval(S_i)\circ 0^{Rr^{i}-|\eval(S_i)|}=\bigconcat_{b_1,\ldots,b_i \in B} 0^{b_1+\ldots+b_i}\, 1\, 0^{R'-(b_1+\ldots+b_i)-1}.
	$$
	The induction step is performed by using the derivation rule~\eqref{induction1}.
	
	\paragraph{Compressing the Text}
	Let $W$ be a string of length $R'$ consisting only of $0$'s except $W[j+1]=1$ for any $j \in C$.
	We define an SLP non-terminal $Y'$ that generates the shortest prefix of $W$ containing all $1$'s of $W$. We set 
	$$
		Y_0\to \left(Y'0^{R'-|\eval(Y')|}\right)^{r^k-1}Y'.
	$$
	Note that $\eval(Y') 0^{R'-|\eval(Y')|} = W$. Hence, $Y_0$ generates the string $W^{r^k}$ where we removed the longest suffix of $0$'s.
	We write $C = \{C_1,\ldots,C_r\}$.
	
	For $i=1,\ldots,k$ we define sequence $Y_i$ as follows:
	\begin{equation} \label{induction2}
		Y_i\to\left( \bigcirc_{w=1}^{r-1} 0^{C_w} Y_{i-1} 0^{R'r^{k+i-1}-C_w-|\eval(Y_{i-1})|} \right)\circ 0^{C_r}Y_{i-1}.
	\end{equation}
	
	Finally, we set $\cT\to Y_{k} \circ 0^{R'r^{2k}-|\eval(Y_k)|}$. It is easy to verify that the size of the above SLP $\cT$ is $O(r \log r)$. It remains to show that $\eval(\cT)=T$ as in~\eqref{eq:ksumtext}. That is, we want to show that $\eval(\cT)=\bigconcat_{c_1,\ldots,c_k \in C} \Big( Y(c_1,\ldots,c_k) \Big)^{r^k}$. This follows by a straightforward induction. We can check that for $i=0,1,\ldots,k$ we have
	$$
		\eval(Y_i)\circ 0^{R'r^{k+i}-|\eval(Y_i)|}=\bigconcat_{c_1,\ldots,c_i \in C} \Big( Y(c_1,\ldots,c_i) \Big)^{r^k}.
	$$
	The induction step is performed by using the derivation rule~\eqref{induction2}.

\end{proof}


\section{Conclusion} \label{sec:conclusion}

With this paper we started the fine-grained complexity of analyzing compressed data, thus providing lower bound tools for a practically highly relevant area. We focused on the most basic problems on strings, leaving many other stringology problems for future work. 
Besides strings, there is a large literature on grammar-compressed other forms of data, e.g.\ graphs. It would be interesting to apply our framework and classify the important problems in these contexts as well.

Specifically, we leave the following open problems.
\begin{itemize}
\item Determine the optimal running time for the Disjointness, Hamming Distance, and Subsequence problems.

\item Generalize our lower bound for LCS to Edit Distance.

\item For NFA Acceptance we obtained tight bounds in case of a potentially dense automaton with $q$ states and up to $O(q^2)$ transitions. Prove tight bounds for the case of sparse automata with $O(q)$ transitions.

\item For large (i.e.\ superconstant) alphabet size, some bounds given in this paper are not tight, most prominently for Generalized Pattern Matching, Substring Hamming Distance, and Pattern Matching with Wildcards. Determine the optimal running time in this case.

\item For all lower bounds presented in this paper, check whether they can be improved to work for binary strings.
\end{itemize}

\medskip
\section*{Acknowledgements} 
This paper would not have been possible without Oren Weimann and \emph{Schloss Dagstuhl}.
Inspired by a Dagstuhl seminar on Compressed Pattern Matching in October, and while attending a Dagstuhl seminar on Fine-Grained Complexity in November, Oren asked in the open problems session whether SETH can explain the lack of $O((nN)^{1-\eps})$ algorithms for problems like LCS on compressed strings.
Later, in January, three of the authors of this paper attended a Dagstuhl seminar on Parameterized Complexity and made key progress towards the results of this work. Part of the work was also performed while visiting the Simons Institute for the Theory of Computing, Berkeley, CA. 
We thank Pawe{\l} Gawrychowski for helpful comments.

A.A. was supported by Virginia Vassilevska Williams' NSF Grants CCF-1417238 and CCF-1514339, and BSF Grant BSF:2012338. Arturs Backurs was supported by an IBM PhD Fellowship, the NSF and the Simons Foundation. While performing part of this work, M. K\"unnemann was affiliated with University of California, San Diego.

\bibliographystyle{abbrv}
\bibliography{ref}

\end{document}